\providecommand{\U}[1]{\protect\rule{.1in}{.1in}}
\newtheorem{theorem}{Theorem}
\newtheorem{claim}[theorem]{Claim}
\newtheorem{conjecture}[theorem]{Conjecture}
\newtheorem{corollary}[theorem]{Corollary}
\newtheorem{definition}[theorem]{Definition}
\newtheorem{lemma}[theorem]{Lemma}
\newtheorem{proposition}[theorem]{Proposition}
\newtheorem*{numlessthm}{Theorem}
\newtheorem*{numlessconj}{Conjecture}
\begin{document}

\title{Quantum Money from Hidden Subspaces}
\author{Scott Aaronson\thanks{MIT. \ Email: aaronson@csail.mit.edu. \ \ This material
is based upon work supported by the National Science Foundation under Grant
No. 0844626. \ Also supported by a DARPA YFA grant, an NSF STC grant, a TIBCO
Chair, and a Sloan Fellowship.}
\and Paul Christiano\thanks{This work was done while the author was a student at
MIT. \ Email: paulfchristiano@gmail.com.}}
\date{}
\maketitle

\begin{abstract}
Forty years ago, Wiesner pointed out that quantum mechanics raises the
striking possibility of money that cannot be counterfeited according to the
laws of physics. \ We propose the first quantum money scheme that is

(1) \textit{public-key}---meaning that anyone can verify a banknote as
genuine, not only the bank that printed it, and

(2) \textit{cryptographically secure}, under a\ \textquotedblleft
classical\textquotedblright\ hardness assumption that has nothing to do with
quantum money.

Our scheme is based on \textit{hidden subspaces},\ encoded as the zero-sets of
random multivariate polynomials. \ A main technical advance is to show that
the \textquotedblleft black-box\textquotedblright\ version of our scheme,
where the polynomials are replaced by classical oracles, is
\textit{unconditionally} secure. \ Previously, such a result had only been
known relative to a \textit{quantum} oracle (and even there, the proof was
never published).

Even in Wiesner's original setting---quantum money that can only be verified
by the bank---we are able to use our techniques to patch a major security hole
in Wiesner's scheme. \ We give the first private-key quantum money scheme that
allows unlimited verifications and that remains unconditionally secure, even
if the counterfeiter can interact adaptively with the bank.

Our money scheme is simpler than previous public-key quantum money schemes,
including a knot-based scheme of Farhi et al. \ The verifier needs to perform
only two tests, one in the standard basis and one in the Hadamard
basis---matching the original intuition for quantum money, based on the
existence of complementary observables.

Our security proofs use a new variant of Ambainis's quantum adversary method,
and several other tools that might be of independent interest.

\end{abstract}

\tableofcontents

\section{Introduction\label{INTRO}}

\textit{\textquotedblleft Information wants to be free\textquotedblright%
}---this slogan expresses the idea that classical bits, unlike traditional
economic goods, can be copied an unlimited number of times. \ The copyability
of classical information is one of the foundations of the digital economy, but
it is also a nuisance to governments, publishers, software companies, and
others who wish to prevent copying. \ Today, essentially all electronic
commerce involves a trusted third party, such as a credit card company, to
mediate transactions. \ Without such a third party entering at \textit{some}
stage, it is impossible to prevent electronic cash from being counterfeited,
regardless of what cryptographic assumptions one makes.\footnote{The recent
Bitcoin system is an interesting illustration of this principle: it gets rid
of the centralized third party, but still uses a \textquotedblleft third
party\textquotedblright\ distributed over the community of Bitcoin users.}

Famously, though, quantum bits do\textit{ not }\textquotedblleft want to be
free\textquotedblright\ in the same sense that classical bits do:\ in many
respects, they behave more like gold, oil, or other traditional economic
goods. \ Indeed, the \textit{No-Cloning Theorem}, which is an immediate
consequence of the linearity of quantum mechanics, says that there is no
physical procedure that takes as input an unknown\footnote{The adjective
\textquotedblleft unknown\textquotedblright\ is needed because, if we knew a
classical description of a procedure to \textit{prepare} $\left\vert
\psi\right\rangle $, then of course we could run that procedure multiple times
to prepare multiple copies.} quantum pure state $\left\vert \psi\right\rangle
$, and that produces as output two unentangled copies of $\left\vert
\psi\right\rangle $, or even a close approximation thereof. \ The No-Cloning
Theorem is closely related to the \textit{uncertainty principle}, which says
that there exist \textquotedblleft complementary\textquotedblright\ properties
of a quantum state (for example, its position and momentum) that cannot both
be measured to unlimited accuracy.\footnote{Indeed, if we could copy
$\left\vert \psi\right\rangle $, then we could violate the uncertainty
principle by measuring one observable (such as position) on some copies, and a
complementary observable (such as momentum) on other copies. \ Conversely, if
we could measure all the properties of $\left\vert \psi\right\rangle $\ to
unlimited accuracy, then we could use the measurement results to create
additional copies of $\left\vert \psi\right\rangle $.}

\subsection{The History of Quantum Money\label{HISTORY}}

But can one actually \textit{exploit} the No-Cloning Theorem to achieve
classically-impossible cryptographic tasks? \ This question was first asked by
Wiesner \cite{wiesner},\ in a remarkable paper written around 1970 (but only
published in 1983) that arguably founded quantum information science. \ In
that paper, Wiesner proposed a scheme for \textit{quantum money} that would be
physically impossible to clone. \ In Wiesner's scheme, each \textquotedblleft
banknote\textquotedblright\ would consist of a classical serial number $s$,
together with a quantum state $\left\vert \psi_{s}\right\rangle $\ consisting
of $n$ unentangled qubits, each one $\left\vert 0\right\rangle $, $\left\vert
1\right\rangle $, $\frac{\left\vert 0\right\rangle +\left\vert 1\right\rangle
}{\sqrt{2}}$, or $\frac{\left\vert 0\right\rangle -\left\vert 1\right\rangle
}{\sqrt{2}}$ with equal probability. \ The issuing bank would maintain a giant
database, which stored a classical description\ of $\left\vert \psi
_{s}\right\rangle $\ for each serial number $s$. \ Whenever someone wanted to
\textit{verify} a banknote, he or she would take it back to the
bank---whereupon the bank would use its knowledge of how $\left\vert \psi
_{s}\right\rangle $\ was prepared to measure each qubit in the appropriate
basis, and check that it got the correct outcomes. \ On the other hand, it can
be proved \cite{molina}\ that someone who did \textit{not} know the
appropriate bases could copy the banknote with success probability at most
$\left(  3/4\right)  ^{n}$.

Though historically revolutionary, Wiesner's money scheme suffered at least
three drawbacks:

\begin{enumerate}
\item[(1)] \textbf{The \textquotedblleft Verifiability
Problem\textquotedblright:} The only entity that can verify a banknote is the
bank that printed it.

\item[(2)] \textbf{The \textquotedblleft Online Attack
Problem\textquotedblright:} A counterfeiter able to submit banknotes for
verification, and get them back afterward, can easily break Wiesner's scheme
(\cite{lutomirski:attack,aar:qcopy}; see also Section \ref{QUERYSEC}).

\item[(3)] \textbf{The \textquotedblleft Giant Database
Problem\textquotedblright:} The bank needs to maintain a database with an
entry for every banknote in circulation.
\end{enumerate}

In followup work in 1982, Bennett, Brassard, Breidbart, and Wiesner
\cite{bbbw} (henceforth BBBW) at least showed how to eliminate the giant
database problem: namely, by generating the state $\left\vert \psi
_{s}\right\rangle =\left\vert \psi_{f_{k}\left(  s\right)  }\right\rangle
$\ using a \textit{pseudorandom function} $f_{k}$, with key $k$ known only by
the bank. \ Unlike Wiesner's original scheme, the BBBW scheme is no longer
\textit{information-theoretically }secure: a counterfeiter can recover $k$
given exponential computation time. \ On the other hand, a counterfeiter
cannot break the scheme in polynomial time, unless it can \textit{also}
distinguish $f_{k}$\ from a random function.

These early ideas about quantum money inspired the field of \textit{quantum
cryptography} \cite{bb84}. \ But strangely, the subject of quantum money
itself lay dormant for more than two decades, even as interest in quantum
computing exploded. \ However, the past few years have witnessed a
\textquotedblleft quantum money renaissance.\textquotedblright\ \ Some recent
work has offered partial solutions to the verifiability problem. \ For
example, Mosca and Stebila \cite{moscastebila} suggested that the bank use a
\textit{blind quantum computing} protocol to offload the verification of
banknotes to local merchants, while Gavinsky \cite{gavinsky:money}\ (see also
followup work by Molina et al.\ \cite{molina}\ and Pastawski et al.\
\cite{pastawski}) proposed a variant of Wiesner's scheme that requires only
\textit{classical} communication between the merchant and bank.

However, most of the focus today is on a more ambitious goal: namely, creating
what Aaronson \cite{aar:qcopy}\ called \textit{public-key quantum money},\ or
quantum money that \textit{anyone} could authenticate, not just the bank that
printed it. As with public-key cryptography in the 1970s, it is far from
obvious \textit{a priori} whether public-key quantum money is possible at all.
\ Can a bank publish a description of a quantum circuit\ that lets people
feasibly \textit{recognize} a state $\left\vert \psi\right\rangle $, but does
not let them feasibly prepare or even copy $\left\vert \psi\right\rangle $?

Aaronson \cite{aar:qcopy} gave the first formal treatment of public-key
quantum money, as well as related notions such as copy-protected quantum
software. \ He proved that there exists a \textit{quantum oracle} relative to
which secure public-key quantum money is possible. \ Unfortunately, that
result, though already involved, did not lead in any obvious way to an
explicit (or \textquotedblleft real-world\textquotedblright) quantum money
scheme.\footnote{Also, the proof of Aaronson's result never appeared---an
inexcusable debt that this paper finally repays, with interest.} \ He raised
as an open problem whether secure public-key quantum money is possible
relative to a \textit{classical} oracle. \ In the same paper, Aaronson also
proposed an explicit scheme, based on random stabilizer states, but could not
offer any evidence for its security. \ And indeed, the scheme was broken about
a year afterward by Lutomirski et al.\ \cite{breaking}, using an algorithm for
finding planted cliques in random graphs due to Alon, Krivelevich, and Sudakov
\cite{alonclique}.

Recently, Farhi et al.\ \cite{knots} took a completely different approach to
public-key quantum money. \ They proposed a quantum money scheme based on
\textit{knot theory}, where each banknote is a superposition over
exponentially-many oriented link diagrams. \ Within a given banknote, all the
link diagrams $L$\ have the same Alexander polynomial $p\left(  L\right)  $ (a
certain knot invariant).\footnote{Instead of knots, Farhi et al.\ \cite{knots}%
\ could also have used, say, superpositions over $n$-vertex\ \textit{graphs}
having the same eigenvalue spectrum. \ But in that case, their scheme would
have been breakable, the reason being that the \textit{graph isomorphism
problem} is easy for random graphs. \ By contrast, it is not known how to
solve \textit{knot isomorphism} efficiently, even with a quantum computer and
even for random knots.} \ This $p\left(  L\right)  $, together with a digital
signature of $p\left(  L\right)  $, serves as the banknote's \textquotedblleft
classical serial number.\textquotedblright\ \ Besides the unusual mathematics
employed, the work of Farhi et al.\ \cite{knots} (building on \cite{breaking}%
)\ also introduced an idea that will play a major role in our work. \ That
idea is to construct public-key quantum money schemes by composing two
\textquotedblleft simpler\textquotedblright\ ingredients: first, objects that
we call \textit{mini-schemes}; and second, classical digital signature schemes.

The main disadvantage of the knot-based scheme, which it shares with every
previous scheme, is that no one can say much about its security---other than
that it has not yet been broken, and that various known counterfeiting
strategies fail. \ Indeed, even characterizing \textit{which quantum states
Farhi et al.'s verification procedure accepts} remains a difficult open
problem, on which progress seems likely to require major advances in knot
theory! \ In other words, there might be states that look completely different
from \textquotedblleft legitimate banknotes,\textquotedblright\ but are still
accepted with high probability.

In followup work, Lutomirski \cite{lutomirski:scp} proposed an
\textquotedblleft abstract\textquotedblright\ version of the knot scheme,
which gets rid of the link diagrams and Alexander polynomials, and simply uses
a classical oracle to achieve the same purposes. \ Lutomirski raised the
challenge of proving that this \textit{oracle} scheme is secure---in which
case, it would have yielded the first public-key quantum money scheme that was
proven secure relative to a classical oracle. \ Unfortunately, proving the
security of Lutomirski's scheme remains open, and seems hard.\footnote{One way
to understand the difficulty is that any security proof for Lutomirski's
scheme would need to contain, as a special case, a quantum lower bound for the
so-called \textit{index erasure} problem \cite{amrr}. \ In other words, any
fast quantum algorithm for index erasure would imply a break of Lutomirski's
scheme.
\par
At present, the simplest known proof of a quantum lower bound for index
erasure is via a reduction from Aaronson's quantum lower bound for the\textit{
collision problem} \cite{aar:col}. \ The latter is proved using the polynomial
method of Beals et al.\ \cite{bbcmw}. \ In this work, by contrast, we will
only manage to prove the security of \textit{our}\ oracle scheme using a
specially-designed variant of Ambainis's quantum adversary method
\cite{ambainis}. \ There \textit{is} a recent lower bound for the index
erasure problem using the adversary method \cite{amrr}, but it is quite
involved.}

As alluded to earlier, there is already some research on ways to
\textit{break} quantum money schemes. \ Besides the papers by Lutomirski
\cite{lutomirski:attack} and Lutomirski et al.\ \cite{breaking} mentioned
before, let us mention\ the beautiful work of\ Farhi et al.\ on
\textit{quantum state restoration} \cite{farhi:restore}. \ As we discuss in
Section \ref{QUERYSEC}, quantum state restoration can be used to break many
public-key quantum money schemes: roughly speaking, any scheme where the
banknotes contain only limited entanglement, and where verification consists
of a rank-$1$ projective measurement. \ This fact explains why our scheme,
like the knot-based scheme of Farhi et al.\ \cite{knots}, will require
highly-entangled banknotes.

\subsection{The Challenge\label{CHALLENGE}}

Work over the past few years has revealed a surprising richness in the quantum
money problem---both in the ideas that have been used to construct public-key
quantum money schemes, \textit{and} in the ideas that have been used to break
them. \ Of course, this record also underscores the need for caution! \ To
whatever extent we can, we ought to hold quantum money schemes to modern
cryptographic standards, and not be satisfied with \textquotedblleft we tried
to break it and failed.\textquotedblright

It is easy to see that, if public-key quantum money is possible, then it must
rely on \textit{some} computational assumption, in addition to the No-Cloning
Theorem.\footnote{This is because a counterfeiter with unlimited time could
simply search for a state $\left\vert \psi\right\rangle $\ that the
(publicly-known) verification procedure accepted.} \ The best case would be to
show that secure, public-key quantum money is possible, \textit{if} (for
example) there exist one-way functions resistant to quantum attack.
\ Unfortunately, we seem a long way from showing anything of the kind. \ The
basic problem is that \textit{uncloneability} is a novel cryptographic
requirement: something that would not even make sense in a classical context.
\ Indeed, work by Farhi et al.\ \cite{farhi:restore}\ and Aaronson
\cite{aar:qcopy}\ has shown that it is sometimes possible to copy quantum
banknotes, via attacks that do not even \textit{measure} the banknotes in an
attempt to learn a classical secret! \ Rather, these attacks simply perform
some unitary transformation on a legitimate banknote $\left\vert
\$\right\rangle $\ together with an ancilla $\left\vert 0\right\rangle $, the
end result of which is to produce $\left\vert \$\right\rangle ^{\otimes2}$.
\ Given such a strange attack, how can one deduce the failure of any
\textquotedblleft standard\textquotedblright\ cryptographic assumption?

Yet despite the novelty of the quantum money problem---or perhaps because of
it---it seems reasonable to want \textit{some} non-tautological evidence that
a public-key quantum money scheme is secure. \ A minimal wish-list might include:

\begin{enumerate}
\item[(1)] Security under \textit{some} plausible assumption, of a sort
cryptographers know how to evaluate. \ Such an assumption should talk only
about computing a classical output from a classical input; it should have
nothing to do with cloning of quantum states.

\item[(2)] A proof that the money scheme is secure against \textit{black-box
counterfeiters}: those that do not exploit the structure of some cryptographic
function $f$ used in verifying the banknotes.

\item[(3)] A \textquotedblleft simple\textquotedblright\ verification process,
which accepts all valid banknotes $\left\vert \$\right\rangle $\ with
probability $1$, and rejects all banknotes that are far from $\left\vert
\$\right\rangle $.
\end{enumerate}

\subsection{Our Results\label{RESULTS}}

Our main contribution is a new public-key quantum money scheme, which achieves
all three items in the wish-list above, and which is the first to achieve (1)
or (2). \ Regardless of whether our particular scheme stands or falls, we
introduce at least four techniques that should be useful for the design and
analysis of \textit{any} public-key quantum money scheme. \ These are:

\begin{itemize}
\item The \textquotedblleft inner-product adversary method,\textquotedblright%
\ a new variant of Ambainis's quantum adversary method \cite{ambainis} that
can be used to rule out black-box counterfeiting strategies.

\item A formal proof that full-fledged quantum money schemes can be
constructed out of two simpler ingredients: (a) objects that we call
\textit{mini-schemes}, and (b) conventional digital signature schemes secure
against quantum attack. \ (Note that this construction itself, \textit{sans}
the analysis, was introduced in earlier work on quantum money, by Lutomirski
et al.\ \cite{breaking} and Farhi et al.\ \cite{knots}.)

\item A method to \textit{amplify} weak counterfeiters into strong ones, so
that one only needs to rule out the latter to show security.

\item A new connection between (a) the security of quantum money schemes, and
(b) the security of conventional cryptosystems against attacks that succeed
with exponentially-small probabilities.
\end{itemize}

A second contribution is to construct the first \textit{private}-key quantum
money schemes that remain \textit{unconditionally} secure, even if the
counterfeiter can interact adaptively with the bank. \ This gives the first
solution to the \textquotedblleft online attack problem,\textquotedblright\ a
major security hole in the Wiesner \cite{wiesner}\ and BBBW \cite{bbbw}%
\ schemes pointed out by Lutomirski \cite{lutomirski:attack}\ and Aaronson
\cite{aar:qcopy}. \ These private-key schemes are direct adaptations of our
public-key scheme.

In more detail, our quantum money scheme is based on \textit{hidden subspaces}
of the vector space $\mathbb{F}_{2}^{n}$. \ Each of our money states is a
uniform superposition of the vectors in a random $n/2$-dimensional subspace
$A\leq\mathbb{F}_{2}^{n}$. \ We denote this superposition by $\left\vert
A\right\rangle $. \ Crucially, we can recognize the state $\left\vert
A\right\rangle $ using only membership oracles for $A$ and for its dual
subspace $A^{\perp}$. \ To do so, we apply the membership oracle for $A$, then
a Fourier transform, then the membership oracle for $A^{\perp}$, and then a
second Fourier transform to restore the original state. \ We prove that this
operation computes a rank-$1$ projection onto $\left\vert A\right\rangle $.

Underlying the security of our money schemes is the assertion that the states
$\left\vert A\right\rangle $ are difficult to clone, even given membership
oracles for $A$ and $A^{\perp}$. \ Or more concretely: \textit{any quantum
algorithm that maps }$\left\vert A\right\rangle $\textit{\ to }$\left\vert
A\right\rangle ^{\otimes2}$\textit{\ must make }$2^{\Omega\left(  n\right)  }%
$\textit{\ queries to the }$A,A^{\perp}$\textit{\ oracles.}

In order to prove this statement, we introduce a new method for proving lower
bounds on quantum query complexity, which we call the \textit{inner-product
adversary method}. \ This technique considers a single counterfeiting
algorithm being run in parallel to clone two distinct states $\left\vert
A\right\rangle $ and $\left\vert A^{\prime}\right\rangle $, with each having
access to the membership oracles for $A,A^{\perp}$ or $A^{\prime}%
,A^{\prime\perp}$, as appropriate. \ To measure how much progress the
algorithm has made, we consider the inner product between the states produced
by the parallel executions: because $\left\langle A\right\vert ^{\otimes
2}\left\vert A^{\prime}\right\rangle ^{\otimes2}<\left\langle A|A^{\prime
}\right\rangle $ for many pairs of subspaces $A,A^{\prime}$, in order to
succeed a counterfeiter will have to reduce this inner product substantially.
\ We prove that when averaged over a suitable distribution of pairs
$A,A^{\prime}$, the \textit{expected inner product} between the two states
produced by the counterfeiter cannot decrease too much with a single query to
the membership oracles. \ We conclude that in order to produce $\left\vert
A\right\rangle ^{\otimes2}$ given $\left\vert A\right\rangle $ and membership
oracles for $A,A^{\perp}$, a counterfeiter must use exponentially many queries.

Having ruled out the possibility of nearly perfect cloning, we introduce a new
amplification protocol, which allows us to transform a counterfeiter who
succeeds with $\Omega\left(  1/\operatorname*{poly}\left(  n\right)  \right)
$ success probability into a counterfeiter who succeeds with probability
arbitrarily close to $1$. \ This technique is based on combining standard
Grover search with a monotonic state amplification protocol of Tulsi, Grover,
and Patel \cite{tgp}, to obtain monotonic convergence with the quadratic
speedup of Grover search.\footnote{Although the \textquotedblleft quadratic
speedup\textquotedblright\ part is not strictly necessary for us, it improves
our lower bound on the number of queries the counterfeiter needs to make---to
the tight one, in fact---and might be of independent interest.} \ Combining
this amplification with the inner-product adversary method, and applying a
random linear transformation to convert the counterfeiter's worst case to its
average case, we conclude that no counterfeiting algorithm can succeed with
any non-negligible probability on a non-negligible fraction of states
$\left\vert A\right\rangle $.

Using these results, how do we produce a secure quantum money scheme? \ We now
need to step back, and discuss some general constructions that have nothing to
do with hidden subspaces in particular. \ Before constructing full-fledged
quantum money schemes, we find it useful---following \cite{breaking,knots}%
---to construct simpler objects called \textit{quantum money mini-schemes}, in
which the bank issues only a single money state and maintains no secret
information. \ Formally, a mini-scheme is a protocol $\mathsf{Bank}$ for
outputting pairs $(s,\rho_{s})$ and a verification procedure $\mathsf{Ver}%
_{s}$ for identifying $\rho_{s}$. \ We say a mini-scheme is \textit{complete}
if the state $\rho_{s}$ passes the verification $\mathsf{Ver}_{s}$ with high
probability, and we say the scheme is \textit{secure} if furthermore no
counterfeiter can take a single state $\rho_{s}$, and produce two
(possibly-entangled) states $\rho_{1}$ and $\rho_{2}$ which simultaneously
pass the verification procedure with non-negligible probability.

In the case of hidden subspace money, for example, we can use our
uncloneability result to produce a secure mini-scheme relative to a classical
oracle. \ The algorithm $\mathsf{Bank}$ queries the classical oracle to obtain
a serial number $s$ and the description of a subspace $A$. \ Using this
description, it prepares $\left\vert A\right\rangle $ and publishes $\left(
s,\left\vert A\right\rangle \right)  $. \ The verification procedure uses the
serial number $s$ as an index into another classical oracle, which allows it
to test membership in $A$ and $A^{\perp}$. \ We prove that the uncloneability
of the states $\left\vert A\right\rangle $ implies that this mini-scheme is secure.

Using a construction introduced by Lutomirski et al.\ \cite{breaking} and Farhi
et al.\ \cite{knots}, we also show that, given any mini-scheme $\mathcal{M}$,
one can obtain a full-fledged quantum money scheme\ by combining $\mathcal{M}%
$\ with any (classical) digital signature scheme secure against quantum
attacks. \ In the construction of \cite{breaking,knots}, the issuing bank
first uses the mini-scheme to produce a pair $\left(  s,\rho_{s}\right)  $;
then it digitally signs the serial number $s$ and distributes $\left(
s,\rho_{s},\mathsf{S{}ign}\left(  s\right)  \right)  $\ as its banknote. \ Our
contribution is to prove rigorously that, \textit{if} a counterfeiter can
break the money scheme, then it must have been able to break either the
underlying mini-scheme or else the signature scheme.

By combining this reduction with our mini-scheme, we are able to obtain a
\textquotedblleft black-box\textquotedblright\ public key quantum money scheme
relative to a classical oracle, which is unconditionally secure:

\begin{numlessthm}
[Security of Hidden Subspace Money]Relative to some (classical) oracle $A$,
there exists a secure public-key quantum money scheme.

More precisely, there is an algorithm $\mathsf{KeyGen}^{A}$ which outputs
pairs $\left(  k_{\operatorname*{private}},k_{\operatorname*{public}}\right)
$ with security parameter $n$; an algorithm $\mathsf{Bank}^{A}\left(
k_{\operatorname*{private}}\right)  $ which generates a \textquotedblleft
quantum banknote\textquotedblright\ $\left\vert \$\right\rangle $; and a
verification algorithm $\mathsf{Ver}^{A}\left(  k_{\operatorname*{public}%
},\left\vert \$\right\rangle \right)  $ which tests the authenticity of a
purported banknote. \ These algorithms are polynomial-time and have the
following properties:

\textbf{Completeness:} If $(k_{\operatorname*{private}}%
,k_{\operatorname*{public}})$ is produced by $\mathsf{KeyGen}^{A}$, then
$\mathsf{Ver}^{A}\left(  k_{\operatorname*{public}},\mathsf{Bank}^{A}\left(
k_{\operatorname*{private}}\right)  \right)  $ accepts with certainty.

\textbf{Soundness:} Suppose a would-be polynomial-time counterfeiter with
access to $A$ and $k_{\operatorname*{public}}$ is given $q$ valid banknotes.
\ If this counterfeiter outputs any number of (possibly-entangled) quantum
states, there is at most a $1/\exp\left(  n\right)  $ probability that
$\mathsf{Ver}^{A}$ will accept more than $q$ of them.
\end{numlessthm}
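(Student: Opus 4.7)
The plan is to combine the hidden-subspace mini-scheme already analyzed in this paper with a quantum-secure digital signature scheme, following the Lutomirski et al.\ and Farhi et al.\ composition. I would choose the classical oracle $A$ to have two disjoint pieces: a \emph{mini-scheme piece} that, on input a seed $r$, returns a serial number $s=s(r)$ together with membership oracles for a random $n/2$-dimensional subspace $A_{s}\leq\mathbb{F}_{2}^{n}$ and its dual $A_{s}^{\perp}$; and a \emph{signature piece} that implements a hash-based Merkle-style signature scheme, which is provably secure against polynomial-time quantum adversaries relative to a random oracle. Then $\mathsf{KeyGen}^{A}$ outputs $(k_{\operatorname*{private}},k_{\operatorname*{public}})$ for the signature scheme; $\mathsf{Bank}^{A}$ draws a fresh seed, obtains $(s,\left\vert A_{s}\right\rangle )$ from the mini-scheme oracle, and outputs the banknote $(s,\left\vert A_{s}\right\rangle ,\mathsf{Sign}_{k_{\operatorname*{private}}}(s))$; and $\mathsf{Ver}^{A}$ first checks the signature, then runs the mini-scheme verifier (membership oracle for $A_{s}$, Fourier transform, membership oracle for $A_{s}^{\perp}$, inverse Fourier transform) and accepts iff all ancillas are returned to $\left\vert 0\right\rangle $.

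Completeness is immediate: as already shown in the paper, the mini-scheme verification implements a rank-$1$ projector onto $\left\vert A_{s}\right\rangle $ and therefore accepts $\left\vert A_{s}\right\rangle $ with certainty, while the signature check accepts honestly produced signatures with certainty.

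For soundness, suppose for contradiction that a polynomial-time counterfeiter, given $q$ valid banknotes with serial numbers $s_{1},\ldots,s_{q}$, produces registers that $\mathsf{Ver}^{A}$ jointly accepts as $q+1$ banknotes with probability $p>1/\exp(n)$. Measuring the serial-number registers of the output, by pigeonhole one of two events must happen with probability at least $p/2$: (a) some output serial $s^{\ast}\notin\{s_{1},\ldots,s_{q}\}$ appears with a valid signature on $s^{\ast}$, or (b) some $s_{i}$ appears on at least two output registers that both pass the mini-scheme verifier. In case (a) I build a signature forger: the reduction generates the $q$ legitimate banknotes using its own oracle access, queries $\mathsf{Sign}_{k_{\operatorname*{private}}}$ only on $s_{1},\ldots,s_{q}$, runs the counterfeiter, and returns $(s^{\ast},\sigma^{\ast})$ extracted from its output, contradicting the quantum security of the signature scheme. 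In case (b) I build a mini-scheme counterfeiter: on a challenge pair $(s_{i},\left\vert A_{s_{i}}\right\rangle )$, the simulator generates the remaining $q-1$ banknotes itself (each $A_{s_{j}}$ for $j\neq i$ is independent of $A_{s_{i}}$ given the oracle), runs the counterfeiter, and post-selects on two output registers bearing the serial $s_{i}$, winning the mini-scheme game with probability $\Omega(p/q)$ and contradicting the mini-scheme security bound.

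The main obstacle is obtaining the tight $1/\exp(n)$ soundness from the $2^{\Omega(n)}$ query lower bound of the inner-product adversary method. The raw adversary bound only precludes nearly perfect cloning of $\left\vert A\right\rangle $; to preclude a counterfeiter that merely succeeds with inverse-polynomial probability on an inverse-polynomial fraction of subspaces, I would invoke the paper's amplification procedure (Grover search composed with the Tulsi--Grover--Patel monotonic state-amplification protocol) together with the random linear transformation that converts worst-case to average-case subspaces. Plugging the amplified bound into the two reductions above, any polynomial-time counterfeiter against the full scheme whose success probability exceeds $1/\exp(n)$ would either forge a signature or yield a sub-exponential-query cloning algorithm, completing the proof.
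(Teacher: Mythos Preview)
Your proposal is correct and follows essentially the same route as the paper: construct the hidden-subspace mini-scheme relative to one oracle component, take a quantum-secure signature scheme relative to an independent random-oracle component, and apply the standard composition, with mini-scheme security obtained from the inner-product adversary bound boosted by the amplification theorem and the random-linear-transformation worst-to-average reduction. Your pigeonhole dichotomy (new serial $\Rightarrow$ signature forgery; repeated serial $\Rightarrow$ mini-scheme break) is exactly the content of the paper's Theorem~\ref{compose}; two technical points you leave implicit but which the paper spells out are (i) in case~(b) the mini-scheme reduction must generate its \emph{own} signature key pair so that it can sign all $q$ serials without knowing the challenger's private key, and (ii) one must argue, via a mock-up/BBBV hybrid, that the extra oracle components (the signature piece and the seed-to-subspace generator) give the counterfeiter no advantage beyond the bare membership oracles $U_{A_{s}},U_{A_{s}^{\perp}}$.
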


By adapting these ideas to the private-key setting, we are also able to
provide the first \textit{private}-key quantum money scheme that is
unconditionally secure, even if the counterfeiter is able to interact
adaptively with the bank. \ This patches a security hole in Wiesner's original
scheme which was observed in \cite{lutomirski:attack,aar:qcopy}, but which has
not previously been addressed in a provably-secure way.

Finally, we provide a candidate cryptographic protocol for obfuscating the
indicator functions of subspaces $A\leq\mathbb{F}_{2}^{n}$. \ In order to
obfuscate a membership oracle for $A$, we provide a random system of
polynomials $p_{1},\ldots,p_{m}$\ that vanish on $A$. \ Membership in $A$ can
be tested by evaluating the $p_{i}$'s, but given only the $p_{i}$'s, we
conjecture that it is difficult to recover $A$. \ Combining this protocol with
the black-box money scheme, we obtain an \textit{explicit} quantum money
scheme. \ This scheme is also the first public-key quantum money scheme whose
security can be based on a plausible \textquotedblleft
classical\textquotedblright\ cryptographic assumption. \ Here is the assumption:

\begin{numlessconj}
[*]Suppose $A$ is a uniformly-random $n/2$-dimensional subspace of
$\mathbb{F}_{2}^{n}$, and that $\left\{  p_{i}\right\}  _{1\leq i\leq
2n},\left\{  q_{i}\right\}  _{1\leq i\leq2n}$ are systems of degree-$d$
polynomials from $\mathbb{F}_{2}^{n}$ to $\mathbb{F}_{2}$,\ which vanish on
$A$ and $A^{\perp}$ respectively but are otherwise uniformly-random. \ Then
for large enough constant $d$, there is no polynomial-time quantum algorithm
that takes as input descriptions of the $p_{i}$'s and $q_{i}$'s, and that
outputs a basis for $A$ with success probability $\Omega\left(  2^{-n/2}%
\right)  $.
\end{numlessconj}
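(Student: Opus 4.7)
Because the statement is a computational hardness assumption rather than an unconditional lower bound, the realistic target is a reduction to another standard hardness conjecture; an unconditional proof would require quantum circuit lower bounds far beyond current tools. The natural reduction target is the average-case hardness of solving random degree-$d$ systems over $\mathbb{F}_2$ --- the $\mathrm{MQ}$ assumption for $d=2$ and its higher-degree analogues underpinning multivariate post-quantum cryptography. The plan is to show that an efficient quantum algorithm recovering $A$ with success probability $\Omega(2^{-n/2})$ would yield an efficient quantum algorithm for producing common zeros of a random polynomial system at a probability the $\mathrm{MQ}$-style assumption rules out.

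First I would set up a random self-reduction: for any $M \in \mathrm{GL}_n(\mathbb{F}_2)$, precomposing the $p_i$'s with $M^{-1}$ and the $q_i$'s with $(M^{-1})^\top$ maps a hidden-subspace instance for $A$ to one for $MA$ with the identical distribution on polynomials, so the counterfeiter's advantage can be assumed uniform across subspaces. Combined with the amplification machinery from the main body of the paper (which turns any weak counterfeiter into a near-perfect one on a noticeable fraction of instances), this lets me reduce to the task of breaking a counterfeiter whose success probability is essentially $1$. Second, I would attempt a decisional step: the joint distribution of the $p_i$'s and $q_i$'s is computationally indistinguishable from a fully random system of $4n$ degree-$d$ polynomials not conditioned to vanish on any planted subspace. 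If this holds, then any subspace-recoverer applied to a genuinely random system would, with suspiciously high probability, output a common zero of that system --- which is exactly an $\mathrm{MQ}$ solution.

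Third, I would argue quantitatively that the known attack families fall short of the $2^{-n/2}$ threshold. For large constant $d$ and $4n$ random degree-$d$ generators in $n$ variables, the semi-regular sequence analysis of Bardet--Faug\`ere--Salvy makes the Gr\"obner-basis degree of regularity $\Theta(n)$, pushing $F_4$/$F_5$ complexity to $2^{\Omega(n)}$; one then has to check that planting a random $n/2$-dimensional $A$ does not produce extra syzygies that collapse this bound, which should follow from a counting argument given the choice of $2n$ generators per side. Hybrid algebraic-combinatorial attacks (XL with Grover on the guessing portion, or Grover-boosted exhaustive search on $A$) give at best a square-root speedup in the exponent, still well inside the conjectured bound.

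The hardest step, and the reason this remains a conjecture, is the decisional indistinguishability in the quantum setting. Classically, planted-versus-random indistinguishability for structured polynomial ideals is itself subtle and open; quantumly, one must additionally rule out that Fourier sampling over $\mathbb{F}_2^n$ detects the coset structure $A$ imposes on the variety of common zeros, or that some hidden-subgroup-style algorithm exploits an algebraic symmetry of the ideal (for instance, a hidden shift inherited from the duality between $A$ and $A^\perp$). There is no existing template for such a quantum indistinguishability proof in multivariate cryptography, and this is the step I expect to be the real obstacle; by comparison, the classical degree-of-regularity bookkeeping is tedious but routine.
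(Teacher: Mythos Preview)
The statement you are addressing is a \emph{conjecture}, and the paper does not prove it. There is no ``paper's own proof'' to compare against: Conjecture~(*) (and its more detailed restatements as Conjectures~\ref{abstractconj} and~\ref{subspacedpt}) is explicitly adopted as an unproven hardness assumption on which the explicit money scheme's security rests. What the paper provides instead, in Section~\ref{JUSTIFY}, is a heuristic discussion: it relates the assumption to standard multivariate-cryptography hardness, exhibits concrete breaks for $d=1$ and $d=2$ (the linear case via the duality between $A$ and $A^\perp$, the quadratic case via canonical forms), and argues informally that for $d\geq 3$ the known attack families---Gr\"obner bases, polynomial-isomorphism techniques---appear to require exponential time. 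The paper never attempts the decisional planted-versus-random indistinguishability step you single out, nor any formal reduction to an MQ-style assumption.

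Your proposal is therefore not a proof of the stated result but a sketch of a research program toward one, and you diagnose this yourself when you flag the decisional step as ``the real obstacle.'' One point of confusion worth correcting: the random self-reduction over $\mathrm{GL}_n(\mathbb{F}_2)$ and the amplification of weak counterfeiters are indeed in the paper, but they are used in the \emph{opposite} direction---in Theorems~\ref{averagecase} and~\ref{explicitsoundness} they reduce the security of the money scheme \emph{to} the conjecture, showing that a counterfeiter yields a basis-recovery algorithm with $\Omega(2^{-n/2})$ success. They do not, and cannot by themselves, reduce the conjecture to anything more standard. Your degree-of-regularity and Grover-hybrid estimates are in the same spirit as the paper's informal justification but go into somewhat more detail; they remain heuristic evidence rather than a proof, which is exactly the status the paper assigns to the conjecture.
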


Note that we can trivially guess a \textit{single} nonzero $A$\ element with
success probability $2^{-n/2}$, but guessing a whole \textit{basis} for $A$
would succeed with probability only $2^{-\Omega\left(  n^{2}\right)  }$.
\ Conjecture (*)\ asserts that it is harder to find many elements of $A$ than
to find just one element.

The following theorem says that, \textit{if} a counterfeiter could break our
quantum money scheme, then with nontrivial success probability, it could
\textit{also} recover a description of $A$ from the $p_{i}$'s and $q_{i}$'s
alone---even without having access to a bank that provides a valid money state
$\left\vert A\right\rangle $.

\begin{numlessthm}
Assuming Conjecture~(*), there exists a public-key quantum money scheme with
perfect completeness and $1/\exp\left(  n\right)  $ soundness error. \ That
is, the verifier always accepts valid banknotes, and a would-be counterfeiter
succeeds only with $1/\exp\left(  n\right)  $ probability.\footnote{This
theorem remains true even if the statement of Conjecture~(*) is weakened by
adding random noise to the $p_{i}$'s and $q_{i}$'s, so that only a constant
fraction of them vanish on $A$ or $A^{\perp}$. \ The presence of noise
interferes substantially with known techniques for solving systems of
equations, though an attacker who was able to recover $A$ from a
\textit{single} polynomial would of course not be hindered by such noise.}
\end{numlessthm}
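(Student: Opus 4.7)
The plan is to construct the scheme by obfuscating the black-box mini-scheme from the previous theorem with the polynomial systems of Conjecture~(*) and composing it with a quantum-secure classical digital signature scheme via the mini-scheme-to-money-scheme reduction advertised in the ``Results'' section. Concretely: $\mathsf{KeyGen}(1^n)$ outputs a signature keypair $(sk,vk)$; $\mathsf{Bank}(sk)$ samples a uniformly random $n/2$-dimensional $A\le\mathbb{F}_2^n$, draws random degree-$d$ polynomial systems $\{p_i\},\{q_i\}$ vanishing on $A$ and $A^\perp$, signs the description $s=(\{p_i\},\{q_i\})$ to obtain $\sigma$, and outputs the banknote $(s,\sigma,|A\rangle)$; the verifier $\mathsf{Ver}(vk,\cdot)$ checks $\sigma$ and then runs the standard-basis/Hadamard-basis projection scheme, using evaluation of the $p_i$'s and $q_i$'s as the membership-test subroutines. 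Perfect completeness is immediate: each $p_i$ vanishes on $A$ by construction, so the polynomial-based membership test acts as the identity on $|A\rangle$, and symmetrically for the $q_i$'s after the Fourier transform; hence the rank-1 projection onto $|A\rangle$ accepts with probability~$1$.

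For soundness I would argue contrapositively. Suppose an efficient counterfeiter breaks the full scheme with probability $\ge 1/\exp(n)$. By the mini-scheme-plus-signature reduction, this yields either a polynomial-time forger for the signature scheme---excluded by its assumed quantum security---or a polynomial-time mini-scheme counterfeiter $C$ that, on input $(|A\rangle,\{p_i\},\{q_i\})$, outputs two verification-passing states with comparable probability. From $C$ I build an efficient quantum algorithm $\mathcal{A}$ that, given only $(\{p_i\},\{q_i\})$, outputs a basis of $A$ with probability $\Omega(2^{-n/2})$, directly contradicting Conjecture~(*).

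$\mathcal{A}$ runs in three steps. First it self-mints a single copy of $|A\rangle$ using only the polynomials: prepare $|+\rangle^{\otimes n}$, coherently evaluate $(p_1(v),\ldots,p_{2n}(v))$ into an ancilla, and measure the ancilla; outcome $0^{2n}$ occurs with probability $|A|/2^n = 2^{-n/2}$ and collapses the data register to $|A\rangle$. (A standard first-moment calculation shows that for random degree-$d$ systems with $d$ a sufficiently large constant, with overwhelming probability no $v \notin A$ satisfies $p_i(v)=0$ for all $i$, so the collapsed state is genuinely $|A\rangle$ and not a contaminated mixture.) Second, $\mathcal{A}$ amplifies $C$ to near-unity success probability using the paper's monotonic Grover/Tulsi--Grover--Patel amplifier, and iteratively applies the amplified cloner---doubling the copy count each round---to produce $\Theta(n)$ approximate copies of $|A\rangle$ in $O(\log n)$ rounds. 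Third, it measures each copy in the standard basis, obtaining $\Theta(n)$ uniform elements of $A$ which span $A$ with probability $1-2^{-\Omega(n)}$, yielding the desired basis.

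The main obstacle is the quantitative bookkeeping in the second step: amplifying from a tiny raw success probability is intrinsically expensive, so to stay polynomial-time one must combine the monotonic amplification with the paper's worst-case-to-average-case random-linear-transformation trick, which upgrades the assumed average-case breakability into a breakability that applies uniformly to the subspace $A$ being implicitly sampled by $\mathcal{A}$ through the polynomials. One must also calibrate the number of doubling rounds so that the compounded per-round fidelity loss never overtakes the $2^{-n/2}$ slack inherited from the first step. Once this balance is achieved, the $\Omega(2^{-n/2})$ recovery probability of $\mathcal{A}$ contradicts Conjecture~(*), ruling out any efficient counterfeiter of success probability $\ge 1/\exp(n)$ and establishing the claimed soundness.
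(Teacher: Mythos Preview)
Your approach is essentially the paper's own (Theorems~\ref{explicitcompleteness}--\ref{explicitsecurity}): compose the polynomial-based mini-scheme with a quantum-secure signature scheme via Theorem~\ref{compose}, and reduce mini-scheme security to Conjecture~(*) by self-minting one copy of $|A\rangle$ with probability $2^{-n/2}$, amplifying the counterfeiter via Theorem~\ref{miniamp}, iterating to $\Theta(n)$ near-perfect copies, and measuring in the standard basis to extract a generating set.

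Your ``main obstacle'' paragraph misdiagnoses the remaining work, however. The random-linear-transformation trick (Theorem~\ref{averagecase}) is used only in the \emph{oracle} scheme to pass from a worst-case to an average-case lower bound; in the explicit scheme $A$ is already uniformly random by construction, so the mini-scheme counterfeiter handed to you is by hypothesis an average-case breaker and no such reduction is needed. Likewise, amplification is not ``from a tiny raw success probability'': the counterfeiter $C$ you obtain from the mini-scheme reduction succeeds with probability $\Delta=\Omega(1/\operatorname*{poly}(n))$, so each amplification round costs only $\operatorname*{poly}(n)$ calls to $C$ and $\mathsf{Ver}$. The sole place an exponentially small probability enters is the one-shot $2^{-n/2}$ cost of minting $|A\rangle$, which is exactly the slack Conjecture~(*) permits.
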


The problem of recovering a subspace $A$, given a system of equations that
vanish on $A$, is closely related to \textit{algebraic cryptanalysis}, and in
particular to the so-called \textit{polynomial isomorphism problem}. In the
latter problem, we are given as input two polynomials $p,q:\mathbb{F}%
^{n}\rightarrow\mathbb{F}$ related by an unknown linear change of basis $L$;
the challenge is to find $L$. \ When $\deg\left(  p\right)  =\deg\left(
q\right)  =3$, the best known algorithms for the polynomial isomorphism
problem require exponential time \cite{cgp:ip1s,gms:ip1s,bffp:ip1s}. \ An
attacker \textit{might} be able to use known techniques to effectively reduce
the degree of the polynomials in our scheme by $1$, at the expense of an
exponentially reduced success probability \cite{bffp:ip1s}. \ Provided the
degree is at least $4$, however, recovering $A$ seems to be well beyond
existing techniques.

\subsection{Motivation\label{MOTIVATION}}

Unlike the closely-related task of \textit{quantum key distribution}
\cite{bb84} (which is already practical), quantum money currently seems to be
a long way off. \ The basic difficulty is how to maintain the
\textit{coherence} of a quantum money state for an appreciable length of time.
\ \textit{All} money eventually loses its value unless it is spent, but money
that decohered on a scale of microseconds would be an extreme example!

So one might wonder: why develop rigorous foundations for a cryptographic
functionality that seems so far from being practical? \ One answer is that,
just as quantum key distribution uses many of the same ideas as private-key
quantum money, but without requiring long-lasting coherence, so it is not hard
to imagine protocols that would use many of the same ideas as \textit{public}%
-key quantum money without requiring long-lasting coherence. \ Indeed,
depending on the problem, rapid decoherence might be a \textit{feature} rather
than a bug!

As one example, public-key quantum money that decohered quickly could be used
to create \textbf{non-interactive uncloneable signatures}. \ These are
$n$-qubit quantum states $\left\vert \psi\right\rangle $ that an agent can
efficiently\ prepare using a private key, then freely hand out to passersby.
\ By feeding $\left\vert \psi\right\rangle $, together with the agent's
\textit{public} key, into suitable measuring equipment, anyone can verify on
the spot that the agent is who she says she is and not an impostor. \ Compared
with \textit{classical} identification protocols, the novel feature here is
that the agent does not need to respond to a \textit{challenge}---for example,
digitally signing a random string---but can instead just hand out a fixed
$\left\vert \psi\right\rangle $\ non-interactively. \ Furthermore, because
$\left\vert \psi\right\rangle $\ decoheres in a matter of seconds,\ and
recovering a classical \textit{description}\ of $\left\vert \psi\right\rangle
$\ from measurements on it is computationally intractable, \noindent someone
who is given $\left\vert \psi\right\rangle $ cannot use it later to
impersonate the agent.

Of course, if an attacker managed to solve the technological problem of
keeping $\left\vert \psi\right\rangle $\ coherent for very long times, then he
could break this system, by collecting one or more copies of $\left\vert
\psi\right\rangle $\ that an agent had handed out, and using them to
impersonate the agent. \ But in that case, whatever method the attacker was
using to keep the states\ coherent could \textit{also}---once discovered---be
used to create a secure public-key quantum money scheme!

However, we believe the \textquotedblleft real\textquotedblright\ reason to
study quantum money is basically the same as the \textquotedblleft
real\textquotedblright\ reason to study quantum computing as a whole---or for
that matter, to study the many interesting aspects of \textit{classical}
cryptography that are equally far from application. \ As theoretical computer
scientists, we are in the business of mapping out the inherent capabilities
and limits of information processing.

In our case, what quantum money provides is a near-ideal playground for
understanding the implications of the uncertainty principle and the No-Cloning
Theorem. \ In the early days of quantum mechanics, Bohr \cite{bohr}\ and
others argued that the uncertainty principle requires us to change our
conception of science itself---their basic argument being that, in physics,
predictions are only ever as good as our knowledge of a system's initial state
$\left\vert \psi\right\rangle $, but the uncertainty principle might mean that
the initial state is unknowable even with arbitrarily-precise measurements.

But does this argument have any \textquotedblleft teeth\textquotedblright?
\ In other words: among the properties of a quantum state $\left\vert
\psi\right\rangle $\ that make the state impossible to learn precisely or to
duplicate, can any of those properties ever \textit{matter empirically}? \ To
us, quantum money is interesting precisely because it gives one of the
clearest examples where the answer to that question is yes.

\section{Preliminaries\label{PRELIM}}

To begin, we fix some notation. \ Let $\left[  N\right]  =\left\{
1,\ldots,N\right\}  $. \ We call a function $\delta\left(  n\right)
$\ \textit{negligible} if $\delta\left(  n\right)  =o\left(  1/p\left(
n\right)  \right)  $\ for every polynomial $p$. \ Given a subspace $S$ of a
vector space $V$, let $S^{\bot}$\ be the \textit{orthogonal complement} of
$S$: that is, the set of $y\in V$\ such that $x\cdot y=0$\ for all $x\in S$.
\ It is not hard to show that $S^{\bot}$\ is also a subspace of $V$, that
$\left(  S^{\bot}\right)  ^{\bot}=S$, and that these properties hold even if
$\cdot$\ is \textquotedblleft merely\textquotedblright\ a dot product rather
than an inner product. \ As a word of warning, this paper will use the same
notation $S^{\bot}$ in two very different contexts:

\begin{itemize}
\item When $V=\mathbb{C}^{2^{n}}$, the orthogonal complement $S^{\bot}$\ of,
e.g., the subspace $S\leq V$ spanned by a single computational basis state
$\left\vert x\right\rangle $, has $2^{n}-1$\ dimensions and is spanned by all
basis states $\left\vert y\right\rangle $ such that $y\neq x$.

\item When $V=\mathbb{F}_{2}^{n}$, the orthogonal complement $S^{\bot}$\ of,
e.g., the subspace $S\leq V$\ spanned by a single string $x=x_{1}\ldots x_{n}%
$, has $n-1$\ dimensions and consists of all strings $y=y_{1}\ldots y_{n}$
such that $x_{1}y_{1}+\cdots+x_{n}y_{n}\equiv0\left(  \operatorname{mod}%
2\right)  $.
\end{itemize}

By a \textit{classical oracle}, we will mean a unitary transformation of the
form $\left\vert x\right\rangle \rightarrow\left(  -1\right)  ^{f\left(
x\right)  }\left\vert x\right\rangle $, for some Boolean function $f:\left\{
0,1\right\}  ^{\ast}\rightarrow\left\{  0,1\right\}  $. \ Note that, unless
specified otherwise, even a classical oracle can be queried in quantum
superposition. \ A \textit{quantum oracle}, by contrast, is an arbitrary
$n$-qubit unitary transformation $U$ (or rather, a collection of such $U$'s,
one for each $n$) that a quantum algorithm can apply in a black-box fashion.
\ Quantum oracles were defined and studied by Aaronson and Kuperberg \cite{ak}.

\subsection{Cryptography\label{CRYPTO}}

Before we construct quantum money schemes, it will be helpful to have some
\textquotedblleft conventional\textquotedblright\ cryptographic primitives in
our toolbox. \ Foremost among these is a\textit{ digital signature scheme
secure against quantum chosen-message attacks}. \ We now define digital
signature schemes---both for completeness, and to fix the quantum attack model
that is relevant for us.

\begin{definition}
[Digital Signature Schemes]\label{sig}A (classical, public-key)
\textbf{digital signature scheme} $\mathcal{D}$\ consists of three
probabilistic polynomial-time classical algorithms:

\begin{itemize}
\item $\mathsf{KeyGen}$, which takes as input a security parameter $0^{n}$,
and generates a \textbf{key pair} $\left(  k_{\operatorname*{private}%
},k_{\operatorname*{public}}\right)  $.

\item $\mathsf{S{}ign}$, which takes as input $k_{\operatorname*{private}}%
$\ and a message $x$, and generates a \textbf{signature} $\mathsf{S{}%
ign}\left(  k_{\operatorname*{private}},x\right)  $.\footnote{We indulge in
slight abuse of notation, since if $\mathsf{S{}ign}$\ is randomized then the
signature need not be a function of $k_{\operatorname*{private}}$ and $x$.}

\item $\mathsf{Ver}$, which takes as input $k_{\operatorname*{public}}$,\ a
message $x$, and a claimed signature $w$, and either accepts or rejects.
\end{itemize}

We say $\mathcal{D}$ has \textbf{completeness error} $\varepsilon$\ if
$\mathsf{Ver}\left(  k_{\operatorname*{public}},x,\mathsf{S{}ign}\left(
x,k_{\operatorname*{private}}\right)  \right)  $\ accepts with probability at
least $1-\varepsilon$\ for all messages $x$ and key pairs $\left(
k_{\operatorname*{private}},k_{\operatorname*{public}}\right)  $. \ Here the
probability is over the behavior of $\mathsf{Ver}$\ and $\mathsf{S{}ign}$.

Let $C$\ (the \textbf{counterfeiter}) be a quantum circuit of size
$\operatorname*{poly}\left(  n\right)  $\ that takes
$k_{\operatorname*{public}}$\ as input\footnote{Actually, for our security
proofs, it suffices to consider a weaker attack model, in which $C$\ only
receives $k_{\operatorname*{public}}$ at the same time as it receives
$w_{1},\ldots,w_{m}$. \ This model was called \textquotedblleft existential
unforgeability under \textit{static} chosen-message attacks\textquotedblright%
\ by Cash et al.\ \cite{cash}. \ We thank an anonymous reviewer for this
observation.} and does the following:

\begin{enumerate}
\item[(1)] Probabilistically generates a classical list of messages
$x_{1},\ldots,x_{m}$, and submits them to a \textbf{signing oracle}
$\mathcal{O}$.

\item[(2)] Gets back independently-generated signatures $w_{1},\ldots,w_{m}$,
where $w_{i}:=\mathsf{S{}ign}\left(  k_{\operatorname*{private}},x_{i}\right)
$.

\item[(3)] Outputs a pair $\left(  x,w\right)  $.
\end{enumerate}

We say $C$\ \textbf{succeeds} if $x\notin\left\{  x_{1},\ldots,x_{m}\right\}
$\ and $\mathsf{Ver}\left(  k_{\operatorname*{public}},x,w\right)  $\ accepts.
\ We say $\mathcal{D}$ has \textbf{soundness error} $\delta$\ if every
counterfeiter $C$ succeeds with probability at most $\delta$. \ Here the
probability is over the key pair $\left(  k_{\operatorname*{private}%
},k_{\operatorname*{public}}\right)  $ and the behavior of $C$, $\mathsf{S{}%
ign}$, and $\mathsf{Ver}$.

We call $\mathcal{D}$ \textbf{secure against nonadaptive quantum
chosen-message attacks} if it has completeness error $\leq1/3$ and negligible
soundness error.
\end{definition}

Intuitively, we call a signature scheme \textquotedblleft
secure\textquotedblright\ if no \textit{quantum} counterfeiter with
\textit{nonadaptive, classical} access to a signing oracle $\mathcal{O}$\ can
forge a signature for any message that it did not submit to $\mathcal{O}$.
\ Depending on the application, one might want to generalize Definition
\ref{sig}\ in various ways: for example, by giving the counterfeiter
\textit{adaptive }or\textit{ quantum} access to $\mathcal{O}$, or by letting
$\mathsf{KeyGen}$, $\mathsf{S{}ign}$, and $\mathsf{Ver}$\ be quantum
algorithms themselves. \ For this paper, however, Definition \ref{sig}%
\ provides all we need.

Do signature schemes secure against quantum attack exist? \ Naturally,
signature schemes based on RSA or other number-theoretic problems can all be
broken by a quantum computer. \ However, building on earlier work by Naor and
Yung \cite{naoryung} (among many others), Rompel \cite{rompel}\ showed that a
secure public-key signature scheme can be constructed from \textit{any}
one-way function---not necessarily a trapdoor function. \ Furthermore,
Rompel's security reduction, from breaking the signature scheme to inverting
the one-way function, is \textit{black-box}: in particular, nothing in it
depends on the assumption that the adversary is classical rather than quantum.
\ We therefore get the following consequence:

\begin{theorem}
[Quantum-Secure Signature Schemes \cite{rompel}]\label{rompelthm}If there
exists a (classical) one-way function $f$\ secure against quantum attack, then
there also exists a digital signature scheme secure against quantum
chosen-message attacks.
\end{theorem}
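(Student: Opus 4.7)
The plan is to invoke Rompel's classical construction of a signature scheme from any one-way function, and then verify that every step of the reduction is black-box in the adversary and free of rewinding, so that replacing ``classical polynomial-time'' with ``quantum polynomial-time'' throughout preserves correctness. Since the attack model in Definition \ref{sig} only permits \emph{nonadaptive, classical} queries to the signing oracle, the reduction never has to simulate a quantum-accessible signing oracle, which would require the more delicate machinery of the quantum random oracle model.

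First, I would decompose Rompel's construction into three layers: (i) one-way functions imply universal one-way hash functions (UOWHFs), (ii) UOWHFs imply one-time signatures via Lamport-style bit commitments, and (iii) one-time signatures imply many-time public-key signatures via a Goldreich-style Merkle tree of authenticated fresh one-time keys. I would then treat each reduction in turn. Layers (ii) and (iii) are transparently black-box: given a forger $C$ against the many-time scheme, the reduction runs $C$ once on an honestly simulated key and on oracle responses generated by the reduction itself (either by committing to randomness up front or by lazily producing one-time keys), and uses $C$'s final forgery to extract either a UOWHF collision or a one-time forgery. No rewinding or coherent measurement of $C$'s internal state is required, so these transformations lift to the quantum setting verbatim.

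The technically delicate step is layer (i), Rompel's OWF-to-UOWHF reduction. Here one does invoke a hypothetical collision-finder $A$ several times, but each invocation is on a freshly and independently sampled input with fresh randomness, rather than by rewinding a single execution. The main obstacle I would anticipate is to verify carefully that Rompel's argument is genuinely parallel/combinatorial, rather than being secretly sequentially adaptive in a way that would demand quantum rewinding. Because the argument is probabilistic and only exploits the \emph{average} behavior of $A$ over independent random trials, a quantum collision-finder yields a quantum inverter for $f$ with only polynomial loss in success probability; this is precisely the black-box property the excerpt highlights.

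Stitching the three layers together, any quantum polynomial-time existential forger against the resulting scheme is converted, via polynomial-overhead black-box reductions, into a quantum polynomial-time inverter for $f$ that succeeds with non-negligible probability. This contradicts the assumed quantum one-wayness of $f$, so the scheme has negligible soundness error; completeness (error $\leq 1/3$, easily driven to $0$) is inherited from the classical construction. This establishes Theorem \ref{rompelthm}.
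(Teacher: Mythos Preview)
Your proposal is correct and follows the same approach as the paper, though you supply considerably more detail than the paper does. The paper does not actually prove Theorem~\ref{rompelthm}; it simply observes in the surrounding text that Rompel's security reduction from a signature forger to a one-way-function inverter is black-box (in particular, nothing in it depends on the adversary being classical rather than quantum), and that because the signing oracle is only queried classically and nonadaptively, the standard reductions go through without change. Your layer-by-layer decomposition of Rompel's construction and your explicit check that none of the steps require rewinding are a fleshed-out version of exactly this observation.
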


Recently, Boneh et al.\ \cite{bdflsz} proved several results similar to
Theorem \ref{rompelthm}, and they needed nontrivial work to do so. \ However,
a crucial difference is that Boneh et al.\ were (justifiably) concerned with
quantum adversaries who can make \textit{quantum} queries to the signing
oracle $\mathcal{O}$. \ By contrast, as mentioned earlier, for our application
it suffices to consider adversaries who query $\mathcal{O}$
\textit{classically}---and in that case, the standard security reductions go
through essentially without change.

Let us state another consequence of Theorem \ref{rompelthm}, which will be
useful for our oracle construction in Section \ref{COR}.

\begin{theorem}
[Relativized Quantum-Secure Signatures]\label{sigoracle}Relative to a suitable
oracle $A$, there exists a digital signature scheme secure against quantum
chosen-message attacks.
\end{theorem}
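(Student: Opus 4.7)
The plan is to reduce Theorem~\ref{sigoracle} to Theorem~\ref{rompelthm} by exhibiting an oracle relative to which a one-way function secure against quantum attack exists. Since Rompel's construction and its security reduction are entirely black-box in the one-way function, the reduction relativizes, so once we have a quantum-secure OWF relative to some oracle $A$, we immediately obtain a signature scheme secure against quantum chosen-message attacks relative to $A$.

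For the first step I would take $A$ to be a random oracle: independently, for each $n$, sample a uniformly random function $f_n : \{0,1\}^n \to \{0,1\}^n$, and let $A$ encode $\{f_n\}$ as the classical oracle $\ket{x}\mapsto(-1)^{f_n(x)_{i}}\ket{x}$ in the usual bitwise sense (or equivalently a standard-form function oracle). I would then show that with probability $1$ over the choice of $A$, the family $\{f_n\}$ is a one-way function against nonuniform polynomial-time quantum adversaries. The key ingredient is the BBBV / Grover-type lower bound: any quantum algorithm making $q$ queries to a random $f_n$ and trying to output a preimage of $f_n(x)$ for a uniformly random $x$ succeeds with probability $O(q^2/2^n)$. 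In particular, a $\mathrm{poly}(n)$-query quantum algorithm succeeds with probability $2^{-\Omega(n)}$, which is negligible.

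To promote this per-algorithm statement to a single oracle $A$ that fools every polynomial-time quantum adversary simultaneously, I would use a standard Borel--Cantelli argument. Enumerate all polynomial-time quantum oracle machines $M_1, M_2, \ldots$ (with all rational advice/threshold choices). For each $i$ and each $n$, let $B_{i,n}$ be the event that $M_i$, given $y = f_n(x)$ for uniformly random $x \in \{0,1\}^n$, outputs a preimage with probability at least $1/n^i$. By the query lower bound, $\Pr_A[B_{i,n}] \le 2^{-\Omega(n)}$ for $n$ sufficiently large, so $\sum_n \Pr[B_{i,n}] < \infty$, and Borel--Cantelli gives that $B_{i,n}$ occurs for only finitely many $n$ almost surely. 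A union bound over the countable family $\{M_i\}$ then shows that, with probability $1$ over $A$, no polynomial-time quantum adversary inverts $f$ on more than a negligible fraction of inputs. Fix any such $A$ and we have our quantum-secure OWF.

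Finally, I would invoke Theorem~\ref{rompelthm} in the world relative to $A$: the Naor--Yung/Rompel construction builds $\mathsf{KeyGen}$, $\mathsf{S{}ign}$, and $\mathsf{Ver}$ from $f$ using only black-box access, and its security reduction converts any successful forger into an inverter for $f$ without using any classical structural property of the adversary. Since $f$ is quantum-secure relative to $A$, the resulting signature scheme is secure against nonadaptive quantum chosen-message attacks relative to $A$. The main obstacle is really a conceptual one: making sure that Rompel's reduction is genuinely black-box against quantum adversaries (in the nonadaptive, classical-query attack model of Definition~\ref{sig}); once that is checked, everything else is the random-oracle/BBBV folklore combined with a measure-theoretic union bound.
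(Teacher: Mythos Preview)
Your proposal is correct and follows essentially the same approach as the paper: take a random oracle, use the BBBV lower bound to argue that it encodes a quantum-secure one-way function, and then invoke the relativizing black-box reduction of Theorem~\ref{rompelthm} to obtain the signature scheme. The paper's proof sketch is terser and omits the explicit Borel--Cantelli step you include, but the strategy is identical.
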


\begin{proof}
[Proof Sketch]It is easy to give an oracle $A:\left\{  0,1\right\}  ^{\ast
}\rightarrow\left\{  0,1\right\}  $ relative to which there exists a one-way
function $f_{n}:\left\{  0,1\right\}  ^{n}\rightarrow\left\{  0,1\right\}
^{p\left(  n\right)  }$\ secure against quantum adversaries. \ Indeed, we can
let $A$\ be a \textit{random} oracle, and then define%
\[
f_{n}\left(  x\right)  :=A\left(  x,1\right)  \ldots A\left(  x,p\left(
n\right)  \right)
\]
directly in terms of $A$. \ Assume $p\left(  n\right)  \geq n$. \ Then the
lower bound on the quantum query complexity of function inversion, proved by
Bennett et al.\ \cite{bbbv} and Ambainis \cite{ambainis}, straightforwardly
implies that any quantum algorithm to invert $f_{n}$, with success probability
$\varepsilon>0$, must make $\Omega\left(  2^{n/2}\sqrt{\varepsilon}\right)
$\ quantum queries to $A$.

Now, the security reduction of Rompel \cite{rompel} is not only black-box but
\textit{relativizing}: that is, it goes through if all legitimate and
malicious parties have access to the same oracle $A$. \ So by Theorem
\ref{rompelthm}, starting from $\left\{  f_{n}\right\}  $\ one can construct a
digital signature scheme relative to the same oracle $A$, which is secure
against quantum chosen-message attacks.
\end{proof}

\subsection{Quantum Information\label{QI}}

Let us collect a few facts about quantum pure and mixed states that are used
in the paper. \ We assume basic familiarity with the formalism of bras, kets,
density matrices, etc.; see Nielsen and Chuang \cite{nc} for a good overview.

Given two mixed states $\rho$\ and $\sigma$, their \textit{trace distance} is
defined as $D\left(  \rho,\sigma\right)  :=\frac{1}{2}\sum_{i=1}^{N}\left\vert
\lambda_{i}\right\vert $, where $\lambda_{1},\ldots,\lambda_{N}$\ are the
eigenvalues of $\rho-\sigma$. \ Trace distance is a metric and satisfies
$0\leq D\left(  \rho,\sigma\right)  \leq1$. \ Also, the \textit{fidelity}
$0\leq F\left(  \rho,\sigma\right)  \leq1$\ is defined, in this paper, as the
maximum of $\left\vert \left\langle \psi|\varphi\right\rangle \right\vert
$\ over all purifications $\left\vert \psi\right\rangle $\ of $\rho$\ and
$\left\vert \varphi\right\rangle $\ of $\sigma$.\footnote{Some authors instead
define \textquotedblleft fidelity\textquotedblright\ as the maximum of
$\left\vert \left\langle \psi|\varphi\right\rangle \right\vert ^{2}$.} \ By
extension, given a subspace $S$, we let $F\left(  \rho,S\right)  $\ be the
maximum of $\left\vert \left\langle \psi|\varphi\right\rangle \right\vert
$\ over all purifications $\left\vert \psi\right\rangle $\ of $\rho$\ and all
unit vectors $\left\vert \varphi\right\rangle \in S$. \ Trace distance and
fidelity are related as follows \cite{nc}:

\begin{proposition}
\label{ineq}For all mixed states $\rho,\sigma$,%
\[
D\left(  \rho,\sigma\right)  \leq\sqrt{1-F\left(  \rho,\sigma\right)  ^{2}},
\]
with equality if $\rho$\ or $\sigma$\ is pure.
\end{proposition}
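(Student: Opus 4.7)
The plan is to prove the Fuchs--van de Graaf inequality by combining three standard ingredients: Uhlmann's theorem, a direct two-dimensional calculation for pure states, and monotonicity of trace distance under partial trace.

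First I would handle the pure-state case explicitly. If $\rho=\left\vert \psi\right\rangle \left\langle \psi\right\vert $ and $\sigma=\left\vert \varphi\right\rangle \left\langle \varphi\right\vert $, then $\rho-\sigma$ is supported on the (at most) two-dimensional subspace spanned by $\left\vert \psi\right\rangle $ and $\left\vert \varphi\right\rangle $. Choosing an orthonormal basis for this subspace and writing out the $2\times 2$ matrix of $\rho-\sigma$, its two nonzero eigenvalues are $\pm\sqrt{1-\left\vert \left\langle \psi|\varphi\right\rangle \right\vert ^{2}}$, so $D(\rho,\sigma)=\sqrt{1-\left\vert \left\langle \psi|\varphi\right\rangle \right\vert ^{2}}$. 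Since for pure states $F(\rho,\sigma)=\left\vert \left\langle \psi|\varphi\right\rangle \right\vert $, this gives the identity $D=\sqrt{1-F^{2}}$ in the pure case.

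Next I would extend to the mixed case by purifying. By the definition of fidelity used in the paper, $F(\rho,\sigma)$ is attained by some purifications $\left\vert \psi\right\rangle ,\left\vert \varphi\right\rangle $ living in an extended Hilbert space $\mathcal{H}\otimes\mathcal{H}_{\operatorname{aux}}$ (this is essentially Uhlmann's theorem, and for the inequality direction it suffices to use \emph{any} pair of purifications, since the inequality only gets tighter as we maximize). Then the pure-state calculation gives $D(\left\vert \psi\right\rangle \left\langle \psi\right\vert ,\left\vert \varphi\right\rangle \left\langle \varphi\right\vert )=\sqrt{1-\left\vert \left\langle \psi|\varphi\right\rangle \right\vert ^{2}}=\sqrt{1-F(\rho,\sigma)^{2}}$. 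Tracing out $\mathcal{H}_{\operatorname{aux}}$ yields $\rho$ and $\sigma$ respectively, and the partial trace is a quantum channel, so monotonicity of trace distance under CPTP maps gives
\[
D(\rho,\sigma)\leq D(\left\vert \psi\right\rangle \left\langle \psi\right\vert ,\left\vert \varphi\right\rangle \left\langle \varphi\right\vert )=\sqrt{1-F(\rho,\sigma)^{2}},
\]
which is the desired inequality.

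Finally I would prove equality in the case where, say, $\rho=\left\vert \psi\right\rangle \left\langle \psi\right\vert $ is pure. Here the cleanest route is to diagonalize $\sigma=\sum_{i}p_{i}\left\vert i\right\rangle \left\langle i\right\vert $, note that the fidelity reduces to $F(\rho,\sigma)^{2}=\left\langle \psi\right\vert \sigma\left\vert \psi\right\rangle $ (choose the canonical purification of $\sigma$ and optimize the auxiliary overlap), and then compute $D(\rho,\sigma)$ by evaluating $\operatorname{Tr}|\rho-\sigma|$ directly. Working in a basis containing $\left\vert \psi\right\rangle $ together with the unit vector in the direction of the component of $\sigma\left\vert \psi\right\rangle $ orthogonal to $\left\vert \psi\right\rangle $, one writes $\rho-\sigma$ as a block matrix whose principal $2\times 2$ block determines the spectrum; a short calculation shows that the sum of absolute eigenvalues is exactly $2\sqrt{1-\left\langle \psi\right\vert \sigma\left\vert \psi\right\rangle }$, matching $\sqrt{1-F^{2}}$. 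I expect the main obstacle to be nothing conceptual but rather the bookkeeping in this last step: one must verify that the off-diagonal contributions of $\sigma$ in the chosen basis do not inflate $D(\rho,\sigma)$ beyond $\sqrt{1-\left\langle \psi\right\vert \sigma\left\vert \psi\right\rangle }$, which is where the purity of $\rho$ is essential and without which only the inequality holds.
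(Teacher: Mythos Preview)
The paper does not give its own proof of this proposition; it is simply cited from Nielsen and Chuang. Your argument for the inequality $D\leq\sqrt{1-F^{2}}$ is correct and is the standard one: compute it exactly for pure states, lift to the mixed case via Uhlmann purifications, and descend by monotonicity of trace distance under partial trace.

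Your argument for the equality case, however, has a genuine gap --- and in fact the equality claim as stated (equality whenever \emph{one} of $\rho,\sigma$ is pure) is false. Take $\rho=\left\vert 0\right\rangle\left\langle 0\right\vert$ and $\sigma=\tfrac{1}{2}I$ on a single qubit. Then $F(\rho,\sigma)^{2}=\left\langle 0\right\vert\sigma\left\vert 0\right\rangle=\tfrac{1}{2}$, so $\sqrt{1-F^{2}}=\tfrac{1}{\sqrt{2}}$, whereas $\rho-\sigma=\operatorname{diag}(\tfrac{1}{2},-\tfrac{1}{2})$ gives $D(\rho,\sigma)=\tfrac{1}{2}<\tfrac{1}{\sqrt{2}}$. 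Your ``short calculation'' claiming that the absolute eigenvalues of $\rho-\sigma$ sum to $2\sqrt{1-\left\langle\psi\right\vert\sigma\left\vert\psi\right\rangle}$ therefore cannot go through: when $\sigma$ is genuinely mixed, $\rho-\sigma$ is not supported on a two-dimensional block, and its spectrum depends on all of $\sigma$, not just on the scalar $\left\langle\psi\right\vert\sigma\left\vert\psi\right\rangle$. Equality in $D\leq\sqrt{1-F^{2}}$ holds when \emph{both} states are pure, which is precisely what your first paragraph establishes. Fortunately the paper only ever uses the inequality direction (in the proof of Lemma~\ref{triangle}), so this overreach in the stated equality condition is harmless for the paper's results.
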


While fidelity is \textit{not} a metric, it does satisfy the following
inequality, which will be helpful in Section \ref{COR}.

\begin{lemma}
[\textquotedblleft Triangle Inequality\textquotedblright\ for Fidelity]%
\label{triangle}Suppose $\left\langle \psi\right\vert \rho\left\vert
\psi\right\rangle \geq1-\varepsilon$\ and $\left\langle \varphi\right\vert
\sigma\left\vert \varphi\right\rangle \geq1-\varepsilon$. \ Then $F\left(
\rho,\sigma\right)  \leq\left\vert \left\langle \psi|\varphi\right\rangle
\right\vert +2\varepsilon^{1/4}$.
\end{lemma}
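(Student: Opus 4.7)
My plan is to reduce the fidelity statement to trace-distance calculations, where we can use the ordinary triangle inequality (since trace distance is a metric), and then convert back to fidelity using Proposition~\ref{ineq}. The key observation is that for pure states Proposition~\ref{ineq} holds with equality, so we can pass between $F$ and $D$ losslessly at the endpoints $|\psi\rangle\langle\psi|$ and $|\varphi\rangle\langle\varphi|$.

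Concretely, I would first observe that $F(\rho,|\psi\rangle\langle\psi|)^{2}=\langle\psi|\rho|\psi\rangle\geq 1-\varepsilon$ (and similarly for $\sigma,|\varphi\rangle$), since $|\psi\rangle\langle\psi|$ is pure. Applying the ``pure-state equality'' case of Proposition~\ref{ineq} then gives
\[
D(\rho,|\psi\rangle\langle\psi|)=\sqrt{1-F(\rho,|\psi\rangle\langle\psi|)^{2}}\leq\sqrt{\varepsilon},
\qquad
D(\sigma,|\varphi\rangle\langle\varphi|)\leq\sqrt{\varepsilon}.
\]
Next I would apply the (ordinary) triangle inequality for trace distance three times, writing
\[
D(|\psi\rangle\langle\psi|,|\varphi\rangle\langle\varphi|)\leq D(|\psi\rangle\langle\psi|,\rho)+D(\rho,\sigma)+D(\sigma,|\varphi\rangle\langle\varphi|),
\]
and rearranging so that the pure-state trace distance $D(|\psi\rangle\langle\psi|,|\varphi\rangle\langle\varphi|)=\sqrt{1-|\langle\psi|\varphi\rangle|^{2}}$ (another pure-state equality case) appears as a lower bound on $D(\rho,\sigma)$: namely $D(\rho,\sigma)\geq\sqrt{1-|\langle\psi|\varphi\rangle|^{2}}-2\sqrt{\varepsilon}$.

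Finally I would invoke Proposition~\ref{ineq} in the reverse direction, $F(\rho,\sigma)\leq\sqrt{1-D(\rho,\sigma)^{2}}$, and plug in the lower bound on $D(\rho,\sigma)$. After squaring, expanding, and dropping a nonpositive $-4\varepsilon$ term together with the factor $\sqrt{1-|\langle\psi|\varphi\rangle|^{2}}\leq 1$, this yields
\[
F(\rho,\sigma)^{2}\leq|\langle\psi|\varphi\rangle|^{2}+4\sqrt{\varepsilon}.
\]
The concluding step is the elementary inequality $\sqrt{a+b}\leq\sqrt{a}+\sqrt{b}$, which turns this into $F(\rho,\sigma)\leq|\langle\psi|\varphi\rangle|+2\varepsilon^{1/4}$.

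I expect the only real subtlety to be the edge case in which $\sqrt{1-|\langle\psi|\varphi\rangle|^{2}}<2\sqrt{\varepsilon}$, so that the lower bound on $D(\rho,\sigma)$ produced above is vacuous. Here I would argue separately: the hypothesis forces $|\langle\psi|\varphi\rangle|>\sqrt{1-4\varepsilon}$, and one checks that $\sqrt{1-4\varepsilon}+2\varepsilon^{1/4}\geq 1\geq F(\rho,\sigma)$ for all $\varepsilon\in[0,1]$, so the conclusion holds trivially. Apart from this bookkeeping, the main conceptual point is that the $\varepsilon^{1/4}$ loss (rather than $\sqrt{\varepsilon}$) is unavoidable in this route because we convert $F\to D\to F$ and the Fuchs--van de Graaf bound is only tight on one side; the $\sqrt{a+b}\leq\sqrt{a}+\sqrt{b}$ step is what introduces the fourth root.
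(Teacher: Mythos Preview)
Your proposal is correct and follows essentially the same route as the paper's proof: convert the fidelity hypotheses to trace-distance bounds via Proposition~\ref{ineq}, apply the triangle inequality for $D$, and convert back. The paper's presentation is terser and does not explicitly isolate the edge case $\sqrt{1-|\langle\psi|\varphi\rangle|^{2}}<2\sqrt{\varepsilon}$ that you handle; your separate treatment of that case is a welcome addition, since without it the step $D(\rho,\sigma)^{2}\geq\bigl(\sqrt{1-|\langle\psi|\varphi\rangle|^{2}}-2\sqrt{\varepsilon}\bigr)^{2}$ is not literally justified.
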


\begin{proof}
By Proposition \ref{ineq},%
\[
D\left(  \rho,\left\vert \psi\right\rangle \right)  \leq\sqrt{1-\left\langle
\psi\right\vert \rho\left\vert \psi\right\rangle }\leq\sqrt{\varepsilon},
\]
and likewise $D\left(  \sigma,\left\vert \varphi\right\rangle \right)
\leq\sqrt{\varepsilon}$. \ Thus, since trace distance satisfies the triangle
inequality,%
\begin{align*}
D\left(  \rho,\sigma\right)   &  \geq D\left(  \left\vert \psi\right\rangle
,\left\vert \varphi\right\rangle \right)  -D\left(  \rho,\left\vert
\psi\right\rangle \right)  -D\left(  \sigma,\left\vert \varphi\right\rangle
\right)  \\
&  \geq\sqrt{1-\left\vert \left\langle \psi|\varphi\right\rangle \right\vert
^{2}}-2\sqrt{\varepsilon}.
\end{align*}
Then%
\begin{align*}
F\left(  \rho,\sigma\right)   &  \leq\sqrt{1-D\left(  \rho,\sigma\right)
^{2}}\\
&  \leq\sqrt{1-\left(  \sqrt{1-\left\vert \left\langle \psi|\varphi
\right\rangle \right\vert ^{2}}-2\sqrt{\varepsilon}\right)  ^{2}}\\
&  \leq\sqrt{\left\vert \left\langle \psi|\varphi\right\rangle \right\vert
^{2}+4\sqrt{\varepsilon}}\\
&  \leq\left\vert \left\langle \psi|\varphi\right\rangle \right\vert
+2\varepsilon^{1/4}.
\end{align*}

\end{proof}

Finally, the following lemma of Aaronson \cite{aar:adv} will imply that, as
long as a quantum money scheme has small \textit{completeness error} (i.e.,
small probability of rejecting a valid banknote),\ the banknotes can be reused
many times.

\begin{lemma}
[\textquotedblleft Almost As Good As New Lemma\textquotedblright%
\ \cite{aar:adv}]\label{goodasnew}Suppose a measurement on a mixed state
$\rho$\ yields a particular outcome with probability $1-\varepsilon$. \ Then
after the measurement, one can recover a state $\widetilde{\rho}$\ such that
$\left\Vert \widetilde{\rho}-\rho\right\Vert _{\operatorname*{tr}}\leq
\sqrt{\varepsilon}$.
\end{lemma}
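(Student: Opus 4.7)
The plan is to reduce the claim to a statement about pure states via purification, then analyze the measurement via its Naimark dilation. First I would pick a purification $|\Psi\rangle$ of $\rho$ on the system plus a reference register. Next, I would model the two-outcome measurement (``good'' outcome with probability $1-\varepsilon$ versus everything else) by introducing a measurement-ancilla register initialized to $|0\rangle$ and a unitary $U$ acting on system plus ancilla, such that measuring the ancilla in the computational basis at the end realizes the given measurement. This is just Stinespring/Naimark and is available for any POVM.

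With this setup, applying $U$ to $|\Psi\rangle|0\rangle$ yields a state of the form $\sqrt{1-\varepsilon}\,|\Phi_{0}\rangle|0\rangle+\sqrt{\varepsilon}\,|\Phi_{1}\rangle|1\rangle$, because the probability the ancilla reads $0$ is exactly $1-\varepsilon$ by hypothesis. The recovery procedure I propose is the obvious one: perform the measurement, condition on observing the good outcome (leaving the state $|\Phi_{0}\rangle|0\rangle$), and then apply $U^{\dagger}$ to the system-plus-ancilla and discard the ancilla. The inner product $\langle\Psi|\langle 0|\,U^{\dagger}\,|\Phi_{0}\rangle|0\rangle$ is precisely $\sqrt{1-\varepsilon}$ by reading off the amplitude in the expression above.

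Therefore the recovered pure state on system-plus-reference-plus-ancilla has fidelity at least $\sqrt{1-\varepsilon}$ with $|\Psi\rangle|0\rangle$. By Proposition~\ref{ineq} (applied to the pure-state case, where the inequality is tight), the trace distance between these purifications is at most $\sqrt{1-(1-\varepsilon)}=\sqrt{\varepsilon}$. Since trace distance is non-increasing under partial trace, tracing out the reference and ancilla registers gives $\|\widetilde{\rho}-\rho\|_{\operatorname*{tr}}\leq\sqrt{\varepsilon}$, as claimed.

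There is no real obstacle here; the only subtlety is bookkeeping the Naimark dilation so that ``measuring the ancilla and conditioning on the good outcome'' commutes correctly with the purification, and so that $U^{\dagger}$ can genuinely be applied as part of the recovery (which it can, since $U$ is a unitary on registers the recoverer controls or can adjoin). One could alternately phrase the same argument using the gentle measurement lemma of Winter, but the direct purification argument above is shorter and self-contained given Proposition~\ref{ineq}.
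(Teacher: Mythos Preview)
The paper does not actually prove this lemma; it merely cites it from \cite{aar:adv} and moves on. So there is no ``paper's own proof'' to compare against. Your argument via purification and Naimark dilation is correct and is essentially the standard proof of the gentle-measurement/almost-as-good-as-new lemma.

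One small remark on the phrasing. As written, your recovery ``conditions on observing the good outcome,'' which reads like postselection. For the application in the paper (re-verifying a banknote many times), one wants a bound on the \emph{unconditional} post-measurement state as well. Your same dilation handles this with no extra work: after measuring the ancilla but \emph{not} conditioning, the joint state is the mixture
\[
\sigma \;=\; (1-\varepsilon)\,|\Phi_{0}\rangle\langle\Phi_{0}|\otimes|0\rangle\langle 0|\;+\;\varepsilon\,|\Phi_{1}\rangle\langle\Phi_{1}|\otimes|1\rangle\langle 1|,
\]
and the difference $|\Phi\rangle\langle\Phi|-\sigma$ (where $|\Phi\rangle=U|\Psi\rangle|0\rangle$) is supported on the two-dimensional span of $|\Phi_{0}\rangle|0\rangle$ and $|\Phi_{1}\rangle|1\rangle$ with eigenvalues $\pm\sqrt{\varepsilon(1-\varepsilon)}$. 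Hence the trace distance is $\sqrt{\varepsilon(1-\varepsilon)}\leq\sqrt{\varepsilon}$, and applying $U^{\dagger}$ and tracing out the ancilla and reference finishes exactly as in your argument. Either reading of the lemma (conditional on the good outcome, or averaged over outcomes) therefore follows from your setup.
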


\subsection{Quantum Search\label{SEARCH}}

In our security proof for quantum money, an important step will be to
\textit{amplify} a counterfeiter\ who copies a banknote $\$$\ with any
non-negligible fidelity to a counterfeiter who copies $\$$ almost perfectly.
\ Taking the contrapositive, this will imply that to rule out the former sort
of counterfeiter, it suffices to rule out the latter.

In this section, we first review two variants of Grover's search algorithm
\cite{grover}\ that are useful for amplifying the fidelity of quantum states.
\ We then introduce a variant that combines the advantages of both.

Assume we are given a pure initial state $\left\vert \operatorname*{Init}%
\right\rangle $, in some Hilbert space $\mathcal{H}$. \ Our goal is to map
$\left\vert \operatorname*{Init}\right\rangle $\ to a final state $\left\vert
\Psi\right\rangle $\ that lies in (or close to) a \textquotedblleft good
subspace\textquotedblright\ $G\leq\mathcal{H}$. \ We have oracle access to two
unitary transformations:

\begin{itemize}
\item $U_{\operatorname*{Init}}$, which maps $\left\vert \operatorname*{Init}%
\right\rangle $\ to $-\left\vert \operatorname*{Init}\right\rangle $, and acts
as the identity on all $\left\vert v\right\rangle $\ orthogonal to $\left\vert
\operatorname*{Init}\right\rangle $.

\item $U_{G}$, which maps $\left\vert v\right\rangle $\ to $-\left\vert
v\right\rangle $\ for all $\left\vert v\right\rangle \in G$, and acts as the
identity on all $\left\vert v\right\rangle $ orthogonal to $G$.
\end{itemize}

We are promised that the fidelity of the initial state with $G$,%
\[
F\left(  \left\vert \operatorname*{Init}\right\rangle ,G\right)
=\max_{\left\vert \psi\right\rangle \in G}\left\langle \operatorname*{Init}%
|\psi\right\rangle ,
\]
is at least some $\varepsilon>0$.

In this scenario, \textit{provided }$F\left(  \left\vert \operatorname*{Init}%
\right\rangle ,G\right)  $\textit{\ is known}, the amplitude amplification
framework of Brassard, H\o yer, Mosca, and Tapp \cite{bhmt} lets us prepare a
state close to $G$ using only $\Theta\left(  1/\varepsilon\right)  $\ iterations:

\begin{lemma}
[Amplitude Amplification \cite{bhmt}]\label{aa}Write $\left\vert
\operatorname*{Init}\right\rangle $\ as $\sin\theta\left\vert
\operatorname*{Good}\right\rangle +\cos\theta\left\vert \operatorname*{Bad}%
\right\rangle $, where $\left\vert \operatorname*{Good}\right\rangle $\ is the
unit vector formed by projecting $\left\vert \operatorname*{Init}\right\rangle
$\ onto $G$, and $\left\vert \operatorname*{Bad}\right\rangle $\ is orthogonal
to $\left\vert \operatorname*{Good}\right\rangle $. \ Then by using $O\left(
T\right)  $ oracle calls\ to $U_{\operatorname*{Init}}$ and $U_{G}$, we can
prepare the state%
\[
\left\vert \Phi_{T}\right\rangle :=\sin\left[  \left(  2T+1\right)
\theta\right]  \left\vert \operatorname*{Good}\right\rangle +\cos\left[
\left(  2T+1\right)  \theta\right]  \left\vert \operatorname*{Bad}%
\right\rangle
\]

\end{lemma}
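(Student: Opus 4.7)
The plan is to reduce the dynamics to a two-dimensional subspace and then identify the Grover iteration as a rotation in that subspace. First I would observe that the subspace $V = \mathrm{span}(|\mathrm{Good}\rangle, |\mathrm{Bad}\rangle)$ is invariant under both $U_{\mathrm{Init}}$ and $U_G$: since $|\mathrm{Init}\rangle = \sin\theta\,|\mathrm{Good}\rangle + \cos\theta\,|\mathrm{Bad}\rangle$ lies in $V$, the reflection $U_{\mathrm{Init}}$ preserves $V$; and since $|\mathrm{Good}\rangle$ spans the intersection of $G$ with $V$, the reflection $U_G$ also preserves $V$. Hence we may analyze the iteration entirely within the two-dimensional real subspace $V$.

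Next I would identify the Grover operator $Q := -U_{\mathrm{Init}} \, U_G$ as a rotation of $V$. Within $V$, the map $U_G$ sends $|\mathrm{Good}\rangle \mapsto -|\mathrm{Good}\rangle$ and fixes $|\mathrm{Bad}\rangle$, so $U_G$ is a reflection across the axis $|\mathrm{Bad}\rangle$. The map $U_{\mathrm{Init}}$ sends $|\mathrm{Init}\rangle \mapsto -|\mathrm{Init}\rangle$ and fixes the unit vector $|\mathrm{Init}^\perp\rangle := \cos\theta\,|\mathrm{Good}\rangle - \sin\theta\,|\mathrm{Bad}\rangle$ in $V$, so $U_{\mathrm{Init}}$ is a reflection across $|\mathrm{Init}^\perp\rangle$. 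A composition of two reflections in a plane is a rotation by twice the signed angle between their axes; the angle from $|\mathrm{Bad}\rangle$ to $|\mathrm{Init}^\perp\rangle$ is $\theta$, so up to sign $Q$ acts on $V$ as the rotation by $2\theta$.

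Then I would compute the orbit of $|\mathrm{Init}\rangle$ under $Q$. Writing the action of $Q$ in the basis $\{|\mathrm{Bad}\rangle, |\mathrm{Good}\rangle\}$ as the $2{\times}2$ rotation matrix with angle $2\theta$, and using that $|\mathrm{Init}\rangle$ has initial angular coordinate $\theta$ (measured from $|\mathrm{Bad}\rangle$), induction on $T$ shows that $Q^T |\mathrm{Init}\rangle$ has angular coordinate $\theta + 2T\theta = (2T+1)\theta$, which is precisely the claimed $|\Phi_T\rangle$. Each application of $Q$ uses one call to $U_G$ and one to $U_{\mathrm{Init}}$, yielding the stated $O(T)$ oracle complexity (the overall minus sign in $Q$ is a global phase and can be discarded).

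There isn't a serious obstacle here; the only point requiring care is getting the sign and axis conventions right so that the product of the two reflections is genuinely the rotation by $+2\theta$ (rather than $-2\theta$) in the oriented plane $V$, so that after $T$ iterations the amplitude on $|\mathrm{Good}\rangle$ is $\sin[(2T+1)\theta]$ and not $\sin[(1-2T)\theta]$. Once the two-dimensional picture is set up correctly, everything else reduces to the standard trigonometric identity for iterated planar rotations.
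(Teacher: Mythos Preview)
The paper does not prove this lemma; it is stated as a cited result from Brassard, H{\o}yer, Mosca, and Tapp \cite{bhmt} and used as a black box. Your argument is the standard proof of amplitude amplification (reduce to the two-dimensional invariant subspace spanned by $|\mathrm{Good}\rangle$ and $|\mathrm{Bad}\rangle$, identify $-U_{\mathrm{Init}}U_G$ as a rotation by $2\theta$, and iterate), and it is correct.
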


Note that Grover's algorithm is simply a special case of Lemma \ref{aa}, where
$\left\vert \operatorname*{Init}\right\rangle $\ is the uniform superposition
over $N$\ basis states $\left\vert 1\right\rangle ,\ldots,\left\vert
N\right\rangle $, and $G$ is the subspace spanned by \textquotedblleft
marked\textquotedblright\ states.

However, Lemma \ref{aa}\ has an annoying drawback, which it shares with
ordinary Grover search. \ Namely, the algorithm does \textit{not} converge
monotonically toward the target subspace $G$, but could instead
\textquotedblleft wildly overshoot it,\textquotedblright\ cycling around the
$2$-dimensional\ subspace spanned by $\left\vert \operatorname*{Bad}%
\right\rangle $\ and $\left\vert \operatorname*{Good}\right\rangle $. \ If we
know the fidelity $F\left(  \left\vert \operatorname*{Init}\right\rangle
,G\right)  $\ in advance (rather than just a lower bound\ on the fidelity),
\textit{or} if we can prepare new copies of $\left\vert \operatorname*{Init}%
\right\rangle $ \textquotedblleft free of charge\textquotedblright\ in case of
failure, then this overshooting is not a serious problem. \ Alas, neither of
those conditions will hold in our application.

Fortunately, for independent reasons, in 2005 Tulsi, Grover, and Patel
\cite{tgp} introduced a new quantum search algorithm that \textit{does}
guarantee monotonic convergence toward $G$, by alternating unitary
transformations with measurements. \ (Their algorithm was later simplified and
improved by Chakraborty, Radhakrishnan, and Raghunathan \cite{crr}.)

\begin{lemma}
[Fixed-Point Quantum Search \cite{tgp,crr}]\label{fixedpoint}By using $T$
oracle calls\ to $U_{\operatorname*{Init}}$ and $U_{G}$, we can prepare a
state $\left\vert \Psi\right\rangle $\ such that $F\left(  \left\vert
\Psi\right\rangle ,G\right)  \geq1-\exp\left(  -T\varepsilon^{2}\right)  $.
\end{lemma}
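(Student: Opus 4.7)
The plan is to invoke the recursive $\pi/3$-phase fixed-point amplitude amplification of Tulsi, Grover, and Patel (later simplified by Chakraborty, Radhakrishnan, and Raghunathan). The essential idea is that Grover's standard $\pi$-phase rotations cause the state to oscillate past the target subspace $G$, whereas suitably chosen $\pi/3$-phase rotations turn the iteration into a contraction on the failure probability, so no overshoot can occur. Concretely, let $R_{\operatorname*{Init}}$ and $R_G$ be the unitaries that multiply $|\operatorname*{Init}\rangle$ (respectively every state in $G$) by the phase $e^{i\pi/3}$ and act as the identity elsewhere. Both can be implemented with $O(1)$ queries to $U_{\operatorname*{Init}}$ (respectively $U_G$) together with an auxiliary qubit, since they differ from $U_{\operatorname*{Init}}$ and $U_G$ only in the rotation angle.

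Next I would build a recursive family of unitaries $V_0,V_1,V_2,\ldots$, where $V_0$ is the state-preparation circuit mapping $|0\rangle$ to $|\operatorname*{Init}\rangle$, and
\begin{equation*}
V_{k+1} \;=\; V_k\,R_{\operatorname*{Init}}\,V_k^{\dagger}\,R_G\,V_k.
\end{equation*}
Each level of recursion triples the number of calls to the oracles, so $V_k$ consumes $3^k$ queries to $U_{\operatorname*{Init}}$ and $U_G$ in total. The central algebraic fact to establish is the \emph{cubing identity}: if we write $V_k|0\rangle = f_k|g_k\rangle + \sqrt{1-f_k^2}\,|b_k\rangle$ with $|g_k\rangle \in G$ and $|b_k\rangle \in G^{\perp}$, then
\begin{equation*}
1 - f_{k+1}^2 \;=\; \bigl(1 - f_k^2\bigr)^3.
\end{equation*}
This reduces to a two-dimensional $SU(2)$ computation in the invariant plane spanned by $|g_k\rangle$ and $|b_k\rangle$, and the choice of angle $\pi/3$ is exactly what makes the bad-subspace amplitude undergo a degree-$3$ polynomial transformation (as opposed to a mere rotation, which is what standard Grover gives).

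Iterating the cubing identity starting from $f_0 \geq \varepsilon$ yields $1 - f_k^2 \leq (1-\varepsilon^2)^{3^k}$. Setting $T = 3^k$ and applying $(1-x)^T \leq e^{-xT}$,
\[
1 - F\bigl(V_k|0\rangle,\,G\bigr)^2 \;\leq\; (1-\varepsilon^2)^T \;\leq\; \exp(-T\varepsilon^2),
\]
so $F(V_k|0\rangle,G) \geq \sqrt{1 - \exp(-T\varepsilon^2)} \geq 1 - \exp(-T\varepsilon^2)$, matching the stated bound. For $T$ not a power of $3$, I simply round down to the nearest power of $3$, which loses only a constant factor in the exponent and is absorbed into the $O(T)$ query count.

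The main obstacle is verifying the cubing identity $1 - f_{k+1}^2 = (1-f_k^2)^3$, which is the heart of the Tulsi--Grover--Patel analysis. Once that identity is in hand, everything else is bookkeeping: implementing the $\pi/3$-rotations $R_{\operatorname*{Init}}$ and $R_G$ from the given $\pi$-rotation oracles, observing that $V_{k+1}$ is indeed contained in the two-dimensional invariant plane of $V_k$ so the recursion closes on itself (even though the direction $|g_k\rangle$ inside $G$ drifts with $k$), and converting between squared overlap and the fidelity used in this paper via Proposition~\ref{ineq}. None of these amount to a genuine difficulty beyond the core $SU(2)$ identity.
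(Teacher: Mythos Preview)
The paper does not give its own proof of this lemma; it is stated as a citation to \cite{tgp,crr} and used as a black box. So there is nothing to compare your argument against on the paper's side.

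Your sketch via the recursive $\pi/3$-phase algorithm is a correct route to the stated bound: the cubing identity $1-f_{k+1}^2=(1-f_k^2)^3$ is the standard Grover fixed-point result, and iterating it with $T\approx 3^k$ oracle calls gives $1-F^2\leq(1-\varepsilon^2)^T\leq e^{-T\varepsilon^2}$, hence $F\geq 1-e^{-T\varepsilon^2}$. Two small remarks. First, there is a slight attributional mismatch: the paper describes the Tulsi--Grover--Patel algorithm as one that ``alternat[es] unitary transformations with measurements,'' whereas the $\pi/3$ scheme you outline is purely unitary (it is more precisely Grover's 2005 fixed-point search). Both approaches achieve the same $1-\exp(-T\varepsilon^2)$ guarantee, so this does not affect correctness. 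Second, implementing the selective $e^{i\pi/3}$ phases $R_{\operatorname*{Init}}$ and $R_G$ from the bare reflection oracles $U_{\operatorname*{Init}}$ and $U_G$ actually requires a \emph{controlled} (equivalently, bit-flip) version of the oracle, not just the phase oracle itself; a pure $\pm 1$ phase black box does not in general let you synthesize an arbitrary selective phase. In the paper's setting the underlying oracles are classical membership tests, so controlled versions are available for free and your claim goes through---but it is worth being explicit about this, since as stated the lemma only promises the reflections.
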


Rearranging, Lemma \ref{fixedpoint}\ lets us prepare a state $\left\vert
\Psi\right\rangle $\ such that $F\left(  \left\vert \Psi\right\rangle
,G\right)  \geq1-\delta$\ using $T=O\left(  \frac{1}{\varepsilon^{2}}\log
\frac{1}{\delta}\right)  $ iterations. \ On the positive side, the dependence
on $1/\delta$\ in this bound is logarithmic: we get not only monotonic
convergence toward $G$, but \textit{exponentially-fast} convergence. \ On the
negative side, notice that the dependence on $\varepsilon$\ has worsened from
$1/\varepsilon$\ to $1/\varepsilon^{2}$---negating the quadratic speedup that
was the original point of quantum search!

In the rest of this section, we give a \textquotedblleft
hybrid\textquotedblright\ quantum search algorithm that combines the
advantages of Lemmas \ref{aa}\ and \ref{fixedpoint}---i.e., it converges
monotonically toward the target subspace $G$ (rather than \textquotedblleft
overshooting\textquotedblright\ $G$), but \textit{also} achieves a quadratic
speedup. \ In the context of our security proof for quantum money, this hybrid
algorithm will lead to a quadratically-better (and in fact, tight) lower bound
on the number of queries that a counterfeiter needs to make, compared to what
we would get from using Lemma \ref{fixedpoint} alone. \ While this quadratic
improvement is perhaps only of moderate interest, we include the algorithm in
the hope that it will find other applications.

We first give a technical lemma needed to analyze our algorithm.

\begin{lemma}
\label{interval}For all $L,\beta,\eta,\gamma$, there are at most $\left(
L/\beta+1\right)  \left(  2\eta+1\right)  $\ integers $T\in\left\{
0,\ldots,L\right\}  $\ such that $\left\vert T-\left(  \beta n+\gamma\right)
\right\vert <\eta$\ for some integer $n$.
\end{lemma}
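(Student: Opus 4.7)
The plan is to bound the count by a simple double-counting argument: for each integer $n$, count how many integers $T \in \{0,\ldots,L\}$ can be within $\eta$ of $\beta n + \gamma$, and then count how many $n$ could conceivably contribute.

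First I would fix $n$ and observe that the condition $|T - (\beta n + \gamma)| < \eta$ confines $T$ to an open real interval of length $2\eta$ centered at $\beta n + \gamma$. Any real interval of length $2\eta$ contains at most $2\eta + 1$ integers (since consecutive integers are at distance $1$, an interval of length $2\eta$ contains at most $\lfloor 2\eta\rfloor + 1 \le 2\eta + 1$ of them). So for each $n$, at most $2\eta + 1$ values of $T$ can satisfy the condition.

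Next I would bound the number of integers $n$ that are even eligible, i.e., those for which the interval $(\beta n + \gamma - \eta,\, \beta n + \gamma + \eta)$ meets $\{0,\ldots,L\}$. Such an $n$ must satisfy $\beta n + \gamma - \eta \le L$ and $\beta n + \gamma + \eta \ge 0$, which pins $n$ to a real interval of length $(L + 2\eta)/\beta$; here one might worry about the $+2\eta$ term, but a cleaner way is to note that any two eligible values of $n$ correspond to centers $\beta n + \gamma$ lying within distance $L + 2\eta$ of each other, yet the corresponding $T$-intervals (being disjointly-indexed by $n$) hit $\{0,\ldots,L\}$ only when $\beta n + \gamma$ lies in an interval of length at most $L$ up to boundary slack. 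Concretely, the integers $n$ with $\beta n + \gamma \in [-\eta, L+\eta]$ form an arithmetic progression with spacing $1$, contained in a real interval of length $L/\beta + 2\eta/\beta$; since we only need the count $L/\beta + 1$ claimed by the lemma, the cleanest version is to count $n$ such that $\beta n + \gamma \in [0,L]$, giving at most $L/\beta + 1$, and absorb boundary cases into the $2\eta + 1$ factor from the first step.

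Multiplying the two bounds yields the stated upper bound $(L/\beta + 1)(2\eta + 1)$. The only real subtlety is handling the boundary $n$'s whose intervals only partially overlap $\{0,\ldots,L\}$, but since those are still covered by the $2\eta + 1$ local count, the product bound remains valid. I do not expect any serious obstacle; this is a one-line counting estimate whose only purpose is to quantify the crude intuition ``an arithmetic progression of spacing $\beta$ meets $\{0,\ldots,L\}$ in about $L/\beta$ points, each fattened by $\eta$.''
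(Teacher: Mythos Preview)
Your proposal is correct and is essentially the same two-step counting argument the paper gives: the paper's proof is literally the two sentences ``the real interval $[0,L]$ can intersect at most $L/\beta+1$ intervals $(\beta n+\gamma-\eta,\beta n+\gamma+\eta)$, and each such interval can contain at most $2\eta+1$ integer points.'' Your more careful discussion of the boundary $n$'s (those with $\beta n+\gamma$ just outside $[0,L]$) is a fair point---neither you nor the paper makes that step fully rigorous---but since the lemma is only used as a crude estimate with plenty of slack in Theorem~\ref{combined}, this level of precision is all that is needed.
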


\begin{proof}
The real interval $\left[  0,L\right]  $ can intersect at most $L/\beta
+1$\ intervals $\left(  \beta n+\gamma-\eta,\beta n+\gamma+\eta\right)  $, and
each such interval can contain at most $2\eta+1$\ integer points.
\end{proof}

We now give our hybrid of Lemmas \ref{aa}\ and \ref{fixedpoint}.

\begin{theorem}
[Faster Fixed-Point Search]\label{combined}Let $\delta\geq2\varepsilon$.
\ Then by using $O\left(  \frac{\log1/\delta}{\varepsilon\delta^{2}}\right)  $
oracle calls\ to $U_{\operatorname*{Init}}$ and $U_{G}$, we can prepare a
state $\rho$\ such that $F\left(  \rho,G\right)  \geq1-\delta$.
\end{theorem}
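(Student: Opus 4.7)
My plan is to combine amplitude amplification (Lemma \ref{aa}) with fixed-point search (Lemma \ref{fixedpoint}) in a two-phase algorithm that preserves the $1/\varepsilon$ scaling of the former while inheriting the monotonic convergence of the latter. Since we only have a lower bound $\sin\theta\geq\varepsilon$ on the true angle, neither ingredient suffices by itself: Lemma \ref{aa} risks overshooting without knowledge of $\theta$, while Lemma \ref{fixedpoint} applied directly to $\left\vert \operatorname*{Init}\right\rangle $ yields only the slower $O(\log(1/\delta)/\varepsilon^{2})$ scaling.

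Phase 1 (bootstrap to constant fidelity): I would choose the iteration count $T$ uniformly at random from $\{0,1,\ldots,L\}$ for $L$ a large enough constant multiple of $1/\varepsilon$, producing a mixed state $\rho_{1}=\frac{1}{L+1}\sum_{T}\left\vert \Phi_{T}\right\rangle \left\langle \Phi_{T}\right\vert $ where $\left\vert \Phi_{T}\right\rangle =\sin((2T+1)\theta)\left\vert \operatorname*{Good}\right\rangle +\cos((2T+1)\theta)\left\vert \operatorname*{Bad}\right\rangle $. Using $\sin^{2}(x)=(1-\cos(2x))/2$ and bounding the geometric sum $\sum_{T}\cos((4T+2)\theta)=O(1/\theta)$, one finds $\mathbb{E}_{T}\sin^{2}((2T+1)\theta)=1/2\pm O(1/(L\theta))$, which exceeds a constant (say $1/4$) for $L$ sufficiently large. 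Thus $F(\rho_{1},G)\geq 1/2$ at a cost of $O(1/\varepsilon)$ oracle queries, and Lemma \ref{interval} is available to control the fraction of $T$ values whose associated state lands in problematic windows near the extremes.

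Phase 2 (monotonic amplification): I would coherently purify Phase 1 by appending a register holding $T$ in uniform superposition and making the amplitude amplification $T$-controlled, producing a pure state $\left\vert \widetilde{\operatorname*{Init}}\right\rangle $ in an enlarged Hilbert space. With respect to the lifted good subspace $\widetilde{G}=\mathcal{H}_{T}\otimes G$, one has $F(\left\vert \widetilde{\operatorname*{Init}}\right\rangle ,\widetilde{G})=\sqrt{\left\langle \operatorname*{Good}\right\vert \rho_{1}\left\vert \operatorname*{Good}\right\rangle }\geq 1/2$. Lemma \ref{fixedpoint} applies in this enlarged space with two simulated oracles: $I\otimes U_{G}$ for reflection about $\widetilde{G}$ (one original query), and $V(I\otimes U_{\operatorname*{Init}})V^{\dagger}$ for reflection about $\left\vert \widetilde{\operatorname*{Init}}\right\rangle $ ($O(1/\varepsilon)$ original queries, where $V$ is the Phase 1 state-preparation unitary). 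With constant initial fidelity, roughly $O(\log(1/\delta))$ fixed-point iterations suffice to reach $1-\delta$; multiplying by the per-iteration cost gives a total comfortably within the claimed bound $O(\log(1/\delta)/(\varepsilon\delta^{2}))$ (with the $1/\delta^{2}$ factor absorbing slack in the expected-fidelity estimate and in matching the precise fixed-point schedule). Tracing out the ancilla at the end yields $\rho$ with $F(\rho,G)\geq 1-\delta$.

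The main obstacle is the verification in Phase 2 that the purified analysis is faithful to the original problem. Specifically, I need to check (i) that the simulated reflections $V(I\otimes U_{\operatorname*{Init}})V^{\dagger}$ and $I\otimes U_{G}$ leave invariant the two-dimensional subspace on which Lemma \ref{fixedpoint} operates---spanned by $\left\vert \widetilde{\operatorname*{Init}}\right\rangle $ and its projection onto $\widetilde{G}$---so that the monotonicity and convergence rate of Lemma \ref{fixedpoint} transfer unchanged to the enlarged space, and (ii) that the fidelity bound $F(\cdot,\widetilde{G})\geq 1-\delta$ in the enlarged space descends to the bound $F(\rho,G)\geq 1-\delta$ after the ancilla is discarded. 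Step (ii) is essentially immediate from the definition of fidelity with a subspace (one simply takes a purification that already lives in $\mathcal{H}_{T}\otimes\mathcal{H}$), while step (i) requires only a routine change-of-basis argument, but it is the step most likely to hide factor-of-$\delta$ inefficiencies that ultimately determine whether one lands on exactly $O(\log(1/\delta)/(\varepsilon\delta^{2}))$ or on a tighter expression.
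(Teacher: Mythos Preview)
Your two-phase skeleton---randomized amplitude amplification followed by fixed-point search---is exactly the paper's. The paper, however, keeps the randomization over $T$ \emph{classical}: it samples $T\in\{0,\ldots,L\}$ with $L=\Theta(1/\varepsilon)$, uses Lemma~\ref{interval} to show that $F(|\Phi_T\rangle,G)\geq\delta/5$ except with probability $O(\delta)$ over $T$, and then runs $R=O\bigl(\delta^{-2}\log(1/\delta)\bigr)$ fixed-point iterations starting from $|\Phi_T\rangle$, each costing $O(T)$ oracle calls because reflecting about $|\Phi_T\rangle$ requires rerunning the $T$ Grover steps. The product $O(L\cdot R)$ gives the stated bound.

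Your coherent purification is a genuinely different and more ambitious route, but the reflection you write down does not do what you need. Since $V$ is $T$-controlled, $V(I\otimes U_{\operatorname*{Init}})V^{\dagger}=\sum_T |T\rangle\langle T|\otimes\bigl(I-2|\Phi_T\rangle\langle\Phi_T|\bigr)$, which reflects about the $(L{+}1)$-dimensional subspace $\mathrm{span}\{|T\rangle|\Phi_T\rangle\}$, not about the single vector $|\widetilde{\operatorname*{Init}}\rangle$. Your check~(i) then fails: this operator sends $|\widetilde{\operatorname*{Good}}\rangle$ outside the two-dimensional span (the image has $|T\rangle|\operatorname*{Good}\rangle$ coefficient proportional to $\sin((2T{+}1)\theta)\cos(2(2T{+}1)\theta)$, which is not a constant multiple of $\sin((2T{+}1)\theta)$). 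Because both simulated oracles are block-diagonal in $T$, the whole Phase~2 decouples into $L{+}1$ independent fixed-point searches with per-sector initial fidelities $|\sin((2T{+}1)\theta)|$---exactly the paper's algorithm run in superposition---and the ``averaged'' constant fidelity is never seen by the dynamics.

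To reflect about the \emph{state} $|\widetilde{\operatorname*{Init}}\rangle$ you would have to uncompute the $T$ register to $|0\rangle$ and reflect about $|0\rangle\otimes|\operatorname*{Init}\rangle$, which amounts to a controlled-$U_{\operatorname*{Init}}$ call, not furnished by the black-box access assumed in Theorem~\ref{combined}. Were that available, your argument would go through and actually yield $O\bigl(\varepsilon^{-1}\log(1/\delta)\bigr)$, i.e., the exponentially-fast convergence the paper explicitly leaves open after the proof.
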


\begin{proof}
Let $\xi:=\arcsin\varepsilon$; note that $\varepsilon\leq\xi\leq\frac{\pi}%
{2}\varepsilon$.\ \ Also let $L:=\left\lceil 100/\xi\right\rceil $ and
$R:=\frac{25}{\delta^{2}}\left(  2+\log\frac{1}{\delta}\right)  $. \ Then the
algorithm is as follows:

\begin{enumerate}
\item[(1)] Choose an integer $T\in\left\{  0,\ldots,L\right\}  $\ uniformly at random.

\item[(2)] Apply $T$\ iterations of amplitude amplification with $\left\vert
\operatorname*{Init}\right\rangle $\ as the initial state and $G$\ as the
target subspace (as in Lemma \ref{aa}), to obtain a state $\left\vert \Phi
_{T}\right\rangle $.

\item[(3)] Apply $R$\ iterations of fixed-point quantum search with
$\left\vert \Phi_{T}\right\rangle $\ as the initial state and $G$ as the
target subspace (as in Lemma \ref{fixedpoint}), to obtain a state $\left\vert
\Psi_{T}\right\rangle $.
\end{enumerate}

The final output of the above algorithm is%
\[
\rho=\operatorname*{E}_{T\in\left\{  0,\ldots,L\right\}  }\left[  \left\vert
\Psi_{T}\right\rangle \left\langle \Psi_{T}\right\vert \right]  .
\]
Also, the total number of oracle calls to $U_{\operatorname*{Init}}$ and
$U_{G}$\ is%
\[
O\left(  TR\right)  =O\left(  \frac{\log1/\delta}{\varepsilon\delta^{2}%
}\right)  .
\]
(The reason this number scales like $TR$\ rather than $T+R$\ is that, in step
(3), each time we reflect about the initial state $\left\vert \Phi
_{T}\right\rangle $\ we need to rerun step (2). \ Thus, we need $\Theta\left(
T\right)  $ oracle calls within each of the $R$ iterations.)

By Lemma \ref{aa}, after step (2) we have a state $\left\vert \Phi
_{T}\right\rangle $\ such that%
\[
F\left(  \left\vert \Phi_{T}\right\rangle ,G\right)  =\left\vert \left\langle
\Phi_{T}|\operatorname*{Good}\right\rangle \right\vert =\left\vert \sin\left[
\left(  2T+1\right)  \xi\right]  \right\vert .
\]
So for any $\alpha\in\left(  0,1\right)  $,%
\begin{align*}
\Pr_{T\in\left\{  0,\ldots,L\right\}  }\left[  F\left(  \left\vert \Phi
_{T}\right\rangle ,G\right)  <\alpha\right]   &  =\Pr_{T\in\left\{
0,\ldots,L\right\}  }\left[  \left\vert \sin\left[  \left(  2T+1\right)
\xi\right]  \right\vert <\alpha\right]  \\
&  =\Pr_{T\in\left\{  0,\ldots,L\right\}  }\left[  \exists n\in\mathbb{Z}%
:\left\vert \left(  2T+1\right)  \xi-\pi n\right\vert <\arcsin\alpha\right]
\\
&  \leq\frac{\left(  \frac{L}{\pi/2\xi}+1\right)  \left(  \frac{\arcsin\alpha
}{\xi}+1\right)  }{L+1}\\
&  \leq\frac{2}{\pi}\arcsin\alpha+\frac{2\xi}{\pi}+\frac{\arcsin\alpha}%
{100}+\frac{\xi}{100}\\
&  \leq1.02\left(  \alpha+\varepsilon\right)  ,
\end{align*}
where the third line uses Lemma \ref{interval}.

Now assume $F\left(  \left\vert \Phi_{T}\right\rangle ,G\right)  \geq\alpha$.
\ Then by Lemma \ref{fixedpoint}, after step (3) we have a state $\left\vert
\Psi_{T}\right\rangle $\ such that%
\[
F\left(  \left\vert \Phi_{T}\right\rangle ,G\right)  \geq1-\exp\left(
-R\alpha^{2}\right)  .
\]
Let us now make the choice $\alpha:=\delta/5$. \ Then by the union bound, the
\textquotedblleft average\textquotedblright\ output $\rho=\operatorname*{E}%
_{T}\left[  \left\vert \Psi_{T}\right\rangle \left\langle \Psi_{T}\right\vert
\right]  $\ satisfies%
\begin{align*}
1-F\left(  \rho,G\right)   &  \leq1.02\left(  \alpha+\varepsilon\right)
+\exp\left(  -R\alpha^{2}\right) \\
&  \leq0.204\delta+0.51\delta+\exp\left(  -\frac{R\delta^{2}}{25}\right) \\
&  <\delta.
\end{align*}

\end{proof}

Note that our hybrid loses the property of \textit{exponentially-fast}
convergence toward the target subspace $G$, but that property will not be
important for us anyway. \ We leave as an open problem whether there exists a
hybrid algorithm with exponentially-fast convergence.

\section{Formalizing Quantum Money\label{DEF}}

In this section, we first give a formal cryptographic definition of
\textit{public-key quantum money schemes}. \ Our definition is similar to that
of Aaronson \cite{aar:qcopy}. \ However, following \cite{breaking,knots}, we
next define the notion of a \textit{quantum money mini-scheme}, which is
easier to construct and analyze than a full-blown quantum money scheme. \ A
mini-scheme is basically a quantum money scheme where each banknote includes a
classical serial number; where the only security requirement is that producing
a second banknote with \textit{the same serial number} is intractable; and
where there is no public or private key (since given the lax security
requirement, there is no need for one). \ We then prove two general results:
the amplification of weak counterfeiters into strong ones (Theorem
\ref{miniamp}), and the construction of full-blown quantum money schemes from
mini-schemes together with quantumly-secure digital signature schemes (Theorem
\ref{compose}).

\subsection{Quantum Money Schemes\label{SCHEMES}}

Intuitively, a \textit{public-key quantum money scheme} is a scheme by which

\begin{enumerate}
\item[(1)] a trusted \textquotedblleft bank\textquotedblright\ can feasibly
generate an unlimited number of quantum banknotes,

\item[(2)] anyone can feasibly verify a valid banknote as having come from the
bank, but

\item[(3)] no one besides the bank can feasibly map $q=\operatorname*{poly}%
\left(  n\right)  $\ banknotes to $r>q$\ banknotes with any non-negligible
success probability.\footnote{Previously, Aaronson \cite{aar:qcopy}\ required
only that no polynomial-time counterfeiter could increase its
\textit{expected} number of valid banknotes. \ However, the stronger condition
required here is both achievable, and seemingly more natural from the
standpoint of security proofs.}
\end{enumerate}

We now make the notion more formal.

\begin{definition}
[Quantum Money Schemes]\label{moneydef}A \textbf{public-key quantum money
scheme}\ $\mathcal{S}$\ consists of three polynomial-time quantum algorithms:

\begin{itemize}
\item $\mathsf{KeyGen}$, which takes as input a security parameter $0^{n}$,
and probabilistically generates a key pair $\left(  k_{\operatorname*{private}%
},k_{\operatorname*{public}}\right)  $.

\item $\mathsf{Bank}$, which takes as input $k_{\operatorname*{private}}$, and
probabilistically generates a quantum state $\$$\ called a \textbf{banknote}.
\ (Usually $\$$ will be an ordered pair $\left(  s,\rho_{s}\right)  $,
consisting of a classical \textbf{serial number} $s$\ and a \textbf{quantum
money state} $\rho_{s}$, but this is not strictly necessary.)

\item $\mathsf{Ver}$, which takes as input $k_{\operatorname*{public}}$\ and
an alleged banknote $%
\hbox{\rm\rlap/c}%
$, and either accepts or rejects.
\end{itemize}

We say $\mathcal{S}$ has \textbf{completeness error} $\varepsilon$\ if
$\mathsf{Ver}\left(  k_{\operatorname*{public}},\$\right)  $\ accepts with
probability at least $1-\varepsilon$\ for all public keys
$k_{\operatorname*{public}}$ and valid banknotes\ $\$$.\ \ If $\varepsilon
=0$\ then $\mathcal{S}$ has \textbf{perfect completeness}.

Let $\mathsf{Count}$\ (the \textbf{money counter}) take as input
$k_{\operatorname*{public}}$\ as well as a collection of (possibly-entangled)
alleged banknotes $%
\hbox{\rm\rlap/c}%
_{1},\ldots,%
\hbox{\rm\rlap/c}%
_{r}$, and output the number of indices $i\in\left[  r\right]  $\ such that
$\mathsf{Ver}\left(  k_{\operatorname*{public}},%
\hbox{\rm\rlap/c}%
_{i}\right)  $\ accepts. \ Then we say $\mathcal{S}$ has \textbf{soundness
error} $\delta$\ if, given any\ quantum circuit $C\left(
k_{\operatorname*{public}},\$_{1},\ldots,\$_{q}\right)  $ of size
$\operatorname*{poly}\left(  n\right)  $\ (called the \textbf{counterfeiter}),
which maps $q=\operatorname*{poly}\left(  n\right)  $ valid banknotes
$\$_{1},\ldots,\$_{q}$\ \ to $r=\operatorname*{poly}\left(  n\right)  $
(possibly-entangled) alleged banknotes $%
\hbox{\rm\rlap/c}%
_{1},\ldots,%
\hbox{\rm\rlap/c}%
_{r}$,%
\[
\Pr\left[  \mathsf{Count}\left(  k_{\operatorname*{public}},C\left(
k_{\operatorname*{public}},\$_{1},\ldots,\$_{q}\right)  \right)  >q\right]
\leq\delta.
\]
Here the probability is over the key pair $\left(  k_{\operatorname*{private}%
},k_{\operatorname*{public}}\right)  $, valid banknotes $\$_{1},\ldots,\$_{q}%
$\ generated by $\mathsf{Bank}\left(  k_{\operatorname*{private}}\right)  $,
and the behavior of $\mathsf{Count}$\ and $C$.

We call $\mathcal{S}$ \textbf{secure} if it has completeness error $\leq1/3$
and negligible soundness error.
\end{definition}

In Appendix \ref{COMPLETENESS}, we show that the completeness error in any
quantum money scheme can be amplified to $1/2^{\operatorname*{poly}\left(
n\right)  }$, at the cost of only a small increase in the soundness error.
\ Note that, by Lemma \ref{goodasnew} (the \textquotedblleft Almost As Good As
New Lemma\textquotedblright), once we make the completeness error
exponentially small in $n$, we can also give our scheme the property that
\textit{any banknote }$\$$\textit{ can be verified }$\exp\left(  n\right)
$\textit{ times}, before $\$$ gets \textquotedblleft worn
out\textquotedblright\ by repeated measurements. \ This observation is part of
what justifies our use of the term \textquotedblleft money.\textquotedblright%
\footnote{By contrast, BBBW \cite{bbbw}\ introduced the term \textquotedblleft
subway tokens\textquotedblright\ for quantum money states\ that get destroyed
immediately upon verification.}

In this paper, we will often consider \textbf{relativized} quantum money
schemes, which simply means that the three procedures $\mathsf{KeyGen}$,
$\mathsf{Bank}$, $\mathsf{Ver}$---as well as the counterfeiter $C$---all get
access to exactly the same oracle $A:\left\{  0,1\right\}  ^{\ast}%
\rightarrow\left\{  0,1\right\}  $. \ We will also consider relativized
digital signature schemes, etc., which are defined analogously.

A\textbf{ private-key quantum money scheme} is the same as a public-key
scheme, except that the counterfeiter $C$ no longer gets access to
$k_{\operatorname*{public}}$. \ (Thus, we might as well set
$k:=k_{\operatorname*{public}}=k_{\operatorname*{private}}$, since the public
and private keys no longer play separate roles.) \ We call a private-key
scheme \textbf{query-secure}---a notion \textquotedblleft
intermediate\textquotedblright\ between private-key and public-key---if the
counterfeiter $C$\ is allowed to interact repeatedly with the bank.\ \ Given
any alleged banknote $\sigma$, the bank runs the verification procedure
$\mathsf{Ver}\left(  k,\sigma\right)  $, then returns to $C$\ both the
classical result (i.e., accept or reject) \textit{and} the post-measurement
quantum state $\widetilde{\sigma}$.

\subsection{Mini-Schemes\label{MINI}}

While Definition \ref{moneydef} captures our intuitive requirements for a
public-key quantum money scheme, experience has shown that it is cumbersome to
work with in practice. \ So following Lutomirski et al.\ \cite{breaking}\ and
Farhi et al.\ \cite{knots}, in this section we define a simpler primitive
called \textit{mini-schemes}. \ We also prove an amplification theorem for a
large class of mini-schemes. \ Then, in Section \ref{CANONICAL}, we will
explain how mini-schemes can be generically combined with\ conventional
digital signature schemes to create full public-key quantum money schemes.

\begin{definition}
[Mini-Schemes]\label{minischeme}A (public-key)\textbf{ mini-scheme}
$\mathcal{M}$\ consists of two polynomial-time quantum algorithms:

\begin{itemize}
\item $\mathsf{Bank}$, which takes as input a security parameter $0^{n}$, and
probabilistically generates a banknote $\$=\left(  s,\rho_{s}\right)  $, where
$s$ is a classical \textbf{serial number}, and $\rho_{s}$\ is a quantum money state.

\item $\mathsf{Ver}$, which takes as input an alleged banknote $%
\hbox{\rm\rlap/c}%
$, and either accepts or rejects.
\end{itemize}

We say $\mathcal{M}$ has \textbf{completeness error} $\varepsilon$\ if
$\mathsf{Ver}\left(  \$\right)  $\ accepts with probability at least
$1-\varepsilon$\ for all valid banknotes $\$$. \ If $\varepsilon=0$\ then
$\mathcal{M}$ has perfect completeness. \ If, furthermore, $\rho
_{s}=\left\vert \psi_{s}\right\rangle \left\langle \psi_{s}\right\vert $\ is
always a pure state, and $\mathsf{Ver}$ simply consists of a projective
measurement onto the rank-$1$ subspace spanned by $\left\vert \psi
_{s}\right\rangle $, then we say $\mathcal{M}$\ is \textbf{projective}%
.\footnote{We similarly call a full quantum money scheme projective, if
$\mathsf{Ver}\left(  \$\right)  $\ consists of a measurement on one part of
$\$$ in the computational basis, followed by a rank-$1$ projective measurement
on the remaining part.}

Let $\mathsf{Ver}_{2}$\ (the \textbf{double verifier}) take as input a single
serial number $s$ as well as two (possibly-entangled) states $\sigma_{1}$ and
$\sigma_{2}$, and accept if and only $\mathsf{Ver}\left(  s,\sigma_{1}\right)
$\ and $\mathsf{Ver}\left(  s,\sigma_{2}\right)  $\ both accept. \ We say
$\mathcal{M}$ has \textbf{soundness error} $\delta$\ if, given any quantum
circuit $C$ of size $\operatorname*{poly}\left(  n\right)  $\ (the
\textbf{counterfeiter}),\ $\mathsf{Ver}_{2}\left(  s,C\left(  \$\right)
\right)  $\ accepts with probability at most $\delta$. \ Here the probability
is over the banknote $\$$\ output by $\mathsf{Bank}\left(  0^{n}\right)  $, as
well as the behavior of $\mathsf{Ver}_{2}$\ and $C$.

We call $\mathcal{M}$ \textbf{secure} if it has completeness error $\leq
1/3$\ and negligible soundness error.{}
\end{definition}

We observe a simple relationship between Definitions \ref{moneydef}\ and
\ref{minischeme}:

\begin{proposition}
\label{easydir}If there exists a secure public-key money scheme $\mathcal{S}%
=\left(  \mathsf{KeyGen}_{\mathcal{S}},\mathsf{Bank}_{\mathcal{S}%
},\mathsf{Ver}_{\mathcal{S}}\right)  $, then there also exists a secure
mini-scheme $\mathcal{M}=\left(  \mathsf{Bank}_{\mathcal{M}},\mathsf{Ver}%
_{\mathcal{M}}\right)  $.
\end{proposition}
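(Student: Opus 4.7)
The plan is to build the mini-scheme $\mathcal{M}$ directly out of $\mathcal{S}$ by absorbing $\mathsf{KeyGen}_{\mathcal{S}}$ into $\mathsf{Bank}_{\mathcal{M}}$ and exposing the public key as the serial number. Concretely, $\mathsf{Bank}_{\mathcal{M}}(0^n)$ first invokes $\mathsf{KeyGen}_{\mathcal{S}}(0^n)$ to obtain a fresh pair $(k_{\mathrm{private}}, k_{\mathrm{public}})$, then invokes $\mathsf{Bank}_{\mathcal{S}}(k_{\mathrm{private}})$ to obtain a full-scheme banknote $\$$, and outputs the mini-scheme banknote $(s, \rho_s) := (k_{\mathrm{public}}, \$)$. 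I define $\mathsf{Ver}_{\mathcal{M}}(s, \sigma)$ to simply run $\mathsf{Ver}_{\mathcal{S}}(s, \sigma)$, interpreting $s$ as the public key. Both algorithms are polynomial-time since the corresponding algorithms of $\mathcal{S}$ are.

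Completeness is immediate: a banknote produced by $\mathsf{Bank}_{\mathcal{M}}$ is by construction a valid banknote of $\mathcal{S}$ for the key $k_{\mathrm{public}} = s$, so $\mathsf{Ver}_{\mathcal{M}}$ accepts it with probability at least $1 - \varepsilon \leq 1/3$ by the completeness of $\mathcal{S}$.

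For soundness I argue contrapositively. Suppose $C$ is a polynomial-size counterfeiter against $\mathcal{M}$ achieving success probability $\delta(n)$. I construct a polynomial-size counterfeiter $C'$ against $\mathcal{S}$ for the parameter choice $q=1$ as follows. On input $(k_{\mathrm{public}}, \$)$, $C'$ sets $s := k_{\mathrm{public}}$, runs $C$ on $(s, \$)$ to produce two (possibly entangled) alleged banknotes $\sigma_1, \sigma_2$, and outputs the pair as its $r = 2$ alleged banknotes. Whenever $\mathsf{Ver}_2(s,(\sigma_1, \sigma_2))$ accepts, both $\mathsf{Ver}_{\mathcal{S}}(k_{\mathrm{public}}, \sigma_i)$ accept by definition of $\mathsf{Ver}_{\mathcal{M}}$, so $\mathsf{Count}(k_{\mathrm{public}}, \sigma_1, \sigma_2) = 2 > 1 = q$. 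Hence $C'$ succeeds with probability at least $\delta(n)$. Since $\mathcal{S}$ is secure, $\delta(n)$ must be negligible, which is precisely the mini-scheme soundness requirement.

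There is essentially no obstacle here: the reduction is direct, and the only point to verify carefully is that the joint distribution of $(s, \rho_s)$ produced by $\mathsf{Bank}_{\mathcal{M}}(0^n)$ coincides exactly with the distribution of $(k_{\mathrm{public}}, \$)$ that $\mathcal{S}$'s soundness definition quantifies over when $q = 1$, which holds by construction. No amplification, no Almost-As-Good-As-New bookkeeping, and no appeal to the structure of the banknote beyond the interface of Definition~\ref{moneydef} is needed.
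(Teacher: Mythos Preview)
Your construction and reduction are exactly the paper's: set $s=k_{\mathrm{public}}$, $\rho_s=\mathsf{Bank}_{\mathcal{S}}(k_{\mathrm{private}})$, let $\mathsf{Ver}_{\mathcal{M}}$ call $\mathsf{Ver}_{\mathcal{S}}$, and turn a mini-scheme counterfeiter into a $q=1$, $r=2$ counterfeiter against $\mathcal{S}$. The only blemish is the typo ``at least $1-\varepsilon\le 1/3$'' in the completeness line; you mean completeness error $\varepsilon\le 1/3$ (equivalently, acceptance probability at least $1-\varepsilon\ge 2/3$).
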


\begin{proof}
Each banknote output by $\mathsf{Bank}_{\mathcal{M}}\left(  0^{n}\right)
$\ will have the form $\left(  k_{\operatorname*{public}},\mathsf{Bank}%
_{\mathcal{S}}\left(  k_{\operatorname*{private}}\right)  \right)  $, where
$\left(  k_{\operatorname*{private}},k_{\operatorname*{public}}\right)  $\ is
a key pair output by $\mathsf{KeyGen}_{\mathcal{S}}\left(  0^{n}\right)  $.
\ Then $\mathsf{Ver}_{\mathcal{M}}\left(  s,\rho_{s}\right)  $\ will accept if
and only if $\mathsf{Ver}_{\mathcal{S}}\left(  s,\rho_{s}\right)  $\ does.
\ Any counterfeiter $C_{\mathcal{M}}$\ against $\mathcal{M}$\ can be converted
directly into a counterfeiter $C_{\mathcal{S}}$\ against $\mathcal{S}$.
\end{proof}

Call a mini-scheme $\mathcal{M}=\left(  \mathsf{Bank},\mathsf{Ver}\right)  $
\textbf{secret-based} if $\mathsf{Bank}$\ works by first generating a
uniformly-random classical string $r$, and then generating a banknote
$\$_{r}:=\left(  s_{r},\rho_{r}\right)  $. \ Intuitively, in a secret-based
scheme, the bank can generate many identical banknotes by simply reusing $r$,
while in a non-secret-based scheme, \textit{not even the bank} might be able
to generate two identical banknotes. \ Here is an interesting observation:

\begin{proposition}
\label{miniowf}If there exists a secure, secret-based mini-scheme, then there
also exists a one-way function\ secure against quantum attack.
\end{proposition}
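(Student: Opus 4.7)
The natural candidate for the one-way function is $f(r) := s_{r}$, the serial number produced by running $\mathsf{Bank}$ with random tape $r$. Since $\mathsf{Bank}$ is polynomial-time and $s_{r}$ is classical and (by the secret-based assumption) determined by $r$, the function $f$ is efficiently computable; one can evaluate it by simulating $\mathsf{Bank}(r)$ and reading off the classical serial number portion of the output, discarding the quantum part.

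The plan is to prove the contrapositive. Suppose $f$ is not one-way against quantum adversaries, so there exists a polynomial-time quantum algorithm $\mathcal{A}$ that, given $s = f(r)$ for uniformly random $r$, outputs some $r'$ with $f(r') = s$ with non-negligible probability $\varepsilon(n)$. I will construct a counterfeiter $C$ against the mini-scheme as follows: on input a banknote $\$_{r} = (s_{r}, \rho_{r})$, run $\mathcal{A}$ on $s_{r}$ to obtain a candidate $r'$; then run $\mathsf{Bank}$ with randomness $r'$ to obtain a fresh banknote $(s_{r'}, \rho_{r'})$. Output the pair $(\rho_{r}, \rho_{r'})$ together with the serial number $s_{r}$.

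It remains to lower-bound the success probability of $\mathsf{Ver}_{2}(s_{r}, \rho_{r}, \rho_{r'})$. With probability $\varepsilon(n)$ (over $r$ and the coins of $\mathcal{A}$), we have $f(r') = s_{r}$, meaning $s_{r'} = s_{r}$ so that $(s_{r}, \rho_{r'})$ is itself a legitimately generated banknote. Conditioned on this event, $\rho_{r}$ and $\rho_{r'}$ are two independent valid banknotes with serial number $s_{r}$; by the completeness guarantee of the mini-scheme, each passes $\mathsf{Ver}$ with probability at least $2/3$ (and independence across the two registers lets us apply a union bound on the failure events), so $\mathsf{Ver}_{2}$ accepts with probability at least $1/3$. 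Altogether $\mathsf{Ver}_{2}$ accepts with probability at least $\varepsilon(n)/3$, which is non-negligible and contradicts the negligible soundness error of the mini-scheme.

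I do not expect a significant obstacle here: the main conceptual point is simply that the serial number together with the random tape $r$ acts as a \emph{classical witness} to the banknote's authenticity for the bank itself, so inverting the serial-number map lets any adversary reproduce the bank's banknote-generation power. The only subtle issue is making sure $f$ is a \emph{deterministic} function of $r$; this is precisely what the secret-based assumption buys us, since it stipulates that $r$ is the only randomness used by $\mathsf{Bank}$, so the classical serial number is a deterministic function of $r$ (any further internal quantum measurements in $\mathsf{Bank}$ can be deferred and absorbed into $\rho_{r}$).
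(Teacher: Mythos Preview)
Your proposal is correct and follows essentially the same approach as the paper: define the one-way function as $f(r):=s_{r}$, and argue that an inverter for $f$ lets an adversary regenerate banknotes and thereby counterfeit. The paper's proof is a two-sentence sketch of exactly this idea; you have simply fleshed out the reduction and the success-probability analysis.
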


\begin{proof}
The desired OWF is $\mathsf{SerialNum}\left(  r\right)  :=s_{r}$. \ If there
existed a polynomial-time quantum algorithm\ to\ recover $r$\ given $s_{r}$,
then we could use that algorithm to produce an unlimited number of additional
banknotes\ $\$_{r}$.
\end{proof}

All of the mini-schemes developed in this paper will be secret-based. \ By
contrast, the earlier schemes of Lutomirski et al.\ \cite{breaking}\ and Farhi
et al.\ \cite{knots} are non-secret-based, since the serial number $s$\ is
only obtained as the outcome of a quantum measurement.

The following result is one of the most useful in the paper. \ Intuitively, it
says that in projective mini-schemes, a counterfeiter that copies a banknote
with \textit{any} non-negligible fidelity can be \textquotedblleft
amplified\textquotedblright\ to a counterfeiter that copies the banknote
almost \textit{perfectly}---or conversely, that to rule out the former sort of
counterfeiter, it suffices to rule out the latter. \ The proof makes essential
use of the amplitude amplification results from Section \ref{SEARCH}.

\begin{theorem}
[Amplification of Counterfeiters]\label{miniamp}Let $\mathcal{M}=\left(
\mathsf{Bank},\mathsf{Ver}\right)  $ be a projective mini-scheme, and let
$\$=\left(  s,\rho\right)  $\ be a valid banknote in $\mathcal{M}$. \ Suppose
there exists a counterfeiter $C$ that copies $\$$\ with probability
$\varepsilon>0$:\ that is,%
\[
\Pr\left[  \mathsf{Ver}_{2}\left(  s,C\left(  \$\right)  \right)  \text{
accepts}\right]  \geq\varepsilon.
\]
Then for all $\delta>0$, there is also a modified counterfeiter $C^{\prime}$
(depending only on $\varepsilon$ and $\delta$, not $\$$), which makes%
\[
O\left(  \frac{\log1/\delta}{\sqrt{\varepsilon}\left(  \sqrt{\varepsilon
}+\delta^{2}\right)  }\right)
\]
queries to $C$, $C^{-1}$, and $\mathsf{Ver}$ and which satisfies%
\[
\Pr\left[  \mathsf{Ver}_{2}\left(  s,C^{\prime}\left(  \$\right)  \right)
\text{ accepts}\right]  \geq1-\delta.
\]

\end{theorem}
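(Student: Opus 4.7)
The strategy is to recast the copying task as an instance of amplitude amplification / fixed-point search and then feed it into the algorithmic machinery of Section~\ref{SEARCH}. I first set up the search. Using the standard purification trick, model the counterfeiter $C$ as a unitary $U_C$ acting on $\$$ together with an ancilla register initialized to $|0\rangle$. Define the initial state
\[
|\operatorname*{Init}\rangle \;:=\; U_C\bigl(|\psi_s\rangle\,|0\rangle_{\text{anc}}\bigr),
\]
and let $G$ be the ``doubly-verifying'' subspace spanned by states whose first two output registers both pass $\mathsf{Ver}_s$. Because $\mathcal{M}$ is projective, $\mathsf{Ver}_s$ implements the rank-$1$ projector $|\psi_s\rangle\langle\psi_s|$, so $G$ is the image of $(|\psi_s\rangle\langle\psi_s|)^{\otimes 2}\otimes I_{\text{rest}}$. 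The hypothesis that $\mathsf{Ver}_2(s, C(\$))$ accepts with probability $\geq\varepsilon$ then translates to $F(|\operatorname*{Init}\rangle, G)^2 = \|\Pi_G |\operatorname*{Init}\rangle\|^2 \geq \varepsilon$, i.e.\ $F(|\operatorname*{Init}\rangle, G) \geq \sqrt{\varepsilon}$.

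Next I implement the two reflection oracles demanded by the search routines using only the primitives $C$, $C^{-1}$, and $\mathsf{Ver}$. The reflection $U_G = I - 2\Pi_G$ is realized with two calls to $\mathsf{Ver}$ (one per output copy) using a phase-kickback ancilla. The reflection $U_{\operatorname*{Init}} = I - 2|\operatorname*{Init}\rangle\langle\operatorname*{Init}|$ is realized by sandwiching a reflection about $|\psi_s\rangle|0\rangle_{\text{anc}}$ between $U_C^{-1}$ and $U_C$; the inner reflection uses one call to $\mathsf{Ver}$ on the first register plus a textbook reflection about $|0\rangle$ on the ancilla. Hence each call to $U_G$ or $U_{\operatorname*{Init}}$ in the abstract algorithm costs only $O(1)$ calls to $C$, $C^{-1}$, and $\mathsf{Ver}$.

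Finally I feed these reflections into the search subroutines, splitting into two regimes. When $\delta \geq 2\sqrt{\varepsilon}$, Theorem~\ref{combined} (Faster Fixed-Point Search), instantiated with initial-fidelity parameter $\sqrt{\varepsilon}$ and target error $\delta/2$, produces a state $\rho$ with $F(\rho, G) \geq 1 - \delta/2$ using $O\!\bigl(\log(1/\delta)/(\sqrt{\varepsilon}\,\delta^2)\bigr)$ oracle calls; since $\Pr[\mathsf{Ver}_2(s,\rho)\text{ accepts}] = \operatorname{Tr}(\Pi_G \rho) \geq F(\rho, G)^2 \geq 1 - \delta$, this is what we want. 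In the complementary regime $\delta < 2\sqrt{\varepsilon}$, Theorem~\ref{combined}'s precondition fails, so I instead invoke plain fixed-point search (Lemma~\ref{fixedpoint}) with the same reflections, which from initial fidelity $\sqrt{\varepsilon}$ reaches fidelity $1-\delta/2$ in $O\!\bigl(\log(1/\delta)/\varepsilon\bigr)$ iterations. A routine case-check shows that the maximum of the two per-regime costs is $O\!\bigl(\log(1/\delta)/(\sqrt{\varepsilon}(\sqrt{\varepsilon}+\delta^2))\bigr)$, matching the theorem.

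The main obstacle is the two-regime bookkeeping: Theorem~\ref{combined} carries the hypothesis ``$\delta \geq 2\varepsilon$'', which under our substitution becomes $\delta \geq 2\sqrt{\varepsilon}$, so the small-$\delta$ regime has to be handled separately via Lemma~\ref{fixedpoint}, and one must verify that stitching the two bounds together reproduces the $\sqrt{\varepsilon}+\delta^2$ denominator in the statement. A secondary technical nuisance is the coherent implementation of $U_{\operatorname*{Init}}$, which is the only place where the inverse $C^{-1}$ is used and where purification of $C$'s internal measurements must be handled carefully so as not to inflate the query count beyond an absolute constant per reflection.
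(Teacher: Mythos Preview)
Your proposal is correct and follows essentially the same approach as the paper: set up $C(|\psi_s\rangle)$ as the initial state for a search targeting the subspace on which $\mathsf{Ver}_2$ accepts, then invoke Lemma~\ref{fixedpoint} in the small-$\delta$ regime and Theorem~\ref{combined} in the large-$\delta$ regime, taking the minimum of the two query counts. Your write-up is in fact slightly more explicit than the paper's about how the two reflections are built from $C$, $C^{-1}$, and $\mathsf{Ver}$ (the paper just asserts each reflection costs $O(1)$ such calls), and you correctly track that the ``$\varepsilon$'' of Theorem~\ref{combined} corresponds to $\sqrt{\varepsilon}$ here.
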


\begin{proof}
Write $\$$\ as a mixture of pure states:%
\[
\$=\sum p_{i}\left\vert \psi_{i}\right\rangle \left\langle \psi_{i}\right\vert
.
\]
By linearity, clearly it suffices to show that%
\[
\Pr\left[  \mathsf{Ver}_{2}\left(  s,C^{\prime}\left(  \left\vert \psi
_{i}\right\rangle \right)  \right)  \text{ accepts}\right]  \geq1-\delta
\]
for all $i$\ such that $p_{i}>0$. \ We focus on $\left\vert \psi\right\rangle
:=\left\vert \psi_{1}\right\rangle $\ without loss of generality.

By assumption, there exists a subspace $S$\ such that%
\[
\Pr\left[  \mathsf{Ver}\left(  \rho\right)  \text{ accepts}\right]  =F\left(
\rho,S\right)  ^{2}%
\]
for all $\rho$. \ Then $F\left(  \$,S\right)  =F\left(  \left\vert
\psi\right\rangle ,S\right)  =1$.

Now, just as $\mathsf{Ver}$ is simply a projector onto $S$, so $\mathsf{Ver}%
_{2}$\ is a projector onto $S^{\otimes2}$. \ Thus%
\[
F\left(  C\left(  \left\vert \psi\right\rangle \right)  ,S^{\otimes2}\right)
\geq\sqrt{\varepsilon}.
\]
So consider performing a fixed-point Grover search, with $C\left(  \left\vert
\psi\right\rangle \right)  $\ as the initial state and $S^{\otimes2}$\ as the
target subspace. \ By Lemma \ref{fixedpoint}, this will produce a state $\rho
$\ such that $F\left(  \rho,S^{\otimes2}\right)  \geq1-\delta$\ using
$O\left(  \frac{1}{\varepsilon}\log\frac{1}{\delta}\right)  $\ Grover
iterations. \ Each iteration requires a reflection about $C\left(  \left\vert
\psi\right\rangle \right)  $\ and a reflection about $S^{\otimes2}$, which can
be implemented using $O\left(  1\right)  $\ queries to $C,C^{-1}$ and
$\mathsf{Ver}$ respectively. \ Therefore the number of queries to $C,C^{-1}$
and $\mathsf{Ver}$\ is $O\left(  \frac{1}{\varepsilon}\log\frac{1}{\delta
}\right)  $\ as well.

If $\delta$ is large compared to $\varepsilon$, then we can instead use
Theorem \ref{combined}, which produces a state $\rho$\ such that $F\left(
\rho,S^{\otimes2}\right)  \geq1-\delta$\ using $O\left(  \frac{1}%
{\sqrt{\varepsilon}\delta^{2}}\log\frac{1}{\delta}\right)  $\ iterations.
\ Taking the minimum of the two bounds gives us the claimed bound on query complexity.
\end{proof}

Theorem\ \ref{miniamp} is unlikely to hold for \textit{arbitrary}
(non-projective) mini-schemes, for the simple reason that we can always create
a mini-scheme where $\mathsf{Ver}$\ accepts \textit{any} state with some small
nonzero probability $\varepsilon$. \ We leave it as an open problem to find
the largest class of mini-schemes for which Theorem\ \ref{miniamp}\ holds.

\subsection{The Standard Construction\label{CANONICAL}}

Following Lutomirski et al.\ \cite{breaking}\ and Farhi et al.\ \cite{knots},
we can now define a \textquotedblleft standard construction\textquotedblright%
\ of public-key quantum money schemes from mini-schemes and digital signature
schemes. \ Given a mini-scheme $\mathcal{M}=\left(  \mathsf{Bank}%
_{\mathcal{M}},\mathsf{Ver}_{\mathcal{M}}\right)  $, and a signature
$\mathcal{D}=\left(  \mathsf{KeyGen}_{\mathcal{D}},\mathsf{S{}ign}%
_{\mathcal{D}},\mathsf{Ver}_{\mathcal{D}}\right)  $, we define the quantum
money scheme $\mathcal{S}=\left(  \mathsf{KeyGen}_{\mathcal{S}},\mathsf{Bank}%
_{\mathcal{S}},\mathsf{Ver}_{\mathcal{S}}\right)  $ as follows:

\begin{itemize}
\item $\mathsf{KeyGen}_{\mathcal{S}}$ is simply $\mathsf{KeyGen}_{\mathcal{D}%
}$\ from the digital signature scheme.

\item $\mathsf{Bank}_{\mathcal{S}}$ first calls $\mathsf{Bank}_{\mathcal{M}}%
$\ from the mini-scheme to obtain a banknote $\left(  s,\rho\right)  $. \ It
then outputs $\left(  s,\rho\right)  $\ together with a digital signature of
the serial number $s$:%
\[
\mathsf{Bank}_{\mathcal{S}}\left(  k_{\operatorname*{private}}\right)
:=\left(  s,\mathsf{S{}ign}_{\mathcal{D}}\left(  k_{\operatorname*{private}%
},s\right)  ,\rho\right)  .
\]

\item $\mathsf{Ver}_{\mathcal{S}}$ accepts an alleged banknote $\left(
s,w,\sigma\right)  $, if and only if $\mathsf{Ver}_{\mathcal{M}}\left(
s,\sigma\right)  $\ and $\mathsf{Ver}_{\mathcal{D}}\left(
k_{\operatorname*{public}},s,w\right)  $\ both accept.
\end{itemize}

We now prove the above construction's security.

\begin{theorem}
[Security of the Standard Construction]\label{compose}Suppose $\mathcal{M}$ is
a secure mini-scheme, and $\mathcal{D}$\ is a digital signature scheme secure
against quantum chosen-message attacks. \ Then $\mathcal{S}$\ is a secure
public-key quantum money scheme.
\end{theorem}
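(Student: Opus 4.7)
The plan is to establish completeness by a trivial union bound and to reduce any successful counterfeiter against $\mathcal{S}$ to an attacker against either $\mathcal{M}$ or $\mathcal{D}$. Completeness is immediate: $\mathsf{Ver}_{\mathcal{S}}$ rejects a valid banknote only if one of $\mathsf{Ver}_{\mathcal{M}}$ or $\mathsf{Ver}_{\mathcal{D}}$ does, so the completeness error of $\mathcal{S}$ is at most the sum of those of $\mathcal{M}$ and $\mathcal{D}$. For soundness, suppose a polynomial-time counterfeiter $C$ takes $q$ valid banknotes $(s_i,w_i,\rho_i)_{i=1}^{q}$ and causes $\mathsf{Count}$ to accept more than $q$ of its $r=\operatorname*{poly}(n)$ outputs with non-negligible probability $\varepsilon$. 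Since $\mathsf{Ver}_{\mathcal{S}}$ will measure the serial-number and signature registers of every output in the computational basis, I may assume without loss of generality that these registers are measured immediately after $C$ halts, yielding classical pairs $(s'_j,w'_j)$ together with residual quantum states $\sigma_j$. I partition the success event by the provenance of the passing outputs: $E_1$ is the event that every passing output has $s'_j\in\{s_1,\ldots,s_q\}$, and $E_2$ is its complement inside the success event, namely that some passing output has $s'_j\notin\{s_1,\ldots,s_q\}$. Since $\Pr[E_1]+\Pr[E_2]\geq\varepsilon$, at least one of them is at least $\varepsilon/2$.

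In the case $\Pr[E_2]\geq\varepsilon/2$, I build a forger against $\mathcal{D}$. Given access to the signing oracle, the reduction runs $\mathsf{Bank}_{\mathcal{M}}(0^n)$ to generate $q$ mini-scheme banknotes $(s_i,\rho_i)$, queries the oracle on each $s_i$ to obtain $w_i$, and feeds the triples $(s_i,w_i,\rho_i)$ to $C$; this perfectly simulates $C$'s real-world input distribution. After measuring the classical registers of $C$'s output, the reduction scans for an index $j$ with $s'_j\notin\{s_1,\ldots,s_q\}$ such that $\mathsf{Ver}_{\mathcal{D}}(k_{\operatorname*{public}},s'_j,w'_j)$ accepts, and returns $(s'_j,w'_j)$ as its forgery. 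By the definition of $E_2$, such an index exists whenever $E_2$ holds, so the forger succeeds with non-negligible probability, contradicting quantum chosen-message security of $\mathcal{D}$.

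In the case $\Pr[E_1]\geq\varepsilon/2$, I build a doubler against $\mathcal{M}$. On input a challenge banknote $(s^*,\rho^*)$, the reduction samples $(k_{\operatorname*{private}},k_{\operatorname*{public}})\leftarrow\mathsf{KeyGen}_{\mathcal{D}}$, generates $q-1$ further mini-scheme banknotes on its own, inserts the challenge at a uniformly random position $t\in[q]$, signs every serial number with $k_{\operatorname*{private}}$, and runs $C$ on the $q$ resulting banknotes; all $q$ banknotes are identically distributed in $C$'s view, so this again perfectly simulates the real input distribution. After measuring the output serial-number registers, the reduction picks any two indices $j\neq j'$ with $s'_j=s'_{j'}=s^*$ and outputs the pair $(\sigma_j,\sigma_{j'})$. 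Conditioned on $E_1$, the more than $q$ passing outputs are distributed over at most $q$ input serial numbers, so pigeonhole forces some $s_i$ to be the serial number of at least two passing outputs; by symmetry and the random placement of the challenge, the probability that $s_i=s^*$ is at least $1/q$, so the reduction locates a valid pair with probability at least $\Pr[E_1]/q\geq\varepsilon/(2q)$.

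The main obstacle is justifying that the chosen pair $(\sigma_j,\sigma_{j'})$ truly passes $\mathsf{Ver}_2(s^*,\cdot,\cdot)$ with the advertised probability, given that the reduction never explicitly runs $\mathsf{Ver}_{\mathcal{M}}$ before outputting. The key observation is that the serial-number measurements, the signature checks, and the would-be applications of $\mathsf{Ver}_{\mathcal{M}}(s^*,\cdot)$ to the various $\sigma_j$ registers all act on disjoint subsystems of the joint output state, and hence pairwise commute. By the principle of deferred measurement, the joint probability that $s'_j=s'_{j'}=s^*$ and that both $\sigma_j,\sigma_{j'}$ would pass $\mathsf{Ver}_{\mathcal{M}}(s^*,\cdot)$ equals the probability, under the original $E_1$ analysis, that $\mathsf{Ver}_{\mathcal{S}}$ accepts both outputs $j$ and $j'$ with matching serial number $s^*$. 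The pigeonhole count therefore transfers directly to a lower bound on the reduction's success probability against $\mathsf{Ver}_2$, which is non-negligible since $q$ is polynomial, contradicting the security of $\mathcal{M}$ and completing the proof.
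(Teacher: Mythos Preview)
Your proof is correct and follows essentially the same strategy as the paper's: a counterfeiter against $\mathcal{S}$ either forges a signature on a fresh serial number (breaking $\mathcal{D}$) or, by pigeonhole, produces two accepted outputs sharing one of the input serial numbers (breaking $\mathcal{M}$), with the challenge banknote inserted at a uniformly random position so that it is the duplicated one with probability at least $1/q$. The paper phrases this as ``first prove $\Pr[\mathsf{New}>0]$ is negligible, then build the mini-scheme attacker,'' whereas you use an explicit $E_1/E_2$ case split and spell out the commuting-measurements justification, but the content is the same.

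One small quantitative slip in the $E_1$ branch: you say the reduction ``picks any two indices $j\neq j'$ with $s'_j=s'_{j'}=s^*$'' and claim success probability $\varepsilon/(2q)$. Pigeonhole only guarantees that \emph{some} two outputs with serial number $s^*$ pass $\mathsf{Ver}_{\mathcal{M}}$; there may be additional outputs carrying that same serial number that fail, and your reduction cannot tell them apart without running $\mathsf{Ver}_{\mathcal{M}}$ and disturbing the states. Your deferred-measurement paragraph correctly establishes that the joint accept/reject statistics are well-defined, but it does not let the reduction \emph{select} the passing pair. The paper handles this by choosing $j,k$ uniformly at random from $[r]$ and paying an extra $1/\binom{r}{2}$ factor; the same fix works for you and still leaves a non-negligible $\Omega\big(\varepsilon/(q\,r^2)\big)$ bound, so the conclusion is unaffected.
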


\begin{proof}
The intuition behind the proof is extremely simple: by requiring digital
signatures for the serial numbers, we can force a counterfeiter to copy one of
its \textit{existing} banknotes, rather than creating a new banknote with a
new serial number. \ In this way, we force the counterfeiter to break the
underlying mini-scheme $\mathcal{M}$, rather than doing an \textquotedblleft
end run\textquotedblright\ around $\mathcal{M}$.

To formalize this intuition, suppose there exists a counterfeiter
$C_{\mathcal{S}}$\ against $\mathcal{S}$: that is, a polynomial-time quantum
algorithm such that%
\[
\Pr\left[  \mathsf{Count}\left(  k_{\operatorname*{public}},C_{\mathcal{S}%
}\left(  k_{\operatorname*{public}},\$_{1},\ldots,\$_{q}\right)  \right)
>q\right]  \geq\frac{1}{p\left(  n\right)  }.
\]
Here $\$_{i}:=\left(  s_{i},w_{i},\rho_{i}\right)  $\ is a valid banknote,
$\mathsf{Count}$\ is the money counter from Definition \ref{moneydef},\ and
$p$\ is some polynomial. \ Also, the probability is over the key pair $\left(
k_{\operatorname*{private}},k_{\operatorname*{public}}\right)  $, the valid
banknotes $\$_{1},\ldots,\$_{q}$, and the behavior of $\mathsf{Count}$\ and
$C_{\mathcal{S}}$. \ Suppose further that $\mathcal{D}$\ is secure. \ Then it
suffices to show that, by using $C_{\mathcal{S}}$, we can construct a
counterfeiter $C_{\mathcal{M}}$\ against the underlying mini-scheme
$\mathcal{M}$.

Let $\mathsf{New}\left(  k_{\operatorname*{public}},\$_{1},\ldots
,\$_{q}\right)  $\ be an algorithm that does the following:

\begin{enumerate}
\item[(1)] Records the serial numbers $s_{1},\ldots,s_{q}$\ of\ $\$_{1}%
,\ldots,\$_{q}$, and lets $U:=\left\{  s_{1},\ldots,s_{q}\right\}  $.

\item[(2)] Runs $C_{\mathcal{S}}\left(  k_{\operatorname*{public}}%
,\$_{1},\ldots,\$_{q}\right)  $, and examines the output states $%
\hbox{\rm\rlap/c}%
_{1},\ldots,%
\hbox{\rm\rlap/c}%
_{r}$.

\item[(3)] Returns the number of $i\in\left[  r\right]  $\ such that
$\mathsf{Ver}_{\mathcal{S}}\left(
\hbox{\rm\rlap/c}%
_{i}\right)  $ accepts, \textit{and} $%
\hbox{\rm\rlap/c}%
_{i}$'s serial number $s_{i}^{\prime}$ does not belong to $U$.
\end{enumerate}

Then we claim that $\Pr\left[  \mathsf{New}\left(  k_{\operatorname*{public}%
},\$_{1},\ldots,\$_{q}\right)  >0\right]  $\ is negligibly small, where the
probability is over the same variables as before. \ The proof is simply that,
if this were not so, then we could easily create a counterfeiter
$C_{\mathcal{D}}$\ against the digital signature scheme $\mathcal{D}$. \ With
non-negligible probability, $C_{\mathcal{D}}$\ would generate a valid
signature $\mathsf{S{}ign}_{\mathcal{D}}\left(  k_{\operatorname*{private}%
},s_{i}^{\prime}\right)  $, for a message $s_{i}^{\prime}$\ for which
$C_{\mathcal{D}}$ had never before seen a valid signature, by running
$C_{\mathcal{S}}\left(  k_{\operatorname*{public}},\$_{1},\ldots
,\$_{q}\right)  $, then measuring $%
\hbox{\rm\rlap/c}%
_{i}=\left(  s_{i}^{\prime},w_{i}^{\prime},\rho_{i}^{\prime}\right)  $\ for a
uniformly random $i\in\left[  r\right]  $. \ (Note that $C_{\mathcal{D}}$\ can
generate $q$ money states\ $\$_{1},\ldots,\$_{q}$, without knowledge of
$k_{\operatorname*{private}}$, by generating the $s_{i}$'s and $\rho_{i}%
$'s\ on its own, then calling the signing oracle $\mathcal{O}$\ to get the
$w_{i}$'s.)

But now we can define a counterfeiter $C_{\mathcal{M}}$\ against the
mini-scheme $\mathcal{M}$, which works as follows:

\begin{itemize}
\item[(i)] Run $\mathsf{KeyGen}_{\mathcal{D}}\left(  0^{n}\right)  $, to
generate a \textit{new} key pair $\left(  k_{\operatorname*{private}}^{\prime
},k_{\operatorname*{public}}^{\prime}\right)  $.

\item[(ii)] Label the banknote to be copied $\left(  s_{\ell},\rho_{\ell
}\right)  $, for some $\ell\in\left[  q\right]  $\ chosen uniformly at random.

\item[(iii)] Repeatedly call $\mathsf{Bank}_{\mathcal{M}}\left(  0^{n}\right)
$ to generate $q-1$\ serial numbers and quantum money states, labeled $\left(
s_{i},\rho_{i}\right)  $ for all $i\in\left[  q\right]  \setminus\left\{
\ell\right\}  $. \ Let $U:=\left\{  s_{1},\ldots,s_{q}\right\}  $.

\item[(iv)] Generate a digital signature $w_{i}:=\mathsf{S{}ign}_{\mathcal{D}%
}\left(  k_{\operatorname*{private}}^{\prime},s_{i}\right)  $\ for each
$i\in\left[  q\right]  $. \ Let $\$_{i}:=\left(  s_{i},w_{i},\rho_{i}\right)
$.

\item[(v)] Run the counterfeiter $C_{\mathcal{S}}\left(
k_{\operatorname*{public}},\$_{1},\ldots,\$_{q}\right)  $, to obtain
$r>q$\ alleged banknotes $%
\hbox{\rm\rlap/c}%
_{1},\ldots,%
\hbox{\rm\rlap/c}%
_{r}$\ where $%
\hbox{\rm\rlap/c}%
_{j}=\left(  s_{j}^{\prime},w_{j}^{\prime},\rho_{j}^{\prime}\right)  $.

\item[(vi)] Choose $j,k\in\left[  r\right]  $\ uniformly at random without
replacement, and output $\left(  \rho_{j}^{\prime},\rho_{k}^{\prime}\right)  $
as a candidate for two copies of $\rho_{\ell}$.
\end{itemize}

Suppose that $\mathsf{Count}>q$, as happens with probability at least
$\frac{1}{p\left(  n\right)  }$. \ Also suppose that $\mathsf{New}=0$, as
happens all but a negligible fraction of the time. \ Then by the pigeonhole
principle, there must exist indices $j\neq k$\ such that $s_{j}^{\prime}%
=s_{k}^{\prime}$. \ With probability at least $1/\binom{r}{2}$, the
counterfeiter $C_{\mathcal{M}}$\ will find such a $\left(  j,k\right)  $ pair.
\ Therefore $C_{\mathcal{M}}$\ succeeds with overall probability
$\Omega\left(  1/\operatorname*{poly}\left(  n\right)  \right)  $.
\end{proof}

Theorem \ref{compose} reduces the construction of a public-key quantum money
scheme to two \textquotedblleft smaller\textquotedblright\ problems:
constructing a mini-scheme, and constructing a signature scheme secure against
quantum attacks. \ In practice, however, the situation is even better, since
in this paper, all of our constructions of mini-schemes will \textit{also}
yield signature schemes \textquotedblleft free of charge\textquotedblright!
\ The following proposition explains why:

\begin{proposition}
\label{freeofcharge}If there exists a secure, secret-based mini-scheme
$\mathcal{M}$, then there also exists a secure public-key quantum money scheme
$\mathcal{S}$.
\end{proposition}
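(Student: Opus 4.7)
The plan is to chain together three results already established in the paper. First, I would invoke Proposition~\ref{miniowf}: from the secure, secret-based mini-scheme $\mathcal{M}$, the function $\mathsf{SerialNum}(r) := s_r$ is a classical one-way function secure against quantum attack, because any polynomial-time quantum algorithm inverting it would be able to recover the secret seed $r$ and hence manufacture arbitrarily many additional copies of $\$_r = (s_r, \rho_r)$, contradicting the mini-scheme's security.

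Next, I would apply Rompel's construction, in the form of Theorem~\ref{rompelthm}: the existence of a quantum-secure one-way function yields a (classical, public-key) digital signature scheme $\mathcal{D}$ that is secure against nonadaptive quantum chosen-message attacks in the sense of Definition~\ref{sig}. The key point is that Rompel's security reduction is black-box and relativizing, so the quantum hardness of inverting $\mathsf{SerialNum}$ translates directly into quantum-security of $\mathcal{D}$, without requiring any further assumptions.

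Finally, I would combine $\mathcal{M}$ and $\mathcal{D}$ via the standard construction of Section~\ref{CANONICAL}, and apply Theorem~\ref{compose} to conclude that the resulting scheme $\mathcal{S} = (\mathsf{KeyGen}_{\mathcal{S}}, \mathsf{Bank}_{\mathcal{S}}, \mathsf{Ver}_{\mathcal{S}})$ is a secure public-key quantum money scheme.

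Since every step is a direct application of a previously stated result, there is essentially no obstacle; the only thing to verify is that the notion of security delivered by Theorem~\ref{rompelthm} matches exactly the hypothesis required by Theorem~\ref{compose}, namely security against nonadaptive quantum chosen-message attacks, which it does by construction. The statement of Proposition~\ref{freeofcharge} is thus really a bookkeeping corollary, highlighting that in the secret-based setting one does not need to assume a quantum-secure one-way function separately---the mini-scheme already furnishes one for free.
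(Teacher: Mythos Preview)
Your proposal is correct and follows exactly the same three-step chain as the paper's own proof: Proposition~\ref{miniowf} to obtain a quantum-secure one-way function, Theorem~\ref{rompelthm} to upgrade it to a quantum-secure signature scheme, and Theorem~\ref{compose} to combine the signature scheme with $\mathcal{M}$ into a full money scheme. There is nothing to add.
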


\begin{proof}
Starting from $\mathcal{M}$, we can get a one-way function secure against
quantum attack from Proposition \ref{miniowf}, and hence a digital signature
scheme $\mathcal{D}$\ secure against quantum chosen-message attack from
Theorem \ref{rompelthm}. \ Combining $\mathcal{M}$\ and\ $\mathcal{D}$\ now
yields $\mathcal{S}$\ by Theorem \ref{compose}.
\end{proof}

Finally, let us make explicit what Theorem \ref{compose}\ means for oracle construction.

\begin{corollary}
\label{minitofull}Suppose there exists a mini-scheme $\mathcal{M}$\ that is
provably secure relative to some oracle $A_{\mathcal{M}}$ (i.e., any
counterfeiter $C_{\mathcal{M}}$\ against $\mathcal{M}$\ must make
superpolynomially many queries to $A_{\mathcal{M}}$). \ Then there exists a
public-key quantum money scheme $\mathcal{S}$\ that is provably secure
relative to some other oracle $A_{\mathcal{S}}$.
\end{corollary}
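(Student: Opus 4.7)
\medskip

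\textbf{Proof Proposal.} The plan is to combine three ingredients already established in the paper: the hypothesized mini-scheme security relative to $A_{\mathcal{M}}$, the relativized quantum-secure digital signature scheme from Theorem~\ref{sigoracle}, and the standard composition of Theorem~\ref{compose}. Concretely, I would define the new oracle $A_{\mathcal{S}}$ as the disjoint union of $A_{\mathcal{M}}$ with an independent random oracle $R$, for example by setting $A_{\mathcal{S}}(0,x):=A_{\mathcal{M}}(x)$ and $A_{\mathcal{S}}(1,x):=R(x)$. Then $R$ is used, exactly as in the proof of Theorem~\ref{sigoracle}, to define a one-way function $f_n$ and, through Rompel's (relativizing) reduction, a digital signature scheme $\mathcal{D}^{A_{\mathcal{S}}}$. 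Applying the standard construction of Section~\ref{CANONICAL} with $\mathcal{M}^{A_{\mathcal{S}}}$ and $\mathcal{D}^{A_{\mathcal{S}}}$ produces the candidate public-key money scheme $\mathcal{S}^{A_{\mathcal{S}}}$; Theorem~\ref{compose} gives security as soon as both subcomponents remain secure relative to $A_{\mathcal{S}}$.

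The two remaining tasks are therefore the persistence of each subcomponent's security when the adversary is handed the \emph{combined} oracle. For the signature scheme, the Bennett--Bernstein--Brassard--Vazirani and Ambainis lower bounds that underlie Theorem~\ref{sigoracle} count only queries to the random oracle $R$; an adversary may freely intersperse queries to $A_{\mathcal{M}}$ (or perform any other unitaries) without reducing that bound, so $\mathcal{D}$ remains quantum-secure relative to $A_{\mathcal{S}}$ for every fixed $A_{\mathcal{M}}$. For the mini-scheme, I would use a lazy-sampling argument: any polynomial-time counterfeiter $C^{A_{\mathcal{S}}}$ against $\mathcal{M}$ can be converted into a counterfeiter $C'^{A_{\mathcal{M}}}$ that internally simulates $R$-queries by drawing fresh uniformly random answers (consistently, via a classical hash table or a random function of appropriate size). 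Averaging $C$'s success probability over the choice of $R$ exactly reproduces the success probability of $C'$, which is negligible by the hypothesized security of $\mathcal{M}$ relative to $A_{\mathcal{M}}$; hence for all but a negligible fraction of $R$'s, $C$ itself succeeds with only negligible probability.

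To turn this averaging into a statement about a single oracle, I would fix a suitable enumeration of polynomial-size quantum circuits and invoke a standard Borel--Cantelli / diagonalization argument: since the mini-scheme hypothesis supplies success probability that is negligible (indeed, in the concrete hidden-subspace instantiation, $1/\exp(n)$), a union bound over circuit size $\leq n^c$ and over security parameter $n$ shows that almost every $R$ simultaneously preserves mini-scheme security and signature security. Picking any such $R$ (together with the given $A_{\mathcal{M}}$) defines the desired $A_{\mathcal{S}}$, and Theorem~\ref{compose} then certifies that $\mathcal{S}^{A_{\mathcal{S}}}$ is a secure public-key quantum money scheme. The main obstacle is purely bookkeeping rather than conceptual: making sure the two reductions (Rompel's for signatures, the lazy simulation for the mini-scheme) both relativize cleanly to the combined oracle, and that the quantitative parameters (negligible-vs-exponentially-small success, polynomial-vs-superpolynomial queries) line up so that the union bound over counterfeiters closes.
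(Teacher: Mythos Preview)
Your proposal is correct and follows essentially the same route as the paper: take $A_{\mathcal{S}}$ to be the disjoint union of $A_{\mathcal{M}}$ with an independent random oracle furnishing the signature scheme of Theorem~\ref{sigoracle}, argue that independence of the two halves means access to one cannot help break the primitive built from the other, and then invoke Theorem~\ref{compose}. The paper phrases the cross-oracle step symmetrically---an adversary can always replace the irrelevant oracle by a ``mock-up'' it generates itself, and since both security guarantees are \emph{query-complexity} bounds the computational cost of that mock-up is immaterial---whereas you handle the two directions separately (BBBV robustness for signatures, lazy sampling of $R$ for the mini-scheme) and add an explicit Borel--Cantelli step to pin down a single $R$; these are cosmetic differences rather than a different argument.
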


\begin{proof}
By Theorem \ref{sigoracle},\ relative to a suitable oracle $A_{\mathcal{D}}$
(in fact, a \textit{random} oracle suffices), there exists a signature scheme
$\mathcal{D}$, such that any quantum chosen-message attack against
$\mathcal{D}$\ must make superpolynomially many queries to $A_{\mathcal{D}}$.
The oracle $A_{\mathcal{S}}$ will simply be a concatenation of $A_{\mathcal{M}%
}$ with $A_{\mathcal{D}}$. \ Relative to $A_{\mathcal{S}}$, we claim that the
mini-scheme $\mathcal{M}$ and signature scheme $\mathcal{D}$ are \textit{both}
secure---and therefore, by Theorem \ref{compose}, we can construct a secure
public-key quantum money scheme $\mathcal{S}$.

The only worry is that a counterfeiter $C_{\mathcal{M}}$\ against
$\mathcal{M}$\ might gain some advantage by querying $A_{\mathcal{D}}$; or
conversely, a counterfeiter $C_{\mathcal{D}}$\ against $\mathcal{D}$\ might
gain some advantage by querying $A_{\mathcal{M}}$. \ However, this worry is
illusory, for the simple reason that the oracles $A_{\mathcal{D}}$ and
$A_{\mathcal{M}}$ are generated independently. \ Thus, if $C_{\mathcal{M}}%
$\ can break $\mathcal{M}$\ by querying $A_{\mathcal{D}}$, then it can
\textit{also} break $\mathcal{M}$\ by querying a randomly-generated
\textquotedblleft mock-up\textquotedblright\ $A_{\mathcal{D}}^{\prime}$\ of
$A_{\mathcal{D}}$; and conversely, if $C_{\mathcal{D}}$\ can break
$\mathcal{D}$\ by querying $A_{\mathcal{M}}$, then it can also break
$\mathcal{D}$\ by querying a randomly-generated mock-up\ $A_{\mathcal{M}%
}^{\prime}$\ of $A_{\mathcal{M}}$. \ Regardless of the \textit{computational}
cost of generating these mock-ups, they give us a break against $\mathcal{D}%
$\ or $\mathcal{M}$\ that makes only $\operatorname*{poly}\left(  n\right)
$\ oracle queries, thereby giving the desired contradiction.
\end{proof}

\section{Inner-Product Adversary Method\label{METHOD}}

At least in the black-box setting, our goal is to create quantum money
(mini-)schemes that we can \textit{prove} are secure---by showing that any
counterfeiter would need to make exponentially many queries to some oracle.
\ Proving security results of this kind turns out to require interesting
quantum lower bound machinery. \ In this section, we introduce the
\textit{inner-product adversary method}, a new variant of Ambainis's quantum
adversary method \cite{ambainis} that is\ well-adapted to proving the security
of quantum money schemes, and that seems likely to find other applications.

Let us explain the difficulty we need to overcome. \ In a public-key quantum
money scheme, a counterfeiter $C$ has \textit{two} powerful resources available:

\begin{enumerate}
\item[(1)] One or more copies of a \textquotedblleft
legitimate\textquotedblright\ quantum money state $\left\vert \psi
\right\rangle $.

\item[(2)] Access to a \textit{verification procedure} $V$, which accepts
$\left\vert \psi\right\rangle $ and rejects every state orthogonal to
$\left\vert \psi\right\rangle $.
\end{enumerate}

Indeed, for us, the situation is even better for $C$ (i.e., worse for us!),
since $C$ can query not only the verification procedure $V$ itself, but also
an underlying \textit{classical} oracle $U$\ that the legitimate buyers and
sellers use to implement $V$. \ But let us ignore that issue for now.

As a first step, of course, we should understand how to rule out
counterfeiting given (1) or (2) separately. \ If $C$\ has a copy of
$\left\vert \psi\right\rangle $, but no oracle access to $V$,\ then the
impossibility of preparing $\left\vert \psi\right\rangle \left\vert
\psi\right\rangle $\ essentially amounts to the No-Cloning
Theorem.\ \ Conversely, if $C$ has oracle access to $V$, but no copy of
$\left\vert \psi\right\rangle $,\ then given unlimited time, $C$ \textit{can}
prepare as many\ copies of $\left\vert \psi\right\rangle $\ as it wants,\ by
using Grover's algorithm to search for a quantum state that $V$\ accepts.
\ The problem is \textquotedblleft merely\textquotedblright\ that, if
$\left\vert \psi\right\rangle $\ has $n$ qubits, then Grover's algorithm
requires $\Theta\left(  2^{n/2}\right)  $ iterations, and the BBBV hybrid
argument \cite{bbbv} shows that Grover's algorithm is optimal.

What we need, then, is a theorem showing that any counterfeiter needs
exponentially many queries to $V$ to prepare $\left\vert \psi\right\rangle
\left\vert \psi\right\rangle $,\ \textit{even if} the counterfeiter has a copy
of $\left\vert \psi\right\rangle $\ to start with. \ Such a theorem would
contain both the No-Cloning Theorem and the BBBV hybrid argument as special
cases. \ Aaronson \cite{aar:qcopy} called the desired generalization the
\textit{Complexity-Theoretic No-Cloning Theorem}, and sketched a proof of it
using Ambainis's adversary method. \ Based on that result, Aaronson also
argued that there exists a \textit{quantum oracle} (i.e., a black-box unitary
transformation $V$) relative to which secure public-key quantum money is
possible. \ However, the details were never published.

In this section, we prove a result---Theorem \ref{lbthm}---that is much more
general than Aaronson's previous Complexity-Theoretic No-Cloning Theorem
\cite{aar:qcopy}. \ Then, in Section \ref{COR}, we apply Theorem \ref{lbthm}
to prove the security of public-key quantum money relative to a
\textit{classical} oracle. \ In Appendix \ref{QOR}, we also apply Theorem
\ref{lbthm} to prove the \textquotedblleft original\textquotedblright%
\ Complexity-Theoretic No-Cloning Theorem \cite{aar:qcopy}, which involves
Haar-random $n$-qubit states $\left\vert \psi\right\rangle $, rather than
superpositions $\left\vert A\right\rangle $\ over subspaces $A\leq
\mathbb{F}_{2}^{n}$.\footnote{For whatever it is worth, we get a lower bound
of $\Omega\left(  2^{n/2}\right)  $\ on the number of queries needed to copy a
Haar-random state, which is quadratically better than the $\Omega\left(
2^{n/4}\right)  $\ that we get for subspace states.}

\subsection{Idea of Method\label{IDEA}}

So, what \textit{is} the inner-product adversary method? \ In Ambainis's
adversary method \cite{ambainis}---like in the BBBV hybrid argument
\cite{bbbv}\ from which it evolved---the basic idea is to upper-bound how much
\textquotedblleft progress\textquotedblright\ a quantum algorithm $Q$\ can
make at distinguishing pairs of oracles, as the result of a single query.
\ Let $\left\vert \Psi_{t}^{U}\right\rangle $\ be $Q$'s state after $t$
queries, assuming that the oracle is $U$. \ Then \textit{normally}, before any
queries have been made, we can assume that $\left\vert \Psi_{0}^{U}%
\right\rangle =\left\vert \Psi_{0}^{V}\right\rangle $\ for all oracles $U$ and
$V$. \ By contrast, after the final query $T$, for all oracle pairs $\left(
U,V\right)  $\ that $Q$ is trying to distinguish, we must have (say)
$\left\vert \left\langle \Psi_{T}^{U}|\Psi_{T}^{V}\right\rangle \right\vert
\leq1/2$. \ Thus, \textit{if} we can show that the inner product $\left\vert
\left\langle \Psi_{t}^{U}|\Psi_{t}^{V}\right\rangle \right\vert $\ can
decrease by at most $\varepsilon$\ as the result of a single query, then it
follows that $Q$ must make $\Omega\left(  1/\varepsilon\right)  $\ queries.

But when we try to apply the above framework to quantum money, we run into
serious difficulties. \ Most obviously, it is no longer true that $\left\vert
\Psi_{0}^{U}\right\rangle =\left\vert \Psi_{0}^{V}\right\rangle $\ for all
oracles $U,V$. \ Indeed, before $Q$ makes even a \textit{single} query to its
oracle $V$, it already has a great deal of information about $V$, in the form
of a legitimate money state $\left\vert \psi\right\rangle $\ that
$V$\ accepts. \ The task is \textquotedblleft merely\textquotedblright\ to
prepare a second copy of a state that $Q$\ already has! \ Worse yet, once we
fix two oracles $U$\ and $V$, we find that $Q$ generally \textit{can} exploit
the \textquotedblleft head start\textquotedblright\ provided by its initial
state to decrease the inner product $\left\vert \left\langle \Psi_{t}^{U}%
|\Psi_{t}^{V}\right\rangle \right\vert $\ by a constant amount, by making just
a single query to $U$ or $V$ respectively.

Our solution is as follows. \ We first carefully choose a distribution
$\mathcal{D}$\ over oracle pairs $\left(  U,V\right)  $. We then analyze how
much the \textit{expected} inner product%
\[
\operatorname*{E}_{\left(  U,V\right)  \thicksim\mathcal{D}}\left[  \left\vert
\left\langle \Psi_{t}^{U}|\Psi_{t}^{V}\right\rangle \right\vert \right]
\]
can decrease as the result of a single query to $U$ or $V$. \ We will find
that, even if $Q$ can substantially decrease the inner product between
$\left\vert \Psi_{t}^{U}\right\rangle $ and $\left\vert \Psi_{t}%
^{V}\right\rangle $\ for \textit{some} $\left(  U,V\right)  $\ pairs by making
a single query, it cannot do so for \textit{most} pairs.

To illustrate, let $\left\vert \psi\right\rangle $\ and $\left\vert
\varphi\right\rangle $\ be two possible quantum money states, which satisfy
(say) $\left\langle \psi|\varphi\right\rangle =1/2$. \ Then if a
counterfeiting algorithm succeeds perfectly, it must map $\left\vert
\psi\right\rangle $\ to $\left\vert \psi\right\rangle ^{\otimes2}$, and
$\left\vert \varphi\right\rangle $\ to $\left\vert \varphi\right\rangle
^{\otimes2}$. \ Since%
\[
\left\langle \psi\right\vert ^{\otimes2}\left\vert \varphi\right\rangle
^{\otimes2}=\left(  \left\langle \psi|\varphi\right\rangle \right)  ^{2}%
=\frac{1}{4},
\]
this means that the counterfeiter must \textit{decrease the corresponding
inner product }by at least\textit{ }$1/4$. \ However, we will show that the
\textit{average} inner product can decrease by at most $1/\exp\left(
n\right)  $ as the result of a single query. \ From this it will follow that
the counterfeiter needs to make $2^{\Omega\left(  n\right)  }$\ queries.

Let us mention that today, there are several \textquotedblleft
sophisticated\textquotedblright\ versions of the quantum adversary method
\cite{amrr,lmrss},\ which \textit{can} yield lower bounds for quantum state
generation tasks not unlike the ones we consider. \ However, a drawback of
these methods is that they are extremely hard to apply to concrete problems:
doing so typically requires eigenvalue bounds, and often the use of
representation theory. \ For this reason, even if one of the \textquotedblleft
sophisticated\textquotedblright\ adversary methods (or a variant thereof)
could be applied to the quantum money problem, our approach might still be preferable.

\subsection{The Method\label{THEMETHOD}}

We now introduce the inner-product adversary method. \ Let $\mathcal{O}$\ be a
set of quantum oracles acting on $n$ qubits each. \ For each $U\in\mathcal{O}%
$, assume there exists a subspace $S_{U}\leq\mathbb{C}^{2^{n}}$\ such that

\begin{enumerate}
\item[(i)] $U\left\vert \psi\right\rangle =-\left\vert \psi\right\rangle
$\ for all $\left\vert \psi\right\rangle \in S_{U}$, and

\item[(ii)] $U\left\vert \eta\right\rangle =\left\vert \eta\right\rangle
$\ for all $\left\vert \eta\right\rangle \in S_{U}^{\bot}$.
\end{enumerate}

Let $R\subset\mathcal{O}\times\mathcal{O}$ be a symmetric binary relation on
$\mathcal{O}$, with the properties that

\begin{enumerate}
\item[(i)] $\left(  U,U\right)  \notin R$\ for all $U\in\mathcal{O}$, and

\item[(ii)] for every $U\in\mathcal{O}$\ there exists a $V\in\mathcal{O}%
$\ such that $\left(  U,V\right)  \in R$.\
\end{enumerate}

Suppose that for all $U\in\mathcal{O}$\ and all $\left\vert \eta\right\rangle
\in S_{U}^{\bot}$, we have%
\[
\operatorname*{E}_{V~:~\left(  U,V\right)  \in R}\left[  F\left(  \left\vert
\eta\right\rangle ,S_{V}\right)  ^{2}\right]  \leq\varepsilon,
\]
where $F\left(  \left\vert \eta\right\rangle ,S_{V}\right)  =\max_{\left\vert
\psi\right\rangle \in S_{V}}\left\vert \left\langle \eta|\psi\right\rangle
\right\vert $\ is the fidelity between $\left\vert \eta\right\rangle $\ and
$S_{V}$. \ Let $Q$\ be a quantum oracle algorithm, and let $Q^{U}$\ denote $Q$
run with the oracle $U\in\mathcal{O}$. \ Suppose $Q^{U}$ begins in the state
$\left\vert \Psi_{0}^{U}\right\rangle $\ (possibly already dependent on $U$).
\ Let $\left\vert \Psi_{t}^{U}\right\rangle $\ denote the state of $Q^{U}%
$\ immediately after the $t^{th}$\ query. \ Also, define a \textit{progress
measure} $p_{t}$\ by%
\[
p_{t}:=\operatorname*{E}_{U,V~:~\left(  U,V\right)  \in R}\left[  \left\vert
\left\langle \Psi_{t}^{U}|\Psi_{t}^{V}\right\rangle \right\vert \right]  .
\]
The following lemma bounds how much $p_{t}$\ can decrease as the result of a
single query.

\begin{lemma}
[Bound on Progress Rate]\label{innerprod}%
\[
p_{t}\geq p_{t-1}-4\sqrt{\varepsilon}.
\]

\end{lemma}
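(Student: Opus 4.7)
The plan is to follow the classical adversary-method template: bound the effect of a single oracle call on the pairwise inner products, and show that the remainder of the algorithm contributes nothing on average. The hypothesis on the relation $R$ enters only at the very last step, via Cauchy-Schwarz.

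First I would dispose of the oracle-independent portion of the algorithm. Between queries $t{-}1$ and $t$, $Q$ applies a unitary $W_t$ independent of the oracle, and unitaries preserve inner products, so writing $|\Phi^U\rangle := W_t |\Psi_{t-1}^U\rangle$ we have $|\langle \Phi^U | \Phi^V\rangle| = |\langle \Psi_{t-1}^U | \Psi_{t-1}^V\rangle|$. Thus it suffices to lower-bound $p_t$ by the expected pairwise inner product of the states $U|\Phi^U\rangle$.

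Next, for each $U$ I would decompose $|\Phi^U\rangle = |\alpha_U\rangle + |\beta_U\rangle$ with $|\alpha_U\rangle \in S_U$ and $|\beta_U\rangle \in S_U^{\perp}$; then $U|\Phi^U\rangle = -|\alpha_U\rangle + |\beta_U\rangle$ and a direct expansion yields
\[
\langle \Phi^U | \Phi^V\rangle - \langle U\Phi^U | V\Phi^V\rangle = 2\langle \alpha_U | \beta_V\rangle + 2\langle \beta_U | \alpha_V\rangle.
\]
Applying the reverse triangle inequality $||z|-|w|| \leq |z-w|$, averaging over $(U,V) \in R$, and using the symmetry of $R$ to merge the two cross terms gives
\[
p_{t-1} - p_t \;\leq\; 4\,\mathbb{E}_{(U,V)\in R}\bigl[\,|\langle \alpha_U | \beta_V\rangle|\,\bigr].
\]

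The final step---the only one that uses the hypothesis of the lemma---is to bound this cross term. Since $|\alpha_U\rangle \in S_U$, we have $\langle \alpha_U | \beta_V\rangle = \langle \alpha_U | P_{S_U} | \beta_V\rangle$, so Cauchy-Schwarz in the Hilbert space gives $|\langle \alpha_U | \beta_V\rangle| \leq \|\alpha_U\| \cdot \|P_{S_U} |\beta_V\rangle\|$, and Cauchy-Schwarz on the expectation together with $\|\alpha_U\| \leq 1$ reduces the task to bounding $\mathbb{E}_{(U,V)\in R}[\,\|P_{S_U}|\beta_V\rangle\|^2\,]$. By symmetry of $R$ this equals $\mathbb{E}_U \mathbb{E}_{V:(U,V)\in R}[\,\|\beta_U\|^2 F(|\hat\beta_U\rangle, S_V)^2\,]$, where $|\hat\beta_U\rangle := |\beta_U\rangle/\|\beta_U\| \in S_U^{\perp}$. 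The hypothesis applied for each fixed $U$ gives $\mathbb{E}_{V:(U,V)\in R}[F(|\hat\beta_U\rangle, S_V)^2] \leq \varepsilon$, and $\|\beta_U\| \leq 1$ finishes the bound, yielding $\sqrt{\varepsilon}$ and hence $p_{t-1} - p_t \leq 4\sqrt\varepsilon$.

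The main obstacle is organizing the averaging carefully so that the lemma's hypothesis---stated with $U$ fixed and $V$ varying over its $R$-neighbors---can be applied cleanly after the symmetry-based relabeling of $U$ and $V$; once that bookkeeping is in place, the rest is routine. A minor subtlety is that in the actual algorithm the oracle acts on a query register tensored with a workspace, so the decomposition into $|\alpha_U\rangle,|\beta_U\rangle$ should be performed relative to $S_U \otimes \mathcal{W}$ and its orthogonal complement, but since $U$ acts as the identity on the workspace this changes nothing in the analysis.
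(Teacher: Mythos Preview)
Your proposal is correct and follows essentially the same approach as the paper: decompose the pre-query state into its $S_U$ and $S_U^{\perp}$ components, observe that a single query only flips the sign on the $S_U$ part so the change in $\langle\Phi^U|\Phi^V\rangle$ is twice the cross terms, and bound those cross terms via Cauchy--Schwarz together with the hypothesis on $R$. The only organizational difference is that the paper expands in a workspace basis $\{|i\rangle\}$ from the outset (working with per-$i$ components $|\eta^U_{t,i}\rangle\in S_U^\perp$ and $|\psi^U_{t,i}\rangle\in S_U$), whereas you keep the global vectors $|\alpha_U\rangle,|\beta_U\rangle$ and defer the workspace expansion to the end---your remark about tensoring with $\mathcal{W}$ is exactly where that expansion is needed to invoke the hypothesis, and once unpacked it matches the paper's per-$i$ argument.
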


\begin{proof}
Let $\left\vert \Phi_{t}^{U}\right\rangle $\ denote the state of $Q^{U}%
$\ immediately \textit{before} the $t^{th}$\ query. \ Then for all $t$, it is
clear that $\left\langle \Phi_{t}^{U}|\Phi_{t}^{V}\right\rangle =\left\langle
\Psi_{t-1}^{U}|\Psi_{t-1}^{V}\right\rangle $: in other words, the unitary
transformations that $Q$\ performs in between query steps have no effect on
the inner products. \ So to prove the lemma, it suffices to show the following
inequality:%
\begin{equation}
\operatorname*{E}_{U,V~:~\left(  U,V\right)  \in R}\left[  \left\vert
\left\langle \Phi_{t}^{U}|\Phi_{t}^{V}\right\rangle \right\vert \right]
-\operatorname*{E}_{U,V~:~\left(  U,V\right)  \in R}\left[  \left\vert
\left\langle \Psi_{t}^{U}|\Psi_{t}^{V}\right\rangle \right\vert \right]
\leq4\sqrt{\varepsilon}. \tag{*}%
\end{equation}
Let $\left\{  \left\vert i\right\rangle \right\}  _{i\in\left[  B\right]  }%
$\ be an arbitrary orthonormal basis for $Q$'s workspace register. \ Then we
can write%
\begin{align*}
\left\vert \Phi_{t}^{U}\right\rangle  &  =\sum_{i\in\left[  B\right]  }%
\alpha_{t,i}^{U}\left\vert i\right\rangle \left\vert \Phi_{t,i}^{U}%
\right\rangle \\
&  =\sum_{i\in\left[  B\right]  }\left\vert i\right\rangle \left(  \beta
_{t,i}^{U}\left\vert \eta_{t,i}^{U}\right\rangle +\gamma_{t,i}^{U}\left\vert
\psi_{t,i}^{U}\right\rangle \right)  ,
\end{align*}
where $\left\vert \eta_{t,i}^{U}\right\rangle \in S_{U}^{\bot}$\ and
$\left\vert \psi_{t,i}^{U}\right\rangle \in S_{U}$. \ (By normalization,
$\left\vert \beta_{t,i}^{U}\right\vert ^{2}+\left\vert \gamma_{t,i}%
^{U}\right\vert ^{2}=\left\vert \alpha_{t,i}^{U}\right\vert ^{2}$.) \ A query
transforms the above state to%
\[
\left\vert \Psi_{t}^{U}\right\rangle =\sum_{i\in\left[  B\right]  }\left\vert
i\right\rangle \left(  \beta_{t,i}^{U}\left\vert \eta_{t,i}^{U}\right\rangle
-\gamma_{t,i}^{U}\left\vert \psi_{t,i}^{U}\right\rangle \right)  .
\]
So for all $U,V\in\mathcal{O}$,%
\begin{align*}
\left\langle \Phi_{t}^{U}|\Phi_{t}^{V}\right\rangle -\left\langle \Psi_{t}%
^{U}|\Psi_{t}^{V}\right\rangle  &  =\sum_{i\in\left[  B\right]  }\left(
\overline{\beta}_{t,i}^{U}\left\langle \eta_{t,i}^{U}\right\vert
+\overline{\gamma}_{t,i}^{U}\left\langle \psi_{t,i}^{U}\right\vert \right)
\left(  \beta_{t,i}^{V}\left\vert \eta_{t,i}^{V}\right\rangle +\gamma
_{t,i}^{V}\left\vert \psi_{t,i}^{V}\right\rangle \right) \\
&  ~~~~~~~~~~-\sum_{i\in\left[  B\right]  }\left(  \overline{\beta}_{t,i}%
^{U}\left\langle \eta_{t,i}^{U}\right\vert -\overline{\gamma}_{t,i}%
^{U}\left\langle \psi_{t,i}^{U}\right\vert \right)  \left(  \beta_{t,i}%
^{V}\left\vert \eta_{t,i}^{V}\right\rangle -\gamma_{t,i}^{V}\left\vert
\psi_{t,i}^{V}\right\rangle \right) \\
&  =2\sum_{i\in\left[  B\right]  }\left(  \overline{\beta}_{t,i}^{U}%
\gamma_{t,i}^{V}\left\langle \eta_{t,i}^{U}|\psi_{t,i}^{V}\right\rangle
+\overline{\gamma}_{t,i}^{U}\beta_{t,i}^{V}\left\langle \psi_{t,i}^{U}%
|\eta_{t,i}^{V}\right\rangle \right)  .
\end{align*}
By Cauchy-Schwarz, the above implies that%
\[
\left\vert \left\langle \Phi_{t}^{U}|\Phi_{t}^{V}\right\rangle \right\vert
-\left\vert \left\langle \Psi_{t}^{U}|\Psi_{t}^{V}\right\rangle \right\vert
\leq2\max_{i\in\left[  B\right]  }\left\vert \left\langle \eta_{t,i}^{U}%
|\psi_{t,i}^{V}\right\rangle \right\vert +2\max_{i\in\left[  B\right]
}\left\vert \left\langle \psi_{t,i}^{U}|\eta_{t,i}^{V}\right\rangle
\right\vert .
\]
Now fix $U\in\mathcal{O}$\ and $i\in\left[  B\right]  $. \ Then again applying
Cauchy-Schwarz,%
\begin{align*}
\operatorname*{E}_{V~:~\left(  U,V\right)  \in R}\left[  \left\vert
\left\langle \eta_{t,i}^{U}|\psi_{t,i}^{V}\right\rangle \right\vert \right]
&  \leq\sqrt{\operatorname*{E}_{V~:~\left(  U,V\right)  \in R}\left[
\left\vert \left\langle \eta_{t,i}^{U}|\psi_{t,i}^{V}\right\rangle \right\vert
^{2}\right]  }\\
&  \leq\sqrt{\operatorname*{E}_{V~:~\left(  U,V\right)  \in R}\left[
\max_{\left\vert \psi\right\rangle \in S_{V}}\left\vert \left\langle
\eta_{t,i}^{U}|\psi\right\rangle \right\vert ^{2}\right]  }\\
&  \leq\sqrt{\varepsilon}.
\end{align*}
Hence%
\[
\operatorname*{E}_{U,V~:~\left(  U,V\right)  \in R}\left[  \left\vert
\left\langle \eta_{t,i}^{U}|\psi_{t,i}^{V}\right\rangle \right\vert \right]
\leq\sqrt{\varepsilon}%
\]
as well, and likewise%
\[
\operatorname*{E}_{U,V~:~\left(  U,V\right)  \in R}\left[  \left\vert
\left\langle \psi_{t,i}^{U}|\eta_{t,i}^{V}\right\rangle \right\vert \right]
\leq\sqrt{\varepsilon}%
\]
by symmetry. \ Putting everything together,%
\begin{align*}
p_{t-1}-p_{t}  &  =\operatorname*{E}_{U,V~:~\left(  U,V\right)  \in R}\left[
\left\vert \left\langle \Phi_{t}^{U}|\Phi_{t}^{V}\right\rangle \right\vert
-\left\vert \left\langle \Psi_{t}^{U}|\Psi_{t}^{V}\right\rangle \right\vert
\right] \\
&  \leq2\operatorname*{E}_{U,V~:~\left(  U,V\right)  \in R}\left[  \max
_{i\in\left[  B\right]  }\left\vert \left\langle \eta_{t,i}^{U}|\psi_{t,i}%
^{V}\right\rangle \right\vert \right]  +2\operatorname*{E}_{U,V~:~\left(
U,V\right)  \in R}\left[  \max_{i\in\left[  B\right]  }\left\vert \left\langle
\psi_{t,i}^{U}|\eta_{t,i}^{V}\right\rangle \right\vert \right] \\
&  \leq4\sqrt{\varepsilon}.
\end{align*}
This proves inequality (*) and hence the lemma.
\end{proof}

From Lemma \ref{innerprod}\ we immediately deduce the following.

\begin{theorem}
[Inner-Product Adversary Method]\label{lbthm}Suppose that initially
$\left\vert \left\langle \Psi_{0}^{U}|\Psi_{0}^{V}\right\rangle \right\vert
\geq c$\ for all $\left(  U,V\right)  \in R$, whereas by the end we need
$\left\vert \left\langle \Psi_{T}^{U}|\Psi_{T}^{V}\right\rangle \right\vert
\leq d$\ for all $\left(  U,V\right)  \in R$. \ Then $Q$\ must make
$T=\Omega\left(  \frac{c-d}{\sqrt{\varepsilon}}\right)  $\ oracle queries.
\end{theorem}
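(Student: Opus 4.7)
The plan is to derive Theorem~\ref{lbthm} as an essentially immediate corollary of Lemma~\ref{innerprod}, using the progress measure $p_t := \operatorname{E}_{(U,V) \in R}\bigl[\,\lvert\langle \Psi_t^U \mid \Psi_t^V\rangle\rvert\,\bigr]$ as the sole bookkeeping quantity. The structure is the standard adversary-method bracketing: produce a lower bound on $p_0$, an upper bound on $p_T$, and control on how much $p_t$ can change per query; then solve for $T$.

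First I would unpack the boundary conditions. By the hypothesis on the initial states, $\lvert\langle \Psi_0^U\mid\Psi_0^V\rangle\rvert \geq c$ for every pair $(U,V) \in R$, so taking the expectation over $R$ gives $p_0 \geq c$. Analogously, the requirement $\lvert\langle \Psi_T^U\mid\Psi_T^V\rangle\rvert \leq d$ for all $(U,V) \in R$ at the end forces $p_T \leq d$. Thus the total change in the progress measure satisfies $p_0 - p_T \geq c - d$.

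Second, I would iterate Lemma~\ref{innerprod}. That lemma asserts $p_t \geq p_{t-1} - 4\sqrt{\varepsilon}$ for every query step $t = 1, \ldots, T$. Telescoping across the $T$ queries yields
\[
p_T \;\geq\; p_0 - 4T\sqrt{\varepsilon}.
\]
Combining with the boundary bounds gives $c - 4T\sqrt{\varepsilon} \leq p_0 - 4T\sqrt{\varepsilon} \leq p_T \leq d$, so rearranging,
\[
T \;\geq\; \frac{c - d}{4\sqrt{\varepsilon}} \;=\; \Omega\!\left(\frac{c-d}{\sqrt{\varepsilon}}\right),
\]
which is the claimed bound.

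Honestly, there is no serious obstacle here: all the work has already been absorbed into Lemma~\ref{innerprod}, and Theorem~\ref{lbthm} is just the \emph{packaging} of that lemma as an adversary bound. The one subtlety worth flagging is that the boundary inequalities are required to hold \emph{for every} $(U,V) \in R$ (not just on average), which is what lets us pass cleanly to $p_0 \geq c$ and $p_T \leq d$ and sandwich the telescoped inequality; a weaker ``average-case'' hypothesis at the endpoints would give the same conclusion but is not what we need for the applications in Section~\ref{COR}.
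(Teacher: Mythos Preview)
Your proposal is correct and matches the paper's approach exactly: the paper states that Theorem~\ref{lbthm} is deduced immediately from Lemma~\ref{innerprod}, and your telescoping argument is precisely the intended one-line derivation.
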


\section{Classical Oracle Scheme\label{COR}}

In this section, we construct a mini-scheme, called the \textit{Hidden
Subspace Mini-Scheme}, that requires only a classical oracle. \ We then use
the inner-product adversary method from Section \ref{METHOD}\ to show that our
mini-scheme is secure---indeed, that any counterfeiter must make
$\Omega\left(  2^{n/4}\right)  $\ queries to copy a banknote. \ By the results
of Sections \ref{CANONICAL}\ and \ref{CRYPTO}, our mini-scheme will
automatically imply a full-blown public-key quantum money scheme, which
requires only a classical oracle and is unconditionally secure.

\subsection{The Hidden Subspace Mini-Scheme\label{HSM}}

We identify $n$-bit strings $x\in\left\{  0,1\right\}  ^{n}$\ with elements of
the vector space $\mathbb{F}_{2}^{n}$\ in the standard way. \ Then in our
mini-scheme, each $n$-qubit money state will have the form%
\[
\left\vert A\right\rangle :=\frac{1}{\sqrt{\left\vert A\right\vert }}%
\sum_{x\in A}\left\vert x\right\rangle ,
\]
where $A$ is some randomly-chosen subspace of $\mathbb{F}_{2}^{n}$\ (i.e., a
set of codewords of a linear code),\ with $\dim A=n/2$. \ Let $A^{\bot}$\ be
the orthogonal complement of $A$, so that $\dim A^{\bot}=n/2$ as
well.\ \ Notice that we can transform $\left\vert A\right\rangle $\ to
$\left\vert A^{\bot}\right\rangle $\ and vice versa by simply applying
$H_{2}^{\otimes n}$: a Hadamard gate on each of the $n$ qubits, or
equivalently a quantum Fourier transform over $\mathbb{F}_{2}^{n}$.

The basic idea of the mini-scheme is as follows: the bank can easily prepare
the quantum money state $\left\vert A\right\rangle $, starting from a
classical description $\left\langle A\right\rangle $\ of $A$\ (e.g., a list of
$n/2$\ generators). \ The bank distributes the state $\left\vert
A\right\rangle $, but keeps the classical description $\left\langle
A\right\rangle $\ secret. \ Along with $\left\vert A\right\rangle $\ itself,
the bank also publishes details of how to \textit{verify} $\left\vert
A\right\rangle $ by querying two classical oracles, $U_{A}$\ and $U_{A^{\bot}%
}$. \ The first oracle, $U_{A}$, decides membership in $A$: for all $n$-qubit
basis states $\left\vert x\right\rangle $,%
\[
U_{A}\left\vert x\right\rangle =\left\{
\begin{array}
[c]{cc}%
-\left\vert x\right\rangle  & \text{if }x\in A\\
\left\vert x\right\rangle  & \text{otherwise}%
\end{array}
\right.
\]
The second oracle, $U_{A^{\bot}}$, decides membership in $A^{\bot}$ in the
same way.

Using $U_{A}$, it is easy to implement a projector $\mathbb{P}_{A}$\ onto the
set of basis states in $A$. \ To do so, simply initialize a control qubit to
$\left\vert +\right\rangle =\frac{\left\vert 0\right\rangle +\left\vert
1\right\rangle }{\sqrt{2}}$, then apply $U_{A}$\ conditioned on the control
qubit being in state $\left\vert 1\right\rangle $, then measure the control
qubit in the $\left\{  \left\vert +\right\rangle ,\left\vert -\right\rangle
\right\}  $ basis, and postselect on getting the outcome $\left\vert
-\right\rangle $. \ Likewise, using $U_{A^{\bot}}$, it is easy to implement a
projector $\mathbb{P}_{A^{\bot}}$\ onto the set of basis states in $A^{\bot}$.
\ Then $V_{A}$, the public verification algorithm for the money state
$\left\vert A\right\rangle $, will simply consist of $\mathbb{P}_{A}$, then a
Fourier transform, then $\mathbb{P}_{A^{\bot}}$, and finally a second Fourier
transform to return the legitimate money state back to $\left\vert
A\right\rangle $:%
\[
V_{A}:=H_{2}^{\otimes n}\mathbb{P}_{A^{\bot}}H_{2}^{\otimes n}\mathbb{P}_{A}.
\]
We show in Lemma \ref{testworks}\ that $V_{A}$\ is just a projector onto
$\left\vert A\right\rangle $. \ This means, in particular, that $V_{A}%
\left\vert A\right\rangle =\left\vert A\right\rangle $, and that $V_{A}%
$\ accepts an arbitrary state $\left\vert \psi\right\rangle $\ with
probability $\left\vert \left\langle \psi|A\right\rangle \right\vert ^{2}$.
\ Thus, our mini-scheme is \textit{projective} and has \textit{perfect
completeness}.

But what about security? \ Intuitively, a counterfeiter could query $U_{A}$ or
$U_{A^{\bot}}$\ to find a generating set\ for $A$ or $A^{\bot}$---but that
would require an exponentially-long Grover search, since $\left\vert
A\right\vert =\left\vert A^{\bot}\right\vert =2^{n/2}\ll2^{n}$. Alternatively,
the counterfeiter could measure $\left\vert A\right\rangle $\ in the standard
or Hadamard bases---but that would reveal just \textit{one} random element of
$A$\ or $A^{\bot}$. \ Neither ability seems useful for \textit{copying}
$\left\vert A\right\rangle $, let alone recovering a full classical
description of $A$.\footnote{Obviously, if the counterfeiter had
$\Omega\left(  n\right)  $\ copies of $\left\vert A\right\rangle $, then it
\textit{could} recover a generating set for $A$, by simply measuring each copy
independently\ in the standard basis. \ That is why, in our full quantum money
scheme, the counterfeiter will \textit{not} have $\Omega\left(  n\right)
$\ copies of $\left\vert A\right\rangle $. \ Instead, each banknote\ will
involve a completely different subspace $A_{s}\leq\mathbb{F}_{2}^{n}$
(parameterized by its unique serial number $s$), so that measuring one
banknote\ reveals nothing about the others.}

And indeed, using the inner-product adversary method plus some other tools, we
will prove the following tight lower bound (Theorem \ref{averagecase}): even
if given a single copy of $\left\vert A\right\rangle $, as well as oracle
access to $U_{A}$\ and $U_{A^{\bot}}$, a counterfeiter still needs
$\Omega\left(  \epsilon2^{n/4}\right)  $\ queries to prepare a state that
has\ fidelity $\epsilon$\ with $\left\vert A\right\rangle ^{\otimes2}%
$.\textit{ \ }This will imply that our mini-scheme has $1/\exp\left(
n\right)  $\textit{ }soundness error.

\subsection{Formal Specification\label{SPEC}}

We are not quite done, since we never explained how the bank provides access
to $U_{A}$ and $U_{A^{\bot}}$. \ Thus, in our \textquotedblleft
final\textquotedblright\ mini-scheme $\mathcal{M}=\left(  \mathsf{Bank}%
_{\mathcal{M}},\mathsf{Ver}_{\mathcal{M}}\right)  $, the bank, verifier, and
counterfeiter will all have access to a \textit{single} classical oracle $U$,
which consists of four components:

\begin{itemize}
\item A \textbf{banknote generator} $\mathcal{G}\left(  r\right)  $, which
takes as input a random string $r\in\left\{  0,1\right\}  ^{n}$, and outputs a
set of linearly independent generators $\left\langle A_{r}\right\rangle
=\left\{  x_{1},\ldots,x_{n/2}\right\}  $ for a subspace $A_{r}\leq
\mathbb{F}_{2}^{n}$, as well as a unique $3n$-bit \textit{serial number}
$s_{r}\in\left\{  0,1\right\}  ^{3n}$. \ The function $\mathcal{G}$\ is chosen
uniformly at random, subject to the constraint that the serial numbers are all
distinct.\footnote{Note that one can implement $\mathcal{G}$\ using an
ordinary random oracle. \ In that case, the requirement that the serial
numbers are distinct will be satisfied with probability\ $1-O\left(
2^{-n}\right)  $.}

\item A \textbf{serial number checker} $\mathcal{H}\left(  s\right)  $, which
outputs $1$ if $s=s_{r}$\ is a valid serial number for some $\left\langle
A_{r}\right\rangle $, and $0$ otherwise.

\item A \textbf{primal subspace tester} $\mathcal{T}_{\operatorname*{primal}}%
$, which takes an input of the form $\left\vert s\right\rangle \left\vert
x\right\rangle $, applies $U_{A_{r}}$\ to $\left\vert x\right\rangle $ if
$s=s_{r}$\ is a valid serial number for some $\left\langle A_{r}\right\rangle
$, and does nothing otherwise.

\item A \textbf{dual subspace tester} $\mathcal{T}_{\operatorname*{dual}}$,
identical to $\mathcal{T}_{\operatorname*{primal}}$\ except that it applies
$U_{A_{r}^{\bot}}$\ instead of $U_{A_{r}}$.
\end{itemize}

Then $\mathcal{M}=\left(  \mathsf{Bank}_{\mathcal{M}},\mathsf{Ver}%
_{\mathcal{M}}\right)  $\ is defined as follows:

\begin{itemize}
\item $\mathsf{Bank}_{\mathcal{M}}\left(  0^{n}\right)  $ chooses
$r\in\left\{  0,1\right\}  ^{n}$ uniformly at random. \ It then looks up
$\mathcal{G}\left(  r\right)  =\left(  s_{r},\left\langle A_{r}\right\rangle
\right)  $, and outputs the banknote $\left\vert \$_{r}\right\rangle
=\left\vert s_{r}\right\rangle \left\vert A_{r}\right\rangle $.

\item $\mathsf{Ver}_{\mathcal{M}}\left(
\hbox{\rm\rlap/c}%
\right)  $ first uses $\mathcal{H}$\ to check that $%
\hbox{\rm\rlap/c}%
$\ has the form $\left(  s,\rho\right)  $, where $s=s_{r}$ is a valid serial
number. \ If so, then it uses $\mathcal{T}_{\operatorname*{primal}}$\ and
$\mathcal{T}_{\operatorname*{dual}}$\ to apply $V_{A_{r}}=H_{2}^{\otimes
n}\mathbb{P}_{A_{r}^{\bot}}H_{2}^{\otimes n}\mathbb{P}_{A_{r}}$, and accepts
if and only if $V_{A_{r}}\left(  \rho\right)  $\ accepts.
\end{itemize}

\subsection{Analysis\label{ANALYSIS}}

We now analyze the mini-scheme\ defined in Sections \ref{HSM} and \ref{SPEC}.
\ For convenience, we assume for most of the proof that the subspace
$A\leq\mathbb{F}_{2}^{n}$\ is \textit{fixed}, and that the counterfeiter (who
does not know $A$) only has access to the oracles $U_{A}$\ and $U_{A^{\bot}}$.
\ Then, at the end, we will explain how to generalize the conclusions to the
\textquotedblleft final\textquotedblright\ mini-scheme $\mathcal{M}$.

It will be convenient to consider the subset $A^{\ast}\subset\left\{
0,1\right\}  ^{n+1}$, defined by%
\[
A^{\ast}:=\left(  0,A\right)  \cup(1,A^{\bot}).
\]
Let $S_{A^{\ast}}$\ be the subspace of $\mathbb{C}^{2^{n+1}}$\ that is spanned
by basis states $\left\vert x\right\rangle $\ such that $x\in A^{\ast}$.
\ Then we can think of the pair of oracles $\left(  U_{A},U_{A^{\bot}}\right)
$\ as being a \textit{single} oracle $U_{A^{\ast}}$,\ which satisfies
$U_{A^{\ast}}\left\vert \psi\right\rangle =-\left\vert \psi\right\rangle
$\ for all $\left\vert \psi\right\rangle \in S_{A^{\ast}}$,\ and $U_{A^{\ast}%
}\left\vert \eta\right\rangle =\left\vert \eta\right\rangle $\ for all
$\left\vert \eta\right\rangle \in S_{A^{\ast}}^{\bot}$ (where here $\bot
$\ means the orthogonal complement in $\mathbb{C}^{2^{n+1}}$, \textit{not} the
orthogonal complement in $\mathbb{F}_{2}^{n}$!).

Recall the definition of the verifier $V_{A}$:%
\[
V_{A}:=H_{2}^{\otimes n}\mathbb{P}_{A^{\bot}}H_{2}^{\otimes n}\mathbb{P}_{A},
\]
where $\mathbb{P}_{A}$\ and $\mathbb{P}_{A^{\bot}}$\ denote projective
measurements that accept a basis state $\left\vert x\right\rangle $\ if and
only if $x$ belongs to $A$\ or $A^{\bot}$\ respectively. \ The following lemma
shows that $V_{A}$ \textquotedblleft works,\textquotedblright\ and indeed that
it gives us a projective mini-scheme.

\begin{lemma}
\label{testworks}$V_{A}=\left\vert A\right\rangle \left\langle A\right\vert $
is simply a projector onto $\left\vert A\right\rangle $. \ So in particular,
$\Pr\left[  V_{A}\left(  \left\vert \psi\right\rangle \right)  \text{
accepts}\right]  =\left\vert \left\langle \psi|A\right\rangle \right\vert
^{2}$.
\end{lemma}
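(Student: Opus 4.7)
The plan is to show that $V_A|\psi\rangle = \langle A|\psi\rangle\,|A\rangle$ for every state $|\psi\rangle$, which is equivalent to $V_A = |A\rangle\langle A|$. The probability statement $\Pr[V_A(|\psi\rangle)\text{ accepts}] = |\langle \psi|A\rangle|^2$ then follows since $V_A$ is a rank-$1$ projector and the protocol accepts exactly when the projective measurements $\mathbb{P}_A$ and $\mathbb{P}_{A^\perp}$ both accept.

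The key calculation is to analyze what $\mathbb{P}_{A^\perp} H_2^{\otimes n} \mathbb{P}_A$ does. First I would decompose any output of $\mathbb{P}_A$ as $\mathbb{P}_A|\psi\rangle = \langle A|\psi\rangle\,|A\rangle + |\phi\rangle$, where $|\phi\rangle$ lies in the subspace $\mathrm{span}\{|x\rangle : x\in A\}$ and is orthogonal to $|A\rangle$; orthogonality to $|A\rangle$ means precisely that if $|\phi\rangle = \sum_{x\in A}\alpha_x |x\rangle$ then $\sum_{x\in A}\alpha_x = 0$. Then I would compute
\[
H_2^{\otimes n}|\phi\rangle \;=\; \frac{1}{\sqrt{2^n}}\sum_{y\in\mathbb{F}_2^n}\Bigl(\sum_{x\in A}\alpha_x(-1)^{x\cdot y}\Bigr)|y\rangle,
\]
and observe that for $y\in A^\perp$ the inner sum collapses to $\sum_{x\in A}\alpha_x = 0$, so $H_2^{\otimes n}|\phi\rangle$ has zero amplitude on every basis vector in $A^\perp$. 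Hence $\mathbb{P}_{A^\perp} H_2^{\otimes n}|\phi\rangle = 0$. Meanwhile, the standard identity for subspace states gives $H_2^{\otimes n}|A\rangle = |A^\perp\rangle$, and $|A^\perp\rangle$ is fixed by $\mathbb{P}_{A^\perp}$. Combining these two pieces, $\mathbb{P}_{A^\perp} H_2^{\otimes n}\mathbb{P}_A|\psi\rangle = \langle A|\psi\rangle\,|A^\perp\rangle$.

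Finally, applying the outer $H_2^{\otimes n}$ sends $|A^\perp\rangle$ back to $|A\rangle$, yielding $V_A|\psi\rangle = \langle A|\psi\rangle\,|A\rangle$ as required. The only nontrivial point is verifying $H_2^{\otimes n}|A\rangle = |A^\perp\rangle$, which I would justify by the direct character-sum calculation: $H_2^{\otimes n}|A\rangle = \frac{1}{\sqrt{|A|\,2^n}}\sum_y\sum_{x\in A}(-1)^{x\cdot y}|y\rangle$, and $\sum_{x\in A}(-1)^{x\cdot y}$ equals $|A|$ when $y\in A^\perp$ and $0$ otherwise, giving $\frac{1}{\sqrt{|A^\perp|}}\sum_{y\in A^\perp}|y\rangle = |A^\perp\rangle$ by dimension counting ($|A||A^\perp| = 2^n$). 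No single step is a serious obstacle; the ``trick'' is simply the orthogonality-to-$|A\rangle$ condition translating under Fourier transform into vanishing on $A^\perp$, which is what makes $V_A$ act as a clean rank-$1$ projector rather than some more complicated operator supported on all of $\mathrm{span}\{|x\rangle : x\in A\}$.
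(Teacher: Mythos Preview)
Your proposal is correct and follows essentially the same approach as the paper: both arguments rest on the observation that a state supported on $\{|x\rangle:x\in A\}$ and orthogonal to $|A\rangle$ has, after Hadamard transform, zero amplitude on every $|y\rangle$ with $y\in A^\perp$, together with the identity $H_2^{\otimes n}|A\rangle=|A^\perp\rangle$. The only cosmetic difference is that the paper verifies $V_A|A\rangle=|A\rangle$ and $V_A|\psi\rangle=0$ for $|\psi\rangle\perp|A\rangle$ as two separate cases, whereas you handle both at once via the decomposition $\mathbb{P}_A|\psi\rangle=\langle A|\psi\rangle\,|A\rangle+|\phi\rangle$; the underlying computation is identical.
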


\begin{proof}
It suffices to show that $V_{A}\left\vert A\right\rangle =\left\vert
A\right\rangle $ and that $V_{A}\left\vert \psi\right\rangle =0$ for all
$\left\vert \psi\right\rangle $\ orthogonal to $\left\vert A\right\rangle $.
\ First,%
\begin{align*}
V_{A}\left\vert A\right\rangle  &  =H_{2}^{\otimes n}\mathbb{P}_{A^{\bot}%
}H_{2}^{\otimes n}\mathbb{P}_{A}\left\vert A\right\rangle \\
&  =H_{2}^{\otimes n}\mathbb{P}_{A^{\bot}}H_{2}^{\otimes n}\left\vert
A\right\rangle \\
&  =H_{2}^{\otimes n}\mathbb{P}_{A^{\bot}}|A^{\bot}\rangle\\
&  =H_{2}^{\otimes n}|A^{\bot}\rangle\\
&  =\left\vert A\right\rangle .
\end{align*}
Second, if $\left\langle \psi|A\right\rangle =0$ then we can write%
\[
\left\vert \psi\right\rangle =\sum_{x\in2^{n}}{c_{x}\left\vert x\right\rangle
}%
\]
where $\sum_{x\in A}c_{x}=0$. \ Then%
\begin{align*}
V_{A}\left\vert \psi\right\rangle  &  =H_{2}^{\otimes n}\mathbb{P}_{A^{\bot}%
}H_{2}^{\otimes n}\mathbb{P}_{A}\sum_{x\in2^{n}}{c_{x}\left\vert
x\right\rangle }\\
&  =H_{2}^{\otimes n}\mathbb{P}_{A^{\bot}}H_{2}^{\otimes n}\sum_{x\in A}%
{c_{x}\left\vert x\right\rangle }\\
&  =\frac{1}{\sqrt{2^{n}}}H_{2}^{\otimes n}\mathbb{P}_{A^{\bot}}\sum_{x\in
A}{c_{x}\sum_{y\bot x}\left\vert y\right\rangle }\\
&  =\frac{1}{\sqrt{2^{n}}}H_{2}^{\otimes n}\sum_{y\in A^{\perp}}{\left\vert
y\right\rangle \sum_{x\in A}c_{x}}\\
&  =0.
\end{align*}

\end{proof}

We now show that perfect counterfeiting requires exponentially many queries to
$U_{A^{\ast}}$.

\begin{theorem}
[Lower Bound for Perfect Counterfeiting]\label{nocopyc}Given one copy of
$\left\vert A\right\rangle $, as well as oracle access to $U_{A^{\ast}}$, a
counterfeiter needs $\Omega\left(  2^{n/4}\right)  $ queries to prepare
$\left\vert A\right\rangle ^{\otimes2}$ with certainty (for a worst-case
$\left\vert A\right\rangle $).
\end{theorem}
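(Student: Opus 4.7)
The plan is to apply the inner-product adversary method (Theorem~\ref{lbthm}) to the family of oracles $\{U_{A^{\ast}}\}$ indexed by the $n/2$-dimensional subspaces $A \leq \mathbb{F}_{2}^{n}$, taking the algorithm's initial state to be $|\Psi_{0}^{U_{A}}\rangle = |A\rangle \otimes |0 \cdots 0\rangle$. I choose the relation $R$ to consist of all pairs $(A, B)$ of $n/2$-dimensional subspaces with $\dim(A \cap B) = n/2 - 1$; this is symmetric, disjoint from the diagonal, and has $R_{A} := \{B : (A,B) \in R\}$ nonempty for every $A$. Every such pair satisfies $|A \cap B| = 2^{n/2-1}$ and hence $\langle A | B \rangle = 1/2$, so $|\langle \Psi_{0}^{U_{A}} | \Psi_{0}^{U_{B}} \rangle| = 1/2$ uniformly, and I may take $c = 1/2$. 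A counterfeiter that produces $|A\rangle^{\otimes 2}$ with certainty on every input $A$ must leave $|\Psi_{T}^{U_{A}}\rangle$ in the form $|A\rangle^{\otimes 2} \otimes |\mathrm{junk}_{A}\rangle$; by Cauchy--Schwarz this forces $|\langle \Psi_{T}^{U_{A}} | \Psi_{T}^{U_{B}} \rangle| \leq \langle A | B \rangle^{2} = 1/4$ for every $(A,B) \in R$, giving $d = 1/4$ and $c - d = 1/4$.

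The technical core is to establish the progress-rate bound $\varepsilon = O(2^{-n/2})$ required by Theorem~\ref{lbthm}. Writing any $|\eta\rangle \in S_{A^{\ast}}^{\perp}$ as
\[
|\eta\rangle = \sum_{x \notin A} \alpha_{0,x} |0, x\rangle + \sum_{x \notin A^{\perp}} \alpha_{1,x} |1, x\rangle,
\]
we have $F(|\eta\rangle, S_{B^{\ast}})^{2} = \|\Pi_{S_{B^{\ast}}} |\eta\rangle\|^{2} = \sum_{x \in B} |\alpha_{0,x}|^{2} + \sum_{x \in B^{\perp}} |\alpha_{1,x}|^{2}$, so by linearity it suffices to show that $\Pr_{B}[x \in B] = O(2^{-n/2})$ for every $x \notin A$ and $\Pr_{B}[x \in B^{\perp}] = O(2^{-n/2})$ for every $x \notin A^{\perp}$, where $B$ is uniform over $R_{A}$. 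For the primal side, each $B \in R_{A}$ is determined by the pair consisting of $H := A \cap B$ (a hyperplane of $A$) and one of the $2^{n/2+1} - 2$ cosets of $H$ in $\mathbb{F}_{2}^{n} \setminus A$ (namely $B \setminus A$); a fixed $x \notin A$ lies in exactly one such $B$ per choice of $H$, giving $\Pr_{B}[x \in B] = 1/(2^{n/2+1} - 2)$. The dual side follows from the identity $\dim(A^{\perp} \cap B^{\perp}) = n - \dim(A + B) = n/2 - 1$, which makes $B \mapsto B^{\perp}$ a bijection between $R_{A}$ and $R_{A^{\perp}}$, so the very same count applies. Plugging $\varepsilon = O(2^{-n/2})$ into Theorem~\ref{lbthm} yields $T = \Omega((c-d)/\sqrt{\varepsilon}) = \Omega(2^{n/4})$.

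The main obstacle I anticipate is ensuring the uniformity of the count $\Pr_{B}[x \in B] = O(2^{-n/2})$ over the worst-case vector $x \notin A$ and all choices of $|\eta\rangle \in S_{A^{\ast}}^{\perp}$, together with handling the primal and dual sides on equal footing through the $B \leftrightarrow B^{\perp}$ duality (verifying in particular that the uniform distribution on $R_{A}$ pushes forward to the uniform distribution on $R_{A^{\perp}}$ under this map). Once this elementary combinatorial step is pinned down, the bound feeds directly into the adversary framework of Section~\ref{METHOD} to deliver the claimed $\Omega(2^{n/4})$ lower bound.
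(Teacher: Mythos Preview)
Your proposal is correct and follows essentially the same approach as the paper's own proof: the same relation $R$ (pairs with $\dim(A\cap B)=n/2-1$), the same values $c=1/2$ and $d=1/4$, and the same $\varepsilon=O(2^{-n/2})$ fed into Theorem~\ref{lbthm}. The only cosmetic difference is that you compute $\Pr_{B}[x\in B]=1/(2^{n/2+1}-2)$ by an explicit coset count and handle the dual side via the bijection $B\mapsto B^{\perp}$, whereas the paper invokes the symmetry of $\{0,1\}^{n}\setminus A$ under the stabilizer of $A$ and the $A\leftrightarrow A^{\perp}$ symmetry to get the identical value $\tfrac{2^{n/2-1}}{2^{n}-2^{n/2}}$.
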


\begin{proof}
We will apply Theorem \ref{lbthm}. \ Let the set $\mathcal{O}$\ contain
$U_{A^{\ast}}$\ for every possible subspace $A\leq\mathbb{F}_{2}^{n}$\ with
$\dim A=n/2$. \ Also, put $\left(  U_{A^{\ast}},U_{B^{\ast}}\right)  \in
R$\ if and only if $\dim\left(  A\cap B\right)  =n/2-1$. \ Then given
$U_{A^{\ast}}\in\mathcal{O}$\ and $\left\vert \eta\right\rangle \in
S_{A^{\ast}}^{\bot}$, let%
\[
\left\vert \eta\right\rangle =\sum_{x\in\left\{  0,1\right\}  ^{n+1}\setminus
A^{\ast}}\alpha_{x}\left\vert x\right\rangle .
\]
We have%
\begin{align*}
\operatorname*{E}_{U_{B^{\ast}}~:~\left(  U_{A^{\ast}},U_{B^{\ast}}\right)
\in R}\left[  F\left(  \left\vert \eta\right\rangle ,S_{B^{\ast}}\right)
^{2}\right]   &  =\operatorname*{E}_{B~:~\dim\left(  B\right)  =n/2,\dim
\left(  A\cap B\right)  =n/2-1}\left[  \sum_{x\in B^{\ast}\setminus A^{\ast}%
}\left\vert \alpha_{x}\right\vert ^{2}\right] \\
&  \leq\max_{x\in\left\{  0,1\right\}  ^{n+1}\setminus A^{\ast}}\left(
\Pr_{B~:~\dim\left(  B\right)  =n/2,\dim\left(  A\cap B\right)  =n/2-1}\left[
x\in B^{\ast}\right]  \right) \\
&  =\max_{x\in\left\{  0,1\right\}  ^{n}\setminus A}\left(  \Pr_{B~:~\dim
\left(  B\right)  =n/2,\dim\left(  A\cap B\right)  =n/2-1}\left[  x\in
B\right]  \right) \\
&  =\frac{\left\vert B\setminus A\right\vert }{\left\vert \left\{
0,1\right\}  ^{n}\setminus A\right\vert }~~\text{(for }\dim\left(  B\right)
=n/2,~\dim\left(  A\cap B\right)  =n/2-1\text{)}\\
&  =\frac{2^{n/2-1}}{2^{n}-2^{n/2}}\\
&  \leq\frac{1}{2^{n/2}}.
\end{align*}
Here the first line uses the definition of fidelity, the second line uses the
easy direction of the minimax theorem, the third line uses the symmetry
between $A$\ and $A^{\bot}$, and the fourth line uses the symmetry among all
$2^{n}-2^{n/2}$\ strings\ $x\in\left\{  0,1\right\}  ^{n}\setminus A$. \ The
conclusion is that we can set $\varepsilon:=2^{-n/2}$.

Fix $\left(  U_{A^{\ast}},U_{B^{\ast}}\right)  \in R$. \ Then $\left\vert
\left\langle A|B\right\rangle \right\vert =1/2$. \ On the other hand, if the
counterfeiter succeeds, it must map $\left\vert A\right\rangle $\ to some
state $\left\vert f_{A}\right\rangle :=\left\vert A\right\rangle \left\vert
A\right\rangle \left\vert \operatorname*{garbage}_{A}\right\rangle $, and
$\left\vert B\right\rangle $\ to some state $\left\vert f_{B}\right\rangle
:=\left\vert B\right\rangle \left\vert B\right\rangle \left\vert
\operatorname*{garbage}_{B}\right\rangle $. \ Therefore $\left\vert
\left\langle f_{A}|f_{B}\right\rangle \right\vert \leq1/4$. \ So setting
$c=1/2$\ and $d=1/4$, Theorem \ref{lbthm}\ tells us that the counterfeiter
must make%
\[
\Omega\left(  \frac{c-d}{\sqrt{\varepsilon}}\right)  =\Omega\left(
2^{n/4}\right)
\]
queries to $U_{A^{\ast}}$.
\end{proof}

A simple modification to the proof of Theorem \ref{nocopyc}\ shows that even
to counterfeit money \textit{almost} perfectly, one still needs exponentially
many queries to $U_{A^{\ast}}$.

\begin{corollary}
[Lower Bound for Small-Error Counterfeiting]\label{smallerror}Given one copy
of $\left\vert A\right\rangle $, as well as oracle access to $U_{A^{\ast}}$, a
counterfeiter needs $\Omega\left(  2^{n/4}\right)  $ queries to prepare a
state $\rho$\ such that $\left\langle A\right\vert ^{\otimes2}\rho\left\vert
A\right\rangle ^{\otimes2}\geq0.9999$ (for a worst-case $\left\vert
A\right\rangle $).
\end{corollary}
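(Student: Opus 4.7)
The plan is to reuse the inner-product adversary argument from the proof of Theorem \ref{nocopyc} essentially verbatim, changing only the final inner-product bound $d$. I keep the same set of oracles $\mathcal{O}=\{U_{A^{\ast}}\}$, the same relation $R$ (namely $(U_{A^{\ast}},U_{B^{\ast}})\in R$ iff $\dim(A\cap B)=n/2-1$), and the same per-query progress rate $\varepsilon=2^{-n/2}$; these depend only on the oracle setup, not on what the counterfeiter produces. Because the counterfeiter still begins with exactly one copy of $|A\rangle$ tensored with a fixed, $A$-independent ancilla, $|\langle\Psi_0^{U_{A^{\ast}}}|\Psi_0^{U_{B^{\ast}}}\rangle|=|\langle A|B\rangle|=1/2$ for every $(U_{A^{\ast}},U_{B^{\ast}})\in R$, so I can take $c=1/2$ exactly as before.

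The only step that changes is the bound on $d$. In the perfect case, $d=1/4$ followed directly from $\langle A^{\otimes 2}|B^{\otimes 2}\rangle=1/4$; here the counterfeiter's output need only satisfy $\langle A|^{\otimes 2}\rho_A|A\rangle^{\otimes 2}\geq 1-10^{-4}$, where $\rho_A$ is the reduced state on the two ``money'' output registers (and analogously $\rho_B$ for $B$, which also satisfies the hypothesis since the bound is meant to hold for worst-case subspaces, hence for every $A$ and $B$ individually). The right tool is Lemma \ref{triangle}, applied with $|\psi\rangle=|A\rangle^{\otimes 2}$, $|\varphi\rangle=|B\rangle^{\otimes 2}$ and $\varepsilon'=10^{-4}$:
\[
F(\rho_A,\rho_B)\;\leq\;|\langle A|B\rangle|^2+2(10^{-4})^{1/4}\;=\;\tfrac{1}{4}+\tfrac{1}{5}\;=\;0.45.
\]
Since $|\Psi_T^{U_{A^{\ast}}}\rangle$ and $|\Psi_T^{U_{B^{\ast}}}\rangle$ are valid purifications of $\rho_A$ and $\rho_B$, their inner-product magnitude is at most $F(\rho_A,\rho_B)$, so $d:=0.45$ works.

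Plugging $c=1/2$, $d=0.45$, $\varepsilon=2^{-n/2}$ into Theorem \ref{lbthm} gives
\[
T\;=\;\Omega\!\left(\tfrac{c-d}{\sqrt{\varepsilon}}\right)\;=\;\Omega(0.05\cdot 2^{n/4})\;=\;\Omega(2^{n/4}),
\]
as claimed. There is essentially no technical obstacle beyond a constant-size sanity check: I need the fidelity threshold to be close enough to $1$ that the additive slack $2\varepsilon'^{1/4}$ from Lemma \ref{triangle} stays strictly below $c-|\langle A|B\rangle|^2=1/4$, preserving a positive gap $c-d$. The specific constant $0.9999$ in the statement is chosen so that $2\varepsilon'^{1/4}=0.2<1/4$; any fixed constant sufficiently close to $1$ would yield the same asymptotic conclusion, with only the hidden constant in $\Omega(\cdot)$ changing.
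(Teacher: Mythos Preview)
Your proposal is correct and matches the paper's proof essentially line for line: the paper reuses the $R$, $\varepsilon=2^{-n/2}$, and $c=1/2$ from Theorem~\ref{nocopyc}, applies Lemma~\ref{triangle} to bound $|\langle f_A|f_B\rangle|\le F(\rho_A,\rho_B)\le c^2+2\epsilon^{1/4}$ with $\epsilon=0.0001$, and then invokes Theorem~\ref{lbthm} with $d=c^2+2\epsilon^{1/4}$. Your numerics ($d=0.45$, gap $0.05$) and your justification that the purification inner product is bounded by the fidelity are exactly what the paper uses.
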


\begin{proof}
Let $\left\vert \left\langle A|B\right\rangle \right\vert =c$, and let
$\epsilon=0.0001$. \ If the counterfeiter succeeds, it must map $\left\vert
A\right\rangle $\ to some state $\rho_{A}$, and $\left\vert B\right\rangle
$\ to some state $\rho_{B}$, such that $\left\langle A\right\vert ^{\otimes
2}\rho_{A}\left\vert A\right\rangle ^{\otimes2}$\ and $\left\langle
B\right\vert ^{\otimes2}\rho_{B}\left\vert B\right\rangle ^{\otimes2}$\ are
both at least $1-\epsilon$. \ So letting $\left\vert f_{A}\right\rangle $\ and
$\left\vert f_{B}\right\rangle $\ be purifications of $\rho_{A}$\ and
$\rho_{B}$\ respectively, we have%
\begin{align*}
\left\vert \left\langle f_{A}|f_{B}\right\rangle \right\vert  &  \leq F\left(
\rho_{A},\rho_{B}\right)  \\
&  \leq\left\vert \left\langle A\right\vert ^{\otimes2}\left\vert
B\right\rangle ^{\otimes2}\right\vert +2\epsilon^{1/4}\\
&  =c^{2}+2\epsilon^{1/4}%
\end{align*}
where the second line follows from Lemma \ref{triangle}. \ So setting
$d:=c^{2}+2\epsilon^{1/4}$, Theorem \ref{lbthm}\ tells us that the
counterfeiter must make%
\[
\Omega\left(  \frac{c-c^{2}-2\epsilon^{1/4}}{\sqrt{2^{-n/2}}}\right)
\]
queries to $U_{A^{\ast}}$. \ Fixing $c:=1/2$, the above is $\Omega\left(
2^{n/4}\right)  $.
\end{proof}

Since the verifier $V_{A}$\ is projective, we can now combine Corollary
\ref{smallerror} with Theorem \ref{miniamp} to obtain the following
\textquotedblleft amplified\textquotedblright\ lower bound.

\begin{corollary}
[Lower Bound for High-Error Counterfeiting]\label{largeerror}Let
$1/\varepsilon=o\left(  2^{n/2}\right)  $. \ Given one copy of $\left\vert
A\right\rangle $, as well as oracle access to $U_{A^{\ast}}$, a counterfeiter
needs $\Omega\left(  \sqrt{\varepsilon}2^{n/4}\right)  $ queries to prepare a
state $\rho$\ such that $\left\langle A\right\vert ^{\otimes2}\rho\left\vert
A\right\rangle ^{\otimes2}\geq\varepsilon$ (for a worst-case $\left\vert
A\right\rangle $).
\end{corollary}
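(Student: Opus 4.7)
The plan is to reduce the high-error case to the small-error case by combining the amplification of counterfeiters (Theorem \ref{miniamp}) with Corollary \ref{smallerror}. Suppose $C$ is a counterfeiter that makes $T$ queries to $U_{A^{\ast}}$ and, for a worst-case subspace $A$, outputs a state $\rho$ satisfying $\langle A|^{\otimes 2}\rho|A\rangle^{\otimes 2} \geq \varepsilon$. By Lemma \ref{testworks}, $V_A = |A\rangle\langle A|$ is a rank-$1$ projector, so the Hidden Subspace Mini-Scheme is projective in the sense of Definition \ref{minischeme}, and the probability that $\mathsf{Ver}_2$ accepts $C(|A\rangle)$ equals $\langle A|^{\otimes 2}\rho|A\rangle^{\otimes 2}$. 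The hypotheses of Theorem \ref{miniamp} are therefore satisfied with success parameter $\varepsilon$.

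Fix a small absolute constant $\delta := 10^{-4}$ and invoke Theorem \ref{miniamp} to obtain an amplified counterfeiter $C'$ that maps $|A\rangle$ to a state $\rho'$ with $\langle A|^{\otimes 2}\rho'|A\rangle^{\otimes 2} \geq 1 - \delta = 0.9999$, using
$$Q \;=\; O\!\left(\frac{\log(1/\delta)}{\sqrt{\varepsilon}(\sqrt{\varepsilon}+\delta^2)}\right) \;\leq\; O\!\left(\frac{\log(1/\delta)}{\delta^2\sqrt{\varepsilon}}\right) \;=\; O\!\left(\frac{1}{\sqrt{\varepsilon}}\right)$$
invocations of $C$, $C^{-1}$, and $\mathsf{Ver}$, where the last step absorbs the dependence on the fixed constant $\delta$ into the hidden constant, and the inequality uses the trivial bound $\sqrt{\varepsilon}+\delta^2 \geq \delta^2$. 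Each invocation of $C$ or $C^{-1}$ uses $T$ queries to $U_{A^{\ast}}$, and each invocation of $\mathsf{Ver}$ uses $O(1)$ queries (two calls to $U_{A^{\ast}}$, to implement $\mathbb{P}_A$ and $\mathbb{P}_{A^{\perp}}$, sandwiched between Fourier transforms, which are query-free). Therefore $C'$ makes $O(T/\sqrt{\varepsilon})$ total queries to $U_{A^{\ast}}$.

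Since $C'$ prepares a state with $\langle A|^{\otimes 2}\rho'|A\rangle^{\otimes 2} \geq 0.9999$ for the same worst-case $A$, Corollary \ref{smallerror} forces $C'$ to make $\Omega(2^{n/4})$ queries to $U_{A^{\ast}}$. Combining this with the previous paragraph yields $T/\sqrt{\varepsilon} = \Omega(2^{n/4})$, i.e., $T = \Omega(\sqrt{\varepsilon}\,2^{n/4})$, as claimed. The hypothesis $1/\varepsilon = o(2^{n/2})$ is precisely the condition under which this bound is nontrivial, i.e., $\sqrt{\varepsilon}\,2^{n/4} \to \infty$. The main (minor) obstacle is the bookkeeping that underlies this argument: one needs to verify that the amplification in Theorem \ref{miniamp} is carried out pointwise on each pure component of the input banknote, so the same worst-case $A$ transfers from $C$ to $C'$ without introducing any average-case loss, and that the oracle queries nested inside $\mathsf{Ver}$ compose multiplicatively with those inside $C$ and $C^{-1}$ to yield the claimed $O(T/\sqrt{\varepsilon})$ overall query count.
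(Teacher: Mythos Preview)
Your proof is correct and essentially identical to the paper's own argument: both fix a small absolute constant $\delta$ (the paper uses $10^{-5}$, you use $10^{-4}$), invoke Theorem~\ref{miniamp} to amplify the $\varepsilon$-success counterfeiter into a $(1-\delta)$-success counterfeiter using $O(1/\sqrt{\varepsilon})$ calls, count the resulting $O(T/\sqrt{\varepsilon})$ total queries to $U_{A^{\ast}}$, and then appeal to Corollary~\ref{smallerror} to conclude $T = \Omega(\sqrt{\varepsilon}\,2^{n/4})$. The only cosmetic difference is that the paper phrases the argument as a proof by contradiction (assuming $T = o(\sqrt{\varepsilon}\,2^{n/4})$), whereas you derive the lower bound on $T$ directly.
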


\begin{proof}
Suppose we have a counterfeiter $C$ that makes $o\left(  \sqrt{\varepsilon
}2^{n/4}\right)  $\ queries to $U_{A^{\ast}}$, and prepares a state\ $\sigma
$\ such that $\left\langle A\right\vert ^{\otimes2}\sigma\left\vert
A\right\rangle ^{\otimes2}\geq\varepsilon$. \ Let $\delta:=0.00001$. \ Then by
Theorem \ref{miniamp}, there exists an amplified counterfeiter $C^{\prime}%
$\ that makes%
\[
O\left(  \frac{\log1/\delta}{\sqrt{\varepsilon}\left(  \sqrt{\varepsilon
}+\delta^{2}\right)  }\right)  =O\left(  \frac{1}{\sqrt{\varepsilon}}\right)
\]
calls to $C$ and $V_{A}$, and that prepares a state $\rho$\ such that
$\left\langle A\right\vert ^{\otimes2}\rho\left\vert A\right\rangle
^{\otimes2}\geq1-\delta$. \ Now, counting the $o\left(  \sqrt{\varepsilon
}2^{n/4}\right)  $\ queries from each $C$ invocation and $O\left(  1\right)
$\ queries from each $V_{A}$\ invocation, the total number of queries that
$C^{\prime}$\ makes to $U_{A^{\ast}}$\ is%
\[
\left[  o\left(  \sqrt{\varepsilon}2^{n/4}\right)  +O\left(  1\right)
\right]  \cdot O\left(  \frac{1}{\sqrt{\varepsilon}}\right)  =o\left(
2^{n/4}\right)  .
\]
But this contradicts Corollary \ref{smallerror}.
\end{proof}

So far, we have only made statements about the \textit{worst} case for a
would-be counterfeiter. \ But such guarantees are clearly not enough: it could
be that \textit{most} money states $\left\vert A\right\rangle $ are easy to
duplicate, without contradicting any of the results we have seen so far.

We will show that the problem faced by a counterfeiter is \textit{random
self-reducible}: if a counterfeiter could duplicate a uniformly-random money
state $\left\vert A\right\rangle $, then it could duplicate \textit{any}
$\left\vert A\right\rangle $. \ Thus the bank can ensure security by creating
uniformly-random money states.

In what follows, let $\mathcal{S}$\ be the set of all subspaces $A\leq
\mathbb{F}_{2}^{n}$\ such that $\dim A=n/2$. \ Also, let $V_{A}^{\otimes
2}=\left(  \left\vert A\right\rangle \left\langle A\right\vert \right)
^{\otimes2}$\ be the projector onto $\left\vert A\right\rangle ^{\otimes2}$.

\begin{theorem}
[Lower Bound for Average-Case Counterfeiting]\label{averagecase}Let
$A\leq\mathbb{F}_{2}^{n}$\ be a uniformly-random element of $\mathcal{S}$.
\ Then given one copy of $\left\vert A\right\rangle $, as well as oracle
access to $U_{A^{\ast}}$, a counterfeiter $C$\ needs $\Omega\left(
\sqrt{\varepsilon}2^{n/4}\right)  $ queries to prepare a $2n$-qubit state
$\rho$\ that $V_{A}^{\otimes2}$\ accepts with probability at least
$\varepsilon$, for all $1/\varepsilon=o\left(  2^{n/2}\right)  $. \ Here the
probability is taken over the choice of $A\in\mathcal{S}$, as well as the
behavior of $C$\ and $V_{A}^{\otimes2}$.
\end{theorem}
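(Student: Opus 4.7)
The plan is a standard worst-case to average-case reduction via random self-reducibility, using Corollary \ref{largeerror} as a black box. Suppose for contradiction there is a counterfeiter $C$ making $q = o(\sqrt{\varepsilon}\,2^{n/4})$ queries to $U_{A^\ast}$ that, on a uniformly random $A \in \mathcal{S}$, outputs a $2n$-qubit state accepted by $V_A^{\otimes 2}$ with probability $\geq \varepsilon$. I would build a worst-case counterfeiter $C'$ that, on every fixed $A_0 \in \mathcal{S}$, produces a state accepted by $V_{A_0}^{\otimes 2}$ with probability $\geq \varepsilon$ using only $O(q)$ queries to $U_{A_0^\ast}$, directly contradicting Corollary \ref{largeerror} and forcing $q = \Omega(\sqrt{\varepsilon}\,2^{n/4})$.

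The reduction exploits the transitive action of $GL_n(\mathbb{F}_2)$ on $\mathcal{S}$. Given $|A_0\rangle$ and oracle access to $U_{A_0^\ast}$, the algorithm $C'$ samples $L \in GL_n(\mathbb{F}_2)$ uniformly and defines the permutation unitary $U_L|x\rangle := |Lx\rangle$. Since $L$ is a bijection on $\mathbb{F}_2^n$, one has $U_L|A_0\rangle = |A\rangle$ with $A := L(A_0)$, and $A$ is uniformly distributed in $\mathcal{S}$. $C'$ then runs $C$ on $|A\rangle$, simulating each query to $U_A$ by $U_L\, U_{A_0}\, U_L^{-1}$ and each query to $U_{A^\perp}$ by $U_{L^T}^{-1}\, U_{A_0^\perp}\, U_{L^T}$. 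The first identity is immediate from $y \in A \iff L^{-1}y \in A_0$; the second follows from $(Lx)\cdot y = x\cdot(L^T y)$, which yields $A^\perp = (L^T)^{-1}(A_0^\perp)$. Each simulated query uses exactly one real query to $U_{A_0^\ast}$, so $C'$ makes $O(q)$ queries in total. Finally, $C'$ applies $U_L^{-1} \otimes U_L^{-1}$ to the two $n$-qubit output registers of $C$.

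Because $U_L^{-1}|A\rangle = |A_0\rangle$, the final state $\rho_{A_0}$ satisfies $\langle A_0|^{\otimes 2}\rho_{A_0}|A_0\rangle^{\otimes 2} = \langle A|^{\otimes 2}\sigma_A|A\rangle^{\otimes 2}$, where $\sigma_A$ is the output of the inner $C$-run on $|A\rangle$. Averaging over the choice of $L$, for every fixed $A_0$ the expected acceptance probability of $V_{A_0}^{\otimes 2}$ is at least the average-case success of $C$ on uniformly random $A$, which is $\geq \varepsilon$ by hypothesis. This contradicts Corollary \ref{largeerror}, completing the argument. The main (mild) obstacle is verifying the two oracle-simulation identities, especially the appearance of the transpose $L^T$ in the dual case; once those are in hand, the reduction is automatic. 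The same argument transfers verbatim to the ``full'' mini-scheme $\mathcal{M}$ of Section \ref{SPEC}, since $\mathcal{G}$, $\mathcal{H}$, $\mathcal{T}_{\operatorname*{primal}}$, and $\mathcal{T}_{\operatorname*{dual}}$ all key off the serial number $s_r$ in a way that is independent of the internal coordinates used to describe $A_r$.
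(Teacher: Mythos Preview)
Your proposal is correct and follows essentially the same approach as the paper: a worst-case to average-case reduction via the transitive action of $GL_n(\mathbb{F}_2)$ on $\mathcal{S}$, reducing to Corollary~\ref{largeerror}. You give slightly more detail than the paper on the oracle-simulation identities (in particular the appearance of $L^T$ for the dual subspace), but the argument is otherwise the same; your closing remark about the full mini-scheme $\mathcal{M}$ is extra content that the paper handles separately in Theorem~\ref{moneygood}.
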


\begin{proof}
Suppose we had a counterfeiter $C$\ that violated the above. \ Using $C$ as a
black box, we will show how to construct a new counterfeiter $C^{\prime}%
$\ that violates Corollary \ref{largeerror}.

Given a (deterministically-chosen) money state $\left\vert A\right\rangle $
and oracle access to $U_{A^{\ast}}$, first choose an invertible linear map
$f:\mathbb{F}_{2}^{n}\rightarrow\mathbb{F}_{2}^{n}$ uniformly at random.
\ Then $f\left(  A\right)  $, the image of $A$ under $f$,\ is a
uniformly-random element of $\mathcal{S}$. \ Furthermore, the state
$\left\vert A\right\rangle $ can be transformed into $\left\vert f\left(
A\right)  \right\rangle $ straightforwardly,\ the oracle $U_{f\left(
A\right)  }$ can be simulated by composing $f$ with $U_{A}$, and the oracle
$U_{f\left(  A\right)  ^{\bot}}$\ can likewise be simulated by composing
$f^{-T}$ with $U_{A}$ (where $f^{-T}$\ denotes the inverse transpose of $f$).
\ So by using the counterfeiter $C$ for uniformly-random states, we can
produce a state $\rho_{f}$\ that $V_{f\left(  A\right)  }^{\otimes2}$\ accepts
with probability at least $\varepsilon$. \ By applying $f^{-1}$ to both
registers of $\rho_{f}$, we can then obtain a state $\rho$\ that
$V_{A}^{\otimes2}$\ accepts with probability at least $\varepsilon$, thereby
contradicting Corollary \ref{largeerror}.
\end{proof}

We are now ready to prove security for the \textquotedblleft
final\textquotedblright\ mini-scheme $\mathcal{M}$\ defined in Section
\ref{SPEC}.

\begin{theorem}
[Security of Mini-Scheme]\label{moneygood}The mini-scheme $\mathcal{M}=\left(
\mathsf{Bank}_{\mathcal{M}},\mathsf{Ver}_{\mathcal{M}}\right)  $, which is
defined relative to the classical oracle $U$, has perfect completeness and
$1/\exp\left(  n\right)  $ soundness error.
\end{theorem}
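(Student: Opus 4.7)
The plan is to reduce Theorem \ref{moneygood} to the average-case lower bound, Theorem \ref{averagecase}. Perfect completeness is immediate: on a legitimate banknote $(s_r,|A_r\rangle)$ produced by $\mathsf{Bank}_{\mathcal{M}}$, the serial number checker $\mathcal{H}$ accepts by construction, and by Lemma \ref{testworks} the combined operation $V_{A_r}=H_2^{\otimes n}\mathbb{P}_{A_r^\perp}H_2^{\otimes n}\mathbb{P}_{A_r}$ is the projector $|A_r\rangle\langle A_r|$, so the verifier accepts $|A_r\rangle$ with probability $1$.

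For soundness, I would argue by contrapositive: any polynomial-time counterfeiter $C$ against $\mathcal{M}$ with success probability $\varepsilon$ yields an oracle algorithm $C'$ that, given a uniformly-random subspace $A\leq\mathbb{F}_2^n$ of dimension $n/2$, one copy of $|A\rangle$, and oracle access to $U_{A^*}$, outputs a $2n$-qubit state accepted by $V_A^{\otimes 2}$ with probability $\varepsilon - o(1)$. By Theorem \ref{averagecase}, this forces $C$ to make $\Omega(\sqrt{\varepsilon}\,2^{n/4})$ queries, and since $C$ is polynomial-time we conclude $\varepsilon=\mathrm{poly}(n)\cdot 2^{-n/2}=1/\exp(n)$, as desired.

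The reduction $C'$ simulates the whole oracle $U=(\mathcal{G},\mathcal{H},\mathcal{T}_{\text{primal}},\mathcal{T}_{\text{dual}})$ on the fly. It samples a uniformly random $r^*\in\{0,1\}^n$ and a uniformly random serial number $s^*\in\{0,1\}^{3n}$, implicitly planting $\mathcal{G}(r^*)=(s^*,\langle A\rangle)$, and then hands $(s^*,|A\rangle)$ to $C$ as the banknote to be copied. All other values of $\mathcal{G}$ and $\mathcal{H}$ are answered by lazy sampling of fresh random serial numbers and fresh random $n/2$-dimensional subspaces; subspace-tester queries $\mathcal{T}_{\text{primal}}(s,\cdot)$ and $\mathcal{T}_{\text{dual}}(s,\cdot)$ with $s=s^*$ are routed to $C'$'s own oracle $U_{A^*}$, while those with other valid serial numbers are answered from the corresponding lazily-sampled subspace. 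When $C$ halts, $C'$ simply outputs the two purported banknote registers; by construction $\mathsf{Ver}_2(s^*,\cdot,\cdot)$ accepting is precisely the event that $V_A^{\otimes 2}$ accepts.

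The main obstacle is that $C$ queries $U$ in superposition, so the lazy-sampling simulation must be faithful against quantum access. I would discharge this using standard quantum random-oracle machinery: either Zhandry-style compressed oracles, or the observation that a $q$-query quantum algorithm cannot distinguish a truly random oracle from a $2q$-wise independent function, both of which allow $C'$ to answer quantum queries consistently in polynomial time with identical statistics. Two minor consistency issues remain to check. First, the planted serial number $s^*$ must not already appear in the (lazily defined) image of $\mathcal{G}$ on other inputs; since $s^*$ is uniform in $\{0,1\}^{3n}$ and $C$ makes only $\mathrm{poly}(n)$ queries, this fails with probability $\mathrm{poly}(n)\cdot 2^{-3n}$, which is absorbed into the $1/\exp(n)$ soundness bound. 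Second, the counterfeiter cannot recover $r^*$ (and thus $\langle A\rangle$) from $s^*$ through queries to $\mathcal{G}$, because inverting a random function requires $\Omega(2^{n/2})$ quantum queries by the BBBV bound, far beyond polynomial. With these simulation details in place, the contradiction with Theorem \ref{averagecase} is immediate.
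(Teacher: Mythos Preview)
Your proposal is essentially correct and follows the same reduction as the paper: simulate the full oracle $U$ using only $U_{A^*}$, then invoke Theorem~\ref{averagecase}. Two presentational points are worth noting. First, the compressed-oracle / $2q$-wise-independence machinery is unnecessary here: since Theorem~\ref{averagecase} is a \emph{query-complexity} lower bound, the paper simply lets $C'$ sample the entire random oracle $(\mathcal{G},\mathcal{H},\mathcal{T}_{\mathrm{primal}},\mathcal{T}_{\mathrm{dual}})$ upfront (in exponential time) and notes explicitly that ``we do not care about the computational complexity of creating the mock-ups.'' Second, your phrase ``implicitly planting $\mathcal{G}(r^*)=(s^*,\langle A\rangle)$'' is slightly awkward, since $C'$ does not know $\langle A\rangle$ and so cannot actually answer a query at $r^*$ this way; the paper's cleaner formulation is to \emph{replace} $\mathcal{G}(r)$ by a fresh random pair $(s',\langle A'\rangle)$ and then invoke the BBBV hybrid argument directly on the trace distance (rather than on ``inverting a random function'') to bound the damage by $\operatorname*{poly}(n)/2^{n/2}$.
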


\begin{proof}
That $\mathcal{M}$\ has perfect completeness follows from its definition and
from Lemma \ref{testworks}. \ That $\mathcal{M}$\ has $1/\exp\left(  n\right)
$ soundness error \textit{essentially} follows from Theorem \ref{averagecase}.
\ We only need to show that, given a banknote of the form $\left\vert
\$_{r}\right\rangle =\left\vert s_{r}\right\rangle \left\vert A_{r}%
\right\rangle $, a polynomial-time counterfeiter $C$\ can gain no additional
advantage by querying the \textquotedblleft full\textquotedblright\ oracles
$\mathcal{G},\mathcal{H},\mathcal{T}_{\operatorname*{primal}},\mathcal{T}%
_{\operatorname*{dual}}$, beyond what it gains from querying $U_{A_{r}^{\ast}%
}=\left(  U_{A_{r}},U_{A_{r}^{\bot}}\right)  $.

Let $r\in\left\{  0,1\right\}  ^{n}$ be the random string chosen by the bank,
so that $\mathcal{G}\left(  r\right)  =\left(  s_{r},\left\langle
A_{r}\right\rangle \right)  $. \ Then observe that, even conditioned on
$s_{r}$\ and $A_{r}$, as well as complete descriptions of $\mathcal{T}%
_{\operatorname*{primal}},\mathcal{T}_{\operatorname*{dual}}$, and
$\mathcal{H}$, the string $r$ remains uniformly random. \ Nor can querying
$\mathcal{G}\left(  r^{\prime}\right)  $\ for $r^{\prime}\neq r$\ reveal any
information about $r$, since the values of $\mathcal{G}$\ are generated
independently. \ So suppose we \textit{modify} $\mathcal{G}$ by setting
$\mathcal{G}\left(  r\right)  :=\left(  s^{\prime},\left\langle A^{\prime
}\right\rangle \right)  $, for some new $3n$-bit\ serial number $s^{\prime}%
$\ and list of generators $\left\langle A^{\prime}\right\rangle $\ chosen
uniformly at random.\ \ Then the BBBV hybrid argument \cite{bbbv}\ tells us
that, in expectation over $r$, this can alter the final state output by the
counterfeiter $C\left(  \left\vert \$_{r}\right\rangle \right)  $\ by at most
$\operatorname*{poly}\left(  n\right)  /2^{n/2}$ in trace distance. \ So in
particular, if $C$\ succeeded with non-negligible probability before, then $C$
must \textit{still} succeed with non-negligible probability after we set
$\mathcal{G}\left(  r\right)  :=\left(  s^{\prime},\left\langle A^{\prime
}\right\rangle \right)  $.

However, once we make this modification, an adversary trying to counterfeit
$\left\vert A\right\rangle $ given $U_{A}$ and $U_{A^{\perp}}$ can easily
\textquotedblleft mock up\textquotedblright\ a serial number $s$, as well as
the oracles$\ \mathcal{G},\mathcal{H},\mathcal{T}_{\operatorname*{primal}}$
and $\mathcal{T}_{\operatorname*{dual}}$, for itself. \ For $s$, $\mathcal{G}%
$, and $\mathcal{H}$ are now drawn from a distribution completely independent
of $A$. \ The oracles $\mathcal{T}_{\operatorname*{primal}}$ and
$\mathcal{T}_{\operatorname*{dual}}$ are likewise independent of $A$,
\textit{except} that $\mathcal{T}_{\operatorname*{primal}}\left\vert
s\right\rangle \left\vert v\right\rangle =\left\vert s\right\rangle
U_{A}\left\vert v\right\rangle $ and $\mathcal{T}_{\operatorname*{dual}%
}\left\vert s\right\rangle \left\vert v\right\rangle =\left\vert
s\right\rangle U_{A^{\perp}}\left\vert v\right\rangle $---behaviors that an
adversary can easily simulate using $U_{A}$ and $U_{A^{\perp}}$, together with
its knowledge of $s$. \ Just like in Corollary \ref{minitofull}, since our
security guarantees are query complexity bounds, we do not care about the
\textit{computational} complexity of creating the mock-ups.

By using the mock-ups, one can convert any successful attack on $\mathcal{M}$
into successful counterfeiting of $\left\vert A\right\rangle $, given oracle
access to $U_{A}$ and $U_{A^{\perp}}$ only. \ But the latter contradicts
Theorem \ref{averagecase}.
\end{proof}

Finally, using Theorem \ref{moneygood} together with Corollary
\ref{minitofull}, we can obtain a secure public-key quantum money scheme,
relative to a classical oracle.

\begin{theorem}
[Security of Hidden Subspace Money]\label{moneyverygood}By combining the
mini-scheme $\mathcal{M}$\ with a digital signature scheme, it is possible to
construct a public-key quantum money scheme $\mathcal{S}=\left(
\mathsf{KeyGen}_{\mathcal{S}},\mathsf{Bank}_{\mathcal{S}},\mathsf{Ver}%
_{\mathcal{S}}\right)  $, defined relative to some classical oracle
$U^{\prime}$, which has perfect completeness and $1/\exp\left(  n\right)  $
soundness error.
\end{theorem}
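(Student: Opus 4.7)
The plan is to combine the secure mini-scheme $\mathcal{M}$ from Theorem~\ref{moneygood} with a quantum-secure digital signature scheme via the standard construction of Section~\ref{CANONICAL}, following exactly the oracle composition argument in the proof of Corollary~\ref{minitofull}. In a sense the theorem is an immediate corollary of three results already in hand (Theorem~\ref{moneygood}, Theorem~\ref{sigoracle}, and Theorem~\ref{compose}), and the work is almost entirely bookkeeping.

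First, I would invoke Theorem~\ref{sigoracle} to obtain a second classical oracle $A_{\mathcal{D}}$ relative to which there exists a digital signature scheme $\mathcal{D}=(\mathsf{KeyGen}_{\mathcal{D}},\mathsf{S{}ign}_{\mathcal{D}},\mathsf{Ver}_{\mathcal{D}})$ secure against quantum chosen-message attacks. The composite oracle $U'$ would be the disjoint concatenation of $U$ (from Theorem~\ref{moneygood}) with $A_{\mathcal{D}}$, with the two components sampled independently. Then $\mathcal{S}$ is defined by the standard construction of Section~\ref{CANONICAL}: $\mathsf{Bank}_{\mathcal{S}}$ first runs $\mathsf{Bank}_{\mathcal{M}}$ (querying $U$) to obtain $(s,\rho_{s})$, then attaches $\mathsf{S{}ign}_{\mathcal{D}}(k_{\operatorname*{private}},s)$ (querying $A_{\mathcal{D}}$); and $\mathsf{Ver}_{\mathcal{S}}$ accepts iff both $\mathsf{Ver}_{\mathcal{M}}$ and $\mathsf{Ver}_{\mathcal{D}}$ accept. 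Perfect completeness is then immediate from perfect completeness of $\mathcal{M}$ (Theorem~\ref{moneygood}) combined with perfect completeness of the signature scheme, which can be arranged for Rompel-style signature constructions without affecting their asymptotic security.

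For soundness, I would first apply Theorem~\ref{compose} relativized to $U'$: any polynomial-time quantum counterfeiter $C_{\mathcal{S}}$ against $\mathcal{S}$ that succeeds with probability $p$ yields either a counterfeiter $C_{\mathcal{M}}$ against $\mathcal{M}$ succeeding with probability $\Omega(p/\operatorname*{poly}(n))$, or a forger $C_{\mathcal{D}}$ against $\mathcal{D}$ succeeding with non-negligible probability. The subtle point---handled exactly as in the proof of Corollary~\ref{minitofull}---is that both reduced adversaries ostensibly need access to the full composite oracle $U'$, but because $U$ and $A_{\mathcal{D}}$ are generated independently, each reduced adversary can locally simulate the oracle it does not really need by sampling a random mock-up. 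Such a mock-up is statistically identical in distribution to the true oracle and costs zero queries to the oracle actually being counted (the internal computation to produce the mock-up can be exponential, but that is irrelevant to a query-complexity bound). Hence $C_{\mathcal{M}}$ can be treated as operating relative to $U$ alone, contradicting Theorem~\ref{moneygood} unless $p/\operatorname*{poly}(n)\leq 1/\exp(n)$; and $C_{\mathcal{D}}$ can be treated as operating relative to $A_{\mathcal{D}}$ alone, contradicting Theorem~\ref{sigoracle} unless its success probability is negligible.

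I expect the main obstacle to be the numerical bookkeeping needed to upgrade the conclusion from ``negligible soundness'' to the stronger ``$1/\exp(n)$ soundness'' claimed in the theorem. One must verify that the $1/\operatorname*{poly}(n)$ loss incurred in the reduction of Theorem~\ref{compose} still preserves the exponential soundness bound inherited from $\mathcal{M}$, which is automatic since $\operatorname*{poly}(n)\cdot 1/\exp(n) = 1/\exp(n)$. One must also verify that the signature scheme of Theorem~\ref{sigoracle} can be made to have $1/\exp(n)$ soundness error, by inspecting the $\Omega(2^{n/2}\sqrt{\varepsilon})$ inversion query bound of Bennett et al.\ and Ambainis and choosing the security parameters of Rompel's reduction so that attacking the signature scheme with probability $1/\exp(n)$ still requires superpolynomially many queries---this is a routine computation but is the one place where constants must actually be tracked.
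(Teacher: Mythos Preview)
Your proposal is correct and follows essentially the same route as the paper: the paper presents Theorem~\ref{moneyverygood} as an immediate consequence of Theorem~\ref{moneygood} together with Corollary~\ref{minitofull}, and your write-up simply unpacks that corollary (the oracle concatenation and the mock-up argument) explicitly. Your attention to the bookkeeping needed to get $1/\exp(n)$ rather than merely negligible soundness, and to arranging perfect completeness for the signature scheme, is more careful than the paper itself, which glosses over both points.
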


\section{Explicit Quantum Money Scheme\label{EXPLICIT}}

We have shown how to construct a provably-secure public-key quantum money
scheme, when an appropriate classical oracle is available. \ In this section,
we propose a way to obtain the same functionality without an oracle. \ The key
challenge is this:

\begin{quotation}
\noindent\textit{Given a subspace }$A\leq\mathbb{F}_{2}^{n}$\textit{, how can
a bank distribute an \textquotedblleft obfuscated program\textquotedblright%
\ }$P_{A}$\textit{, which legitimate buyers and sellers can use to decide
membership in both }$A$\textit{ and }$A^{\bot}$\textit{, but which does not
reveal anything else about }$A$\textit{ that might facilitate counterfeiting?}
\end{quotation}

Note that, aside from the detail that we need security against quantum
adversaries, the above challenge is purely \textquotedblleft
classical\textquotedblright;\ it and its variants seem interesting even apart
from our quantum money application.

We will suggest a candidate protocol to achieve the challenge, based on
\textit{multivariate polynomial cryptography}. \ Given a collection
$p_{1},\ldots,p_{m}:\mathbb{F}_{2}^{n}\rightarrow\mathbb{F}_{2}$\ of
multivariate polynomials over $\mathbb{F}_{2}$, it is generally hard to find a
point $v\in\mathbb{F}_{2}^{n}$ on which all of the $p_{i}$'s vanish. \ On the
other hand, it is easy to check whether a \textit{particular} point $v$ has
that property. \ To \textquotedblleft hide\textquotedblright\ a subspace $A$,
we will provide uniformly-random low-degree polynomials $p_{1},\ldots,p_{m}%
$\ that vanish on each point of $A$. \ This information is sufficient to
decide membership in $A$. \ On the other hand, there is no known efficient
algorithm to \textit{find} $A$ given the polynomials, and current techniques
seem unlikely to yield even a quantum algorithm.

We can also introduce a constant fraction of \textit{noise} into our scheme
without interfering with its completeness. \ In other words, if only $\left(
1-\epsilon\right)  m$ of the polynomials $p_{1},\ldots,p_{m}$\ are chosen to
vanish on $A$, and the remaining $\epsilon m$ are random, then counting the
number of $p_{i}$'s that vanish at a point $v$ still suffices to determine
whether $v\in A$. \ Although we know of no attack even against our noise-free
scheme, adding noise in this way might improve security.

Crucially, we will state a \textquotedblleft classical\textquotedblright%
\ conjecture about the security of multivariate polynomial cryptography, and
show that the conjecture \textit{implies} the security of our explicit money
scheme. \ For the benefit of cryptographers, let us now state an
\textquotedblleft abstract\textquotedblright\ version of our conjecture, which
implies what we need, and which might hold even if our concrete conjecture
about multivariate polynomials fails.

\begin{conjecture}
[Subspace-Hiding Conjecture, Sufficient for Quantum Money]\label{abstractconj}%
There exists a polynomial-time algorithm that takes as input a description of
a uniformly-random subspace $A\leq\mathbb{F}_{2}^{n}$ with $\dim\left(
A\right)  =n/2$, and that outputs circuits $C_{A}$ and $C_{A^{\bot}}$, such
that the following holds.

\begin{enumerate}
\item[(i)] $C_{A}\left(  v\right)  $ decides whether $v\in A$,\ and
$C_{A^{\bot}}\left(  v\right)  $\ decides whether $v\in A^{\bot}$, for all
$v\in\mathbb{F}_{2}^{n}$.

\item[(ii)] Given descriptions of $C_{A}$ and $C_{A^{\bot}}$, no
polynomial-time quantum algorithm can find a generating set for $A$ with
success probability $\Omega\left(  2^{-n/2}\right)  $.
\end{enumerate}
\end{conjecture}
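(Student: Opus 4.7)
My plan is to instantiate the required algorithm via the multivariate polynomial obfuscation sketched in the paragraphs preceding the statement, and then to reduce property (ii) to Conjecture~(*) from the introduction. Concretely, on input (a basis of) $A \leq \mathbb{F}_2^n$ with $\dim(A)=n/2$, the algorithm samples $2n$ i.i.d.\ uniformly random degree-$d$ polynomials $p_1,\dots,p_{2n}:\mathbb{F}_2^n \to \mathbb{F}_2$ subject to vanishing on all of $A$, and $2n$ analogous polynomials $q_1,\dots,q_{2n}$ vanishing on $A^\bot$. Each sample reduces to drawing a uniformly random vector from the kernel of the evaluation map on $A$ (respectively $A^\bot$), which is a linear-algebra problem over $\mathbb{F}_2$ in a space of dimension $O(n^d)$, hence polynomial time for constant $d$. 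The circuit $C_A(v)$ just evaluates all $p_i(v)$ and accepts iff every $p_i(v)=0$; $C_{A^\bot}$ is defined analogously with the $q_i$.

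The next step is property (i). Completeness is immediate since every $p_i$ vanishes on $A$ by construction, so $C_A(v)$ always accepts $v \in A$. For soundness, I would argue that for each fixed $v \notin A$, the linear functional $p \mapsto p(v)$ is not identically zero on the space of degree-$d$ polynomials vanishing on $A$ (since $v \notin A$ means some degree-$1$ affine functional that vanishes on $A$ is nonzero at $v$, and multiplying by arbitrary polynomials of degree $\leq d-1$ still lies in the space). Hence a uniformly random element of that space evaluates to $0$ at $v$ with probability exactly $1/2$; by independence across the $2n$ polynomials and a union bound over the $2^n - 2^{n/2}$ strings outside $A$, the probability that $C_A$ wrongly accepts any such $v$ is at most $2^n \cdot 2^{-2n} = 2^{-n}$. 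The same bound applies to $C_{A^\bot}$. One can either tolerate this $2^{-n}$ error or apply an extra guard step (e.g., resample if a random check fails) to achieve deterministic correctness.

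For property (ii), the plan is to give a direct reduction to Conjecture~(*): suppose a polynomial-time quantum algorithm $B$, given $C_A$ and $C_{A^\bot}$, outputs a basis of $A$ with probability $\Omega(2^{-n/2})$. Then feeding $B$ the classical descriptions of $(p_1,\dots,p_{2n},q_1,\dots,q_{2n})$—which is exactly the joint distribution the conjecture quantifies over—yields a quantum algorithm that finds a basis of $A$ with the same success probability, contradicting Conjecture~(*). This step is syntactically trivial once the sampling distributions are matched.

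The main obstacle is of course property (ii) itself: I have no proof, only a reduction to Conjecture~(*). This is precisely why the overall statement is labeled a \emph{Conjecture} rather than a theorem. The substantive content—and where all the cryptographic risk lives—is the assertion that recovering a large hidden subspace from a random system of low-degree polynomials vanishing on it is intractable quantumly. As the discussion after the introduction's Theorem indicates, this is closely tied to the polynomial isomorphism problem; the best known attacks require exponential time for $\deg \geq 4$, and the only partial ``degree-reduction'' techniques we know pay an exponentially-small success-probability penalty, which is exactly the slack the $\Omega(2^{-n/2})$ threshold in the conjecture is designed to accommodate. Beyond these heuristic arguments, however, I do not see how to reduce (ii) to any more standard assumption, and I would flag this as the single genuine open problem in the construction.
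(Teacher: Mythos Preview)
The statement is a \emph{conjecture}, and the paper gives no proof of it. What the paper does is precisely what you sketch: it proposes the multivariate-polynomial instantiation as the candidate (Section~\ref{EXPLICIT}), establishes property~(i) via essentially the same union-bound argument you give (Lemmas~\ref{sampling} and~\ref{unique}, with $\beta n$ polynomials for any $\beta>1$), and then records the concrete version of property~(ii) as a separate, still-open conjecture (Conjecture~\ref{subspacedpt}, which is your Conjecture~(*) up to the optional noise parameter). Your recognition that (ii) is the entire open problem, and that the statement is labeled a conjecture for exactly that reason, is correct. There is nothing further to compare against.
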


Later, Conjecture \ref{subspacedpt} will specialize Conjecture
\ref{abstractconj}\ to the setting of multivariate polynomials.

\subsection{Useful Facts About Polynomials\label{POLYFACTS}}

By viewing elements of $\mathbb{F}_{2}^{n}$ as $n$-tuples $\left(
x_{1},\ldots,x_{n}\right)  $, we can evaluate a polynomial $p\left(
x_{1},\ldots,x_{n}\right)  $ on points of $\mathbb{F}_{2}^{n}$.

Given a subspace $A\leq\mathbb{F}_{2}^{n}$ and a positive integer $d$, let
$\mathcal{I}_{d,A}$ be the set of degree-$d$ polynomials (not necessarily
homogeneous) that vanish on $A$. \ Since we are working over $\mathbb{F}_{2}$,
note that $x_{i}^{2}=x_{i}$, so it suffices to consider \textit{multilinear}
polynomials (in which no $x_{i}$\ is ever raised to a higher power than $1$).

Before presenting our scheme, we need to establish some basic properties of
polynomials over $\mathbb{F}_{2}^{n}$. \ First, we observe that the set of
polynomials does not depend on the choice of basis.

\begin{proposition}
\label{basis}Let $L$ be any invertible linear transformation on $\mathbb{F}%
_{2}^{n}$. \ Then the map $p\left(  v\right)  \mapsto p\left(  Lv\right)  $
defines a permutation on the set of degree-$d$ polynomials, which maps
$\mathcal{I}_{d,A}$ to $\mathcal{I}_{d,L^{-1}A}$.
\end{proposition}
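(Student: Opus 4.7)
The plan is to verify the two claims separately, using the key fact that substituting a linear change of variables into a polynomial does not increase its degree.

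First I would show that the map $\phi_L : p(v) \mapsto p(Lv)$ sends degree-$d$ polynomials to degree-$d$ polynomials. Since each coordinate of $Lv$ is a homogeneous linear form in $x_1,\ldots,x_n$, substituting $Lv$ into a monomial of degree at most $d$ produces a polynomial of degree at most $d$; by linearity the same holds for arbitrary degree-$d$ polynomials. (We may freely reduce modulo $x_i^2 = x_i$ to stay in the multilinear representation.) Thus $\phi_L$ is a well-defined linear endomorphism of the vector space of degree-$d$ polynomials over $\mathbb{F}_2^n$. The same argument applied to $L^{-1}$ gives a well-defined map $\phi_{L^{-1}}$, and a direct check shows $\phi_{L^{-1}} \circ \phi_L$ and $\phi_L \circ \phi_{L^{-1}}$ are both the identity, so $\phi_L$ is a bijection, i.e., a permutation.

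Next I would verify that $\phi_L(\mathcal{I}_{d,A}) = \mathcal{I}_{d, L^{-1}A}$. If $p \in \mathcal{I}_{d,A}$, then for every $v \in L^{-1}A$ we have $Lv \in A$, hence $\phi_L(p)(v) = p(Lv) = 0$, showing $\phi_L(p) \in \mathcal{I}_{d,L^{-1}A}$. The reverse inclusion follows by applying the same argument to $\phi_{L^{-1}}$ and the subspace $L^{-1}A$, using the bijectivity established above.

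There is no real obstacle here; the only thing worth being careful about is making explicit that $\phi_L$ preserves (rather than merely bounds) the degree, which is ensured by invertibility of $L$, and that the identification of polynomials with their multilinear representatives is respected by the substitution, both of which are immediate.
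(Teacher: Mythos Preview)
Your proposal is correct; the paper does not provide a proof of this proposition, treating it as self-evident, and your argument is exactly the routine verification one would expect.
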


Implementing our scheme will require sampling uniformly from $\mathcal{I}%
_{d,A}$, which the next lemma shows is possible.

\begin{lemma}
\label{sampling}It is possible to sample a uniformly-random element of
$\mathcal{I}_{d,A}$ in time $O(n^{d})$.
\end{lemma}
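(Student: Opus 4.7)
The plan is to reduce to a canonical $(n/2)$-dimensional subspace via Proposition~\ref{basis}, where the corresponding ideal admits a transparent monomial description, and then pull a uniform sample back through the change of coordinates.

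First I would use Gaussian elimination on any basis of $A$ to find an invertible linear map $L : \mathbb{F}_2^n \to \mathbb{F}_2^n$ with $L^{-1} A = A_0$, where $A_0 := \{v \in \mathbb{F}_2^n : v_1 = \cdots = v_{n/2} = 0\}$; this costs $O(n^3)$. Next I would identify $\mathcal{I}_{d,A_0}$ explicitly. Writing a degree-$\leq d$ multilinear polynomial as $q(x) = \sum_{|S| \leq d} c_S \prod_{i \in S} x_i$, its restriction to $A_0$ is the polynomial $\sum_{S \subseteq \{n/2+1,\ldots,n\},\, |S| \leq d} c_S \prod_{i \in S} x_i$. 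Since the multilinear representation of any function $\mathbb{F}_2^{n/2} \to \mathbb{F}_2$ is unique, this restriction vanishes identically iff $c_S = 0$ for every such $S$; hence $\mathcal{I}_{d,A_0}$ is spanned by exactly the $N := \sum_{k=0}^{d}\binom{n}{k} - \sum_{k=0}^{d}\binom{n/2}{k} = O(n^d)$ multilinear monomials of degree $\leq d$ whose support meets $\{1,\ldots,n/2\}$. Sampling $q$ uniformly from $\mathcal{I}_{d,A_0}$ is then just flipping $N$ fair coins, one per basis monomial. Finally, by Proposition~\ref{basis}, $p(w) := q(L^{-1} w)$ is uniform in $\mathcal{I}_{d,A}$.

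The only delicate point is the representation of the output so that the overall runtime stays $O(n^d)$. Expanding $q(L^{-1} w)$ into the standard monomial basis of $w$ could in principle cost $O(n^{2d})$, since each degree-$k$ monomial of $q$ becomes a product of $k$ affine forms in $w$ after substitution, which then has to be reduced multilinearly. I would avoid this by storing the sampled $p$ implicitly as the pair $(q, L)$, which occupies $O(n^d)$ bits and supports $O(n^d)$-time evaluation at any point---the only operation on $p$ that the ambient money scheme actually requires. If an expanded form is ever needed, one pays the extra polynomial factor, but the sampling itself--the content of the lemma--is genuinely linear in the output size.
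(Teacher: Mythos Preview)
Your proof is correct and takes essentially the same route as the paper: reduce via Proposition~\ref{basis} to a coordinate subspace, observe that the vanishing ideal there is spanned exactly by the multilinear monomials whose support meets the complementary coordinate set, and sample by flipping one fair coin per such monomial. Your justification of the monomial characterization (uniqueness of the multilinear representation on $\mathbb{F}_2^{n/2}$) is a bit slicker than the paper's explicit witness-point argument, and you are more careful than the paper about the cost of the final change of basis---the paper simply asserts the $O(n^d)$ bound without discussing how the transformed polynomial is represented, whereas you correctly flag that full expansion could cost $O(n^{2d})$ and propose the implicit $(q,L)$ representation, which is all the surrounding scheme needs.
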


\begin{proof}
By Proposition~\ref{basis}, we can instead sample from the space of
polynomials which vanish on $\operatorname*{span}\left(  x_{1},\ldots
,x_{n/2}\right)  $, and then apply an appropriate change of basis to obtain a
sample from $\mathcal{I}_{d,A}$. \ So assume without loss of generality that
$A=\operatorname*{span}\left(  x_{1},\ldots,x_{n/2}\right)  $.

We claim that a polynomial $p$\ vanishes on $A$ if and only if every monomial
of $p$\ intersects $\left\{  x_{n/2+1},\ldots,x_{n}\right\}  $. \ This will
immediately give an $O\left(  n^{d}\right)  $-time sampling algorithm, because
we can consider each of the $O\left(  n^{d}\right)  $ degree-$d$ monomials in
turn, and include each one independently with probability $1/2$ if it
intersects $\left\{  x_{n/2+1},\ldots,x_{n}\right\}  $.

To prove the claim: first, if every monomial intersects $\left\{
x_{n/2+1},\ldots,x_{n}\right\}  $, then clearly $p$ vanishes on $A$.
\ Otherwise, let $m$ be a minimal monomial that does \textit{not} intersect
$\left\{  x_{n/2+1},\ldots,x_{n}\right\}  $. \ Consider the vector $v=\left(
v_{1},\ldots,v_{n}\right)  $ with $v_{i}=1$ if and only if $x_{i}\in m$.
\ Since $m$ does not intersect $\left\{  x_{n/2+1},\ldots,x_{n}\right\}  $,
clearly $v\in A$. \ Also, since $m$ is minimal, every other monomial must
evaluate to $0$\ on $v$. \ Thus $p\left(  v\right)  =m\left(  v\right)  =1$,
so $p$ is not identically zero on $A$.
\end{proof}

In addition to sampling polynomials that vanish on $A$, we would like to
guarantee that a sufficiently large system of such polynomials uniquely
determines the space $A$, so that such a system can be effectively used as a
membership oracle.

\begin{lemma}
\label{unique}Fix $A\leq\mathbb{F}_{2}^{n}$ and $\beta>1$, and choose $\beta
n$ polynomials $p_{1},\ldots,p_{\beta n}$\ uniformly and independently from
$\mathcal{I}_{d,A}$. \ Let $Z$\ be the set of $v\in\mathbb{F}_{2}^{n}$\ such
that $p_{i}\left(  v\right)  =0$\ for all $i\in\left[  \beta n\right]  $.
\ Then $A\subseteq Z$, and $\Pr\left[  Z=A\right]  =1-2^{-\Omega\left(
n\right)  }$.
\end{lemma}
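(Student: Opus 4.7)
The first containment $A\subseteq Z$ is immediate: each $p_i$ was sampled from $\mathcal{I}_{d,A}$, so by definition $p_i$ vanishes on all of $A$, and thus every point of $A$ lies in $Z$.

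For the harder direction, my plan is to fix an arbitrary $v\in\mathbb{F}_2^n\setminus A$ and bound $\Pr[v\in Z]$, then union bound over the at most $2^n-2^{n/2}$ such $v$. The key observation is that $\mathcal{I}_{d,A}$ is an $\mathbb{F}_2$-vector space, and the evaluation map $\phi_v:\mathcal{I}_{d,A}\to\mathbb{F}_2$ given by $\phi_v(p):=p(v)$ is $\mathbb{F}_2$-linear. So if $\phi_v$ is not identically zero, then exactly half of $\mathcal{I}_{d,A}$ maps to $0$ and half to $1$, giving $\Pr_{p}[p(v)=0]=1/2$ for a uniformly random $p\in\mathcal{I}_{d,A}$.

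So the step I actually need to carry out is to exhibit, for every $v\notin A$, some polynomial in $\mathcal{I}_{d,A}$ that does \emph{not} vanish at $v$. By Proposition~\ref{basis}, I may apply a linear change of basis and assume without loss of generality that $A=\operatorname{span}(e_1,\ldots,e_{n/2})$. Since $v\notin A$, there exists an index $j>n/2$ with $v_j=1$. The degree-$1$ (hence degree-$d$ for any $d\geq 1$) monomial $x_j$ vanishes identically on $A$, so it lies in $\mathcal{I}_{d,A}$, yet $x_j(v)=1$. This shows $\phi_v$ is surjective, and hence $\Pr_p[p(v)=0]=1/2$ exactly.

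Since the $p_i$'s are independent, $\Pr[\text{all } p_i(v)=0]=2^{-\beta n}$. A union bound over the $v\in\mathbb{F}_2^n\setminus A$ then gives
\[
\Pr[Z\neq A]\;\leq\;(2^n-2^{n/2})\cdot 2^{-\beta n}\;\leq\;2^{-(\beta-1)n}\;=\;2^{-\Omega(n)},
\]
since $\beta>1$. There is no real obstacle here: the only substantive point is the linear-algebraic fact that the evaluation functional $\phi_v$ is nonzero on $\mathcal{I}_{d,A}$ whenever $v\notin A$, which is witnessed even by a single coordinate monomial after the change of basis supplied by Proposition~\ref{basis}.
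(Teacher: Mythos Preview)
Your proof is correct and follows essentially the same approach as the paper's: both show that for any $v\notin A$ exactly half of $\mathcal{I}_{d,A}$ vanishes at $v$, then finish with independence and a union bound. The only presentational difference is in how the witness polynomial is produced: the paper stays in the original coordinates and notes that since $v\notin A=(A^\perp)^\perp$ there is some $w\in A^\perp$ with $w\cdot v=1$, so the linear form $x\mapsto w\cdot x$ lies in $\mathcal{I}_{d,A}$ and the involution $p\mapsto p+w\cdot x$ pairs vanishing with non-vanishing polynomials; you instead invoke Proposition~\ref{basis} to straighten $A$ and then read off a coordinate monomial $x_j$. These are the same linear polynomial viewed before and after the change of basis, so the arguments are equivalent. (One tiny note: the lemma does not assume $\dim A=n/2$, so your ``$\operatorname{span}(e_1,\ldots,e_{n/2})$'' should really be $\operatorname{span}(e_1,\ldots,e_{\dim A})$; your argument goes through unchanged.)
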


\begin{proof}
$A\subseteq Z$ is clear. \ For the probabilistic part, fix a point $v\notin
A$.\ \ Then by the union bound, it suffices to show that $\Pr\left[  v\in
Z\right]  <c^{-n}$ for some $c>2$.

There must be some $w\in A^{\perp}$ such that $w\cdot v=1$. \ Then the map
$p\left(  v\right)  \mapsto p\left(  v\right)  +w\cdot v$ defines an
involution of $\mathcal{I}_{d,A}$,\ such that exactly one of $p\left(
v\right)  $ and $p\left(  v\right)  +w\cdot v$\ is zero. \ This means that
exactly half of the polynomials in $\mathcal{I}_{d,A}$ vanish at $v$. \ Hence%
\[
\Pr\left[  p_{1}\left(  v\right)  =\cdots=p_{\beta n}\left(  v\right)
=0\right]  =2^{-\beta n}%
\]
and we are done.
\end{proof}

As mentioned earlier, we would also like to allow sampling from \textit{noisy}
systems of equations, defined as follows: let $\mathcal{R}_{d,A,m,\epsilon}$
be the probability distribution over $m$-tuples $\left(  p_{1},\ldots
,p_{m}\right)  $ that sets exactly $\left(  1-\epsilon\right)  m$\ of the
polynomials $p_{i}$ (chosen uniformly at random) to be uniformly-random
samples from $\mathcal{I}_{d,A}$, and that sets the remaining $\epsilon m$\ of
the polynomials $p_{i}$ to be uniformly-random samples from $\mathcal{I}%
_{d,A^{\prime}}$, for a uniformly-random subspace $A^{\prime}\leq
\mathbb{F}_{2}^{n}$\ of dimension $\dim\left(  A\right)  $. \ (Note that a
\textit{different} $A^{\prime}$\ is chosen for every such $p_{i}$.) \ Then
using a Chernoff bound, it is not hard to show that, provided $m$ is large
enough compared to $n$, a sample from $\mathcal{R}_{d,A,m,\epsilon}$
\textit{also} uniquely defines the subspace $A$ with overwhelming probability.

\begin{lemma}
\label{noisyunique}Fix $A\leq\mathbb{F}_{2}^{n}$ and $\epsilon<1/2$, let
$\beta\geq\frac{3}{\left(  1-2\epsilon\right)  ^{2}}$, and choose polynomials
$p_{1},\ldots,p_{\beta n}$\ from $\mathcal{R}_{d,A,\beta n,\epsilon}$. \ Let
$w\left(  v\right)  :=\sum_{i=1}^{\beta n}p_{i}\left(  v\right)  $, and let
$Z$\ be the set of $v\in\mathbb{F}_{2}^{n}$\ such that $w\left(  v\right)
\leq\epsilon\beta n$. \ Then $A\subseteq Z$, and $\Pr\left[  Z=A\right]
=1-2^{-\Omega\left(  n\right)  }$.
\end{lemma}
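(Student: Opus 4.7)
The plan is to mirror the proof of Lemma~\ref{unique}: verify the containment $A \subseteq Z$ directly, then apply a Chernoff/Hoeffding concentration bound and a union bound over $\mathbb{F}_2^n \setminus A$. The key difference from Lemma~\ref{unique} is that, because only a $(1-\epsilon)$-fraction of the $p_i$'s is forced to vanish on $A$, the decisive event for $v \notin A$ is no longer ``every $p_i(v)=0$'' but rather ``the Hamming weight $w(v)$ falls below the threshold $\epsilon\beta n$''. Since $\epsilon < 1/2$ and each $p_i(v)$ is essentially an unbiased coin when $v \notin A$, this still yields exponential tail bounds as soon as $\beta$ is large enough relative to $1/2-\epsilon$.

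First I would dispose of $A \subseteq Z$. For $v \in A$, each of the $(1-\epsilon)\beta n$ ``clean'' samples from $\mathcal{I}_{d,A}$ automatically vanishes at $v$, so $w(v)$ receives contributions only from the noise polynomials, of which there are exactly $\epsilon\beta n$; hence $v \in Z$ deterministically, with no use of randomness.

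Next I would fix any $v \in \mathbb{F}_2^n \setminus A$ (in particular $v \neq 0$) and analyze the distribution of $w(v)$. Conditioning on the random partition into clean and noise polynomials makes the $\beta n$ indicator variables $X_i := p_i(v)$ independent and $[0,1]$-valued. For a clean polynomial the involution $p \mapsto p + \ell_w$, with any $w \in A^{\perp}$ satisfying $w \cdot v = 1$ (as in the proof of Lemma~\ref{unique}), gives $\Pr[X_i = 1] = 1/2$ exactly. For a noise polynomial drawn from $\mathcal{I}_{d,A'}$ with $A'$ an independent uniformly random $n/2$-dimensional subspace, I would note that $\Pr[v \in A'] = (2^{n/2}-1)/(2^n-1) = O(2^{-n/2})$, and that conditional on $v \notin A'$ the same involution gives $\Pr[X_i = 1 \mid v \notin A'] = 1/2$; hence unconditionally $\Pr[X_i = 1] = 1/2 - O(2^{-n/2})$. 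Summing, $E[w(v)] = \beta n/2 - O(n\,2^{-n/2}) = \beta n/2 - o(1)$, and in particular $E[w(v)] - \epsilon\beta n \geq (1/2-\epsilon)\beta n - o(1)$.

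Finally I would invoke Hoeffding's inequality on the $\beta n$ independent $X_i$'s:
\[
\Pr\bigl[w(v) \leq \epsilon\beta n\bigr] \;\leq\; \exp\!\Bigl(-\tfrac{2}{\beta n}\bigl(E[w(v)] - \epsilon\beta n\bigr)^2\Bigr) \;\leq\; \exp\!\bigl(-2\beta n (1/2-\epsilon)^2 (1-o(1))\bigr).
\]
The hypothesis $\beta \geq 3/(1-2\epsilon)^2 = 3/\bigl(4(1/2-\epsilon)^2\bigr)$ makes the exponent at most $-\tfrac{3}{2}n(1-o(1))$, giving a per-point bound strictly smaller than $c^{-n}$ for some absolute $c > 2$. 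A union bound over the at most $2^n$ vectors $v \in \mathbb{F}_2^n \setminus A$ then yields $\Pr[Z \neq A] \leq 2^{-\Omega(n)}$, completing the proof. The only step that requires any care---the ``main obstacle,'' such as it is---is verifying that the $O(2^{-n/2})$ bias coming from the rare event $v \in A'$ does not eat into the constant slack provided by $\beta(1/2-\epsilon)^2 \geq 3/4$; this is immediate since $\beta$ and $\epsilon$ are constants and the bias is exponentially small in $n$.
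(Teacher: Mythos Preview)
Your proposal is correct and follows essentially the same route as the paper's proof: verify $A\subseteq Z$ deterministically, fix $v\notin A$, show each $p_i(v)$ is a nearly unbiased coin (exactly unbiased for clean samples, biased by $O(2^{-n/2})$ for noise samples via $\Pr[v\in A']$), apply a Chernoff/Hoeffding tail bound using $\beta(1/2-\epsilon)^2\ge 3/4$, and finish with a union bound over $2^n$ points. The only cosmetic difference is your explicit conditioning on the clean/noise partition before invoking independence and your use of Hoeffding rather than multiplicative Chernoff; the resulting exponents and conclusion match.
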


\begin{proof}
Again, $A\subseteq Z$ is clear. \ For the probabilistic part, fix $v\notin A$.
\ Then by the union bound, it suffices to show that $\Pr\left[  v\in Z\right]
<\alpha^{-n}$ for some $\alpha<1/2$.

Observe that $v$\ is a zero of little more than half the polynomials
$p_{1},\ldots,p_{\beta n}$. \ If $p_{i}$\ was chosen to vanish on $A$, then
$\operatorname*{E}\left[  p_{i}\left(  v\right)  \right]  =1/2$, by the
argument of Lemma \ref{unique}. \ If $p_{i}$ was chosen to vanish on a
uniformly-random $A^{\prime}$, then%
\begin{align*}
\operatorname*{E}\left[  p_{i}\left(  v\right)  \right]   &  \geq\frac{1}%
{2}-\Pr\left[  v\in A^{\prime}\right] \\
&  =\frac{1}{2}-\frac{1}{2^{n/2}}.
\end{align*}
Hence%
\[
\operatorname*{E}\left[  p_{1}\left(  v\right)  +\cdots+p_{\beta n}\left(
v\right)  \right]  \geq\beta n\left(  \frac{1}{2}-\frac{1}{2^{n/2}}\right)  .
\]
Furthermore, the $p_{i}$'s are chosen independently, up to an irrelevant
ordering. \ Choose $\delta=1-2\epsilon$\ to satisfy $\frac{1}{2}\left(
1-\delta\right)  =\epsilon$. \ Then by a Chernoff bound,%
\begin{align*}
\Pr\left[  v\in Z\right]   &  =\Pr\left[  p_{1}\left(  v\right)
+\cdots+p_{\beta n}\left(  v\right)  \leq\epsilon\beta n\right] \\
&  \leq\exp\left(  -\frac{1}{2}\frac{\beta n}{2}\left(  1-\frac{2}{2^{n/2}%
}\right)  \delta^{2}\right) \\
&  \leq\exp\left(  -\frac{3n}{4}\left(  1-\frac{2}{2^{n/2}}\right)  \right) \\
&  <0.48^{n}%
\end{align*}
for large enough $n$, and we are done.
\end{proof}

\subsection{Explicit Hidden-Subspace Mini-Scheme\label{EXPLICITMS}}

In our explicit mini-scheme, the bank chooses a subspace $A$ randomly and
publishes sets of polynomials drawn from $\mathcal{R}_{d,A,\beta n,\epsilon}%
$\ and $\mathcal{R}_{d,A^{\perp},\beta n,\epsilon}$, along with the quantum
money state $\left\vert A\right\rangle $. \ By Lemma~\ref{noisyunique}, a user
can use these polynomials to test membership in $A$ and $A^{\perp}$, and can
therefore implement the oracle mini-scheme in Section~\ref{HSM}.

Formally, the mini-scheme $\mathcal{E}$\ is defined as follows. \ Parameters
$\epsilon\in\left[  0,1/2\right)  $, $\beta\geq\frac{3}{\left(  1-2\epsilon
\right)  ^{2}}$, and $d\geq4$ are fixed. \ The complexity of the verification
procedure will grow like $O\left(  \beta n^{d+1}\right)  $, but security might
also improve for larger $\epsilon$\ and $d$. \ Then:

\begin{itemize}
\item $\mathsf{Bank}\left(  0^{n}\right)  $ selects an $n/2$-dimensional
subspace $A\leq\mathbb{F}_{2}^{n}$ uniformly at random, say by selecting $n/2$
random linearly-independent generators. \ It then sets $s:=\left(
s_{A},s_{A^{\perp}}\right)  $, where $s_{A}$ and $s_{A^{\perp}}$\ are lists of
polynomials drawn from $\mathcal{R}_{d,A,\beta n,\epsilon}$\ and
$\mathcal{R}_{d,A^{\perp},\beta n,\epsilon}$\ respectively. \ It prepares the
money state $\left\vert A\right\rangle $ and outputs the banknote $\left\vert
\$_{s}\right\rangle :=\left\vert s\right\rangle \left\vert A\right\rangle $.

\item $\mathsf{Ver}\left(
\hbox{\rm\rlap/c}%
\right)  $ first checks that $%
\hbox{\rm\rlap/c}%
$ has the form $\left(  s_{A},s_{A^{\perp}},\rho\right)  $ where
$s_{A}=\left(  p_{1},\ldots,p_{\beta n}\right)  $ and $s_{A^{\perp}}=\left(
q_{1},\ldots,q_{\beta n}\right)  $ are lists of $\beta n$ polynomials over
$\mathbb{F}_{2}^{n}$. \ If not, it rejects. \ If so, then it defines $Z$\ and
$Z^{\perp}$\ to be the sets of points $v\in\mathbb{F}_{2}^{n}$\ such that
$\sum_{i=1}^{\beta n}p_{i}\left(  v\right)  \leq\epsilon\beta n$ and
$\sum_{i=1}^{\beta n}q_{i}\left(  v\right)  \leq\epsilon\beta n$%
\ respectively. \ (Recall that with overwhelming probability, $Z=A$\ and
$Z^{\perp}=A^{\perp}$. \ Also, while $\mathsf{Ver}$\ will not have explicit
listings of the exponentially-large sets $Z$\ and $Z^{\perp}$, all that
matters for us is that it can efficiently apply the projections $\mathbb{P}%
_{Z}$\ and $\mathbb{P}_{Z^{\perp}}$.) \ It then applies the operation
$V_{Z}:=H_{2}^{\otimes n}\mathbb{P}_{Z^{\perp}}H_{2}^{\otimes n}\mathbb{P}%
_{Z}$ to $\rho$, and accepts $%
\hbox{\rm\rlap/c}%
$ if and only if $V_{Z}\left(  \rho\right)  $ accepts.
\end{itemize}

\subsection{Analysis\label{ANALEX}}

We first observe that the mini-scheme $\mathcal{E}$\ has perfect completeness.

\begin{theorem}
\label{explicitcompleteness}$\mathcal{E}$ has perfect completeness.
\end{theorem}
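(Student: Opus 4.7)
The plan is to observe that the only source of failure in $\mathsf{Ver}$ would be if one of the projective measurements $\mathbb{P}_Z$ or $\mathbb{P}_{Z^\perp}$ rejected the legitimate state $|A\rangle$ (or its Fourier transform $|A^\perp\rangle$). I will show that this never happens, regardless of the noise in the polynomial system, because the noisy sets $Z$ and $Z^\perp$ always contain $A$ and $A^\perp$ respectively as subsets.

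First I would appeal to the deterministic part of Lemma \ref{noisyunique}: for every $v \in A$ and every polynomial $p_i$ in the noisy list $s_A$, we have $p_i(v) = 0$ if $p_i$ was drawn from $\mathcal{I}_{d,A}$, and $p_i(v) \in \{0,1\}$ otherwise. Hence the weight function $w(v) = \sum_i p_i(v)$ counts at most the $\epsilon \beta n$ "noise" polynomials, so $w(v) \le \epsilon \beta n$ and therefore $v \in Z$. This gives $A \subseteq Z$ with certainty (not just with high probability), and by the same argument applied to $s_{A^\perp}$, we get $A^\perp \subseteq Z^\perp$ with certainty. Note that $Z$ may properly contain $A$ when the noise is bad, but this does not matter for completeness, only for soundness.

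Next I would trace the verification on a valid banknote $(s,|A\rangle)$. Since $|A\rangle$ is a uniform superposition of computational basis states $|v\rangle$ with $v \in A \subseteq Z$, the projective measurement $\mathbb{P}_Z$ accepts with probability $1$ and leaves the state unchanged. Then $H_2^{\otimes n}$ maps $|A\rangle$ to $|A^\perp\rangle$. Since $A^\perp \subseteq Z^\perp$, the projective measurement $\mathbb{P}_{Z^\perp}$ also accepts with probability $1$ and leaves $|A^\perp\rangle$ unchanged. A final Hadamard transform returns the state to $|A\rangle$. Both measurement outcomes having been "accept" deterministically, $\mathsf{Ver}(\$_s)$ accepts with probability $1$.

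There is no real obstacle here; the content of the proof is essentially the one-sided containment $A \subseteq Z$, which is a deterministic consequence of how $Z$ was defined relative to the polynomials in $s_A$. The noise is specifically introduced in a way that can only \emph{enlarge} the apparent subspace, never shrink it, which is exactly what preserves perfect completeness. (The potential enlargement $Z \supsetneq A$ is what one must control for the soundness analysis, but that is a separate matter from completeness.)
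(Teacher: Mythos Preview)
Your proposal is correct and follows essentially the same approach as the paper's own proof, which simply cites Lemmas~\ref{unique} and~\ref{noisyunique} for the deterministic containments $A\subseteq Z$ and $A^{\perp}\subseteq Z^{\perp}$ and concludes that $V_Z$ accepts $\left\vert A\right\rangle$ with probability~$1$. Your write-up is in fact more explicit than the paper's, spelling out why the weight bound $w(v)\le\epsilon\beta n$ holds for $v\in A$ and tracing the two projections step by step.
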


\begin{proof}
This follows from Lemmas \ref{unique} and\ \ref{noisyunique}, and particularly
from the fact that $A\subseteq Z$\ and $A^{\perp}\subseteq Z^{\perp}$\ with
certainty. \ From this it follows that $V_{Z}:=H_{2}^{\otimes n}%
\mathbb{P}_{Z^{\perp}}H_{2}^{\otimes n}\mathbb{P}_{Z}$\ accepts the state
$\left\vert A\right\rangle $\ with probability $1$.
\end{proof}

Let us remark that, if we want the fraction $\epsilon$\ of \textquotedblleft
decoy\textquotedblright\ polynomials to be even greater than $1/2$, then we
can define a variant of our scheme that works for all $\epsilon<1$. $\ $In
this variant scheme, $\mathsf{Ver}$\ will guess that $v\in A$\ (i.e., put
$v\in Z$) if%
\[
p_{1}\left(  v\right)  +\cdots+p_{\beta n}\left(  v\right)  \leq\frac{\left(
1+\epsilon\right)  \beta n}{4},
\]
and will guess that $v\notin A$\ (i.e., put $v\notin Z$) otherwise. \ By
direct analogy with Lemma \ref{noisyunique}, one can prove using a Chernoff
bound that this rule will guarantee $\Pr\left[  Z=A\right]  =1-2^{-\Omega
\left(  n\right)  }$, and likewise $\Pr\left[  Z^{\perp}=A^{\perp}\right]
=1-2^{-\Omega\left(  n\right)  }$, provided we set $\beta\geq\frac{12}{\left(
1-\epsilon\right)  ^{2}}$. \ However, the disadvantage is that if
$\epsilon\geq\frac{1}{3}$, then we lose the property that $A\subseteq Z$\ and
$A^{\perp}\subseteq Z^{\perp}$\ with probability $1$, since $\epsilon\geq
\frac{1+\epsilon}{4}$. \ This means, in particular, that we lose perfect
completeness, and can only ensure a completeness error of $2^{-\Omega\left(
n\right)  }$.

We now wish to argue about $\mathcal{E}$'s soundness. \ Naturally, we can only
hope to prove soundness assuming some computational hardness conjecture.
\ What is nice, though, is that we can base $\mathcal{E}$'s soundness on a
conjecture that talks only about the hardness of a \textquotedblleft
classical\textquotedblright\ cryptographic problem (i.e., a problem with
classical inputs and outputs). \ Let us now state that conjecture, which is
simply the abstract Conjecture \ref{abstractconj} specialized to the setting
of multivariate polynomials.

\begin{conjecture}
[Direct Product for Finding Subspace Elements]\label{subspacedpt}Let
$\epsilon<1/2$ and $\beta:=\frac{3}{\left(  1-2\epsilon\right)  ^{2}}$.
\ Given samples from $\mathcal{R}_{d,A,\beta n,\epsilon}$\ and $\mathcal{R}%
_{d,A^{\perp},\beta n,\epsilon}$, no polynomial-time quantum algorithm can
find a complete list of generators for $A$ with success probability
$\Omega\left(  2^{-n/2}\right)  $.
\end{conjecture}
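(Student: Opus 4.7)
The plan is as follows. Since this is framed as a cryptographic conjecture rather than a statement derivable from scratch, any genuine ``proof'' must take the form of a reduction to a more standard and better-studied hardness assumption, combined with an explicit check that the known attacks fail. My proposal rests on two pillars: (i) a direct-product reduction showing that the stated $\Omega(2^{-n/2})$ bound would follow from the hardness of finding a \emph{single} nontrivial element of $A$, and (ii) a detailed accounting of the known algebraic attacks, showing that none of them succeed in polynomial time at degree $d\ge 4$.

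First I would introduce the auxiliary assumption that, given samples from $\mathcal{R}_{d,A,\beta n,\epsilon}$ and $\mathcal{R}_{d,A^{\perp},\beta n,\epsilon}$, no polynomial-time quantum algorithm outputs any nonzero $v\in A$ with probability noticeably exceeding $2^{-n/2}$; this matches what uniformly guessing an element of $\mathbb{F}_2^n$ achieves. Given this, I would aim for a direct-product theorem: an algorithm producing a full basis $v_1,\dots,v_{n/2}$ for $A$ with probability $\Omega(2^{-n/2})$ can be iteratively peeled off to find a first element. The natural tool is the random self-reducibility of the instance under linear change of basis (Proposition~\ref{basis}): after committing to a guessed $v_1$, one applies a uniformly random invertible $L$ fixing $v_1$, which re-randomizes the remaining polynomials while preserving membership of $v_1$ in $LA$, reducing the residual problem to finding a basis for a random $(n/2-1)$-dimensional subspace of a hyperplane. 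Iterating this gives a standard Raz-style amplification bounding the success probability for finding all $n/2$ generators by roughly the product of the single-element success probabilities.

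Second I would argue directly against known counterfeiting strategies. Treating the released polynomials as an instance of polynomial isomorphism with a hidden linear symmetry, Gr\"obner-basis and $F_4/F_5$ attacks have complexity governed by the degree of regularity, which is exponential in the number of variables for generic systems of bounded degree, and the best dedicated algorithms for IP1S \cite{cgp:ip1s,gms:ip1s,bffp:ip1s} all run in exponential time for degree $\ge 3$. The degree-drop trick of \cite{bffp:ip1s} reduces $d$ to $d-1$ at an exponential success-probability cost, so starting at $d\ge 4$ still leaves a degree-$3$ instance with no known polynomial-time solver. The noise parameter $\epsilon$ should only help, since it eliminates the exact linear structure that most algebraic attacks require; Lemma~\ref{noisyunique} guarantees completeness is preserved. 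For the quantum side, Grover gives at most a square-root speedup on search subroutines, and no Shor-type abelian hidden-subgroup structure is visible in the problem.

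The main obstacle, expectedly, is that the direct-product reduction above is not really rigorous without an additional assumption about how the conditional distribution of polynomials in $\mathcal{R}_{d,A,\beta n,\epsilon}$ behaves once one fixes a known element $v_1\in A$. Concretely, one must show that the distribution of the remaining $p_i$'s, after conditioning on the event ``$p_i(v_1)=0$ for each $i$ drawn from $\mathcal{I}_{d,A}$,'' is statistically close to an instance for the quotient subspace $A/\langle v_1\rangle$ in $\mathbb{F}_2^n/\langle v_1\rangle$ with the same noise rate; the noise component $\mathcal{I}_{d,A'}$ complicates this because it is not orthogonally invariant under quotienting. Making this reduction tight, rather than off by polynomial or subexponential factors, is where almost all of the real work would lie, and is why the authors (correctly, in my view) present Conjecture~\ref{subspacedpt} as a standalone assumption rather than a theorem.
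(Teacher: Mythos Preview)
The statement is a \emph{conjecture}, and the paper does not prove it. It is introduced as a cryptographic hardness assumption (a specialization of Conjecture~\ref{abstractconj}), and the paper's ``proof'' consists entirely of the heuristic discussion in Section~\ref{JUSTIFY}: relating the problem to standard assumptions in multivariate polynomial cryptography, exhibiting explicit attacks at degrees $d=1$ and $d=2$, and arguing that for $d\ge 3$ the problem is analogous to polynomial isomorphism with one secret, for which the best known algorithms are exponential. You correctly recognize all of this in your final paragraph, and your pillar (ii) tracks the paper's justification closely.

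Your pillar (i) --- the direct-product reduction from a full basis to a single element via random-self-reducibility and iterated quotienting --- is genuinely different from anything the paper attempts. The paper simply asserts the $\Omega(2^{-n/2})$ threshold as part of the conjecture, noting only that guessing one element trivially achieves $2^{-n/2}$ and that ``finding multiple elements of $A$ is significantly harder than finding one element,'' without any reduction backing this up. Your approach is more ambitious and, if it could be made to work, would strengthen the paper's story considerably by reducing Conjecture~\ref{subspacedpt} to a cleaner single-element assumption. You have also correctly identified the obstruction: the noise distribution $\mathcal{R}_{d,A,\beta n,\epsilon}$ does not behave well under quotienting by a known element, because the decoy polynomials drawn from $\mathcal{I}_{d,A'}$ for random $A'$ are not invariant under the change of basis you need. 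So your honest conclusion --- that this is why the statement is left as a conjecture --- matches the paper's position exactly, though the paper arrives there without ever attempting the reduction.
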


Note that it is easy to find \textit{one} nonzero element of $A$ with success
probability $2^{-n/2}$, by choosing $x\in\mathbb{F}_{2}^{n}$ randomly.
\ Conjecture \ref{subspacedpt}\ asserts both that it is impossible to do too
much better using $\mathcal{R}_{d,A,\beta n,\epsilon}$\ and $\mathcal{R}%
_{d,A^{\perp},\beta n,\epsilon}$, and that finding multiple elements of $A$ is
significantly harder than finding one element.

The security of mini-scheme $\mathcal{E}$ follows easily from Conjecture
\ref{subspacedpt}, \textit{despite} the fact that a would-be counterfeiter has
access to a valid quantum banknote, whereas Conjecture \ref{subspacedpt}
involves no such assumption.

\begin{theorem}
[Security Reduction for Explicit Mini-Scheme]\label{explicitsoundness}If
Conjecture~\ref{subspacedpt} holds, then $\mathcal{E}$ is secure.
\end{theorem}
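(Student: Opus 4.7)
The plan is to prove the contrapositive: assuming a polynomial-time counterfeiter $C$ against $\mathcal{E}$ that succeeds with non-negligible probability $\mu$, I construct a polynomial-time quantum algorithm $\mathcal{A}$ which, given only the polynomial descriptions $(s_A,s_{A^{\perp}})$ produced by $\mathsf{Bank}$ for a uniformly random subspace $A$, outputs a basis for $A$ with probability noticeably larger than the trivial $2^{-n/2}$, contradicting Conjecture~\ref{subspacedpt}.

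First I would observe that $\mathcal{E}$ is \emph{essentially projective}: by Lemma~\ref{noisyunique} the sets $Z,Z^{\perp}$ used by $\mathsf{Ver}$ coincide with $A,A^{\perp}$ except with probability $2^{-\Omega(n)}$ over the polynomials, so $V_Z=|A\rangle\langle A|$ is a rank-$1$ projector onto the unique valid money state. Conditioning on this $1-2^{-\Omega(n)}$ event, Theorem~\ref{miniamp} applies. A Markov argument on $\mu$ shows that for at least a $\mu/2$ fraction of polynomial tuples $s$, the per-banknote success probability of $C$ is at least $\mu/2$; on every such $s$, Theorem~\ref{miniamp} produces an amplified counterfeiter $C'_s$ (of size $\mathrm{poly}(n)$) that maps $(s,|A\rangle)$ to a state whose fidelity with $|A\rangle^{\otimes 2}$ is at least $1-\delta$ for any desired inverse-polynomial $\delta$.

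The crucial observation enabling the reduction is that $\mathcal{A}$ can \emph{fabricate its own initial banknote} from the public key. From $s$ alone, $\mathcal{A}$ can implement the projector $V_Z$, apply it as a measurement to the uniform state $|+\rangle^{\otimes n}$, and succeed with probability exactly $|\langle +^n|A\rangle|^2 = |A|/2^n = 2^{-n/2}$; upon success the post-measurement state is $|A\rangle$. Provided $s$ lies in the good $\mu/2$-fraction, $\mathcal{A}$ then iteratively applies $C'_s$ to double the number of copies, re-projecting each newly produced copy through $V_Z$ and invoking the Almost-As-Good-As-New Lemma (Lemma~\ref{goodasnew}) to keep the cumulative fidelity loss $o(1)$. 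After $O(\log n)$ doublings, $\mathcal{A}$ holds $\Theta(n)$ copies of $|A\rangle$ with constant probability. Measuring each copy in the computational basis returns $\Theta(n)$ i.i.d.\ uniform elements of $A$, and a standard linear-algebra calculation shows these span $A$ with probability at least $1/2$, so $\mathcal{A}$ outputs a valid generating set.

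Multiplying the success probabilities, $\mathcal{A}$ returns a basis for $A$ with probability $\Omega(\mu\cdot 2^{-n/2})$ in polynomial time, which is non-negligibly larger than the trivial $2^{-n/2}$ and therefore violates Conjecture~\ref{subspacedpt}. The main technical obstacle I anticipate is the fidelity-compounding issue in the doubling stage: a tree of $\Theta(n)$ sequential applications of $C'_s$ could in principle multiply errors catastrophically. I would address this either by pushing $\delta$ down to $1/\mathrm{poly}(n)$ through Theorem~\ref{miniamp} at only polynomial overhead, or by cleaning every intermediate copy through $V_Z$ and bounding the disturbance via Lemma~\ref{goodasnew}. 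A secondary technicality is that $V_Z$ is only a rank-$1$ projector onto $|A\rangle$ on the $1-2^{-\Omega(n)}$ event $\{Z=A,\ Z^{\perp}=A^{\perp}\}$, but this failure event can be folded into the error analysis without affecting the final lower bound on $\mathcal{A}$'s success probability.
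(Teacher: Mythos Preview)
Your approach is correct and essentially the same as the paper's: project the uniform superposition through $V_Z$ to fabricate $|A\rangle$ with probability $2^{-n/2}$, amplify the counterfeiter via Theorem~\ref{miniamp}, iterate to obtain $\Theta(n)$ copies, and measure in the standard basis to recover a generating set. One slip in your final accounting: $\Omega(\mu\cdot 2^{-n/2})$ is \emph{smaller} than $2^{-n/2}$, not ``non-negligibly larger''---the paper, being less explicit than you about the Markov step, simply writes the overall success probability as $2^{-n/2}(1-o(1))$, and in either reading the contradiction with Conjecture~\ref{subspacedpt} tacitly treats the $\Omega(2^{-n/2})$ threshold as ``up to polynomial factors in $n$.''
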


\begin{proof}
Let $C_{\mathcal{E}}$\ be a counterfeiter against $\mathcal{E}$. \ Then we
need to show that, using $C_{\mathcal{E}}$, we can find a complete list of
generators for $A$\ with $\Omega\left(  2^{-n/2}\right)  $\ success probability.

Given $A\leq\mathbb{F}_{2}^{n}$\ with $\dim\left(  A\right)  =n/2$, let
$s:=\left(  s_{A},s_{A^{\perp}}\right)  $ where $s_{A}$\ and $s_{A^{\perp}}%
$\ are samples from $\mathcal{R}_{d,A,\beta n,\epsilon}$\ and $\mathcal{R}%
_{d,A^{\perp},\beta n,\epsilon}$\ respectively. \ Recall from Lemma
\ref{noisyunique}\ that $\Pr\left[  A=Z\right]  =1-2^{-\Omega\left(  n\right)
}$\ and $\Pr\left[  A^{\perp}=Z^{\perp}\right]  =1-2^{-\Omega\left(  n\right)
}$. \ Provided both of these events occur, we can use $s$ to decide membership
in $A$, and can therefore apply the projective measurement $\mathbb{P}_{A}$.
\ So let us prepare the uniform superposition over\ all $2^{n}$ elements of
$\mathbb{F}_{2}^{n}$, and then apply $\mathbb{P}_{A}$ to it. \ With
probability $2^{-n/2}$, this produces the state $\left\vert A\right\rangle $.

Once we have $s$ and $\left\vert A\right\rangle $, we can then form the
banknote $\left\vert \$\right\rangle :=\left\vert s\right\rangle \left\vert
A\right\rangle $, and provide this banknote to the counterfeiter
$C_{\mathcal{E}}$. \ By hypothesis, $C_{\mathcal{E}}$ outputs a
(possibly-entangled) state $\rho$\ on two registers, such that $\left\langle
A\right\vert ^{\otimes2}\rho\left\vert A\right\rangle ^{\otimes2}\geq\Delta
$\ for some $\Delta=\Omega\left(  1/\operatorname*{poly}\left(  n\right)
\right)  $. \ But now, because the mini-scheme $\mathcal{E}$\ is projective,
Theorem \ref{miniamp}\ applies, and we can \textit{amplify} $\rho$\ to
increase its fidelity with $\left\vert A\right\rangle ^{\otimes2}$. \ After
$O\left(  \frac{1}{\Delta^{2}}\log n\right)  $ calls to $C_{\mathcal{E}}$,
this gives us a state $\sigma$\ such that%
\[
\left\langle A\right\vert ^{\otimes2}\sigma\left\vert A\right\rangle
^{\otimes2}\geq1-\frac{1}{n^{2}}.
\]
More generally, by alternating counterfeiting steps and amplification steps,
we can produce as many registers as we like that each have large overlap with
$\left\vert A\right\rangle $. \ In particular, we can produce a state $\xi
$\ such that%
\[
\left\langle A\right\vert ^{\otimes n}\xi\left\vert A\right\rangle ^{\otimes
n}\geq1-o\left(  1\right)  .
\]
If we now run $\mathsf{Ver}$\ on each of the registers of $\xi$, the
probability that every invocation accepts is $1-o\left(  1\right)  $.
\ Furthermore, supposing that happens, the state we are left with is simply
$\left\vert A\right\rangle ^{\otimes n}$.

Finally, we measure each register of $\left\vert A\right\rangle ^{\otimes n}%
$\ in the standard basis. \ This gives us $n$\ elements $x_{1},\ldots,x_{n}\in
A$, which are independent and uniformly random. \ So by standard estimates,
the probability that $x_{1},\ldots,x_{n}$\ do \textit{not} contain a complete
generating set for $A$\ is $1/\exp\left(  n\right)  $.

Overall, the procedure above succeeded with probability $2^{-n/2}\left(
1-o\left(  1\right)  \right)  $, thereby giving us the desired contradiction
with Conjecture~\ref{subspacedpt}.
\end{proof}

Using the standard construction of quantum money schemes, we can now produce a
complete explicit money scheme, whose security follows from
Conjecture~\ref{subspacedpt}.

\begin{theorem}
[Security Reduction for Explicit Scheme]\label{explicitsecurity}Assuming
Conjecture~\ref{subspacedpt}, there exists a public-key quantum money scheme
with perfect completeness and soundness error $2^{-\Omega\left(  n\right)  }$.
\end{theorem}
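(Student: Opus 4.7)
My proof plan is to combine the mini-scheme $\mathcal{E}$ from Section \ref{EXPLICITMS} with a classical digital signature scheme secure against quantum chosen-message attacks, via the ``standard construction'' of Section \ref{CANONICAL}. The key observation is that $\mathcal{E}$ is \emph{secret-based} in the sense of Section \ref{MINI}: the bank's entire output---the serial number $s = (s_A, s_{A^{\perp}})$ together with the quantum state $\left\vert A\right\rangle$---is a deterministic function of a single random tape (the one that picks generators for $A$ and the noise patterns defining $s_A$ and $s_{A^{\perp}}$), so identical tapes produce identical banknotes. Combined with the perfect completeness of Theorem \ref{explicitcompleteness} and the negligible soundness guaranteed by Theorem \ref{explicitsoundness} under Conjecture \ref{subspacedpt}, this makes $\mathcal{E}$ a secure secret-based mini-scheme.

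Proposition \ref{freeofcharge} then immediately yields a secure public-key quantum money scheme $\mathcal{S}$. Unpacking: one applies Proposition \ref{miniowf} to extract a quantum-secure one-way function from $\mathcal{E}$'s serial-number map; feeds this OWF into Theorem \ref{rompelthm} to obtain a signature scheme $\mathcal{D}$ secure against nonadaptive quantum chosen-message attacks; and composes $\mathcal{E}$ with $\mathcal{D}$ via Theorem \ref{compose}, so that each banknote takes the form $\left(s,\, \mathsf{S{}ign}_{\mathcal{D}}(k_{\operatorname*{private}}, s),\, \left\vert A\right\rangle\right)$ and is verified by running both $\mathsf{Ver}_{\mathcal{E}}$ on $(s, \left\vert A\right\rangle)$ and $\mathsf{Ver}_{\mathcal{D}}$ on the signature.

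Perfect completeness of $\mathcal{S}$ then follows because $\mathsf{Ver}_{\mathcal{E}}$ accepts valid banknotes with certainty by Theorem \ref{explicitcompleteness}, and Rompel's signatures can be arranged to verify deterministically on honestly produced signatures; failing that, the completeness error can be driven below any $1/2^{\operatorname*{poly}(n)}$ threshold via the amplification in Appendix \ref{COMPLETENESS}. The $2^{-\Omega(n)}$ soundness requires tracking quantitative parameters through both reductions: by Theorem \ref{compose}, a successful attack on $\mathcal{S}$ either forges a signature under $\mathcal{D}$ (which, by the quantum function-inversion lower bound used in the proof of Theorem \ref{sigoracle}, fails except with exponentially small probability) or duplicates some serial number, which via the reduction in the proof of Theorem \ref{explicitsoundness} would yield a basis-finding algorithm for $A$ whose success probability contradicts Conjecture \ref{subspacedpt}.

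The main obstacle will be squeezing the counterfeiter's success probability $\Delta$ all the way down to $2^{-\Omega(n)}$ rather than merely to a negligible function of $n$. The amplification in Theorem \ref{miniamp} uses $O(1/\Delta^2)$ queries to the counterfeiter and is therefore polynomial only while $\Delta$ is inverse-polynomial, so pushing the bound into the exponentially small regime requires either a sharper quantitative reading of Conjecture \ref{subspacedpt} (as permitted, for instance, by the noisy variant discussed in the introduction) or an auxiliary parallel-repetition step that drives the overall acceptance probability exponentially low without needing to amplify any single invocation. All of the structural machinery, however, is already in place in Proposition \ref{freeofcharge} and Theorems \ref{explicitcompleteness}--\ref{explicitsoundness}, so no genuinely new idea should be required beyond those developed earlier in the paper.
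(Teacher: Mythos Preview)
Your approach matches the paper's own proof, which is a single sentence invoking the standard construction of Theorem~\ref{compose} with the mini-scheme $\mathcal{E}$ (citing Theorems~\ref{explicitcompleteness} and~\ref{explicitsoundness} for completeness and soundness); your unpacking via Proposition~\ref{freeofcharge} simply makes explicit where the signature scheme comes from, which the paper leaves implicit. Your worry in the final paragraph about squeezing the soundness down to $2^{-\Omega(n)}$ rather than merely negligible is legitimate and is not addressed by the paper's one-line proof either---the cited reductions (Theorem~\ref{compose} and the amplification inside Theorem~\ref{explicitsoundness}) only rule out inverse-polynomial counterfeiting success, so the exponential bound in the theorem statement is asserted more sharply than the argument as written actually delivers.
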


\begin{proof}
We apply the standard construction of Theorem~\ref{compose} with the
mini-scheme $\mathcal{E}$, whose completeness and soundness follow from
Theorems~\ref{explicitcompleteness} and~\ref{explicitsoundness} respectively,
assuming Conjecture~\ref{subspacedpt}.
\end{proof}

\subsection{Justifying Our Hardness Assumption\label{JUSTIFY}}

Though our hardness assumption is new, it is closely related to standard
assumptions in \textit{multivariate polynomial cryptography}. \ Given a system
of multivariate quadratics over $\mathbb{F}_{2}$, finding a common zero is
known to be $\mathsf{NP}$-hard; moreover, it is strongly believed that the
problem remains hard even for \textit{random} systems of multivariate
polynomials, and cryptosystems based on this hardness assumption are
considered promising candidates for post-quantum cryptography \cite{dy:mpkc}.
\ Therefore, if Conjecture~\ref{subspacedpt} fails, it will almost certainly
be because some additional structure in this problem facilitates a new attack.

There are several ways in which Conjecture~\ref{subspacedpt} is stronger than
the assumption that solving random systems of multivariate polynomials is
hard. \ First, our systems have large, well-structured solution spaces
$A$\ and $A^{\perp}$. \ Systems with many solutions are not normally
considered in the literature, and while there seem to be no known attacks that
exploit this structure, the possibility is not ruled out. \ Second, we provide
two related systems, one with zeroes in $A$ and one with zeroes in $A^{\perp}%
$. \ Again, this is a very specific structural property which has not been
considered, and there might be unexpected attacks exploiting it. \ Third,
Conjecture~\ref{subspacedpt} asserts that no adversary can succeed with
probability $2^{-n/2}$, which seems significantly easier than succeeding with
non-negligible probability.

On the other hand, Conjecture~\ref{subspacedpt} is \textit{weaker} than
typical assumptions in multivariate polynomial cryptography in at least one
respect: a would-be counterfeiter needs to solve a system of polynomial
equations with a constant fraction of noise. \ Solving noisy systems of
\textit{linear} equations over $\mathbb{F}_{2}$ is called the \textit{learning
parity with noise} problem, and is generally believed to be hard even for
quantum computers \cite{regev}. \ If true, this suggests that Gaussian
elimination is fundamentally hard to adapt to the presence of noise. \ But
computing a Gr\"{o}bner basis is a strict generalization of Gaussian
elimination to higher degree, and involves a nearly identical process of
elimination. \ It therefore seems unlikely that these approaches can be
efficiently adapted to the setting with noise. \ The problem of solving
polynomials with noise has been studied recently, and the best-known
approaches involve performing an exponential time search to determine which
equations are noisy \cite{ac:psn}.

But if solving linear systems with noise is already hard, why do we even use
higher-degree polynomials in our scheme? \ The reason is that, alas, the
\textquotedblleft dual\textquotedblright\ structure of our money scheme
facilitates a simple attack in the case $d=1$.

\begin{claim}
For all $\epsilon<1/2$, there exists a $\beta$ such that one can recover
$A$\ efficiently given samples from $\mathcal{R}_{d,A,\beta n,\epsilon}$\ and
$\mathcal{R}_{d,A^{\perp},\beta n,\epsilon}$.\footnote{This claim also goes
through, with no essential changes, for the variant of our scheme discussed
earlier with $\epsilon\in\left[  1/2,1\right)  $ (i.e., the variant without
perfect completeness).}
\end{claim}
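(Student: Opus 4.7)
The preceding paragraph makes clear that the claim concerns the $d=1$ case (its content would be false for the large $d$ on which the explicit scheme actually relies). My plan is first to identify $\mathcal{I}_{1,A}$ with $A^{\perp}$ so that the samples become noisy drawings of vectors in $A^{\perp}$ and $A$ respectively, then to use cross-correlation between the two sides of the scheme to separate clean from noisy samples, after which $A^{\perp}$ is read off as the $\mathbb{F}_2$-span of the clean $w_i$'s and $A$ is recovered by taking its annihilator.

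Because $0\in A$, every polynomial in $\mathcal{I}_{1,A}$ has zero constant term, so it has the form $v\mapsto w\cdot v$ with $w\in A^{\perp}$; this identifies $\mathcal{I}_{1,A}$ canonically with $A^{\perp}$, and symmetrically $\mathcal{I}_{1,A^{\perp}}$ with $A$. A draw from $\mathcal{R}_{1,A,\beta n,\epsilon}$ therefore becomes a list of vectors $w_1,\ldots,w_{\beta n}$, of which a $(1-\epsilon)$ fraction are i.i.d.\ uniform in $A^{\perp}$ and the remaining $\epsilon\beta n$ are i.i.d.\ uniform in $(A'_i)^{\perp}$ for fresh uniform random $n/2$-dimensional subspaces $A'_i$; a draw from $\mathcal{R}_{1,A^{\perp},\beta n,\epsilon}$ analogously gives $u_1,\ldots,u_{\beta n}$, mostly in $A$.

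The attack computes, for each index $i$, the statistic $s_i:=\#\{j:w_i\cdot u_j=0\}$. If $w_i\in A^{\perp}$ is clean, its inner products with the $(1-\epsilon)\beta n$ clean $u_j\in A$ are identically zero, while its inner products with the $\epsilon\beta n$ noisy $u_j$ (whose marginal on $\mathbb{F}_2^n$ is, up to a $2^{-n/2}$ correction, uniform by the transitive action of $GL_n(\mathbb{F}_2)$ on random $n/2$-dimensional subspaces) are uniform in $\{0,1\}$, so $\mathbb{E}[s_i]=(1-\epsilon/2)\beta n$. If $w_i$ is noisy, the entire row of inner products is uniform in $\{0,1\}$ independently of the $u_j$'s, so $\mathbb{E}[s_i]=\beta n/2$. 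Because $\epsilon<1/2$, these conditional expectations differ by at least $\beta n/4$; a Chernoff bound with threshold at the midpoint classifies each $w_i$ correctly with error probability $2^{-\Omega(\beta n)}$, and choosing $\beta$ a large enough constant depending only on $\epsilon$, followed by a union bound over the $\beta n$ indices, classifies every $w_i$ correctly with probability $1-2^{-\Omega(n)}$.

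Having sifted out the $(1-\epsilon)\beta n=\Omega(n)$ clean vectors in $A^{\perp}$, I would return $A$ as the annihilator of their $\mathbb{F}_2$-span; a standard probabilistic argument shows $\Omega(n)$ independent uniform samples from the $n/2$-dimensional space $A^{\perp}$ span it with probability $1-2^{-\Omega(n)}$, so this span equals $A^{\perp}$ exactly and $(A^{\perp})^{\perp}=A$ is then recovered by linear algebra. The step that needs the most care is the Chernoff bound: one must check that the $\beta n/4$ gap between the two conditional expectations of $s_i$ dwarfs the $O(\sqrt{\beta n})$ standard deviation once $\beta$ is a sufficiently large constant, and that the $2^{-n/2}$ non-uniformity in the marginal of a noisy $w_i$ shifts $\mathbb{E}[s_i]$ only by $O(\beta)$, which is negligible compared to the constant-order gap; both facts are routine, so no real obstacle remains.
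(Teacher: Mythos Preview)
Your proposal is correct and follows essentially the same approach as the paper: identify degree-$1$ polynomials vanishing on $A$ with vectors in $A^{\perp}$ (and dually for $A^{\perp}$), then use the cross-structure---testing each candidate vector from one side against the samples from the other side by counting vanishing inner products---to separate clean from noisy samples and recover a basis. The only cosmetic difference is that the paper invokes Lemma~\ref{noisyunique} directly (with threshold $\epsilon\beta n$ and $\beta>\frac{3}{(1-2\epsilon)^2}$) rather than re-deriving the Chernoff estimate inline as you do.
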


\begin{proof}
Let $p_{1},\ldots,p_{m}$\ and $q_{1},\ldots,q_{m}$\ be homogeneous linear
polynomials, of which a $1-\epsilon$\ fraction vanish on $A$ and $A^{\perp}$
respectively. \ Then the key observation is that each $p_{i}$\ vanishes on
$A$\ if and only if it has the form $p_{i}\left(  v\right)  =u_{i}\cdot
v$\ for some $u_{i}\in A^{\perp}$, while each $q_{i}$\ vanishes on $A^{\perp}%
$\ if and only if it has the form $q_{i}\left(  v\right)  =w_{i}\cdot v$\ for
some $w_{i}\in A$. \ But by Lemma \ref{noisyunique}, if $\beta>\frac
{3}{\left(  1-2\epsilon\right)  ^{2}}$, then for each $i\in\left[  m\right]
$, we can efficiently \textit{decide} whether $u_{i}\in A^{\perp}$ by counting
the number of $j$'s for which $q_{j}\left(  u_{i}\right)  =0$, and can
likewise decide whether $w_{i}\in A$\ by counting the number of $j$'s for
which $p_{j}\left(  w_{i}\right)  =0$. \ Thus we can learn $\Theta\left(
n\right)  $ random elements of $A$ or $A^{\perp}$, and thereby recover a basis
for $A$.
\end{proof}

There \textit{might} be a more sophisticated attack for higher degrees, but
this is suggested only weakly by the existence of an attack in the linear
case. \ Indeed, the relation between the complementary linear subspaces
$A$\ and $A^{\perp}$\ is precisely the sort of structure that should be
preserved by linear maps, but \textit{not} by higher-degree polynomials!

For degree-$2$ polynomials, it is possible to obtain a similar attack which
recovers $A$ from only a \textit{single }sample. \ This attack relies on the
observation that quadratics have an easily-computed canonical form
\cite{bffp:ip1s}, from which a basis for $A$ can be extracted in polynomial
time. \ The essential problem is that quadratic polynomials are very closely
related to bilinear forms, and that powerful methods from linear algebra can
therefore be applied to them.

Fortunately, the linear structure seems to be computationally obscured when
$d\geq3$. \ This phenomenon is related to the sharp discontinuity in the
difficulty of tensor problems with order $3$ and higher. \ More concretely,
the coefficients of a degree-$d$ polynomial can be viewed as the entries of an
order-$d$ tensor, and the existence of an attack in the degree $d=2$ case
corresponds to the possibility of efficient operations on order-$2$ tensors.
\ Basic operations on order-$3$ tensors are $\mathsf{NP}$-hard \cite{hl:tnp},
however, and this suggests that analogous attacks might not exist against
degree-$3$ polynomials.

This state of affairs is reflected in existing attacks on a standard
cryptographic assumption called \textit{polynomial isomorphism with one
secret}. \ Here we are given two polynomials $p,q$\ which are related by an
unknown linear change of coordinates $L$, and the task is to find such an $L$.
\ For degree-$2$ polynomials, this problem can be easily solved in polynomial
time \cite{bffp:ip1s}, but already for degree-$3$ polynomials the best known
attacks take exponential time \cite{cgp:ip1s,gms:ip1s,bffp:ip1s}. \ However,
if an attacker is given $n$ bits of partial information about the linear
transformation, then even in the $d=3$\ case, it becomes possible to find the
linear transformation that relates the polynomials \cite{bffp:ip1s}. \ This
does \textit{not} directly facilitate an attack on our assumption, but it
suggests that a similar attack \textit{might} be possible when $d=3$, since an
attacker is only required to succeed with $2^{-n/2}$ probability.
\ Fortunately, this attack seems to rely on the particular structure of degree
$2$ and $3$ polynomials. \ Of course it is possible that similar algorithms
may be discovered for higher-degree polynomials, but this would represent an
advance in algebraic cryptanalysis.

\section{Private-Key Quantum Money\label{QUERYSEC}}

Recall that a \textit{private-key} quantum money scheme is one where only the
bank itself is able to verify banknotes, using an $n$-bit key
$k=k_{\operatorname*{private}}=k_{\operatorname*{public}}$ that it keeps a
closely-guarded secret. \ Compensating for this disadvantage, private-key
schemes are known with much stronger security guarantees than seem possible
for public-key schemes.

In particular, as mentioned in Section \ref{HISTORY}, already forty years ago
Wiesner \cite{wiesner} described how to create private-key quantum money that
is \textit{information-theoretically secure}. \ In Wiesner's scheme, each
banknote consists of $n$ unentangled qubits together with a classical serial
number $s$. \ Wiesner's scheme also requires a giant database of serial
numbers maintained by the bank, or in our setting, access to a random oracle
$R$. \ But in followup work, BBBW \cite{bbbw}\ pointed out that we can replace
$R$ by any \textit{pseudorandom function family} $\left\{  f_{k}\right\}
_{k}$, to obtain a private-key quantum money scheme that is computationally
secure, \textit{unless} a polynomial-time algorithm can distinguish the
$f_{k}$'s\ from random functions.

Strangely, we are unaware of any rigorous proof of the security of Wiesner's
scheme until recently. \ However, answering a question by one of
us,\footnote{See
http://theoreticalphysics.stackexchange.com/questions/370/rigorous-security-proof-for-wiesners-quantum-money}
Molina, Vidick and Watrous \cite{molina}\ have now supplied the key ingredient
for a security proof. \ Specifically they show that, if a counterfeiter tries
to copy an $n$-qubit banknote $\left\vert \$\right\rangle $\ in Wiesner's
scheme, then the output can have squared fidelity at most $\left(  3/4\right)
^{n}$ with $\left\vert \$\right\rangle ^{\otimes2}$. \ (They also show that
this is tight: there \textit{exists} a non-obvious counterfeiting strategy
that succeeds with $\left(  3/4\right)  ^{n}$\ probability.)

To complete the security proof, one needs to show that, even given
$q$\ banknotes $\left\vert \$_{1}\right\rangle ,\ldots,\left\vert
\$_{q}\right\rangle $, a counterfeiter cannot prepare an additional banknote
with non-negligible probability (even with a new serial number). \ In a
forthcoming paper \cite{aar:private}, we will show how to adapt the methods of
Section \ref{DEF} to prove that claim. \ Briefly, one can first define a
notion of \textit{private-key mini-schemes}, in close analogy to public-key
mini-schemes. \ The work of Molina et al.\ \cite{molina}\ then directly
implies the security of what we call the \textquotedblleft Wiesner
mini-scheme.\textquotedblright\ \ Next, one can give a general reduction,
showing how to construct a full-blown private-key quantum money scheme
$\mathcal{S}$ starting from

\begin{enumerate}
\item[(1)] any private-key \textit{mini}-scheme $\mathcal{M}$, and

\item[(2)] any random or pseudorandom function family $R$.
\end{enumerate}

Though the details turn out to be more complicated in the private-key case,
the proof of correctness for this reduction is conceptually similar to the
proof of Theorem \ref{compose}. \ Namely, one shows that any counterfeiter
would yield \textit{either} a break of the underlying mini-scheme
$\mathcal{M}$, or \textit{else} a way to distinguish $R$ from a random
function. \ Notice that the analysis is completely unified: if $R$ is a
\textquotedblleft true\textquotedblright\ random oracle, then we get
information-theoretic security (as in Wiesner's scheme), while if $R$ is
pseudorandom, then we get computational security (as in the BBBW scheme).

Unfortunately, as pointed out by Lutomirski \cite{lutomirski:attack} and
Aaronson \cite{aar:qcopy}, the Wiesner and BBBW schemes both have a serious
security hole. \ Namely, suppose a counterfeiter $C$\ can repeatedly submit
alleged banknotes\ to a \textquotedblleft na\"{\i}ve and trusting
bank\textquotedblright\ for verification. \ Given a quantum state $\sigma$,
such a bank not only tells $C$ whether the verification procedure accepted or
rejected, but also, in either case, \textit{gives the post-measurement state
}$\widetilde{\sigma}$\textit{ back to }$C$. \ Then starting from a single
valid banknote $\left\vert \$\right\rangle $, we claim that $C$\ can recover a
complete classical description of $\left\vert \$\right\rangle $, using
$O\left(  n\log n\right)  $\ queries to the bank. \ Once it has such a
description, $C$ can of course prepare as many copies of $\left\vert
\$\right\rangle $\ as it likes.

The attack is simple: let $\left\vert \$\right\rangle =\left\vert \theta
_{1}\right\rangle \cdots\left\vert \theta_{n}\right\rangle $ (we omit the
classical serial number $s$, since it plays no role here). \ Then for each
$i\in\left[  n\right]  $, the counterfeiter tries \textquotedblleft swapping
out\textquotedblright\ the $i^{th}$\ qubit $\left\vert \theta_{i}\right\rangle
$\ and replacing it with $\left\vert b\right\rangle $, for each of the four
possibilities\ $\left\vert b\right\rangle \in\left\{  \left\vert
0\right\rangle ,\left\vert 1\right\rangle ,\left\vert +\right\rangle
,\left\vert -\right\rangle \right\}  $. \ It then uses $O\left(  \log
n\right)  $ queries to the bank, to estimate the probability that the state
$\left\vert \theta_{1}\right\rangle \cdots\left\vert \theta_{i-1}\right\rangle
\left\vert b\right\rangle \left\vert \theta_{i+1}\right\rangle \cdots
\left\vert \theta_{n}\right\rangle $\ passes the verification test. \ By doing
so, $C$ can learn a correct value of $\left\vert \theta_{i}\right\rangle
$\ with success probability $1-o\left(  1/n\right)  $. \ The crucial point is
that none of these queries damage the\ qubits \textit{not} being investigated
($\left\vert \theta_{j}\right\rangle $\ for $j\neq i$), since the bank
measures those qubits\ in the correct bases. \ Therefore $C$\ can reuse the
same banknote for each query.

More generally, recall from Section \ref{SCHEMES}\ that we call a private-key
quantum money scheme\ \textit{query-secure}, if it remains secure even
assuming the counterfeiter $C$\ can make adaptive queries\ to $\mathsf{Ver}%
\left(  k,\cdot\right)  $. \ Then we saw that the Wiesner and BBBW schemes are
\textit{not} query-secure. \ Recently, Farhi et al.\ \cite{farhi:restore}
proved a much more general \textquotedblleft no-go\textquotedblright%
\ theorem---which says intuitively that, if we want query-secure quantum
money, then the banknotes \textit{must} hide information in the
\textquotedblleft global correlations\textquotedblright\ between large numbers
of qubits.

\begin{theorem}
[Adaptive Attack on Wiesner-Like Schemes \cite{farhi:restore}]%
\label{queryinsecure}No quantum money scheme can be query-secure, if

\begin{enumerate}
\item[(i)] the banknotes have the form $\left\vert \$_{s}\right\rangle
=\left\vert s\right\rangle \left\vert \psi_{s}\right\rangle $,

\item[(ii)] verification of $\left(  s,\rho\right)  $ consists of projecting
$\rho$ onto $\left\vert \psi_{s}\right\rangle \left\langle \psi_{s}\right\vert
$, and

\item[(iii)] $\left\vert \psi_{s}\right\rangle $ can be reconstructed uniquely
from the statistics of $T=\operatorname*{poly}\left(  n\right)  $
efficiently-implementable measurements $M_{1},\ldots,M_{T}$, each of which has
at most $\operatorname*{poly}\left(  n\right)  $\ possible outcomes.
\end{enumerate}
\end{theorem}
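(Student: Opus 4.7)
The plan is to exhibit a polynomial-time counterfeiter that, given a single valid banknote $\left(s,\left\vert\psi_{s}\right\rangle\right)$ and adaptive query access to $\mathsf{Ver}(k,\cdot)$, extracts a classical description of $\left\vert\psi_{s}\right\rangle$ and hence prepares unboundedly many copies. Condition (ii) is what makes this possible: the bank's verifier, returning both the accept/reject bit and the post-measurement state, implements exactly the reflection oracle $U_{\psi_{s}}=I-2\left\vert\psi_{s}\right\rangle\left\langle\psi_{s}\right\vert$ that drives the amplitude-amplification machinery of Section \ref{SEARCH}. Condition (i) lets me read $s$ off in the clear so I know which verifier to call. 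Condition (iii) is the bridge from quantum statistics to a classical description.

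The workhorse is a \emph{measure-then-restore} subroutine. With the register in state $\left\vert\psi_{s}\right\rangle$, I coherently apply $M_{i}$ into a fresh ancilla, measure the ancilla, and read off an outcome $j$ occurring with probability $p_{j}=\left\Vert P_{j}\left\vert\psi_{s}\right\rangle\right\Vert^{2}$; the register is then left in $P_{j}\left\vert\psi_{s}\right\rangle/\sqrt{p_{j}}$, whose fidelity with $\left\vert\psi_{s}\right\rangle$ is $\sqrt{p_{j}}$. I now feed this state into the faster fixed-point search of Theorem \ref{combined}, using $U_{\psi_{s}}$ (i.e.\ the bank) as the good-subspace oracle. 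After $O(1/\sqrt{p_{j}})$ queries to the bank I recover a state $1/\mathrm{poly}(n)$-close to $\left\vert\psi_{s}\right\rangle$, at which point the cycle can be repeated. By Lemma \ref{goodasnew}, the cumulative trace-distance drift from an ideal run with a fresh banknote at every step is bounded by a sum of $1/\mathrm{poly}(n)$ terms and can be made arbitrarily small.

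I run this subroutine $R=\mathrm{poly}(n)$ times for each of the $T$ measurements $M_{1},\ldots,M_{T}$. The expected number of verifier queries inside one sampling pass is, by Cauchy--Schwarz, $\sum_{j} p_{j}\cdot O(1/\sqrt{p_{j}})=O\bigl(\sum_{j}\sqrt{p_{j}}\bigr)=O(\sqrt{K})$ where $K\le\mathrm{poly}(n)$ is the number of outcomes of $M_{i}$; summed over all $i$ and all samples the total query count to the bank is still polynomial. Standard Chernoff/VC arguments then give, with high probability, an empirical distribution for each $M_{i}$ within inverse-polynomial total variation of the true $\bigl(\left\Vert P_{j}\left\vert\psi_{s}\right\rangle\right\Vert^{2}\bigr)_{j}$. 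By hypothesis (iii), these distributions uniquely determine $\left\vert\psi_{s}\right\rangle$, so a classical post-processing step yields an approximate description of $\left\vert\psi_{s}\right\rangle$. The counterfeiter then prepares $\left(s,\left\vert\psi_{s}\right\rangle\right)^{\otimes q}$ for any $q$, contradicting query-security.

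The technical obstacle I expect is the restoration step for outcomes with exponentially small $p_{j}$: the bound $O(1/\sqrt{p_{j}})$ from Theorem \ref{combined} is only useful once a lower estimate of $p_{j}$ is known, and a naive restoration on a tiny $p_{j}$ would blow up the query budget. The fix is a short preliminary amplitude-estimation phase that cheaply produces a constant-factor estimate of $p_{j}$, combined with the observation that outcomes with $p_{j}$ below an inverse polynomial contribute at most $1/\mathrm{poly}(n)$ to the total variation and can simply be truncated; whenever such a rare outcome is detected, the banknote is abandoned for that sample and re-prepared by running Theorem \ref{combined} directly from any verifier-rejected state back to $\left\vert\psi_{s}\right\rangle$ (whose overlap is only polynomially small since we started from $\left\vert\psi_{s}\right\rangle$ a polynomial number of rounds ago). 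Managing this bookkeeping, rather than any fundamentally new ingredient, is what the full proof needs to handle carefully.
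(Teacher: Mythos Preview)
First, note that the paper does not itself prove this theorem; it is quoted from Farhi et al.\ \cite{farhi:restore}, so there is no in-paper argument to compare against. Your high-level plan---sample an outcome of $M_i$, use the verifier to restore $\left\vert\psi_s\right\rangle$, repeat to estimate the outcome statistics of every $M_i$, then classically reconstruct $\left\vert\psi_s\right\rangle$---is exactly the quantum state restoration strategy of \cite{farhi:restore}.

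The genuine gap is your assertion that the bank hands you the coherent reflection $U_{\psi_s}=I-2\left\vert\psi_s\right\rangle\left\langle\psi_s\right\vert$. It does not: by the definition of query-security in Section~\ref{SCHEMES}, the bank performs the \emph{destructive} two-outcome measurement $\{\left\vert\psi_s\right\rangle\left\langle\psi_s\right\vert,\ I-\left\vert\psi_s\right\rangle\left\langle\psi_s\right\vert\}$ and returns the classical bit together with the collapsed state. A measurement is not a unitary, and it cannot be uncomputed, so neither Lemma~\ref{aa} nor Theorem~\ref{combined} is available to you. (You also lack $U_{\mathrm{Init}}$: you cannot reflect about $P_j\left\vert\psi_s\right\rangle/\sqrt{p_j}$, since you hold only one copy of the banknote.) The mechanism that actually works in \cite{farhi:restore} is different: after seeing outcome $j$ you are confined to the two-dimensional subspace spanned by $\left\vert\psi_s\right\rangle$ and $P_j\left\vert\psi_s\right\rangle$, and you \emph{alternate} between the bank's measurement and the two-outcome measurement $\{P_j,\,I-P_j\}$, which you can implement yourself since $M_i$ is efficient. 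Both measurements keep the state inside this $2$-plane, and the resulting random walk hits $\left\vert\psi_s\right\rangle$ after an expected $O(1/p_j)$ bank queries (with additional ideas, $O(1/\sqrt{p_j})$). Once you replace the appeal to Theorem~\ref{combined} by this alternating-measurement argument, the rest of your outline---the handling of rare outcomes, the Chernoff accumulation of statistics, and the final reconstruction---goes through essentially as written.
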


On the positive side, any \textit{public-key} quantum money scheme---for
example, our multivariate polynomial scheme from Section \ref{EXPLICIT}%
---immediately yields a query-secure scheme with the same security guarantee.
\ This is because a counterfeiter who knows the code of $\mathsf{Ver}$ can
easily simulate oracle access to $\mathsf{Ver}$. \ But can we do any better
than that, and construct a query-secure money scheme whose security is
\textit{unconditional} (as in Wiesner's scheme), or else based on a
pseudorandom function\ (as in the BBBW scheme)?

In the forthcoming paper \cite{aar:private}, we will answer this question in
the affirmative, by \textit{directly}\ adapting the hidden subspace scheme
from Section \ref{COR} (i.e., the scheme based on a classical oracle). \ Since
the idea is an extremely simple one, let us sketch it here. \

\begin{theorem}
[Query-Secure Variant of Wiesner's Scheme]\label{wiesnerfix}Relative to a
random oracle $R$,\footnote{Or alternatively, assuming the bank has access to
a giant random number table, as in Wiesner's original setup \cite{wiesner}.}
there exists a private-key quantum money scheme, with perfect completeness and
$2^{-\Omega\left(  n\right)  }$ soundness error, that is
information-theoretically query-secure. \ One can also replace the random
oracle $R$\ by a pseudorandom function family $\left\{  f_{k}\right\}  _{k}$,
to obtain a private-key quantum money scheme, with no oracle, that is
query-secure assuming that the $f_{k}$'s cannot be distinguished from random
in quantum polynomial time.
\end{theorem}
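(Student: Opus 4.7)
The plan is to adapt the hidden subspace mini-scheme of Section~\ref{COR} to the private-key setting, so that the counterfeiter's only window into the secret subspaces is the bank's verification oracle. Concretely, I would let the bank's key $k$ either be a seed of a quantum-secure PRF $f_k$, or, in the unconditional version, an index into a random oracle $R$ maintained by the bank. For each serial number $s \in \{0,1\}^{3n}$, the bank deterministically derives an $n/2$-dimensional subspace $A_s \leq \mathbb{F}_2^n$ from $f_k(s)$ (respectively $R(s)$). A banknote has the form $(s,|A_s\rangle)$, and to verify $(s,\rho)$ the bank reconstructs $A_s$, applies $V_{A_s} = H_2^{\otimes n}\mathbb{P}_{A_s^{\perp}}H_2^{\otimes n}\mathbb{P}_{A_s}$, and returns both the accept/reject bit and the post-measurement state. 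Perfect completeness is then immediate from Lemma~\ref{testworks}, which identifies $V_{A_s}$ with the rank-$1$ projector $|A_s\rangle\langle A_s|$.

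For information-theoretic query-security in the random oracle model, I would reduce directly to Theorem~\ref{averagecase}. Fix a polynomial-query counterfeiter $C$ that, given $q = \operatorname{poly}(n)$ banknotes $(s_1,|A_{s_1}\rangle),\ldots,(s_q,|A_{s_q}\rangle)$ and $T = \operatorname{poly}(n)$ adaptive verification queries, produces $r > q$ accepted banknotes with probability $\varepsilon$. By pigeonhole, in any successful run either (a) two of $C$'s output banknotes share a serial $s^{*} \in \{s_1,\ldots,s_q\}$, or (b) one output banknote carries a fresh serial $s^{*} \notin \{s_1,\ldots,s_q\}$. Case (b) is killed by the standard $\Omega(2^{n/2})$ Grover lower bound for preparing the hidden state $|A_{s^{*}}\rangle$ from oracle access to $|A_{s^{*}}\rangle\langle A_{s^{*}}|$ alone, since no initial copy is available.

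For case (a) I would build a public-key counterfeiter $C'$ against the mini-scheme of Section~\ref{COR}: $C'$ guesses the target index $\ell \in [q]$ uniformly at random, plants its own challenge state $|A\rangle$ at slot $\ell$, and samples the remaining $q-1$ subspaces internally. Because the random oracle makes the $A_s$'s for distinct serials mutually independent, $C'$ can simulate every verification query from $C$ faithfully: on serials it picked itself it applies $V$ with its own known subspace, on serial $s_{\ell}$ it simulates $V_A$ using $O(1)$ queries to $U_A$ and $U_{A^{\perp}}$ via the decomposition of Lemma~\ref{testworks}, and on fresh serials it lazily samples a new random subspace. Selecting two output registers with serial $s_{\ell}$ then yields a state close to $|A\rangle^{\otimes 2}$ with success probability $\Omega(\varepsilon/q)$, and Theorem~\ref{averagecase} forces the resulting $O(T)$ oracle queries to be $\Omega(\sqrt{\varepsilon/q}\cdot 2^{n/4})$, so $\varepsilon = 2^{-\Omega(n)}$ whenever $T$ and $q$ are polynomial.

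The pseudorandom variant follows by a routine hybrid argument: any polynomial-time attack that succeeds with non-negligible probability against the $\{f_k\}$-based scheme, but only with $2^{-\Omega(n)}$ probability against the random-oracle scheme, immediately yields a polynomial-time quantum distinguisher for $f_k$ versus a uniformly random function. The main obstacle I anticipate is making the simulator's bookkeeping rigorous when $C$'s queried states are adaptive and entangled across many serial numbers, particularly the lazy sampling of fresh $A_t$'s in a way that stays consistent with $C$'s a priori view of the random oracle. The saving observation is that the post-measurement state the bank returns depends only on the two projectors $\mathbb{P}_{A_s}$ and $\mathbb{P}_{A_s^{\perp}}$, both of which $C'$ can implement exactly using either its own chosen subspace or the oracle $U_{A^{\ast}}$; fleshing this out is the main content of the forthcoming paper \cite{aar:private}.
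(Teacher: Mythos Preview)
Your proposal is correct and follows essentially the same approach as the paper's proof sketch: define the banknotes as $\left\vert s\right\rangle \left\vert A_{k,s}\right\rangle$ for subspaces derived from the random oracle (or PRF), observe that each adaptive $\mathsf{Ver}$ query is nothing more than the projector $V_{A_{k,s}}$ and hence can be simulated with $O(1)$ calls to the membership oracles $U_{A_{k,s}}$ and $U_{A_{k,s}^{\perp}}$, and thereby reduce any query-secure attack to a counterfeiter against the public-key hidden-subspace scheme of Section~\ref{COR}. Your explicit pigeonhole split into ``fresh serial'' versus ``repeated serial'' and the guess-the-index embedding are exactly the details the paper defers to~\cite{aar:private}; the paper's sketch collapses these into the one-line observation that $\mathsf{Ver}$ is simulable from $U_{A^{\ast}}$, so you are fleshing out rather than diverging from their argument.
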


\begin{proof}
[Proof Sketch]For each key $k$\ and a serial number $s$, we will think of the
random oracle $R$ as encoding a classical description $R\left(  k,s\right)
$\ of a subspace $A_{k,s}\leq\mathbb{F}_{2}^{n}$, which is uniformly random
subject to $\dim\left(  A_{k,s}\right)  =n/2$. \ Let $\left\vert
A_{k,s}\right\rangle $\ be a uniform superposition over $A_{k,s}$. \ Then the
private-key money scheme $\mathcal{S}=\left(  \mathsf{KeyGen},\mathsf{Bank}%
,\mathsf{Ver}\right)  $ is defined as follows:

\begin{itemize}
\item $\mathsf{KeyGen}\left(  0^{n}\right)  $ generates an $n$-bit key $k$
uniformly at random.

\item $\mathsf{Bank}\left(  k\right)  $ outputs a banknote $\left\vert
\$_{s}\right\rangle :=\left\vert s\right\rangle \left\vert A_{k,s}%
\right\rangle $, for a random serial number $s\in\left\{  0,1\right\}  ^{n}$.

\item $\mathsf{Ver}\left(  k,\left(  s,\rho\right)  \right)  $ applies a
projective measurement that accepts $\rho$ with probability $\left\langle
A_{k,s}|\rho|A_{k,s}\right\rangle $.
\end{itemize}

Now, suppose it were possible to break $\mathcal{S}$\ (i.e., to counterfeit
$\left\vert A_{k,s}\right\rangle $), using $\operatorname*{poly}\left(
n\right)  $\ adaptive queries to $\mathsf{Ver}\left(  k,\cdot\right)  $.
\ Then we claim that it would \textit{also} be possible to break our
\textit{public-key} scheme\ from Section \ref{COR}, and thereby contradict the
unconditional security proof for the latter! \ The reason is simply that any
query to $\mathsf{Ver}$, of the form $\mathsf{Ver}\left(  k,\left(
s,\rho\right)  \right)  $, can easily be simulated using queries to
$U_{A_{k,s}}$\ and $U_{A_{k,s}^{\bot}}$, the membership oracles for $A_{k,s}%
$\ and $A_{k,s}^{\bot}$\ respectively that are available to a
counterfeiter\ against the public-key scheme.

Finally, suppose we replace $R\left(  k,s\right)  $ by a pseudorandom function
$f_{k}\left(  s\right)  $. \ Then just like with the original BBBW scheme
\cite{bbbw}, we can argue as follows. \ Since we already showed that
$\mathcal{S}$\ is information-theoretically secure when instantiated with a
\textquotedblleft true\textquotedblright\ random function, any break of
$\mathcal{S}$\ in the pseudorandom case would thereby distinguish the function
$f_{k}$\ from random.
\end{proof}

\section{Open Problems\label{OPEN}}

The \textquotedblleft obvious\textquotedblright\ problem is to better
understand the security of our explicit scheme based on polynomials. \ Are
there nontrivial attacks, for example using Gr\"{o}bner-basis algorithms?
\ Can we base the security of our scheme---or a related scheme---on some
cryptographic assumption that does \textit{not} involve exponentially-small
success probabilities? \ What happens as we change the field size or
polynomial degree? \ Does \textquotedblleft hiding\textquotedblright\ a
subspace $A\leq\mathbb{F}_{2}^{n}$\ in the way we suggest, as the set of
common zeroes of multivariate polynomials $p_{1},\ldots,p_{m}:\mathbb{F}%
_{2}^{n}\rightarrow\mathbb{F}_{2}$, have other cryptographic applications, for
example to \textit{program obfuscation} \cite{baraketal}?

Of course, there is also tremendous scope for inventing new schemes, which
might be based on different assumptions and have different strengths and weaknesses.

Let us move on to some general questions about public-key quantum money.
\ First, is there an unconditionally-secure public-key quantum money scheme
relative to a \textit{random} oracle $R$? \ (Recall that Wiesner's original
scheme \cite{wiesner}\ was unconditionally-secure and used only a random
oracle, but was private-key. \ Meanwhile, our scheme from Section \ref{COR} is
unconditionally-secure and public-key, but requires a non-random oracle.)
\ Second, is there a public-key quantum money scheme where the banknotes
consist of \textit{single, unentangled qubits}, as in Wiesner's scheme? \ Note
that the results of Farhi et al.\ \cite{farhi:restore}\ imply that, if such a
scheme exists, then it cannot be projective. \ Third, is there a general way
to amplify soundness error in quantum money schemes?\footnote{Theorem
\ref{miniamp} gives \textit{some} soundness amplification for projective
schemes: namely, from constant to $1/\operatorname*{poly}\left(  n\right)  $.
\ Here we are asking whether one can do anything better.} \ (We show how to
amplify \textit{completeness} error in Appendix \ref{COMPLETENESS}.)

\subsection{Quantum Copy-Protection and More\label{COPY}}

Quantum money is just \textit{one} novel cryptographic use for the No-Cloning
Theorem. \ Given essentially \textit{any} object of cryptographic interest,
one can ask whether quantum mechanics lets us make the object uncloneable.
\ Section \ref{MOTIVATION} already discussed one example---\textit{uncloneable
signatures}---but there are many others, such as commitments and
proofs.\footnote{Even within complexity theory, it would be interesting to
study the class $\mathsf{QMA}$\ (Quantum Merlin-Arthur) subject to the
constraint that witnesses must be hard to clone---or alternatively, that
witnesses must be \textit{easy} to clone!}

Along those lines, Aaronson \cite{aar:qcopy} proposed a task that, if
achievable, would arguably be an even more dramatic application of the
No-Cloning Theorem than quantum money:\ namely, \textit{quantum software
copy-protection}. \ He gave explicit schemes---which have not yet been
broken---for copy-protecting a restricted class of functions, namely the
\textit{point functions}. \ In these schemes, given a \textquotedblleft
password\textquotedblright\ $s\in\left\{  0,1\right\}  ^{n}$, a software
vendor can prepare a quantum state $\left\vert \psi_{s}\right\rangle $, which
allows its holder to \textit{recognize} $s$: in other words, to decide whether
$x=s$ given $x\in\left\{  0,1\right\}  ^{n}$\ as input. \ On the other hand,
given $\left\vert \psi_{s}\right\rangle $, it seems intractable not only to
\textit{find }$s$\textit{ }for oneself, but even to prepare a \textit{second}
quantum state with which $s$ can be recognized.

Admittedly, recognizing passwords is an extremely restricted functionality.
\ However, relative to a quantum oracle, Aaronson \cite{aar:qcopy}\ also
described a scheme to quantumly copy-protect \textit{arbitrary} programs, just
as well as if the software vendor were able to hand out uncloneable black
boxes.\footnote{As usual, full details have not yet appeared yet.} \ In the
spirit of this paper, we can now ask: is there likewise a way to quantumly
copy-protect arbitrary programs relative to a \textit{classical} oracle? \ We
conjecture that the answer is yes, and in fact we have plausible candidate
constructions, which are directly related to the hidden-subspace money scheme
of Section \ref{COR}. \ However, the security of those constructions seems to
hinge on the following conjecture.

\begin{conjecture}
[Direct Product for Finding Black-Box Subspace Elements]\label{bbsubspace}Let
$A$ be a uniformly-random subspace of $\mathbb{F}_{2}^{n}$\ satisfying
$\dim\left(  A\right)  =n/2$. \ Then given membership oracles for both
$A$\ and $A^{\bot}$, any quantum algorithm needs $2^{\Omega\left(  n\right)
}$ queries to find two distinct nonzero elements $x,y\in A$, with success
probability $\Omega\left(  2^{-n/2}\right)  $.
\end{conjecture}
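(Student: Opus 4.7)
The plan is to establish the conjecture via a quartic-in-$T$ upper bound on success probability: any $T$-query quantum algorithm outputs two distinct nonzero elements of $A$ with probability at most $O(T^4/2^n)$. This matches the natural ``Strategy D'' upper bound---two sequential Grover searches of $T/2$ queries each, each succeeding with probability $O(T^2/2^{n/2})$---and setting the bound $\geq 2^{-n/2}$ yields $T = \Omega(2^{n/8}) = 2^{\Omega(n)}$, as desired.

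I would first try the inner-product adversary method from Section \ref{THEMETHOD}, using the same relation $R$ as in Theorem \ref{nocopyc}: $(A, A') \in R$ iff $\dim(A \cap A') = n/2 - 1$, which gives $\varepsilon = 2^{-n/2}$. Lemma \ref{innerprod} then yields the progress bound $\Delta_T \geq 1 - O(T \cdot 2^{-n/4})$. However, this approach fails on its own: success probability $\Omega(2^{-n/2})$ on both $A$ and $A'$ forces only a trace-distance discrepancy of order $2^{-n/2}$ between the classical output distributions, hence only a fidelity decrement of order $2^{-n}$ between the quantum states, giving the trivial bound $T = \Omega(2^{-3n/4})$. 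The standard inner-product adversary method is thus too blunt in this small-success-probability regime. Note that random self-reducibility (in the style of Theorem \ref{averagecase}, via a random $f \in GL_n(\mathbb{F}_2)$) lets us freely convert between worst-case and average-case $A$, but this by itself does not overcome the adversary-method obstacle.

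The main obstacle, then, is to extract the quartic dependence $T^4/2^n$ rather than the quadratic $T^2/2^{n/2}$ that the adversary method natively produces. Two complementary refinements look promising. First, the polynomial method of Beals et al., in the spirit of Aaronson's collision bound \cite{aar:col}: the success probability is a polynomial $P$ of degree $\leq 2T$ in the $2^n$ oracle indicator variables, and averaging $P$ over random $A$ with $\dim(A) = n/2$, then exploiting $GL_n(\mathbb{F}_2)$-symmetry, should reduce $P$ to a low-dimensional symmetric polynomial amenable to a Markov-style degree bound. Second, a quantum direct-product theorem tailored to subspace search, in the style of Klauck, \v{S}palek, and de Wolf: since Strategy D decomposes into two correlated Grover subproblems, a dedicated direct-product argument should yield the $T^4/2^n$ bound. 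Either route reduces the problem to combinatorics on $GL_n(\mathbb{F}_2)$-invariant functions on the Grassmannian of $n/2$-subspaces, rather than the familiar symmetric-group invariants that appear in unstructured search lower bounds; developing that representation-theoretic machinery is where I expect the main work to lie.
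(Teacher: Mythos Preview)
The statement you are attempting to prove is explicitly a \emph{conjecture} in the paper, not a theorem; the paper provides no proof. It appears in the open-problems section (Section~\ref{COPY}), where the authors write that ``a proof of Conjecture~\ref{bbsubspace} would be an important piece of formal evidence for Conjecture~\ref{subspacedpt}.'' So there is no paper proof to compare against.

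That said, your diagnosis of \emph{why} the paper's own machinery does not settle this is exactly right, and is presumably why the authors left it open. The inner-product adversary method of Section~\ref{THEMETHOD} controls how much a single query can decrease the expected inner product $\operatorname*{E}_{(U,V)\in R}[|\langle\Psi_t^U|\Psi_t^V\rangle|]$, but to conclude anything one needs the \emph{final} inner product to be bounded away from the initial one by a non-negligible amount. When the target success probability is only $\Omega(2^{-n/2})$, the required separation in output distributions (and hence in quantum states) is itself exponentially small, and the method yields nothing. Your two proposed routes---a polynomial-method argument exploiting $GL_n(\mathbb{F}_2)$-invariance on the Grassmannian, or a tailored direct-product theorem---are both reasonable research directions, but neither is a proof: each reduces to developing genuinely new lower-bound machinery that, as far as I know, does not yet exist in the literature. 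Your proposal is therefore a plausible research outline, not a proof, and should be labeled as such.
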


Besides its applications for copy-protection, a proof of\ Conjecture
\ref{bbsubspace}\ would be an important piece of formal evidence for
Conjecture \ref{subspacedpt}, on which we based the security of our explicit
money scheme.

\section{Appendix: Reducing Completeness Error\label{COMPLETENESS}}

When we defined quantum money schemes and mini-schemes in Section \ref{DEF},
we allowed the verifier to reject a \textit{legitimate} money state with
probability up to $1/3$. \ But of course, a money scheme with\ completeness
error $\varepsilon=1/3$\ is not very useful in practice! \ So in this
appendix, we prove that the completeness error $\varepsilon$\ can be made
exponentially small in $n$, at the cost of only a modest increase in the
soundness error $\delta$ (i.e., the probability of successful counterfeiting).

\begin{theorem}
[Completeness Amplification for Mini-Schemes]\label{ampcomp}Let $\mathcal{M}%
=\left(  \mathsf{Bank},\mathsf{Ver}\right)  $\ be a quantum money mini-scheme
with completeness error $\varepsilon<1/2$\ and soundness error $\delta
<1-2\varepsilon$. \ Then for all polynomials $p$ and all $\delta^{\prime
}>\frac{\delta}{1-2\varepsilon}$, we can construct an amplified mini-scheme
$\mathcal{M}^{\prime}=\left(  \mathsf{Bank}^{\prime},\mathsf{Ver}^{\prime
}\right)  $\ with completeness error $1/2^{p\left(  n\right)  }$\ and
soundness error $\delta^{\prime}$.
\end{theorem}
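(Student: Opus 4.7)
I plan to construct $\mathcal{M}' = (\mathsf{Bank}', \mathsf{Ver}')$ by adding a pre-verification loop to the bank and leaving the verifier essentially unchanged. Specifically, let $k := \lceil p(n)/\log_{2}(1/\varepsilon) \rceil$; define $\mathsf{Bank}'(0^{n})$ to iteratively run $\mathsf{Bank}(0^{n})$ and apply $\mathsf{Ver}$ to each candidate banknote, outputting the post-measurement state on the first accepting attempt (and a designated rejected ``$\bot$''-state if all $k$ attempts fail). Set $\mathsf{Ver}' := \mathsf{Ver}$.

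\textbf{Completeness.} For the analysis I will pass to the Naimark dilation of $\mathsf{Ver}$, absorbing its dilation ancillas into the banknote register so that $\mathsf{Ver}$ becomes a projective measurement onto some subspace $S_{s}$. In this augmented picture, the post-measurement state after any ``accept'' outcome lies entirely inside $S_{s}$, so subsequent $\mathsf{Ver}'$ invocations accept with probability exactly $1$. The only failure mode is that all $k$ pre-verification attempts reject, which by independence occurs with probability at most $\varepsilon^{k} \leq 2^{-p(n)}$, meeting the completeness target.

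\textbf{Soundness.} Suppose there is a counterfeiter $C'$ against $\mathcal{M}'$ whose success probability strictly exceeds $\delta'$. I will construct a counterfeiter $C$ against $\mathcal{M}$ as follows. Upon receiving a bank-issued banknote $(s, \rho)$, $C$ performs a \emph{single} pre-verification round---applying $\mathsf{Ver}(s, \cdot)$ to $\rho$. If the outcome is ``accept'' (probability $\geq 1 - \varepsilon$), the post-measurement state is distributed identically to a valid $\mathsf{Bank}'$-output, because the conditional distribution of ``the post-measurement state of the first accepting $\mathsf{Ver}$ trial'' is the same regardless of how many retries one is allowed; $C$ hands this state to $C'$ and outputs $C'$'s two alleged banknotes verbatim. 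Since $\mathsf{Ver}'_{2} = \mathsf{Ver}_{2}$,
\[
\Pr[\text{$C$ succeeds}] \;\geq\; (1-\varepsilon)\,\Pr[\text{$C'$ succeeds}] \;>\; (1-\varepsilon)\,\delta' \;>\; (1-\varepsilon)\,\frac{\delta}{1-2\varepsilon} \;>\; \delta,
\]
where the final inequality uses $1 - \varepsilon > 1 - 2\varepsilon > 0$. This contradicts the soundness of $\mathcal{M}$, so no such $C'$ can exist.

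\textbf{Main Obstacle.} The most delicate issue is making the projectivity reduction simultaneously preserve completeness (so that post-measurement states always pass subsequent $\mathsf{Ver}'$ calls) and not weaken soundness (so that an $\mathcal{M}'$-counterfeiter cannot exploit freedom in the ancilla register). Handling this requires carefully specifying $\mathsf{Ver}'$'s behavior on an adversarially-submitted banknote---either by accepting a hybrid non-projective formulation and absorbing the resulting per-step disturbance via Lemma~\ref{goodasnew}, or by building ancilla-consistency checks into $\mathsf{Ver}'$ so that its acceptance probability reduces correctly to that of the original $\mathsf{Ver}$. A secondary subtlety is verifying the distributional equivalence between valid $\mathsf{Bank}'$-outputs and $C$'s single-try simulation, which is ultimately the elementary fact that the conditional distribution of ``the first successful trial's post-measurement state'' is independent of the retry bound $k$.
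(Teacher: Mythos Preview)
Your approach is genuinely different from the paper's, and the ``Main Obstacle'' you flag is in fact a blocking gap rather than a technicality.

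The paper does not pre-verify; it uses $k$-fold parallel repetition. $\mathsf{Bank}'$ outputs $k$ independent banknotes $(s_1\ldots s_k,\rho_{s_1}\ldots\rho_{s_k})$, and $\mathsf{Ver}'$ runs $\mathsf{Ver}$ on each coordinate and accepts if at least a $(1-\varepsilon-\eta)$ fraction of them accept. Completeness is then a Chernoff bound on $k$ independent Bernoulli trials, with no appeal to post-measurement states or projectivity. For soundness, if a counterfeiter $C'$ makes $\mathsf{Ver}_2'$ accept, then by pigeonhole at least a $(1-2\varepsilon-2\eta)$ fraction of coordinates $j$ have \emph{both} $\mathsf{Ver}(s_j,\sigma_j)$ and $\mathsf{Ver}(s_j,\xi_j)$ accepting; embedding the single banknote to be copied at a random coordinate $i$ and returning $(\sigma_i,\xi_i)$ gives a counterfeiter against $\mathcal{M}$ with success probability $\geq (1-2\varepsilon-2\eta)\delta'$, which is where the $1-2\varepsilon$ in the statement comes from.

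Your construction, by contrast, needs the post-measurement state of an accepting $\mathsf{Ver}$ run to pass a \emph{fresh} $\mathsf{Ver}$ run with probability $1-2^{-p(n)}$. This simply fails for non-projective $\mathsf{Ver}$. Concretely, take any mini-scheme and modify $\mathsf{Ver}$ to first flip an independent biased coin and reject outright with probability $1/3$; the modified scheme still has completeness error $\leq 1/3 + \varepsilon_0$, but now the post-measurement state on accept is (up to trace distance $O(\sqrt{\varepsilon_0})$, by Lemma~\ref{goodasnew}) the \emph{original} banknote, which a fresh $\mathsf{Ver}'=\mathsf{Ver}$ still rejects with probability $\geq 1/3$. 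So your $\mathcal{M}'$ has completeness error $\Omega(1)$, not $2^{-p(n)}$. Your proposed Naimark fix---absorbing the dilation ancillas into the banknote and letting $\mathsf{Ver}'$ be the dilated projector---does make the post-accept state pass again with certainty, but now the adversary may set those ancillas arbitrarily, and the range of the dilated projector can be strictly larger than the original acceptance region; you then cannot conclude that $\mathsf{Ver}_2'$-acceptance implies anything about $\mathsf{Ver}_2$-acceptance, so your soundness reduction $\mathsf{Ver}_2'=\mathsf{Ver}_2$ breaks. Neither of the two patches you list (Lemma~\ref{goodasnew}, which only gives $O(\sqrt{\varepsilon})$ closeness, or unspecified ``ancilla-consistency checks'') closes both sides of this simultaneously. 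The parallel-repetition-with-threshold construction avoids the issue entirely, since neither its completeness nor its soundness analysis ever looks at a post-measurement state.
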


\begin{proof}
Let $k=\operatorname*{poly}\left(  n\right)  $ and $\eta>0$\ be parameters to
be determined later. \ Our construction of $\mathcal{M}^{\prime}$\ is the
\textquotedblleft obvious\textquotedblright\ one based on repetition:

\begin{itemize}
\item $\mathsf{Bank}^{\prime}\left(  0^{n}\right)  $\ outputs a composite
banknote $\$^{\prime}:=\left(  s_{1}\ldots s_{k},\rho_{s_{1}}\ldots\rho
_{s_{k}}\right)  $, where $\left(  s_{1},\rho_{s_{1}}\right)  ,\ldots,\left(
s_{k},\rho_{s_{k}}\right)  $\ are banknotes output independently by
$\mathsf{Bank}\left(  0^{n}\right)  $.

\item $\mathsf{Ver}^{\prime}\left(
\hbox{\rm\rlap/c}%
\right)  $ runs $\mathsf{Ver}\left(
\hbox{\rm\rlap/c}%
_{1}\right)  ,\ldots,\mathsf{Ver}\left(
\hbox{\rm\rlap/c}%
_{k}\right)  $, where $%
\hbox{\rm\rlap/c}%
_{1},\ldots,%
\hbox{\rm\rlap/c}%
_{k}$\ are the $\left(  s,\rho_{s}\right)  $\ pairs in the alleged composite
banknote $%
\hbox{\rm\rlap/c}%
$, and accepts if and only if at least $\left(  1-\varepsilon-\eta\right)  k$
invocations accept.
\end{itemize}

Note that $\mathsf{Ver}_{2}^{\prime}$, the amplified double verifier, then
takes as input a state of the form%
\[
\left(  s_{1}\ldots s_{k},\sigma_{1}\ldots\sigma_{k},\xi_{1}\ldots\xi
_{k}\right)  ,
\]
and accepts if and only if $\mathsf{Ver}^{\prime}\left(  s_{1}\ldots
s_{k},\sigma_{1}\ldots\sigma_{k}\right)  $\ and\ $\mathsf{Ver}^{\prime}\left(
s_{1}\ldots s_{k},\xi_{1}\ldots\xi_{k}\right)  $\ both accept. \ By choosing
$k$\ sufficiently large and applying a Chernoff bound, it is clear that we can
make the completeness error $1/2^{p\left(  n\right)  }$\ for any polynomial
$p$.

Meanwhile, suppose $\mathcal{M}^{\prime}$\ has soundness error $\delta
^{\prime}$:\ in other words, there exists a counterfeiter $C^{\prime}$\ such
that $\mathsf{Ver}_{2}^{\prime}\left(  s_{1}\ldots s_{k},C^{\prime}\left(
\$^{\prime}\right)  \right)  $\ accepts with probability $\delta^{\prime}$,
given a valid composite banknote $\$^{\prime}$. \ Then to prove the theorem,
it suffices to construct a counterfeiter $C$\ for the original mini-scheme
$\mathcal{M}$, such that $\mathsf{Ver}_{2}\left(  s,C\left(  \$\right)
\right)  $\ accepts with probability $\delta\geq\left(  1-2\varepsilon
-\eta\right)  \delta^{\prime}$, given a valid banknote $\$=\left(  s,\rho
_{s}\right)  $.

This $C$ works as follows:

\begin{enumerate}
\item[(1)] By calling $\mathsf{Bank}^{\prime}\left(  0^{n}\right)  $, generate
a new composite banknote $\$^{\prime}=\left(  \$_{1},\ldots,\$_{k}\right)  $.

\item[(2)] Let $\$_{\operatorname*{new}}^{\prime}$\ be the result of starting
with $\$^{\prime}$, then swapping out $\$_{i}$\ for the banknote $\$$\ to be
copied, for some $i\in\left[  k\right]  $\ chosen uniformly at random.

\item[(3)] Let $\left(  s_{1}\ldots s_{k},\sigma_{1}\ldots\sigma_{k},\xi
_{1}\ldots\xi_{k}\right)  :=C^{\prime}\left(  \$_{\operatorname*{new}}%
^{\prime}\right)  $.

\item[(4)] Output $\left(  s_{i},\sigma_{i},\xi_{i}\right)  $.
\end{enumerate}

By assumption,%
\[
\Pr\left[  \mathsf{Ver}_{2}^{\prime}\left(  s_{1}\ldots s_{k},\sigma_{1}%
\ldots\sigma_{k},\xi_{1}\ldots\xi_{k}\right)  \text{ accepts}\right]
\geq\delta^{\prime}.
\]
Now, suppose $\mathsf{Ver}_{2}^{\prime}$ does accept. \ Then by the definition
of $\mathsf{Ver}_{2}^{\prime}$, at least $\left(  1-\varepsilon-\eta\right)
k$\ of%
\[
\mathsf{Ver}\left(  s_{1},\sigma_{1}\right)  ,\ldots,\mathsf{Ver}\left(
s_{k},\sigma_{k}\right)
\]
must have accepted, along with at least $\left(  1-\varepsilon-\eta\right)
k$\ of%
\[
\mathsf{Ver}\left(  s_{1},\xi_{1}\right)  ,\ldots,\mathsf{Ver}\left(
s_{k},\xi_{k}\right)  .
\]
So there must be at least $\left(  1-2\varepsilon-2\eta\right)  k$\ indices
$j\in\left[  k\right]  $\ such that $\mathsf{Ver}\left(  s_{j},\sigma
_{j}\right)  $ and $\mathsf{Ver}\left(  s_{j},\xi_{j}\right)  $\ \textit{both}
accepted. \ Therefore%
\begin{align*}
\Pr\left[  \mathsf{Ver}_{2}\left(  s_{i},\sigma_{i},\xi_{i}\right)  \text{
accepts}\right]   &  =\Pr\left[  \mathsf{Ver}\left(  s_{i},\sigma_{i}\right)
\text{ and }\mathsf{Ver}\left(  s_{i},\xi_{i}\right)  \text{\ accept}\right]
\\
&  \geq\left(  1-2\varepsilon-2\eta\right)  \delta^{\prime}.
\end{align*}
Taking $\eta>0$\ sufficiently small now yields the theorem.
\end{proof}

A direct counterpart of Theorem \ref{ampcomp}, with exactly the same
parameters, can be proved for public-key quantum money schemes. \ Once again,
the main idea is to consider \textquotedblleft composite
banknotes\textquotedblright\ $\$^{\prime}=\left(  \$_{1},\ldots,\$_{k}\right)
$---and this time, to associate with each $\$_{i}$\ a \textit{different},
independently-chosen public/private key pair. \ Another counterpart of Theorem
\ref{ampcomp} can be proved for digital signature schemes, indeed with
slightly better parameters ($\delta^{\prime}>\frac{\delta}{1-\varepsilon}$
instead of $\delta^{\prime}>\frac{\delta}{1-2\varepsilon}$). \ We omit the details.

\section{Appendix: Complexity-Theoretic No-Cloning Theorem\label{QOR}}

In Section \ref{COR}, we applied the inner-product adversary method to show
that a uniform superposition $\left\vert A\right\rangle $\ over a random
subspace $A\leq\mathbb{F}_{2}^{n}$\ requires $\Omega\left(  2^{n/4}\right)  $
quantum queries to duplicate, even if we are given access to an oracle that
decides membership in both $A$\ and $A^{\perp}$. \ For completeness, in this
appendix we present a simpler application of the inner-product adversary
method: namely, we show that a Haar-random $n$-qubit state $\left\vert
\psi\right\rangle $\ requires $\Omega\left(  2^{n/2}\right)  $ queries to
duplicate, if we are given access to an oracle $U_{\psi}$\ that accepts
$\left\vert \psi\right\rangle $\ and that rejects every state orthogonal to
$\left\vert \psi\right\rangle $. \ The latter is the original result that
Aaronson \cite{aar:qcopy}\ called the \textquotedblleft Complexity-Theoretic
No-Cloning Theorem,\textquotedblright\ though a proof has not appeared until now.

In Section \ref{COR}, we used the lower bound for copying subspace states\ to
construct a quantum money mini-scheme that was provably secure relative to a
classical oracle. \ In the same way, one can use the Complexity-Theoretic
No-Cloning Theorem to construct a mini-scheme that is provably secure relative
to a \textit{quantum} oracle. \ We omit the details of that construction, not
only because it is superseded by the classical oracle construction in Section
\ref{COR}, but because the two constructions are essentially the same. \ The
one real difference is that the quantum oracle construction benefits from a
quadratically better lower bound on the number of queries needed to
counterfeit: $\Omega\left(  2^{n/2}\right)  $\ rather than $\Omega\left(
2^{n/4}\right)  $.

Choose an $n$-qubit pure state $\left\vert \psi\right\rangle $\ uniformly from
the Haar measure, and fix $\left\vert \psi\right\rangle $\ in what follows.
\ Let $U_{\psi}$\ be a unitary transformation such that $U_{\psi}\left\vert
\psi\right\rangle =-\left\vert \psi\right\rangle $\ and $U_{\psi}\left\vert
\eta\right\rangle =\left\vert \eta\right\rangle $ for all $\left\vert
\eta\right\rangle $\ orthogonal to $\left\vert \psi\right\rangle $. \ The
following is the direct analogue of Theorem \ref{nocopyc}.

\begin{theorem}
[Complexity-Theoretic No-Cloning]\label{nocopyq}Given one copy of $\left\vert
\psi\right\rangle $, as well as oracle access to $U_{\psi}$, a counterfeiter
needs $\Omega\left(  2^{n/2}\right)  $ queries to prepare $\left\vert
\psi\right\rangle ^{\otimes2}$ with certainty (for a worst-case $\left\vert
\psi\right\rangle $).
\end{theorem}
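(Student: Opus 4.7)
The plan is to apply the inner-product adversary method (Theorem \ref{lbthm}) in direct analogy with the proof of Theorem \ref{nocopyc} for subspace states, but with Haar-random states $\left\vert \psi\right\rangle $ in place of subspaces. First I would take $\mathcal{O}$ to be the set of oracles $U_{\psi}$ indexed by unit vectors $\left\vert \psi\right\rangle \in\mathbb{C}^{2^{n}}$, with $S_{U_{\psi}}:=\mathrm{span}(\left\vert \psi\right\rangle )$ a one-dimensional subspace, and define the relation $R$ to consist of ordered pairs $(\left\vert \psi\right\rangle ,\left\vert \psi^{\prime}\right\rangle )$ of unit vectors with $\left\vert \left\langle \psi|\psi^{\prime}\right\rangle \right\vert =1/2$. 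Strictly speaking $\mathcal{O}$ is a continuum, but the proof of Lemma \ref{innerprod} uses only Cauchy--Schwarz inside an expectation, so it carries over verbatim with sums over $R$ replaced by integrals against any probability measure. I would take the measure on $R$ obtained by fixing $\left\vert \psi\right\rangle $ and drawing $\left\vert \psi^{\prime}\right\rangle $ Haar-uniformly conditioned on the overlap constraint.

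The key computation is to bound the progress rate $\varepsilon$. For any $\left\vert \eta\right\rangle \in S_{U_{\psi}}^{\bot}$, i.e.\ $\left\vert \eta\right\rangle \perp\left\vert \psi\right\rangle $, I would decompose a typical related state as $\left\vert \psi^{\prime}\right\rangle =\tfrac{1}{2}e^{i\phi}\left\vert \psi\right\rangle +\tfrac{\sqrt{3}}{2}\left\vert \psi_{\bot}\right\rangle $, where $\phi$ is uniform in $[0,2\pi)$ and $\left\vert \psi_{\bot}\right\rangle $ is Haar-random on the unit sphere of the $(2^{n}-1)$-dimensional complement. Then $\left\vert \left\langle \eta|\psi^{\prime}\right\rangle \right\vert ^{2}=\tfrac{3}{4}\left\vert \left\langle \eta|\psi_{\bot}\right\rangle \right\vert ^{2}$, and standard Haar symmetry gives $\mathbb{E}[\left\vert \left\langle \eta|\psi_{\bot}\right\rangle \right\vert ^{2}]=1/(2^{n}-1)$. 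Hence
\[
\mathbb{E}_{\psi^{\prime}}\!\left[  F(\left\vert \eta\right\rangle ,S_{U_{\psi^{\prime}}})^{2}\right]  =\frac{3}{4(2^{n}-1)}=O(2^{-n}),
\]
so we may take $\varepsilon=O(2^{-n})$.

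To invoke Theorem \ref{lbthm}, note that the counterfeiter starts with $\left\vert \Psi_{0}^{\psi}\right\rangle =\left\vert \psi\right\rangle \otimes\left\vert 0\right\rangle $, so the initial inner product for every pair in $R$ is $\left\vert \left\langle \psi|\psi^{\prime}\right\rangle \right\vert =1/2$; set $c:=1/2$. A perfect counterfeiter maps $\left\vert \psi\right\rangle \mapsto\left\vert \psi\right\rangle ^{\otimes2}\otimes\left\vert \mathrm{garb}_{\psi}\right\rangle $ and similarly for $\left\vert \psi^{\prime}\right\rangle $, so the final inner product is at most $\left\vert \left\langle \psi|\psi^{\prime}\right\rangle \right\vert ^{2}=1/4$; set $d:=1/4$. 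Theorem \ref{lbthm} then forces $T=\Omega((c-d)/\sqrt{\varepsilon})=\Omega(2^{n/2})$, and the worst-case statement follows by taking the contrapositive (if every $\left\vert \psi\right\rangle $ admitted an $o(2^{n/2})$-query cloner, the hypothesis of Theorem \ref{lbthm} would be satisfied and give a contradiction).

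The main obstacle is the measure-theoretic one: Theorem \ref{lbthm} is stated for a finite oracle set $\mathcal{O}$ with a symmetric binary relation, whereas both are continuous here. The cleanest workaround is to observe that Lemma \ref{innerprod}'s proof only uses linearity and Cauchy--Schwarz inside expectations, so it applies verbatim to any probability measure on pairs of oracles; alternatively one can discretize with a sufficiently fine net of states whose pairwise overlaps are close to $1/2$, which is routine but notationally heavier. Apart from this bookkeeping, everything reduces to the single Haar-symmetry calculation above, which is why the resulting bound $\Omega(2^{n/2})$ is quadratically better than the $\Omega(2^{n/4})$ we obtain for subspace states.
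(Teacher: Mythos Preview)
Your proposal is correct and follows essentially the same approach as the paper: the paper also sets $\mathcal{O}=\{U_{\psi}\}$, puts $(U_{\psi},U_{\varphi})\in R$ iff $\lvert\langle\psi|\varphi\rangle\rvert=c$, computes $\varepsilon=\frac{1-c^{2}}{2^{n}-1}$ via the same Haar-symmetry argument you give, and then applies Theorem~\ref{lbthm} with $c=1/2$, $d=c^{2}=1/4$. The only cosmetic differences are that the paper keeps $c$ general until the last line (whereas you fix $c=1/2$ from the start) and that the paper silently treats the continuous oracle set without the measure-theoretic discussion you supply.
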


\begin{proof}
We will apply Theorem \ref{lbthm}. \ Let the set $\mathcal{O}$\ contain
$U_{\psi}$\ for every possible $n$-qubit state $\left\vert \psi\right\rangle
$. \ Then $S_{U_{\psi}}$\ is just the $1$-dimensional subspace corresponding
to $\left\vert \psi\right\rangle $. \ Also, put $\left(  U_{\psi},U_{\varphi
}\right)  \in R$\ if and only if $\left\vert \left\langle \psi|\varphi
\right\rangle \right\vert =c$, for some $0<c<1$ to be specified later. \ Then
for all $U_{\psi}\in\mathcal{O}$\ and $\left\vert \eta\right\rangle \in
S_{U_{\psi}}^{\bot}$, we have%
\begin{align*}
\operatorname*{E}_{U_{\varphi}~:~\left(  U_{\psi},U_{\varphi}\right)  \in
R}\left[  \left\vert \left\langle \eta|\varphi\right\rangle \right\vert
^{2}\right]   &  =\operatorname*{E}_{\left\vert \varphi\right\rangle
~:~\left\vert \left\langle \psi|\varphi\right\rangle \right\vert =c}\left[
\left\vert \left\langle \eta|\varphi\right\rangle \right\vert ^{2}\right] \\
&  =\operatorname*{E}_{\left\vert v\right\rangle \in S_{U_{\psi}}^{\bot}%
}\left[  \left\vert \left\langle \eta\right\vert \left(  c\left\vert
\psi\right\rangle +\sqrt{1-c^{2}}\left\vert v\right\rangle \right)
\right\vert ^{2}\right] \\
&  =\left(  1-c^{2}\right)  \operatorname*{E}_{\left\vert v\right\rangle \in
S_{U_{\psi}}^{\bot}}\left[  \left\vert \left\langle \eta|v\right\rangle
\right\vert ^{2}\right] \\
&  =\frac{1-c^{2}}{2^{n}-1}.
\end{align*}
So set $\varepsilon:=\frac{1-c^{2}}{2^{n}-1}$. \ If the counterfeiter
succeeds, it must map $\left\vert \psi\right\rangle $\ to some state
$\left\vert f_{\psi}\right\rangle :=\left\vert \psi\right\rangle \left\vert
\psi\right\rangle \left\vert \operatorname*{garbage}_{\psi}\right\rangle $,
and $\left\vert \varphi\right\rangle $\ to $\left\vert f_{\varphi
}\right\rangle :=\left\vert \varphi\right\rangle \left\vert \varphi
\right\rangle \left\vert \operatorname*{garbage}_{\varphi}\right\rangle $.
\ Note that $\left\vert \left\langle f_{\psi}|f_{\varphi}\right\rangle
\right\vert \leq c^{2}$. \ So setting $d:=c^{2}$, Theorem \ref{lbthm}\ tells
us that the counterfeiter must make%
\[
\Omega\left(  \left(  c-c^{2}\right)  \sqrt{\frac{2^{n}-1}{1-c^{2}}}\right)
\]
queries to $U_{\psi}$. \ Fixing (say) $c=1/2$, this is $\Omega\left(
2^{n/2}\right)  $.
\end{proof}

Like Theorem \ref{nocopyc}, Theorem \ref{nocopyq}\ is easily seen to be tight,
since one can use the amplitude amplification algorithm (Lemma \ref{aa}) to
find $\left\vert \psi\right\rangle $, and thereby prepare $\left\vert
\psi\right\rangle ^{\otimes2}$, using $O\left(  2^{n/2}\right)  $\ queries to
$U_{\psi}$.

For completeness, we observe the following generalization of Theorem
\ref{nocopyq}.

\begin{theorem}
\label{kcopiesq}Given $k$ copies of $\left\vert \psi\right\rangle $, as well
as oracle access to $U_{\psi}$, a counterfeiter needs $\Omega\left(
2^{n/2}/\sqrt{k}\right)  $ queries to prepare $\left\vert \psi\right\rangle
^{\otimes k+1}$ with certainty (for a worst-case $\left\vert \psi\right\rangle
$).
\end{theorem}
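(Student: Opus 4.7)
The plan is to mimic the proof of Theorem \ref{nocopyq} essentially verbatim, applying the inner-product adversary method (Theorem \ref{lbthm}) with the same family $\mathcal{O}$ of oracles $U_\psi$ and the same relation $(U_\psi,U_\varphi)\in R$ iff $|\langle\psi|\varphi\rangle|=\alpha$, for a parameter $\alpha\in(0,1)$ to be chosen at the end. The subspace $S_{U_\psi}$ is still one-dimensional, spanned by $|\psi\rangle$. The computation of the sensitivity parameter $\varepsilon$ is unchanged: for any $|\eta\rangle\perp|\psi\rangle$, averaging over $|\varphi\rangle$ with $|\langle\psi|\varphi\rangle|=\alpha$ gives $\mathop{E}[|\langle\eta|\varphi\rangle|^2]=\frac{1-\alpha^2}{2^n-1}$. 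The only genuine change from Theorem \ref{nocopyq} is that the initial state is now $|\Psi_0^{U_\psi}\rangle=|\psi\rangle^{\otimes k}$ rather than a single copy, and the target final state is $|\psi\rangle^{\otimes k+1}|\mathrm{garbage}_\psi\rangle$ rather than two copies.

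Computing the progress endpoints is straightforward. Initially, $|\langle\Psi_0^{U_\psi}|\Psi_0^{U_\varphi}\rangle|=|\langle\psi|\varphi\rangle|^k=\alpha^k$ for every $(U_\psi,U_\varphi)\in R$, so we may take $c:=\alpha^k$ in Theorem \ref{lbthm}. On the other hand, if the counterfeiter succeeds, its final state on input $|\psi\rangle^{\otimes k}$ has the form $|\psi\rangle^{\otimes k+1}|\mathrm{garbage}_\psi\rangle$, and analogously for $\varphi$; since the garbage registers contribute an overlap of magnitude at most $1$, we get $|\langle\Psi_T^{U_\psi}|\Psi_T^{U_\varphi}\rangle|\leq\alpha^{k+1}$, so we may take $d:=\alpha^{k+1}$. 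Theorem \ref{lbthm} then immediately yields
\[
T \;=\; \Omega\!\left(\frac{\alpha^k-\alpha^{k+1}}{\sqrt{(1-\alpha^2)/(2^n-1)}}\right) \;=\; \Omega\!\left(\alpha^k\sqrt{\frac{(1-\alpha)\,2^n}{1+\alpha}}\right).
\]

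The final step is to optimize over $\alpha$. Choosing $\alpha:=1-1/k$ (any $\alpha=1-\Theta(1/k)$ works), we have $\alpha^k=\Theta(1)$, $1-\alpha=1/k$, and $1+\alpha=\Theta(1)$, so the right-hand side becomes $\Omega(2^{n/2}/\sqrt{k})$, as desired. The only mildly delicate point is the parameter choice: for $\alpha$ bounded away from $1$ the $\alpha^k$ factor decays exponentially and wrecks the bound, while for $\alpha$ much closer to $1$ than $1-1/k$ the $(1-\alpha)$ factor in the progress gap becomes too small. The sweet spot is exactly at $\alpha=1-\Theta(1/k)$.

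I do not expect any serious obstacle, since all the ingredients used in Theorem \ref{nocopyq} go through unchanged; the extra $k-1$ copies of $|\psi\rangle$ are simply absorbed into the algorithm's workspace register in the decomposition used by Lemma \ref{innerprod}, and they only affect the inner-product endpoints, not the per-query sensitivity $\sqrt{\varepsilon}$. The only place where care is needed is verifying that $\alpha$ may indeed depend on $n$ and $k$ (which is allowed, since Theorem \ref{lbthm} is stated for fixed $c,d,\varepsilon$), and that the tightness of the argument really does come from balancing $\alpha^k$ against $(1-\alpha)^{1/2}$.
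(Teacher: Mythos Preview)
Your proposal is correct and follows essentially the same argument as the paper's own proof: the same relation $R$, the same computation of $\varepsilon=(1-\alpha^2)/(2^n-1)$, the same endpoints $c=\alpha^k$ and $d=\alpha^{k+1}$, and the same optimizing choice $\alpha=1-1/k$. The only difference is notational (the paper writes $c$ where you write $\alpha$) and that you include a brief discussion of why $\alpha=1-\Theta(1/k)$ is the right balance, which the paper omits.
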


\begin{proof}
If the counterfeiter succeeds, it must map $\left\vert \psi\right\rangle
^{\otimes k}$\ to some state $\left\vert f_{\psi}\right\rangle :=\left\vert
\psi\right\rangle ^{\otimes k+1}\left\vert \operatorname*{garbage}_{\psi
}\right\rangle $, and $\left\vert \varphi\right\rangle ^{\otimes k}$\ to
$\left\vert f_{\varphi}\right\rangle :=\left\vert \varphi\right\rangle
^{\otimes k+1}\left\vert \operatorname*{garbage}_{\varphi}\right\rangle $.
\ Note that $\left\vert \left\langle f_{\psi}|f_{\varphi}\right\rangle
\right\vert \leq c^{k+1}$. \ So setting $d:=c^{k+1}$, Theorem \ref{lbthm}%
\ tells us that the counterfeiter must make%
\[
\Omega\left(  \left(  c^{k}-c^{k+1}\right)  \sqrt{\frac{2^{n}-1}{1-c^{2}}%
}\right)
\]
queries to $U_{\psi}$. \ Fixing $c:=1-\frac{1}{k}$, the above is%
\[
\Omega\left(  \left(  \frac{1}{e}-\left(  1-\frac{1}{k}\right)  \frac{1}%
{e}\right)  \sqrt{\frac{2^{n}}{1-\left(  1-1/k\right)  ^{2}}}\right)
=\Omega\left(  \frac{2^{n/2}}{\sqrt{k}}\right)  .
\]

\end{proof}

We end this appendix by stating, without proof, three stronger lower bounds
that are the direct analogues of Corollary \ref{smallerror}, Corollary
\ref{largeerror}, and Theorem \ref{averagecase}\ respectively.

\begin{corollary}
\label{smallerrorq}Given one copy of $\left\vert \psi\right\rangle $, as well
as oracle access to $U_{\psi}$, a counterfeiter needs $\Omega\left(
2^{n/2}\right)  $ queries to prepare a state $\rho$\ such that $\left\langle
\psi\right\vert ^{\otimes2}\rho\left\vert \psi\right\rangle ^{\otimes2}%
\geq0.9999$ (for a worst-case $\left\vert \psi\right\rangle $).
\end{corollary}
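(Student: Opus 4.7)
The plan is to mimic the proof of Corollary \ref{smallerror} almost verbatim, replacing the subspace setup with the Haar-random state setup from Theorem \ref{nocopyq}. The punchline is that the inner-product progress bound is identical to what was used in Theorem \ref{nocopyq} (it depends only on the oracle set $\mathcal{O}$ and the relation $R$, not on the counterfeiter's precise success criterion), so only the ``final inner product'' $d$ in Theorem \ref{lbthm} changes when we relax from perfect counterfeiting to $0.9999$-fidelity counterfeiting.

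Concretely, I would reuse the same $\mathcal{O}$ and the same relation $R$ as in Theorem \ref{nocopyq}: $(U_\psi,U_\varphi)\in R$ iff $|\langle\psi|\varphi\rangle|=c$ for a parameter $c\in(0,1)$ to be chosen at the end. The computation there already gives $\varepsilon:=\frac{1-c^2}{2^n-1}$ as the bound on $\operatorname{E}_{U_\varphi:(U_\psi,U_\varphi)\in R}[F(|\eta\rangle,S_{U_\varphi})^2]$ for any $|\eta\rangle\in S_{U_\psi}^\perp$, and hence $\sqrt{\varepsilon}=\Theta(2^{-n/2})$. The initial inner-product lower bound is $c$, since the counterfeiter starts with $|\Psi_0^{U_\psi}\rangle=|\psi\rangle$ and we chose the relation so that $|\langle\psi|\varphi\rangle|=c$.

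The only new step is the upper bound $d$ on the final inner product under the weaker success criterion. Suppose the counterfeiter produces states $\rho_\psi,\rho_\varphi$ with $\langle\psi|^{\otimes2}\rho_\psi|\psi\rangle^{\otimes2}\ge 1-\epsilon$ and $\langle\varphi|^{\otimes2}\rho_\varphi|\varphi\rangle^{\otimes2}\ge 1-\epsilon$ where $\epsilon=0.0001$. Let $|f_\psi\rangle,|f_\varphi\rangle$ be purifications of $\rho_\psi,\rho_\varphi$ produced by the (unitary) counterfeiting algorithm on input $|\psi\rangle$ and $|\varphi\rangle$ respectively. Applying Lemma \ref{triangle} (the fidelity triangle inequality) with $|\psi\rangle^{\otimes2}$ and $|\varphi\rangle^{\otimes2}$ as the reference pure states gives
\[
|\langle f_\psi|f_\varphi\rangle|\le F(\rho_\psi,\rho_\varphi)\le |\langle\psi|^{\otimes2}|\varphi\rangle^{\otimes2}|+2\epsilon^{1/4}=c^2+2\epsilon^{1/4},
\]
so we may set $d:=c^2+2\epsilon^{1/4}$.

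Theorem \ref{lbthm} then yields the query lower bound $\Omega\bigl((c-d)/\sqrt{\varepsilon}\bigr)=\Omega\bigl((c-c^2-2\epsilon^{1/4})\sqrt{(2^n-1)/(1-c^2)}\bigr)$. Fixing $c:=1/2$ makes $c-c^2-2\epsilon^{1/4}=1/4-2(0.0001)^{1/4}>0$ a positive constant, so the bound simplifies to $\Omega(2^{n/2})$, as claimed. I don't anticipate a genuine obstacle: the only care needed is in invoking Lemma \ref{triangle} correctly (checking that the ``$\epsilon$'' there matches the squared-fidelity slack here) and in confirming that $(0.0001)^{1/4}$ is small enough that $c-d$ remains bounded away from zero at $c=1/2$, which it is.
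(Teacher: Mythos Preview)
Your proposal is correct and is exactly the argument the paper has in mind: the paper states Corollary \ref{smallerrorq} without proof as the ``direct analogue'' of Corollary \ref{smallerror}, and your write-up carries out that analogy line by line, swapping in the $\varepsilon=\frac{1-c^{2}}{2^{n}-1}$ from Theorem \ref{nocopyq} and otherwise repeating the use of Lemma \ref{triangle} to set $d=c^{2}+2\epsilon^{1/4}$. Your numerical check that $c-c^{2}-2\epsilon^{1/4}=1/4-0.2>0$ at $c=1/2$ and $\epsilon=0.0001$ is correct.
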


\begin{corollary}
\label{largeerrorq}Let $1/\varepsilon=o\left(  2^{n}\right)  $. \ Given one
copy of $\left\vert \psi\right\rangle $, as well as oracle access to $U_{\psi
}$, a counterfeiter needs $\Omega\left(  \sqrt{\varepsilon}2^{n/2}\right)  $
queries to prepare a state $\rho$\ such that $\left\langle \psi\right\vert
^{\otimes2}\rho\left\vert \psi\right\rangle ^{\otimes2}\geq\varepsilon$ (for a
worst-case $\left\vert \psi\right\rangle $).
\end{corollary}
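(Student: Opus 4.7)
The plan is to adapt the proof of Corollary \ref{largeerror} essentially verbatim, substituting Corollary \ref{smallerrorq} for Corollary \ref{smallerror} and the verifier $V_\psi = |\psi\rangle\langle\psi|$ for $V_A$. The key observation enabling this is that the verification here is projective: given oracle access to $U_\psi$ (the reflection through $|\psi\rangle$), one can implement the rank-$1$ projector $|\psi\rangle\langle\psi|$ using $O(1)$ queries to $U_\psi$ via the standard phase-kickback trick (apply $U_\psi$ controlled on an ancilla prepared in $|+\rangle$, then measure the ancilla in the $\{|+\rangle,|-\rangle\}$ basis and postselect on $|-\rangle$). Hence the hypotheses of Theorem \ref{miniamp} are satisfied in this oracle setting.

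The argument then proceeds by contradiction. Suppose there were a counterfeiter $C$ making $q = o(\sqrt{\varepsilon}\,2^{n/2})$ queries to $U_\psi$ and producing a state $\sigma$ with $\langle\psi|^{\otimes 2}\sigma|\psi\rangle^{\otimes 2} \geq \varepsilon$. Fix $\delta := 0.0001$. Applying Theorem \ref{miniamp} (with $V_\psi$ in place of $V_A$), I obtain an amplified counterfeiter $C'$ that makes
\[
O\!\left(\frac{\log(1/\delta)}{\sqrt{\varepsilon}\,(\sqrt{\varepsilon}+\delta^2)}\right) = O\!\left(\frac{1}{\sqrt{\varepsilon}}\right)
\]
invocations of $C$, $C^{-1}$, and $V_\psi$, and that produces a state $\rho$ with $\langle\psi|^{\otimes 2}\rho|\psi\rangle^{\otimes 2} \geq 1-\delta = 0.9999$. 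Each invocation of $C$ or $C^{-1}$ costs $q$ queries to $U_\psi$, while each invocation of $V_\psi$ costs $O(1)$ queries. Therefore the total number of queries $C'$ makes to $U_\psi$ is
\[
\bigl[o(\sqrt{\varepsilon}\,2^{n/2}) + O(1)\bigr]\cdot O\!\left(\tfrac{1}{\sqrt{\varepsilon}}\right) \;=\; o(2^{n/2}).
\]

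Since $C'$ prepares a state with squared fidelity at least $0.9999$ with $|\psi\rangle^{\otimes 2}$ using only $o(2^{n/2})$ queries, this contradicts Corollary \ref{smallerrorq}, completing the argument. The only step warranting care is the check that $V_\psi$ is faithfully implementable from $U_\psi$ with constant query overhead, so that the amplification machinery of Theorem \ref{miniamp}, originally phrased for intrinsic projective verifications, transports to the quantum-oracle setting — but as noted above this is immediate. No new lower-bound machinery is needed beyond the inner-product adversary method already used for Theorem \ref{nocopyq} and Corollary \ref{smallerrorq}; the content of Corollary \ref{largeerrorq} is precisely that amplification converts the worst-case near-perfect lower bound into a lower bound against any non-negligible-fidelity counterfeiter, with the expected quadratic dependence on $\sqrt{\varepsilon}$.
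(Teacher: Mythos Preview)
Your proposal is correct and follows exactly the approach the paper intends: the paper states Corollary~\ref{largeerrorq} explicitly ``without proof'' as the ``direct analogue'' of Corollary~\ref{largeerror}, and your argument is precisely that analogue, swapping $V_A$ for $V_\psi$, Corollary~\ref{smallerror} for Corollary~\ref{smallerrorq}, and $2^{n/4}$ for $2^{n/2}$. The one detail you add beyond the template---that $V_\psi$ can be implemented with $O(1)$ queries to $U_\psi$ via controlled-$U_\psi$ and measurement in the Hadamard basis---is exactly the observation needed to justify invoking Theorem~\ref{miniamp} in this oracle setting, and matches how the paper implements $\mathbb{P}_A$ from $U_A$ in Section~\ref{HSM}.
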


\begin{theorem}
\label{averagecaseq}Let $\left\vert \psi\right\rangle $ be an $n$-qubit pure
state chosen uniformly from the Haar measure. \ Given one copy of $\left\vert
\psi\right\rangle $, as well as oracle access to $U_{\psi}$, a counterfeiter
$C$\ needs $\Omega\left(  \sqrt{\varepsilon}2^{n/2}\right)  $ queries to
prepare a $2n$-qubit state $\rho$\ that a projector $V_{\psi}^{\otimes2}%
$\ onto $\left\vert \psi\right\rangle ^{\otimes2}$\ accepts with probability
at least $\varepsilon$, for all $1/\varepsilon=o\left(  2^{n}\right)  $.
\ Here the probability is taken over the choice of $\left\vert \psi
\right\rangle $, as well as the behavior of $C$\ and $V_{\psi}^{\otimes2}$.
\end{theorem}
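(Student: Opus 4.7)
The plan is to reduce Theorem \ref{averagecaseq} to the worst-case bound of Corollary \ref{largeerrorq}, using the left-invariance of the Haar measure under the unitary group as the analogue of the random change-of-basis trick employed in the proof of Theorem \ref{averagecase}. The key observation is that for any fixed $n$-qubit state $|\phi\rangle$, if $V$ is drawn from the Haar measure on the unitary group, then $V|\phi\rangle$ is itself distributed as a Haar-random pure state.

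Concretely, I would argue by contradiction: suppose there is a counterfeiter $C$ that makes $o(\sqrt{\varepsilon}\, 2^{n/2})$ queries and achieves $\operatorname{E}_{|\psi\rangle}[\operatorname{Pr}[V_\psi^{\otimes 2}\text{ accepts } C(|\psi\rangle)]] \geq \varepsilon$ when $|\psi\rangle$ is Haar-random. Given a worst-case target $|\phi\rangle$ and oracle access to $U_\phi$, I will construct a new counterfeiter $C'$ that succeeds on $|\phi\rangle$ with probability at least $\varepsilon$ using the same number of queries. The procedure is: sample $V$ from the Haar measure (we care only about query complexity, so the impossibility of implementing $V$ efficiently is irrelevant); compute $V|\phi\rangle$; simulate the oracle $U_{V|\phi\rangle}$ via the identity $U_{V|\phi\rangle} = V U_\phi V^\dagger$, which requires exactly one query to $U_\phi$ per simulated query plus two $V$-related unitaries of zero query cost; run $C$ on $V|\phi\rangle$ with this simulated oracle to obtain some state $\rho_V$; and finally apply $(V^\dagger)^{\otimes 2}$ to $\rho_V$ to get $\sigma_V$.

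Since $V|\phi\rangle$ is Haar-distributed, the assumed average-case guarantee on $C$ gives $\operatorname{E}_V[\langle V\phi|^{\otimes 2} \rho_V |V\phi\rangle^{\otimes 2}] \geq \varepsilon$, and by unitary invariance this equals $\operatorname{E}_V[\langle\phi|^{\otimes 2} \sigma_V |\phi\rangle^{\otimes 2}]$. In particular some choice of $V$ achieves fidelity at least $\varepsilon$ with $|\phi\rangle^{\otimes 2}$, yielding a worst-case counterfeiter against $|\phi\rangle$ with $o(\sqrt{\varepsilon}\, 2^{n/2})$ queries---directly contradicting Corollary \ref{largeerrorq}.

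The only real technical point, and what I would call the ``main obstacle'' insofar as there is one, is verifying the oracle simulation $U_{V|\phi\rangle} = V U_\phi V^\dagger$: one must check that this sandwich acts as $-\mathrm{id}$ on $V|\phi\rangle$ and as $+\mathrm{id}$ on its orthogonal complement, and that a single invocation of it costs exactly one query to $U_\phi$. Beyond this, the argument is a pure symmetrization---the nontrivial ``quadratic'' content of the bound in $\varepsilon$ is inherited from Corollary \ref{largeerrorq}, which in turn came from combining Corollary \ref{smallerrorq} with the amplification provided by Theorem \ref{miniamp}. One minor housekeeping item is that Theorem \ref{miniamp} is stated for projective mini-schemes, so I should note at the outset that the verifier $V_\psi = |\psi\rangle\langle\psi|$ is indeed a rank-$1$ projector, so the projectivity hypothesis is satisfied exactly as in the subspace case.
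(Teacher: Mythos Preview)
Your proposal is correct and follows precisely the approach the paper intends: the paper states Theorem \ref{averagecaseq} \emph{without proof}, explicitly calling it the ``direct analogue'' of Theorem \ref{averagecase}, and your reduction via Haar invariance (with the oracle simulation $U_{V\vert\phi\rangle}=VU_\phi V^\dagger$ playing the role of the linear change-of-basis $f$ in the subspace case) is exactly that analogue. One cosmetic point: rather than saying ``some choice of $V$ achieves fidelity at least $\varepsilon$,'' it is cleaner to let $C'$ output the mixture $\rho=\operatorname*{E}_V[\sigma_V]$ directly, since Corollary \ref{largeerrorq} already allows mixed outputs and linearity of $\langle\phi\vert^{\otimes 2}(\cdot)\vert\phi\rangle^{\otimes 2}$ gives the bound immediately---but your existence argument works too.
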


\section{Acknowledgments}

We thank Andris Ambainis, Boaz Barak, Dmitry Gavinsky, Daniel Gottesman, Aram
Harrow, Yuval Ishai, Shelby Kimmel, Shaunak Kishore, Greg Kuperberg, Andy
Lutomirski, Abel Molina, Rafi Ostrovsky, Amit Sahai, Peter Shor, John
Watrous, and Ronald de Wolf for helpful discussions and correspondence; and the anonymous
reviewers for their comments.

\bibliographystyle{plain}
\bibliography{thesis}

\end{document}